\newcolumntype{C}{>{$}c<{$}}
\newcolumntype{L}{>{$}l<{$}}
\newcolumntype{R}{>{$}r<{$}}
\newcommand{\nfrac}[2]{ {\textstyle\frac{#1}{#2}} }  
\newcommand{\lm}{\llap{-}}
\newcommand{\noneg}{\phantom{-}}
\let\TAB\tabular
\renewcommand\tabular{\noindent\TAB} 
\newcommand{\R}{\mathbb{ R}}
\newcommand{\C}{\mathbb{ C}}
\newcommand{\Z}{\mathbb{ Z}}
\newcommand{\g}{\mathfrak g}
\newcommand{\T}{\mathcal{ T}}
\newcommand{\LL}{\mathcal{ L}}
\newcommand{\so}{\mathsf{so}}
\newcommand{\SO}{\mathsf{SO}}
\renewcommand{\sp}{\mathsf{sp}} 
\newcommand{\Sp}{\mathsf{Sp}}
\newcommand{\tensor}{\otimes}
\newcommand{\del}{\partial}
\newcommand{\integral}{\int}
\newcommand{\rank}[2]{ \left( { }^{#1}_{#2}\right) }
\newcommand{\oo}{\!\rank{*}{*}}
\newcommand{\ijkequal}{ \stackrel{(ijk)}{=} }
\newcommand{\abcequal}{ \stackrel{(\alpha\beta\gamma)}{=} }
\newcommand{\DOT}{\bullet}
\newcommand{\CROSS}{ \,\mbox{\scriptsize\ding{54}}\, }
\newcommand{\Grad}{ \mbox{\ding{116}} }
\newcommand{\Curl}{\Grad\!\CROSS}
\newcommand{\Div}{\Grad\!\DOT}
\newcommand{\Triangle}{ \mbox{\ding{115}} }
\newcommand{\Square}{ \mbox{\ding{110}} }
\newcommand{\EE}{\mbox{$\rightarrow$}}
\newcommand{\AW}{\mbox{$\leftarrow$}}
\newcommand{\AN}{\mbox{$\uparrow$}}
\newcommand{\AS}{\mbox{$\downarrow$}}
\newcommand{\ANE}{\mbox{$\nearrow$}}
\newcommand{\ANW}{\mbox{$\nwarrow$}}
\newcommand{\ASE}{\mbox{$\searrow$}}
\newcommand{\ASW}{\mbox{$\swarrow$}}
\newtheorem{theorem}{Theorem}[section]
\newtheorem{lemma}[theorem]{Lemma}
\newtheorem{proposition}[theorem]{Proposition}
\newtheorem{corollary}[theorem]{Corollary}
\newtheorem{definition}[theorem]{Definition}
\newtheorem{axiom}{Axiom}
\newtheorem{conjecture}[theorem]{Conjecture}
\newtheorem{question}{Question}
\newenvironment{proof}[1][Proof]%
{\begin{trivlist}\item[\hskip \labelsep {\bfseries #1}]}%
{\qed\end{trivlist}}
{\begin{trivlist} \item[\hskip \labelsep {\bfseries #1}]}%
{\end{trivlist}}
\newcommand{\qed}{\nobreak \ifvmode \relax \else
      \ifdim\lastskip<1.5em \hskip-\lastskip
      \hskip1.5em plus0em minus0.5em \fi \nobreak
      \vrule height0.75em width0.5em depth0.25em\fi}
\begin{document}

\frontmatter

	\pagestyle{empty}

\centerline{\includegraphics[width=8cm]{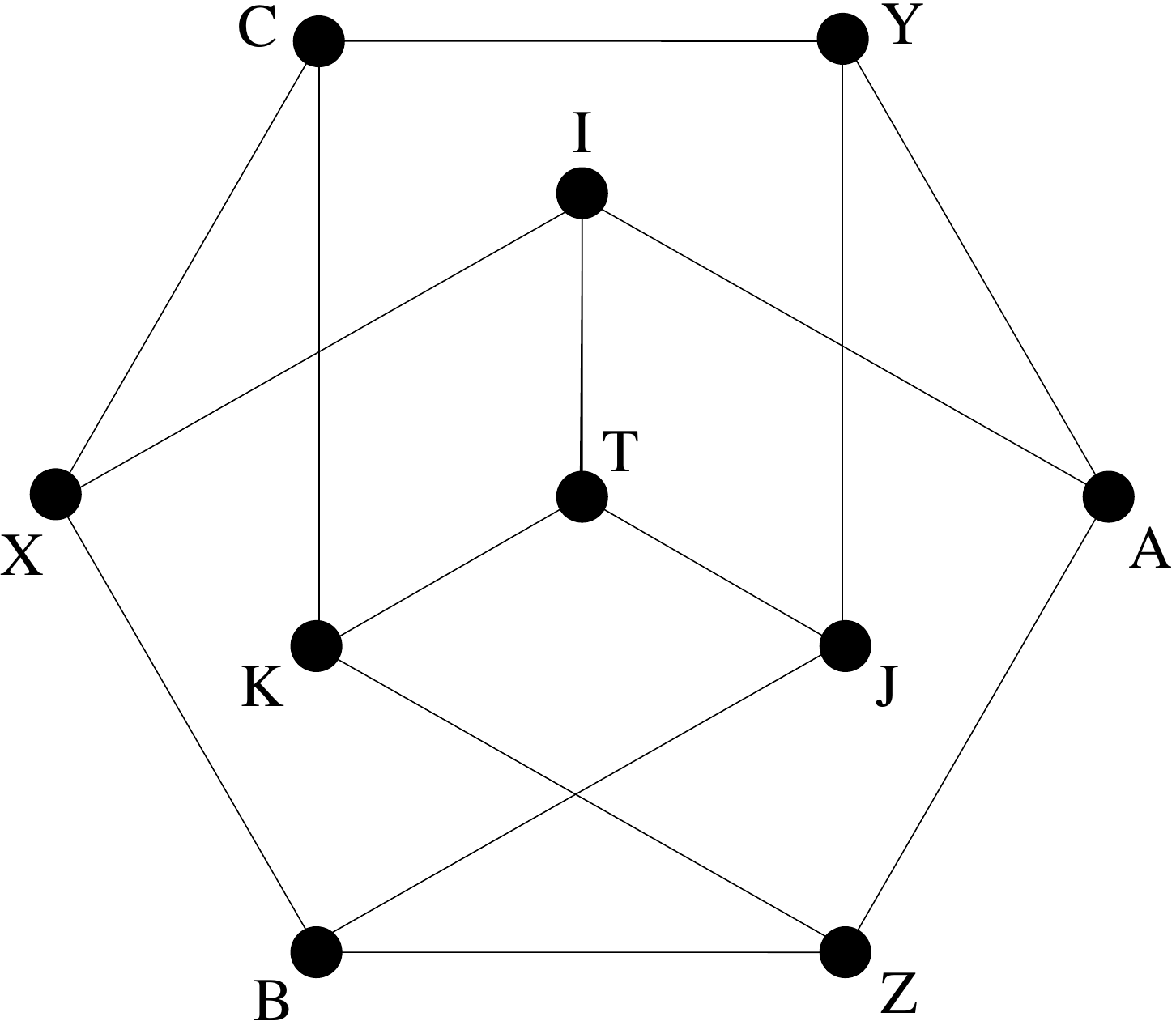}}
\vskip1cm
\centerline{ \resizebox{8cm}{!}{ \HUGE\textsc Framework} }
\vskip5mm
\centerline{\Huge\textsc The physics of $\sp(2,\R)$}
\vskip1cm
\centerline{\large\textsc by }
\vskip5mm
\centerline{\Large\textsc Ian Hawthorn}
\vskip5mm
\centerline{ \Large\textsc William Crump  \& Matthew Ussher }
\vskip5mm
\centerline{\includegraphics[width=42mm]{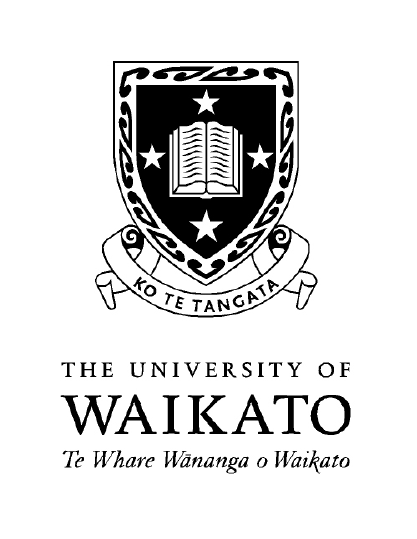}}

\clearpage
\pagestyle{plain}

\begin{abstract}
A mathematical framework for relativistic quantum mechanics 
is constructed with natural symmetry $\so(2,3)=\sp(2,\R)$. In this 
framework gravity and electromagnetism unify as aspects of the
geometry. The source equation for gravity differs from Einstein's 
equation and permits behavior that could explain dark matter.
\end{abstract}

\vfill

{\footnotesize
{\textsc{Note on the Title Page Illustration: }} The diagram on the 
title page uses the Petersen graph to represent the Lie algebra 
$\so(2,3) = \sp(2,\R)$.
The labelled vertices represent basis elements. The edges connect those which
commute. Those which are not connected are linked by a unique path of length 2. 
The other vertex connected to the middle vertex of the path gives the 
Lie bracket with sign specified by whether a physical turn
to the left (positive) or to the right (negative) is made in traversing 
the path. For example to compute $ [T,X] $ from the diagram, note that the 
path from $T$ to $X$ turns left at $I$, and the other vertex connected
to $I$ is $A$. Hence $[T,X] = A$. }

\bigskip

{ \hfill \footnotesize SVN version 434}

	\chapter*{Preface}

This is a book of original research written about physics by mathematicians. 
My task in this preface is to explain how such an odd thing came to be, and 
the reasons for publishing in this unusual form.

\medskip

I am Ian Hawthorn and the main author of this book. I am the sole author 
in a legal copyright sense. The words here are mine and any mistakes and 
omissions are my sole responsibility. However some of this work was carried
out in collaboration with two excellent students, William Crump and 
Matthew Ussher, who both made important and significant contributions at 
different stages and hence have been named as coauthors. 

I have also taken the liberty of naming aspects of the theory after them to 
ensure that they receive the recognition they deserve. Crump scalars are 
named after William Crump as the necessity for these was the main topic of 
his Masters thesis; and Ussher's equation is named after Matthew Ussher 
who came up with it in his Masters thesis.

\medskip

Why the choice of a book format? Most original research these days is 
published in journals. In this case however the article format seemed very 
confining. The scope of the work is large, so we would have needed to divide
it into a series of journal articles. Matthew and I made an attempt to 
subdivide the work in this way at one point. However the early parts seemed 
too elementary and unmotivated, while the later parts didn't make sense 
without the earlier work. 

The book format allows us the space to build gradually from a leisurely 
consideration and discussion of ideas; through the careful construction of 
an appropriate axiomatic structure to express those ideas; to the 
consideration of the mathematical theory of that structure; and finishing 
with a discussion of the physical implications. The book format 
also enabled us to include unoriginal background material in the 
introductory chapter, and allowed us to write in a more relaxed and
conversational style rather than in the terse style expected in journals. 
I believe this makes it a much easier read than if we had tried to force 
the material into the journal format.

The reader should note that the first chapter contains a lot of background 
material presented in a rather elementary way as I hope that the book will
be accessible to graduate students in both mathematics and physics. 
If you find this material too elementary I suggest you skip ahead to 
Chapter 2. The final chapter is a summary of the rest of the book. If you 
just want a quick idea of what the book is about, you could look at that. 
Otherwise the best way to read this book is to start at the beginning and 
read it through to the end.  I have tried very hard to make everything 
self-contained, readable and simple. It is my belief that correct explanations 
are simple explanations and that the best academic writing makes the subject 
seem easy.  

\bigskip
 
The rest of this preface discusses the history of the work. Those who 
are not interested in such things should skip it.

\medskip

I first started thinking about the things in this book in the 1990s. 
However it was not until 2007 that I found a way into the problem. And it 
was not until my academic leave in 2009 that significant progress was first 
made. By the end of 2009 I had a rather beautiful mathematical structure which 
did most of what I wanted. It combined curvature with \( \so(2,3) \) symmetry 
in an elegant way allowing a natural expression of the Dirac equation, 
Maxwell's equations and Einstein's equation.

However there was a small problem. Maxwell's equations appeared in this 
structure with an additional constraint which I had hoped initially was 
a gauge condition. I wish to thank Dr Yuri Litvinenko, my colleague at 
Waikato, for convincing me that it was instead an unphysical constraint 
on the electromagnetic field. This meant that after all my work, all I had 
was a broken model. It was an elegant and beautiful broken model based on 
a very small set of natural assumptions. But it was making an unphysical 
prediction so it was, as Feynman would say, wrong!

A model of this type is very specific and rigid with almost no adjustable 
parameters. It is quite the opposite in that respect to string theory which 
is a very general and flexible model with many adjustable parameters. The 
advantage of using a more rigid approach is that it tends to make strong 
predictions. The disadvantage is that if those predictions fail the test 
of reality the problem can be very hard to fix. Rigid models don't bend well. 
I therefore put the work aside and devoted my attention to other things, 
mainly administration since I had just taken over as chairperson of 
the mathematics department at that time.

In 2010 I took on a master's student, William Crump, with interests in 
mathematical physics. The task assigned to him for his Master's 
thesisi\cite{Cthesis} was to read and understand the work as it existed 
at that time, and to explore and discuss the nature of the failure with 
respect to Maxwell's equations.  In the course of working on this problem 
together we came to realise that the electromagnetic field was much more 
closely linked to the geometry of the model than I had previously thought. 
We discovered that the problem arose from a very subtle assumption that I 
had made which constrained the geometry effectively creating a zero field 
condition for the electromagnetic field.

Removing this assumption freed up the model and eliminated the unphysical 
constraint on Maxwell's equations. Furthermore our new insights about the 
link between electromagnetism and the geometry then allowed us to also 
link Maxwell's equations to the Bianchi identities for the gravitational 
field. The dynamical equations for both electromagnetism and gravity were
now unified as components of a single equation for the spinor curvature. 
These changes required a complete rewrite which I commenced in 2011. The 
added flexibility made the mathematics much more complicated however, 
which slowed progress.

In 2012 I was fortunate to have another excellent Master's student, Matthew 
Ussher. Matthew explored in his thesis the nature of the source equations for 
electromagnetism and gravity. Whereas the dynamical equations for these forces 
were unified, the source equations still looked very different.  In discussing 
this issue Matthew came up with a unified source equation~\cite{Uthesis} 
which gave the Ampere-Gauss equation for electromagnetism and another equation 
for gravity which I have called Ussher's equation. Ussher's equation is quite
different from Einstein's equation. We were not able to link the two equations 
and were not sure quite what to make of this result.

In 2014 I was again on academic leave, and during that time I looked at the 
Lagrangian analysis in more detail. Matthew had attempted a Lagrangian analysis
in his thesis with limited success. I was able to push the Lagrangian
analysis through to a full conclusion and discovered that Ussher's equation 
was only a special case; and the resulting full equations offered a possible 
explanation for dark matter. 

At that time I realised I had something pretty special. However the work
was far from complete. It was clear by the nature of the work that I would be
unable to tie off every loose end, but there were certain things I wanted
to finish up before publication. In particular I wanted to find the source 
terms for the force equations so that I would have a complete and unified
relativistic field theoretic description of electromagnetism and gravity.
Source terms for fermions at least should be obtainable by applying an 
appropriate variation to the correct Dirac Lagrangian density.

I did not expect this to be difficult, however it proved considerably harder
than I had anticipated. My Dirac Lagrangian densities were persistently
misbehaving despite the fact that I had proved equivalence to the standard 
Dirac Lagrangian density. It took me much longer than it should have done to 
realise that the problems originated with the standard Dirac Lagrangian 
density which is complex when it needs to be real, and which is not 
well defined since it has an unstated conformal dependence. 

Having discovered these issues it took time to explore them properly which 
delayed final publication to the end of 2015. And I still didn't get the
source terms I was seeking.

\medskip
I have chosen to release the work on Ar$\chi$iv while I search for a suitable 
publisher. I also intend to submit shorter summary papers to peer reviewed
journals.

\vskip0.5cm

- Ian Hawthorn

\begingroup
\setlength{\cftbeforechapterskip}{1.0em plus 0.3em minus 0.1em}
\setlength{\beforechapskip}{-2cm} 
\tableofcontents* 
\endgroup

\mainmatter
	\chapter{Spacetime Symmetry}
\label{spacetimesymmetry}

Symmetry is absolutely central to modern physics. Not only does symmetry 
play a vital and direct role in many physical theories, it also serves as 
a bridge between the various mathematical formalisms and the everyday 
reality in which our physical intuition lives. All mathematical structures 
have a natural symmetry and consequently anything defined in one context 
purely in terms of symmetry can be translated into another context having 
the same symmetry group quite easily.
 
\medskip

Of the various symmetries of use to modern physics the symmetries of 
space-time itself are the most fundamental. Whereas other symmetries 
may apply only to some objects or in some circumstances, everything 
that lives in our universe must conform to the symmetries of space-time.

The symmetry group of space-time is ten dimensional which is to say that 
ten continuous parameters are needed to situate an event. The time of 
the event gives us one parameter and three more are required to locate
its position in space. We need three parameters to describe how the event 
is oriented, and finally a further three parameters (Lorentz boost coordinates) 
are needed to specify an instantaneous reference velocity for the event. 

We may not need all ten parameters if the event we wish to locate is itself
possessed of symmetry. For example if the event is spherically symmetric 
(appears the same in all orientations), we do not need the orientation 
parameters to unambiguously locate it. If the event can be described without
reference to velocity, then we won't need the three Lorentz boost coordinates.  

It is usual in physics to regard the four dimensions required to specify
location in space and time as being more physically real (in some sense)
than the other six coordinates. That is because our everyday experience of
space-time symmetry involves its 4-D representation. However the distinction 
is not a natural one to make from the point of view of the group itself and 
in other representations these four coordinates look exactly the same as 
the other six. We will accord all ten coordinates equal status.
\medskip

The space-time symmetry group is continuous and connected. Sometimes various 
discrete symmetries; reflections and inversions; are considered as part of
the space-time symmetry group but we choose to reject this point of view.
There are known physical events, most notably involving the weak force, 
which are not conserved under these inversions. The true symmetries of 
space-time should conserve the form of all possible physical events. 

Continuous symmetries can be performed gradually and hence it is enough
to understand the symmetry processes which perform these gradual 
transformations. These can be described using a basis of ten processes;
translation through time and space, rotation about each of the three
principal axes, and Lorentz boost along the three principal directions.

Symmetries have an algebraic structure. The symmetries themselves form
a group under the binary operation of composition while the symmetry 
processes constitute a Lie algebra with three operations; an addition 
and scalar multiplication giving a vector space, and a commutator or 
Lie bracket which is bilinear, antisymmetric and obeys the Jacobi bracket 
condition. 

\medskip
Our understanding of the algebraic structure of the symmetries of space-time 
has evolved through history. The Galilean group of classical mechanics has 
been supplanted by the Poincar\'e group of special relativity. Whereas in 
the Galilean group boost operators commute, in the Poincar\'e group their 
commutators are multiples of rotation operators where the multiplying
factor is inversely proportional to the speed of light. As the speed of light
is large, it follows that these commutators are \textbf{close} to zero and 
hence the Galilean Lie algebra approximates the Poincar\'e Lie algebra in the 
regime where velocities are much less than the speed of light. 
Indeed we can regard the Galilean Lie algebra as being in some sense the
limit as \( c \rightarrow \infty \) of the Poincar\'e Lie algebra. 
This kind of limiting process is called a contraction.

The Poincar\'e Lie algebra itself can be obtained via contraction from other
Lie algebras, in particular from the DeSitter Lie algebra \( \so(1,4) \)
and the Anti-DeSitter algebra \( \so(2,3) \). For reasons we shall discuss 
later it is the second of these two possibilities that is of greatest 
interest. The contracting parameter in this case is an absolute distance 
scale \( r \) which we will colloquially refer to as the radius of the 
universe, and in this group the commutator of two translation operators
is a multiple of a rotation operator where the multiplying factor is inversely 
proportional to \( r \). If \( r \) (as measured in everyday units) is large 
then the Poincar\'e Lie algebra is very closely approximated by \( \so(2,3) \),
especially in the regime of distance and time scales much less than \( r \).

\begin{description}
\item[Obvious Question:] Which group is correct? 
\end{description}

It would be very difficult for us to answer this question directly 
by making measurements on space-time if \( r \) is significantly bigger than 
the regime of distance and time accessible to us for direct measurement.  
Furthermore on large distance scales we cannot ignore the effects of 
curvature and cosmological model.

Indeed the very notion of a global symmetry group for space-time 
becomes rather dubious in the presence of curvature. And while local 
symmetry retains its meaning it is very unclear how we might go about 
measuring it accurately enough to distinguish between the two groups.

The main problem is that the measurements that might help distinguish between 
\( \so(2,3) \) and the Poincar\'e Lie algebra are also going to be effected by 
our choice of cosmological constant or cosmological model.  

Our obvious question does not therefore have an obvious answer. It appears
unlikely that we can determine which group is correct by direct measurements
of spacetime, at least in the realm of classical physics. Can quantum 
mechanics rescue our question? 

Considering how these two groups act on spacetime we find that the problem 
of measuring the difference is just as difficult, but it manifests in 
a different way. Energy for example would be quantised in a universe based 
on \( \so(2,3) \). However as the quanta would be associated with periods of 
cosmological duration, we are not going to be able to detect this.

However this does not completely answer the question since in quantum mechanics
the symmetry group of space-time has a second action.

Rotations for example act on particles in two ways.  They act in an ordinary 
fashion on space-time which is the domain space of the wave functions, and 
they also act intrinsically on the range space of wave functions. The 
eigenvalues of the first action give us angular momentum while the eigenvalues 
of the second action give an intrinsic angular momentum or spin.

The reasons for this double action are not clear. Why should rotations act 
on this other space? We know that they do and indeed are so used to it that 
we may have come to view it as simply the natural order of things. But it is
worth remembering that mathematically speaking the existence of this second 
action on an unrelated space is an unanticipated surprise. 

One very strange feature of intrinsic actions is that we only allow part 
of the symmetry group to act in this way. Translation operators are not 
permitted to act intrinsically (equivalently we require that their intrinsic 
action is trivial). That is because in the Poincar\'e group translation 
operators would have a continuous spectrum of eigenvalues.  If the translation 
operators in the Poincar\'e group were free to act intrinsically as rotations
do, we would expect to see continuously varying intrinsic versions of momentum 
and energy, which we do not\footnote{ Rest mass is a function of 
(and constrains) the ordinary momentum and energy of a particle. 
In particular it should not be confused with intrinsic Energy}.

It is strange enough that symmetries of space-time act intrinsically  at all.
It is doubly strange that only {\em some} of the symmetries of space-time 
should act in this way, and the constraint that excludes only translations 
from acting intrinsically seems arbitrary and ugly, if not actually broken.
Are things any better in this regard if we use \( \so(2,3) \) instead?

\medskip
The Lie algebra \( \so(2,3) \) has a very different spectrum of eigenvalues. 
In particular the fundamental representation of \( \so(2,3) \) is four
dimensional with two quantum numbers, a spin (eigenvalue for rotation) 
taking values \( \pm \frac{1}{2} \), and an intrinsic energy (eigenvalue for 
translation through time)  taking values \( \pm \frac{1}{2} \) in natural 
units. How well does this spectrum of eigenvalues accord with observation?

Fermions, and in particular solutions to the Dirac equation, are 
characterised by two quantum numbers, spin and charge. If we identify
charge with intrinsic energy this fits perfectly with the intrinsic
spectrum of \( \so(2,3) \). Furthermore this identification makes a great
deal of sense. It has long been clear that time and charge are linked 
in some fashion. Positrons for example are often described as electrons
travelling backwards in time. As we will see later the Dirac equation 
is very natural when viewed from the point of view of local symmetry 
group \( \so(2,3) \). Maxwell's equations are also natural from this 
perspective.

We conclude therefore that there is good reason to expect that \( \so(2,3) \)
may fit better with observation than the Poincar\'e group particularly 
with regard to intrinsic action on wave-functions in quantum mechanics. 
It would therefore be worthwhile to explore the consequences of using 
mathematical structures which naturally express this symmetry group in 
our physics. 

\bigskip
\pagebreak[2]

\section{The Lie algebra $\so(2,3)$}
\label{sectiondetails}

Choose coordinates \( \lambda,t,x,y,z \) in \( \R^5\) and define \( O(2,3) \)
as the \( 5 \times 5 \) real matrices conserving the bilinear form
\( \lambda_1\lambda_2 + t_1t_2-x_1x_2-y_1y_2-z_1z_2 \).
For \( u,v \in \R^5 \) this form can be written as 
\( <u,v> = u^t\Omega v \) where 

\medskip
\hfill \mbox{\small $
\tabcolsep=4pt
\Omega = \left(\begin{array}{rrrrr}
    1&0&0&0&0 \\
 0&   1&0&0&0 \\
 0&0&\lm1&0&0 \\
 0&0&0&\lm1&0 \\
 0&0&0&0&\lm1 
\end{array}\right)
$} \hfill
\medskip

Hence for \( A \in \mathsf{O}(2,3) \) we have \( A^t\Omega A = \Omega \) and
thus \( |A| = \pm 1 \). The matrices with determinant one give the subgroup 
\( \SO(2,3) \) which is the connected component of the identity.

The 4-D manifold in \( \R^5 \) where 
\( \lambda^2+t^2-x^2-y^2-z^2 = 1 \) forms an invariant 4-D manifold in 
\( \R^5 \) for this group. This manifold is called Anti-DeSitter spacetime,
and \( \SO(2,3) \) is called the Anti-DeSitter group when acting on this 
manifold.

Consider a small neighbourhood of the point on the manifold where 
\( \lambda = 1 \). Near this point \( \lambda \) is a function of the 
other coordinates allowing us to identify this neighbourhood with a 
local region in space-time. Note that \( \lambda = 1 \) up to second order 
in the coordinates, and for small regions the resulting transformation group 
is almost indistinguishable from the Poincar\'e group. ADS spacetime has 
a natural scale for distance and time inherited from \( \R^5 \). Define 
\( r \) seconds = \( 1 \) natural time unit.  Then \( rc \) metres = \( 1 \) 
natural distance unit. We will call \( r \) colloquially the {\em radius of 
the universe} as measured in seconds.


\begin{table}[hbp]
\tabcolsep=4pt
\hfill\mbox{\small $
T = \left(\begin{tabular}{rr|rrr}
  0&\lm1 &  0&0&0 \\
  1& 0  &  0&0&0 \\\hline
  0& 0  &  0&0&0 \\
  0& 0  &  0&0&0 \\
  0& 0  &  0&0&0 
  \end{tabular}\right)
$}\hfill
\smallskip

\hfill\mbox{\small $
X = \left(\begin{tabular}{rr|rrr}
  0& 0  & 1&0&0 \\
  0& 0  & 0&0&0 \\\hline
  1& 0  & 0&0&0 \\
  0& 0  & 0&0&0 \\
  0& 0  & 0&0&0 
  \end{tabular}\right)
\mbox{\hskip0.4cm}
Y = \left(\begin{tabular}{rr|rrr}
  0& 0  &  0&1&0 \\
  0& 0  &  0&0&0 \\\hline
  0& 0  &  0&0&0 \\
  1& 0  &  0&0&0 \\
  0& 0  &  0&0&0 
  \end{tabular}\right)
\mbox{\hskip0.4cm}
Z = \left(\begin{tabular}{rr|rrr}
  0& 0  &  0&0&1 \\
  0& 0  &  0&0&0 \\\hline
  0& 0  &  0&0&0 \\
  0& 0  &  0&0&0 \\
  1& 0  &  0&0&0   
\end{tabular}\right)
$}\hfill
\smallskip

\hfill\mbox{\small $
A = \left(\begin{tabular}{rr|rrr}
  0&0  &  0&0&0 \\
  0&0  &  1&0&0 \\\hline
  0&1  &  0&0&0 \\
  0&0  &  0&0&0 \\
  0&0  &  0&0&0 
  \end{tabular}\right)
\mbox{\hskip0.4cm}
B = \left(\begin{tabular}{rr|rrr}
  0&0  &  0&0&0 \\
  0&0  &  0&1&0 \\\hline
  0&0  &  0&0&0 \\
  0&1  &  0&0&0 \\
  0&0  &  0&0&0 
  \end{tabular}\right)
\mbox{\hskip0.4cm}
C = \left(\begin{tabular}{rr|rrr}
  0&0  &  0&0&0 \\
  0&0  &  0&0&1 \\\hline
  0&0  &  0&0&0 \\
  0&0  &  0&0&0 \\ 
  0&1  &  0&0&0   
  \end{tabular}\right)
$}\hfill
\smallskip

\hfill\mbox{\small $
I = \left(\begin{tabular}{rr|rrr}
  0&0  &  0&0&0 \\
  0&0  &  0&0&0 \\\hline
  0&0  &  0&0& 0 \\
  0&0  &  0&0&\lm1 \\
  0&0  &  0&1& 0 
  \end{tabular}\right)
\mbox{\hskip0.4cm}
J = \left(\begin{tabular}{rr|rrr}
  0&0  &  0&0&0 \\
  0&0  &  0&0&0 \\\hline
  0&0  &  0 &0&1 \\
  0&0  &  0 &0&0 \\
  0&0  &  \lm1&0&0 
  \end{tabular}\right)
\mbox{\hskip0.4cm}
K = \left(\begin{tabular}{rr|rrr}
  0&0  &  0&0&0 \\
  0&0  &  0&0&0 \\\hline
  0&0  &  0&\!-1&0 \\
  0&0  &  1& 0&0 \\
  0&0  &  0& 0&0 
  \end{tabular}\right)
$}\hfill
\smallskip

\caption{The canonical representation of \( \so(2,3) \)}
\label{so23_5x5}
\end{table}

\begin{table}[htbp]
\begin{center} \small
\tabcolsep=5pt
\setlength{\extrarowheight}{0.5pt}
\begin{tabular}{C<{\hskip0.5em}|>{\hskip0.5em}C!{\hskip1em}CCC!{\hskip1em}CCC!{\hskip1em}CCC}
  &  T &  X& Y& Z &  A& B& C &  I& J& K \\[0.2em]\hline
  &    &   &  &   &   &  &   &   &  &   \\[-0.9em]
T &  0 &  A& B& C & \lm X&\lm Y&\lm Z &  0& 0& 0 \\[1em]
X & \lm A &  0&\lm K& J & \lm T& 0& 0 &  0& Z&\lm Y \\
Y & \lm B &  K& 0&\lm I &  0&\lm T& 0 & \lm Z& 0& X \\
Z & \lm C & \lm J& I& 0 &  0& 0&\lm T &  Y&\lm X& 0 \\[1em]
A &  X &  T& 0& 0 &  0&\lm K& J &  0& C&\lm B \\
B &  Y &  0& T& 0 &  K& 0&\lm I & \lm C& 0& A \\
C &  Z &  0& 0& T & \lm J& I& 0 &  B&\lm A& 0 \\[1em]
I &  0 &  0& Z&\lm Y &  0& C& \lm B &  0& K&\lm J \\
J &  0 & \lm Z& 0& X & \lm C& 0& A & \lm K& 0& I \\
K &  0 &  Y&\lm X& 0 &  B&\lm A& 0 &  J&\lm I& 0 
\end{tabular}
\end{center}
\caption{The Lie algebra \( \so(2,3) \) in natural units}
\label{table_so23}
\end{table}

\newcommand{\osr}{\nfrac{1}{r^2}}
\newcommand{\osc}{\nfrac{1}{c^2}}
\newcommand{\osrc}{\nfrac{1}{r^2c^2}}

\begin{table}
\tabcolsep=5pt
\setlength{\extrarowheight}{1.5pt}
\begin{center} \small
\begin{tabular}{C<{\hskip0.5em}|>{\hskip0.5em}C!{\hskip1em}CCC!{\hskip1em}CCC!{\hskip1.5em}CCC}
  &  T &  X& Y& Z &  A& B& C &  I& J& K \\[0.2em]\hline
  &    &   &  &   &   &  &   &   &  &   \\[-0.9em]
T & 0 &  \osr A&\osr B&\osr C  & \lm X&\lm Y&\lm Z  & 0&0&0 \\
X & \lm\osr A  &  0&\lm\osrc K&\osrc J  &  \lm\osc T&0&0  &  0&Z&\lm Y \\
Y & \lm\osr B  &  \osrc K&0&\lm\osrc I  &  0&\lm\osc T&0  &  \lm Z&0&X \\
Z & \lm\osr C  &  \lm\osrc J&\osrc I&0  &  0&0&\lm\osc T  &  Y&\lm X&0 \\[1em]
A &  X  &  \osc T&0&0  &  0&\lm\osc K&\osc J  &  0& C&\lm B  \\
B &  Y  &  0&\osc T&0  &  \osc K&0&\lm\osc I  &  \lm C& 0& A \\
C &  Z  &  0&0&\osc T  &  \lm\osc J&\osc I&0  &  B&\lm A& 0  \\[1em]
I &  0 & 0& Z&\lm Y & 0& C& \lm B & 0& K&\lm J \\
J &  0 & \lm Z& 0& X & \lm C& 0& A & \lm K& 0& I \\
K &  0 & Y&\lm X& 0 &  B&\lm A& 0 &  J&\lm I& 0 
\end{tabular}
\end{center}
\caption{The Lie algebra \( \so(2,3) \) in ordinary units}
\label{table_so23_ordinary}
\end{table}

Matrices for a basis of \( \so(2,3) \) are given in table~\ref{so23_5x5}.
Commutators can easily be computed for these matrices. 
Their commutators are given in table~\ref{table_so23}. These operators
can be identified by examining how they act on the invariant 4-D manifold 
identified with space-time. We discover that \( T \) describes
translation through time, \( X,Y,Z \) give translation 
through space, \( A,B,C\) are Lorentz boost operators 
changing the velocity of the inertial frame, and 
\( I,J,K\) are rotation operators. Elements of the 
group \( \SO(2,3) \) can be recovered as exponentials. For example 
the group element translating through time by \( d \) seconds
(or \( \frac{d}{r} \) natural units) is 
\( e^{\frac{d}{r}T} \).

The operators 
\( \left\{ \frac{1}{r}T, \,\frac{1}{rc}X, \,\frac{1}{rc}Y, \,\frac{1}{rc}Z, 
	\,\frac{1}{c}A, \,\frac{1}{c}B, \,\frac{1}{c}C, \,I,\,J,\,K \right\} \)
give an alternative basis for the Lie algebra using operators
defined using ordinary units. To obtain a description of the Lie algebra in 
ordinary units we should rename \( T,X,Y,Z,A,B,C,I,J,K \) to these 
scaled versions. Table~\ref{table_so23_ordinary} gives the Lie algebra in 
ordinary units. Table~\ref{table_so23} can be recovered by choosing units for
distance and time so that \( r = c = 1 \).

Looking at table~\ref{table_so23_ordinary} and letting 
\( r \rightarrow \infty \) we obtain the Lie algebra for the Poincar\'e
group.  We say that the ADS group contracts to the Poincar\'e group as 
the radius of the universe tends to infinity.
Letting \( c \rightarrow \infty \) as well gives the non-relativistic 
Galilean group of classical mechanics.  We say that the Poincar\'e group
contracts to the Galilean group as the speed of light tends to infinity.

It is also possible to allow \( c \rightarrow \infty \) while leaving 
\( r \) alone. The result describes a non-relativistic universe in which 
objects with velocity acquire distance over time (as one would expect); but
also objects which are distant acquire velocity over time. In other words
we'd have a classical universe with a cosmological constant. 

Historically the ADS group arose as a cosmological model for a space
of constant curvature. As a cosmological model it suffers from causality 
violating time-like loops. We will use the group only to describe local 
symmetry and this does not require acceptance of either ADS cosmology or 
time-like loops. The ADS group and the Poincar\'e group 
are indistinguishable as local symmetry groups for large \( r \) so
this is consistent with observation.

\bigskip
\pagebreak[2]

\section{The Lie algebra $ \sp(2,\R) $}
\label{sectionsp4r}

\( \Sp(2,\R) \) is the group of \( 4\times 4 \) symplectic 
matrices\footnote{Notation for symplectic groups varies.
We use the notation from \cite{Hall}.}. Symplectic
matrices are those which conserve a symplectic form -- in this case 
a fixed antisymmetric non-degenerate bilinear form on \( \R^4 \). 
Such a form is given with respect to a Darboux basis by 
\( <u,v> = u^t\Omega v \) where \( \Omega \) is the 
\( 4\times 4\) matrix

\centerline{
\mbox{\small $
\tabcolsep=4pt
\Omega = \left(\begin{array}{rrrr}
 0&0 & 1&0 \\
 0&0 & 0&1 \\
\lm1&0 & 0&0 \\
0&\lm1 & 0&0 
\end{array}\right)
$}}

\medskip
\pagebreak[2]

The group \( \Sp(2,\R) \) thus consists of all \( 4\times 4\) matrices
\( M \) with \( M^t\Omega M = \Omega \). The associated Lie algebra \( \sp(2,\R) \)
consists of the \( 4 \times 4 \) matrices \( D \) with 
\( D^t\Omega + \Omega D = 0 \). This Lie algebra is ten dimensional with basis
specified as in table~\ref{sp4R_basis}.

Commutators of these matrices give precisely the relationships
in table~\ref{table_so23}, hence \( \so(2,3) \) and \( \sp(2,\R) \)
are isomorphic. 

Note that each matrix \( M \) in this representation is invertible
with \( M^{-1} = \pm 4M \). The sign here is positive for 
\( \{ X,Y,Z,A,B,C \} \) and negative for \( \{ T,I,J,K \} \). We
can therefore express this in terms of the invariant bilinear form 
on the Lie algebra.


\begin{table}[htbp]
\centering
\tabcolsep=4pt
\mbox{\small $
T = \frac{1}{2}\left(\begin{tabular}{rrrr}
  0&0  &  1&0 \\
  0&0  &  0&1 \\
 \lm1&0  &  0&0 \\
  0&\lm1 &  0&0 
  \end{tabular}\right)
$}
\smallskip

\mbox{\small $
I = \frac{1}{2}\left(\begin{tabular}{rrrr}
  0&0  & 1&0 \\
  0&0  & 0&\lm1\\
 \lm1&0  & 0&0 \\
  0&1 &  0&0 
  \end{tabular}\right)
$}\hfill
\mbox{\small $
J = \frac{1}{2}\left(\begin{tabular}{rrrr}
  0&1  & 0&0 \\
 \lm1&0  & 0&0 \\
  0&0  & 0&1 \\
  0&0 & \lm1&0 
  \end{tabular}\right)
$}\hfill
\mbox{\small $
K = \frac{1}{2}\left(\begin{tabular}{rrrr}
  0&0  & 0&1 \\
  0&0  & 1&0 \\
  0&\lm1 & 0&0 \\
 \lm1&0  & 0&0 
  \end{tabular}\right)
$}
\smallskip

\mbox{\small $
X = \frac{1}{2}\left(\begin{tabular}{rrrr}
  0&0  & 0&1 \\
  0&0  & 1&0 \\
  0&1  & 0&0 \\
  1&0  & 0&0 
  \end{tabular}\right)
$}\hfill
\mbox{\small $
Y = \frac{1}{2}\left(\begin{tabular}{rrrr}
  1&0  & 0&0 \\
  0&1  & 0&0 \\
  0&0  &\lm1&0 \\
  0&0  & 0&\lm1 
  \end{tabular}\right)
$}\hfill
\mbox{\small $
Z = \frac{1}{2}\left(\begin{tabular}{rrrr}
  0&0  &\lm1&0 \\
  0&0  & 0&1 \\
 \lm1&0  & 0&0 \\
  0&1 &  0&0 
  \end{tabular}\right)
$}
\smallskip

\mbox{\small $
A = \frac{1}{2}\left(\begin{tabular}{rrrr}
  0&1  & 0&0 \\
  1&0  & 0&0 \\
  0&0  & 0&\lm1 \\
  0&0  & \lm1&0 
  \end{tabular}\right)
$}\hfill
\mbox{\small $
B = \frac{1}{2}\left(\begin{tabular}{rrrr}
  0&0  & \lm1&0 \\
  0&0  & 0&\lm1 \\
  \lm1&0  & 0&0 \\
  0&\lm1  & 0&0 
  \end{tabular}\right)
$}\hfill
\mbox{\small $
C = \frac{1}{2}\left(\begin{tabular}{rrrr}
  \lm1&0  & 0&0 \\
   0&1  & 0&0 \\
  0&0  & 1&0 \\
  0&0 &  0&\lm1 
  \end{tabular}\right)
$}
\smallskip

\caption{The canonical representation of \( \sp(2,\R) \) }
\label{sp4R_basis}
\end{table}

\bigskip
\pagebreak[2]

Consider now a matrix \( P \) with the property that 
\( P^t\Omega = \Omega P \). We will call a matrix with this property
\( \Omega \)-symmetric. If \( P \) is \( \Omega\)-symmetric and 
\( M \in \sp(2,\R) \) then \( [M,P] \) is also \( \Omega\)-symmetric. 
Hence the \( 4 \times 4 \) \( \Omega\)-symmetric matrices form a 
representation of \( \sp(2,\R) \) under the adjoint action. This 
representation is six dimensional with basis given in 
table~\ref{sp4R_action}.

\begin{table}[hbp]
\tabcolsep=4pt
\hfill\mbox{\small $
1 = \left(\begin{tabular}{rrrr}
  1&0 &  0&0 \\
  0&1 &  0&0 \\
  0&0 &  1&0 \\
  0&0 &  0&1 
  \end{tabular}\right)
$}\hfill
\smallskip

\hfill\mbox{\small $
P_\lambda = \frac{1}{2}\left(\begin{tabular}{rrrr}
  0&0  & 0&1 \\
  0&0  &\lm1&0\\
  0&1  & 0&0 \\
 \lm1&0 &  0&0 
  \end{tabular}\right)
$}\hskip2em
\mbox{\small $
P_t = \frac{1}{2}\left(\begin{tabular}{rrrr}
  0&1  & 0&0 \\
 \lm1&0  & 0&0 \\
  0&0  & 0&\lm1\\
  0&0 &  1&0 
  \end{tabular}\right)
$}\hfill
\smallskip

\hfill\mbox{\small $
P_x = \frac{1}{2}\left(\begin{tabular}{rrrr}
 \lm1&0 & 0&0 \\
  0&1 & 0&0 \\
  0&0 &\lm1&0 \\
  0&0 & 0&1 
  \end{tabular}\right)
$}\hskip2em
\hfill\mbox{\small $
P_y = \frac{1}{2}\left(\begin{tabular}{rrrr}
  0&0  & 0&1 \\
  0&0  &\lm1&0 \\
  0&\lm1 & 0&0 \\
  1&0  & 0&0 
  \end{tabular}\right)
$}\hskip2em
\mbox{\small $
P_z = \frac{1}{2}\left(\begin{tabular}{rrrr}
  0&\lm1  & 0&0 \\
  \lm1&0  & 0&0 \\
  0&0  & 0&\lm1 \\
  0&0  & \lm1&0 
  \end{tabular}\right)
$}\hfill
\smallskip

\caption{Basis for a 6-D representation of \( \sp(2,\R) \) under adjoint action}
\label{sp4R_action}
\end{table}

This 6-D representation is reducible and is the direct sum of a 1-D trivial
representation with basis \( \{ 1 \} \) and an irreducible
5-D representation with basis \( \{ P_\lambda, P_t, P_x, P_y, P_z \} \) 
where the action of \( \{ T,X,Y,Z,A,B,C,I,J,K\} \) on this basis 
generates precisely the matrices in table~\ref{so23_5x5}.

The matrices \( \{ P_\lambda, P_t, P_x, P_y, P_z \} \) can be expressed in 
terms of the matrices  \( \{ T,X,Y,Z,A,B,C,I,J,K\} \) as follows.
\begin{subequations}
\label{P_matrix_definitions}
\begin{alignat}{7}
 P_\lambda &{}={}& -2AI &{}={}&  -2IA &{}={}&  -2BJ &{}={}& -2JB &{}={}&  -2CK &{}={}& -2KC&  \\
 P_t       &{}={}&  2XI &{}={}&   2IX &{}={}&   2YJ &{}={}&  2JY &{}={}&   2ZK &{}={}&  2KZ&  \\
 P_x       &{}={}&  2TI &{}={}&   2IT &{}={}&  -2ZB &{}={}& -2BZ &{}={}&   2YC &{}={}&  2CY&  \\
 P_y       &{}={}&  2TJ &{}={}&   2JT &{}={}&  -2XC &{}={}& -2CX &{}={}&   2ZA &{}={}&  2AZ&  \\
 P_z       &{}={}&  2TK &{}={}&   2KT &{}={}&  -2YA &{}={}& -2AY &{}={}&   2XB &{}={}&  2BX&
\end{alignat}
\end{subequations} 

As all the products listed commute we can also express these matrices in 
terms of Jordan brackets; for example \( P_t = XI + IX = \{X,I\} \).

The matrices \( \{ P_\lambda, P_t,P_x,P_y,P_z, T,X,Y,Z,A,B,C,I,J,K \} \)
form a basis of the conformal Lie algebra \( \so(3,3) \). This Lie algebra 
is given in table~\ref{so33_table}

\bigskip

\begin{table}[htbp]
\tabcolsep=3pt
\setlength{\extrarowheight}{0.5pt}
\begin{center}\small
\begin{tabular}
{C<{\hskip6pt}|>{\hskip6pt}CC!{\hskip8pt}CCC!{\hskip14pt}C!{\hskip8pt}CCC!{\hskip8pt}CCC!{\hskip8pt}CCC}
& P_\lambda & P_t & P_x & P_y & P_z
  &  T &  X& Y& Z &  A& B& C &  I& J& K \\[0.2em]\hline
  &    &   &  &   &   &  &   &   &  &   \\[-0.9em]
P_\lambda & 0 & T & X & Y & Z &
\lm P_t & \lm P_x &\lm P_y &\lm P_z & 0 & 0 & 0 & 0 & 0 & 0 \\
P_t & \lm T & 0 & A & B & C &
P_\lambda & 0 & 0 & 0 & \lm P_x & \lm P_y & \lm P_z & 0 & 0 & 0 \\[0.5em]
P_x & \lm X &  \lm A & 0 & \lm K & J &
0 & \lm P_\lambda & 0 & 0 & \lm P_t & 0 & 0 & 0 & P_z & \lm P_y \\
P_y & \lm Y &  \lm B & K & 0 & \lm I &
0 & 0 & \lm P_\lambda& 0 & 0 & \lm P_t & 0 & \lm P_z & 0 & P_x \\
P_z & \lm Z &  \lm C & \lm J & I & 0 &
0 & 0 & 0 & \lm P_\lambda & 0 & 0 & \lm P_t & P_y & \lm P_x & 0 \\[1em]
T & P_t &\lm P_\lambda & 0 & 0 & 0 &
0 &  A& B& C & \lm X&\lm Y&\lm Z &  0& 0& 0 \\[0.5em]
X & P_x & 0 & P_\lambda & 0 & 0 &
\lm A &  0&\lm K& J & \lm T& 0& 0 &  0& Z&\lm Y \\
Y & P_y & 0 & 0 & P_\lambda& 0 &
\lm B &  K& 0&\lm I &  0&\lm T& 0 & \lm Z& 0& X \\
Z & P_z & 0 & 0 & 0 & P_\lambda &
\lm C & \lm J& I& 0 &  0& 0&\lm T &  Y&\lm X& 0 \\[0.5em]
A & 0 & P_x & P_t & 0 & 0 &
X &  T& 0& 0 &  0&\lm K& J &  0& C&\lm B \\
B & 0 & P_y & 0 & P_t & 0 &
Y &  0& T& 0 &  K& 0&\lm I & \lm C& 0& A \\
C & 0 & P_z & 0 & 0 & P_t &
Z &  0& 0& T & \lm J& I& 0 &  B&\lm A& 0 \\[0.5em]
I & 0 & 0 & 0 & P_z & \lm P_y &
0 &  0& Z&\lm Y &  0& C& \lm B &  0& K&\lm J \\
J & 0 & 0 & \lm P_z & 0 & P_x &
0 & \lm Z& 0& X & \lm C& 0& A & \lm K& 0& I \\
K & 0 & 0 & P_y & \lm P_x & 0 &
0 &  Y&\lm X& 0 &  B&\lm A& 0 &  J&\lm I& 0 
\end{tabular}
\end{center}
\caption{The Lie algebra $ \so(3,3) $}
\label{so33_table}
\end{table}

%

The adjoint action of \( \{ T,X,Y,Z,A,B,C,I,J,K \} \) on
\( \{ P_\lambda, P_t,P_x,P_y,P_z \} \) agrees with the usual 
physical interpretation. This suggests that the adjoint action 
of the operators \( \{ P_\lambda, P_t,P_x,P_y,P_z \} \) might
allow us to physically interpret these operators.

The operator \( P_\lambda \) for example rotates
\( P_t \) through \( T \), 
\( P_x \) through \( X \),
\( P_y \) through \( Y \) and
\( P_z \) through \( Z \). 
Hence we might view  \( P_\lambda \) as the operator for reflection or 
inversion of the \( t\),\(x\),\(y\), and \( z \) coordinates 
\label{page:Plambda_inverts}
via rotation through additional dimensions.  Similarly \( P_t \) 
can be viewed as inverting all coordinates except the \( t \) coordinate. 

The operators \( P_x \), \( P_y \) and \( P_z \) are not compact, so
would give hyperbolic rotations through additional dimensions. It is not 
clear how these should be interpreted. 

The operator \( P_x \) commutes with \( T \) and \( I \) and
\( \{ T,I,P_x\} \) is the basis of a maximally compact Cartan subalgebra 
of \( \so(3,3) \). 

\bigskip
\pagebreak[2]

\section{The Enveloping Algebra.}
\label{section_enveloping}

The enveloping algebra is the non-commutative polynomial Lie algebra 
generated abstractly by symbols \( \{ T,X,Y,Z,A,B,C,I,J,K \} \) which 
commute according to the relations in table~\ref{table_so23}.
The commutator is linear and satisfies Leibniz conditions
\begin{flalign*}
[ P,QR ] &= [ P,Q ]R + Q[ P,R ] \\
[ PQ,R ] &= [ P,R ]Q + P[ Q,R ] 
\end{flalign*}
where \( P \), \( Q \) and \( R \) are arbitrary polynomials. The 
commutator (Lie bracket) of any pair of polynomials in the enveloping 
algebra can be determined from these conditions. 

The enveloping algebra is graded by degree. The original Lie algebra 
(in our case \( \so(2,3) \)) is the subalgebra of polynomials of degree 
one. Any representation of the enveloping algebra restricts to a 
representation of \( \so(2,3) \) and  conversely any representation of 
\( \so(2,3) \) can be extended uniquely to a representation of the enveloping 
algebra. In particular the adjoint representation of the enveloping algebra 
on itself gives us a representation of \( \so(2,3) \) on the enveloping algebra. 

It is not hard to show using table~\ref{table_so23} and the Leibniz conditions
for commutators that this representation conserves the space \( V_d \) of
polynomials of degree at most \( d \). Hence \( \so(2,3) \) is also 
naturally represented on the quotient space \( V_{d+1}/V_d \).

The polynomial 
\begin{equation}
\label{2nd_degree_Casimir}
Q = -T^2 + X^2 + Y^2 + Z^2 +A^2 + B^2 + C^2 - I^2 - J^2 - K^2 
\end{equation}
commutes with all of \( \{ T,X,Y,Z,A,B,C,I,J,K \} \) and hence 
lies in the centre of the enveloping algebra. Any representation of 
\( \so(2,3) \) can be extended to a representation of the enveloping 
algebra and in particular \( Q \) is so represented. The eigenspaces 
of this operator are conserved by \( \so(2,3) \), so an irreducible 
representation of \( \so(2,3) \) consists of a single eigenspace of 
\( Q \) with eigenvalue denoted \( \mu \).

The polynomials
\begin{subequations}
\label{enveloping_algebra_P_matrices}
\begin{align}
P_\lambda &=  {\nfrac{2}{3}}(-AI - BJ - CK) \\
P_t       &=  {\nfrac{2}{3}}(XI + YJ + ZK) \\
P_x       &=  {\nfrac{2}{3}}(TI - BZ + CY) \\
P_y       &=  {\nfrac{2}{3}}(TJ - CX + AZ) \\
P_z       &=  {\nfrac{2}{3}}(TK - AY + BX)
\end{align}
\end{subequations}

generate a subspace of the enveloping algebra which is stable 
under the action of \( \so(2,3) \) and which is isomorphic to the 
canonical 5-D representation. The constant of \( \frac{2}{3} \) has
been added to ensure that when evaluated in the case of the
4-D fundamental representation, these definitions agree with those 
in equations~\ref{P_matrix_definitions} in the last 
section\footnote{
However the commutators of these operators in the enveloping algebra are 
not as described in table~\ref{so33_table}, which is valid only for the 4-D 
representation. In particular the set
 $ \{ P_\lambda,P_t,P_x,P_y,P_z, T,X,Y,Z,A,B,C,I,J,K \} $ 
does not form a basis for a copy of $ \so(3,3) $ in the enveloping algebra.}.
                
The fourth degree polynomial 
\begin{equation}
\label{4th_degree_Casimir}
R = P_\lambda^2 + P_t^2 - P_x^2 - P_y^2 - P_z^2 
\end{equation}
is invariant and lies in the centre of the enveloping algebra. The 
eigenspaces of this operator will also be conserved, hence an
irreducible representation consists of a single eigenspace with 
eigenvalue denoted \( \rho \).

The two operators \( Q \) and \( R \) are algebraically independent and 
together generate the centre of the enveloping algebra. They are known 
as the second and fourth degree Casimir operators respectively.

\bigskip
\pagebreak[2]

\section{Roots and Weights}


The Lie algebra \( \so(2,3) \simeq \sp(2,\R) \) is simple. We can therefore 
compute its roots and use those to obtain information about irreducible 
representations. The general theory of roots and weights can be found in 
any introductory Lie algebra text (for example \cite{Hall} or \cite{Knapp}). 
In this section we summarise the consequences of that general theory 
for the specific Lie algebra \( \so(2,3) \).

\medskip

Consider a finite dimensional complex irreducible representation. Such a 
representation associates matrices to elements of the enveloping algebra. 
The matrix for the Casimir element
\( -T^2 + X^2 + Y^2 + X^2 + A^2 + B^2 + C^2 - I^2 - J^2 - K^2 \) 
must commute with the matrices for all elements of the Lie algebra, 
hence eigenspaces of this matrix are invariant. Since by assumption our 
representation is irreducible we conclude that the matrix of the Casimir 
element is scalar.

The Casimir element and Killing form to which it is closely related, can be 
used to separate the Lie algebra into two subspaces. The 
\textbf{compact subspace}  \( \text{span}\{ T, I, J, K \} \) is spanned by 
basis elements with a negative coefficient in the Casimir element, while
the \textbf{non-compact subspace} \( \text{span}\{X,Y,Z,A,B,C\} \) is spanned 
by elements with a positive coefficient.

Our next task is to choose a Cartan subalgebra (CSA)- a maximal commutative 
subalgebra of \( \so(2,3) \). There are many possibilities but we single out 
two of particular utility , a \textbf{compact CSA} \( \text{span}\{ T,I \} \) 
contained in the compact subspace, and a \textbf{non-compact CSA} 
\( \text{span}\{ Y,C \} \) contained in the non-compact subspace. 

The non-compact CSA is useful in proving certain mathematical properties 
of the Lie algebra. However the compact CSA has the greatest physical utility
since compact elements have eigenvalues that are more easily interpreted. 
For example the eigenvalues of a rotation operator which is compact gives 
angular momentum or spin, whereas it is much less clear what the eigenvalues 
of a non-compact operator such as a Lorentz boost might be describing. 
\begin{table}[hbp]
\begin{center}
\begin{tabular}{C!{\hskip1em$($}R!{,}R!{$)$\hskip1em}C}          \toprule
\text{Root} &  \lambda_T & \lambda_I & \text{Root element} \\    \midrule
\cdot      &    0 &  0  &  T \\[0.2em] 
\cdot\cdot &    0 &  0  &  I \\[0.2em] 
\EE &  i & 0  &  \mbox{$\nfrac{1}{\sqrt{2}}(A+iX)$}    \\[0.2em] 
\AW & -i & 0  &  \mbox{$\nfrac{1}{\sqrt{2}}(-A+iX)$}   \\[0.2em] 
\AN &  0 & i  &  \mbox{$\nfrac{1}{\sqrt{2}}(J-iK)$}    \\[0.2em] 
\AS &  0 &-i  &  \mbox{$\nfrac{2}{\sqrt{2}}(J+iK)$}    \\[0.2em] 
\ANE&  i & i  &  \mbox{$\nfrac{1}{2}( C-Y +iB +iZ)$}   \\[0.2em] 
\ANW& -i & i  &  \mbox{$\nfrac{1}{2}( C+Y +iB -iZ )$}  \\[0.2em] 
\ASE&  i &-i  &  \mbox{$\nfrac{1}{2}(-C-Y +iB -iZ )$}  \\[0.2em] 
\ASW& -i &-i  &  \mbox{$\nfrac{1}{2}(-C+Y +iB +iZ )$}  \\[0.2em] \bottomrule
\end{tabular}
\end{center}
\caption{Roots for compact CSA}
\label{roots_compact}
\end{table}

Since the elements of a CSA commute we may seek simultaneous eigenvectors for 
the basis of the CSA. We call these \textbf{weight vectors} and the associated
pair of simultaneous eigenvalues is called a \textbf{weight}. The weights and
weight vectors for the adjoint representation of the Lie algebra have special 
names. We call these \textbf{roots} and \textbf{root elements} respectively. 
Roots and root elements for the compact CSA are given in 
table~\ref{roots_compact}.
\begin{figure}[ht]
  \begin{center}
\includegraphics[width=3cm]{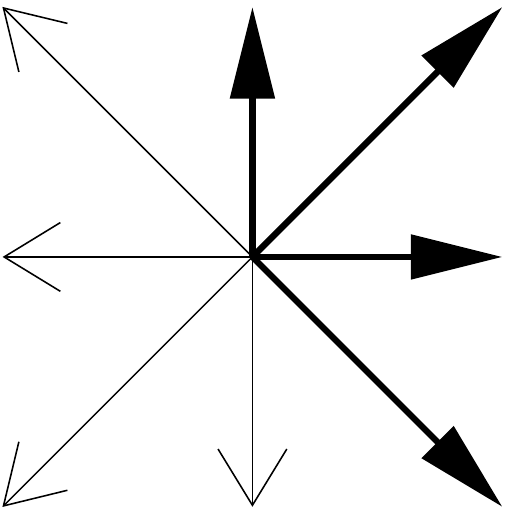}
  \end{center}
  \caption{The Root System for $ \so(2.3) $ }
\label{rootdiagram}
\end{figure}
The components of a root \( (\lambda_T,\lambda_I) \)  for the compact CSA 
are imaginary.  Ignoring the imaginary unit we can depict roots graphically 
in a root diagram as shown in figure~\ref{rootdiagram}. We have chosen to 
name our root elements by using the corresponding vectors in the root diagram.
The root elements extend the basis of the CSA to a full basis of the Lie 
algebra.  .

Suppose $ (r_1,r_2)$ is a root for root element $ R $ with respect to 
some CSA $ \text{span}(Q_1,Q_2) $. Let the Lie algebra be represented
on $ V $ and let $ v \in V $ be a weight vector for weight $ (\mu_1,\mu_2) $.
Then
\begin{equation}
Q_i(R(v)) = R(Q_i(v)) + [Q_i,R](v)
          = \mu_i R(v) + r_i R(v) \\
          = (\mu_i + r_i) R(v) 
\end{equation}

so $ R(v) $ is either zero or is a weight vector for weight 
$ (\mu_1 + r_1, \mu_2 + r_2) $.  We conclude that root elements 
map weight vectors to weight vectors, and map corresponding weights 
by adding the root.

As a corollary of this the set of all weight vectors spans an invariant 
subspace, and so any irreducible representation has a basis of weight vectors.
In such a basis the matrices for elements of the CSA are diagonal. As every 
element of \( \so(2,3) \) belongs to at least one CSA we conclude that 
all matrices in an irreducible representation are diagonalisable.

If we start with one given weight vector $v$, then the set of all weight 
vectors obtained by repeated application of root elements to $v$ spans a 
non-trivial invariant subspace and hence there is a basis of such vectors 
in the irreducible case. Hence the set of weights for an irreducible 
representation is contained in a lattice in $ \C^2 $ invariant under 
translation by root vectors. For each weight we may look at the dimension 
of the space of weight vectors belonging to that weight. We call this 
dimension the degree of the weight. A weight diagram for an irreducible
representation is a plot of the set of all weights labelled with their 
degrees.

For a finite dimensional representation only finitely many weights appear. 
The components of each weight are pure imaginary and can be ordered. We can
then order the weights themselves via the library ordering of their components
and identify a maximal weight and a minimal weight. 

The same ordering can be applied to roots by classifying the non-zero roots 
as either positive and negative depending on whether translation by this root 
constitutes an increase or a decreases under the weight ordering. 
Figure~\ref{rootdiagram} shows a depiction of the roots with positive roots
marked. 

\bigskip

The root diagram and all the weight diagrams are symmetrical with respect to 
the Weyl group. The Weyl group is given by the natural action of the Lie group 
on the Lie algebra. In particular it is the quotient of the stabiliser by the 
centraliser of the set of roots under this natural action. The Weyl group for 
\( \so(2,3) \) is the symmetry group \( D_8 \), the dihedral group of order 8, 
and is generated by reflections of the root diagram which map a root to its
negative.

In particular the minimal weight must be the negative of the maximal weight 
since inversion is an element of the Weyl group. Also the set of weights must 
be invariant under exchange of coordinates since this also is an element of 
the Weyl group.  For an irreducible representation all weights, and in 
particular the maximal weight, can be obtained by a sequence of translations 
by positive roots from the minimal weight.

We conclude that the maximal weights for an irreducible representation 
take the form \( \bigl(mi,ni\bigr) \) or 
\( \bigl((m + \nfrac{1}{2})i,(n+ \nfrac{1}{2})i\bigr) \) 
where \( m,n \in \Z \).

\bigskip

It is possible to show (we not do so here) that all finite dimensional 
irreducible complex representations of \( \so(2,3) \) have a basis in which 
the matrices for elements of \( \so(2,3) \) are all real. As a consequence 
of this result the real irreducible representations can be identified with 
the complex ones.

\begin{figure}[htp]
\hangsubcaption
\centering
\subbottom[1-dimensional $(q_0,s_0) = (0,0)$
  \label{weight0-0}]{\includegraphics[width=3cm]{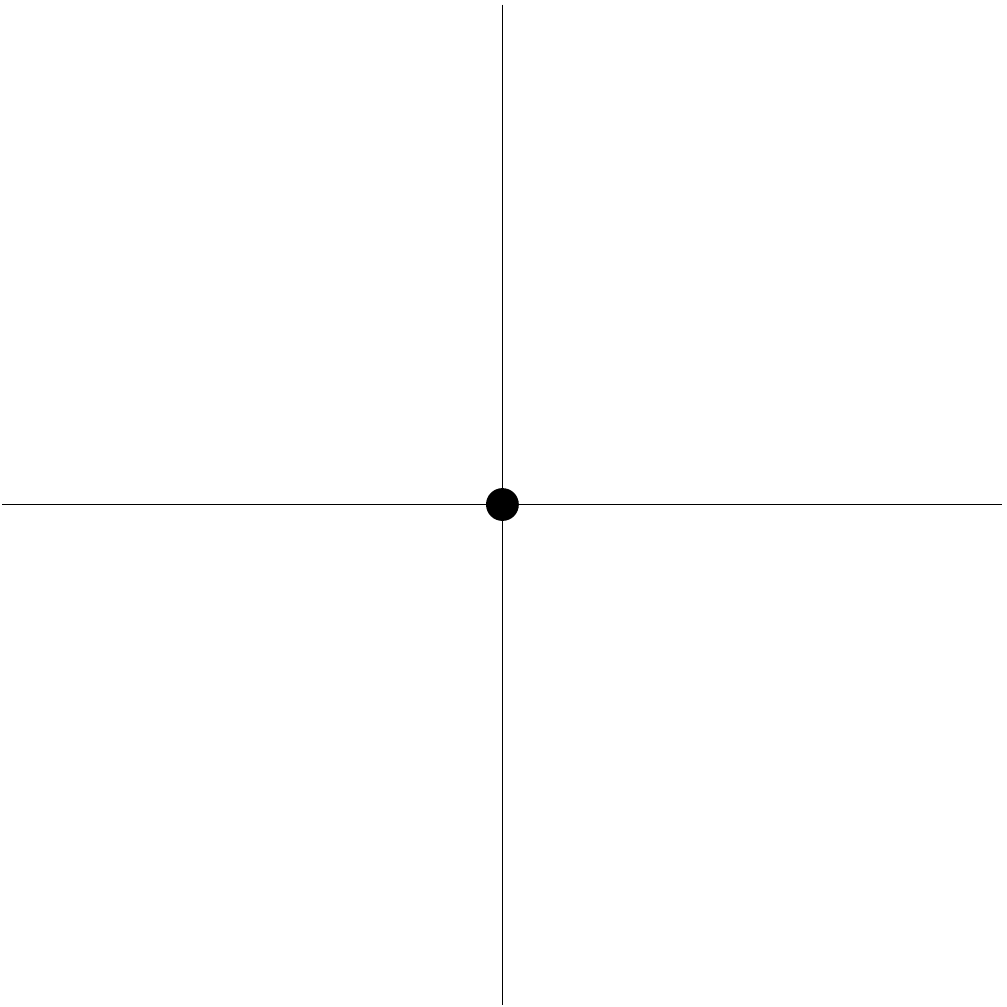} }
 \hskip1cm
\subbottom[4-dimensional $(q_0,s_0) = (\frac{1}{2},\frac{1}{2})$
  \label{weight1-1}]{\includegraphics[width=3cm]{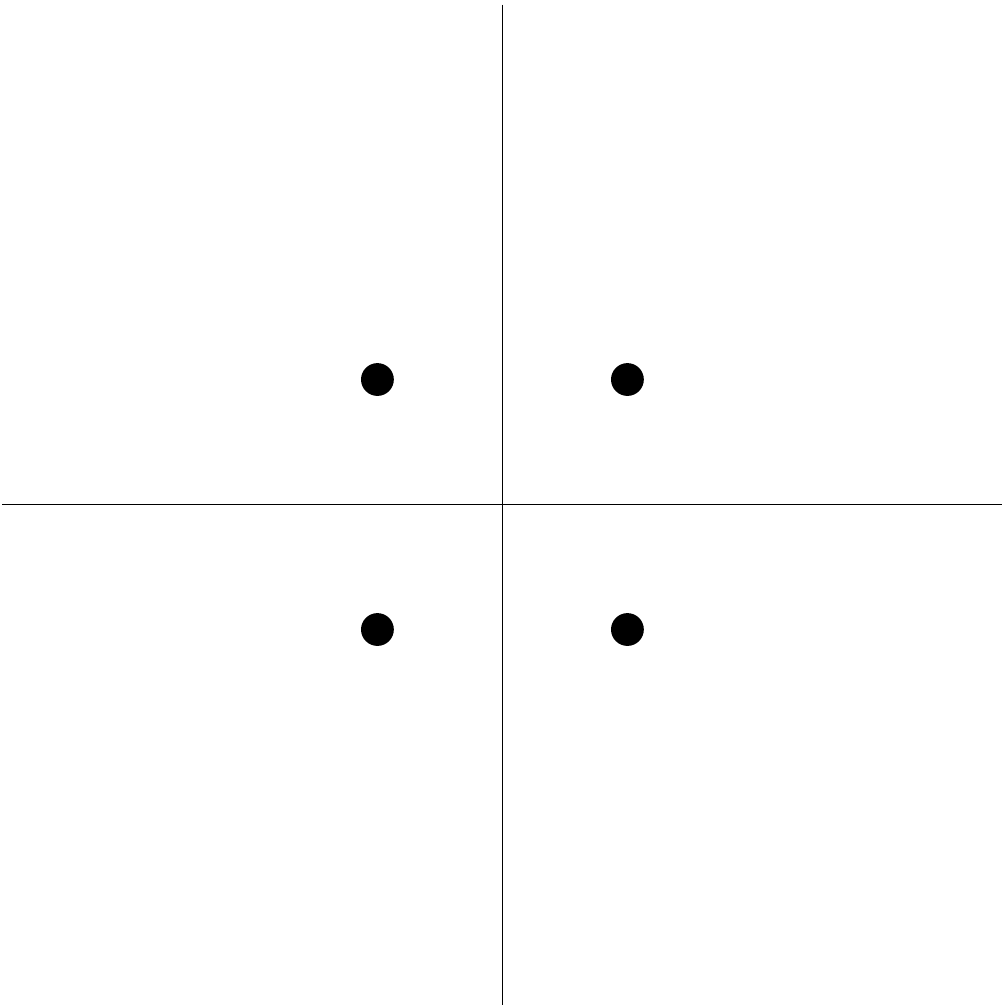} }
 \hskip1cm
\subbottom[5-dimensional $(q_0,s_0) = (1,0)$
   \label{weight2-0}]{\includegraphics[width=3cm]{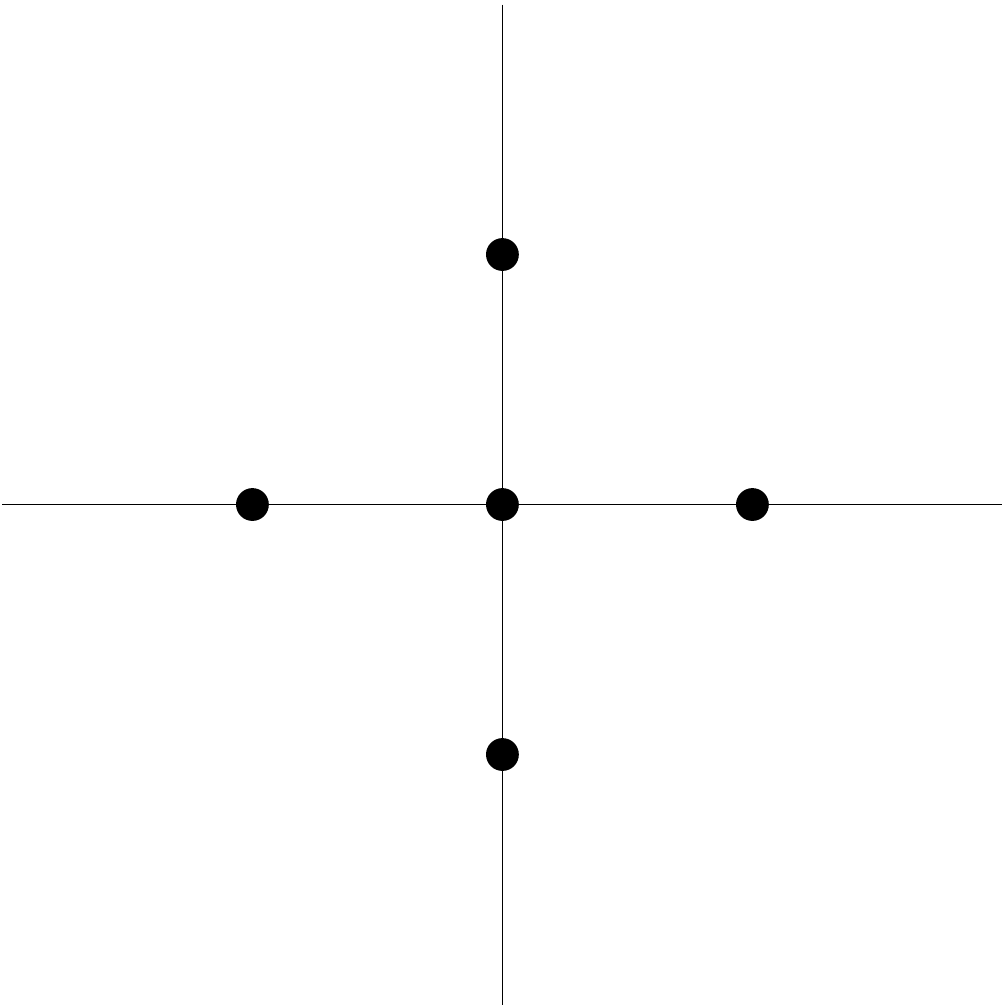} }
\\[1cm]
\subbottom[10-dimensional $(q_0,s_0) = (1,1)$
   \label{weight2-2}]{\includegraphics[width=3cm]{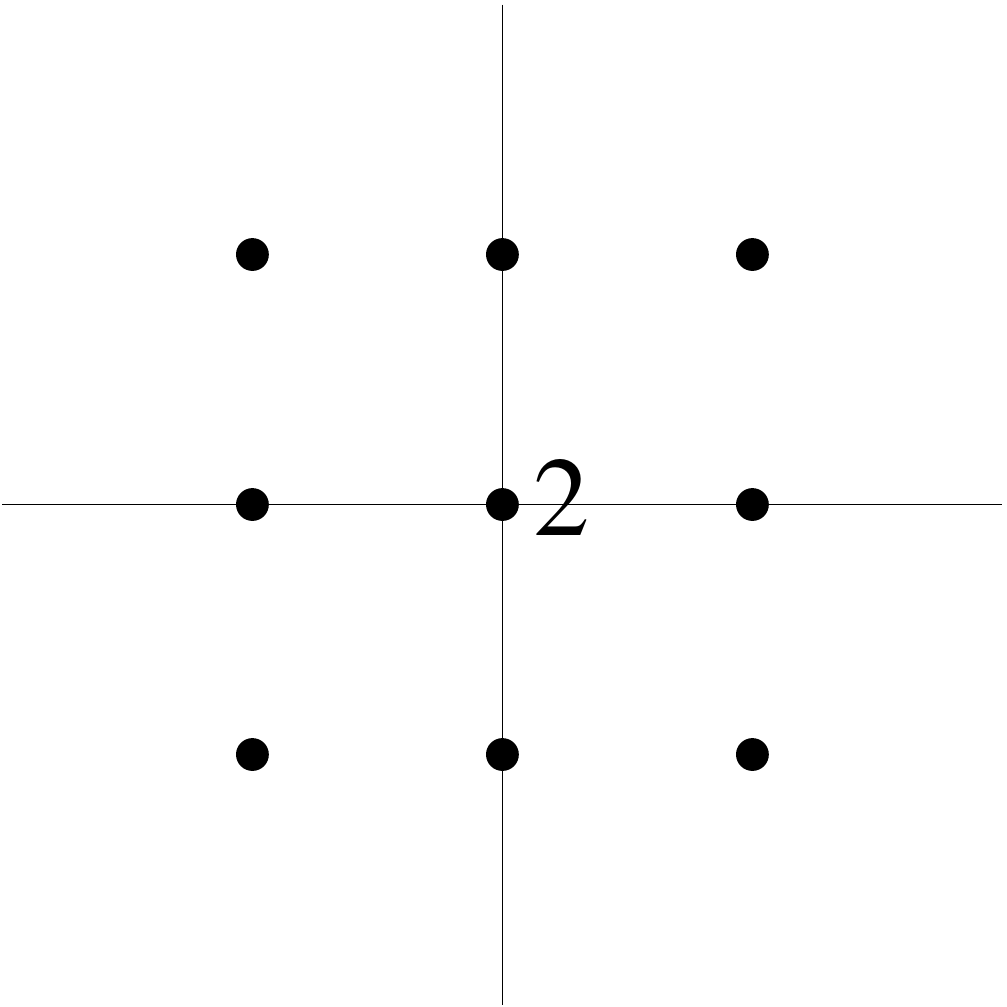} }
 \hskip1cm
\subbottom[16-dimensional $(q_0,s_0) = (\frac{3}{2},\frac{1}{2})$
   \label{weight3-1}]{\includegraphics[width=3cm]{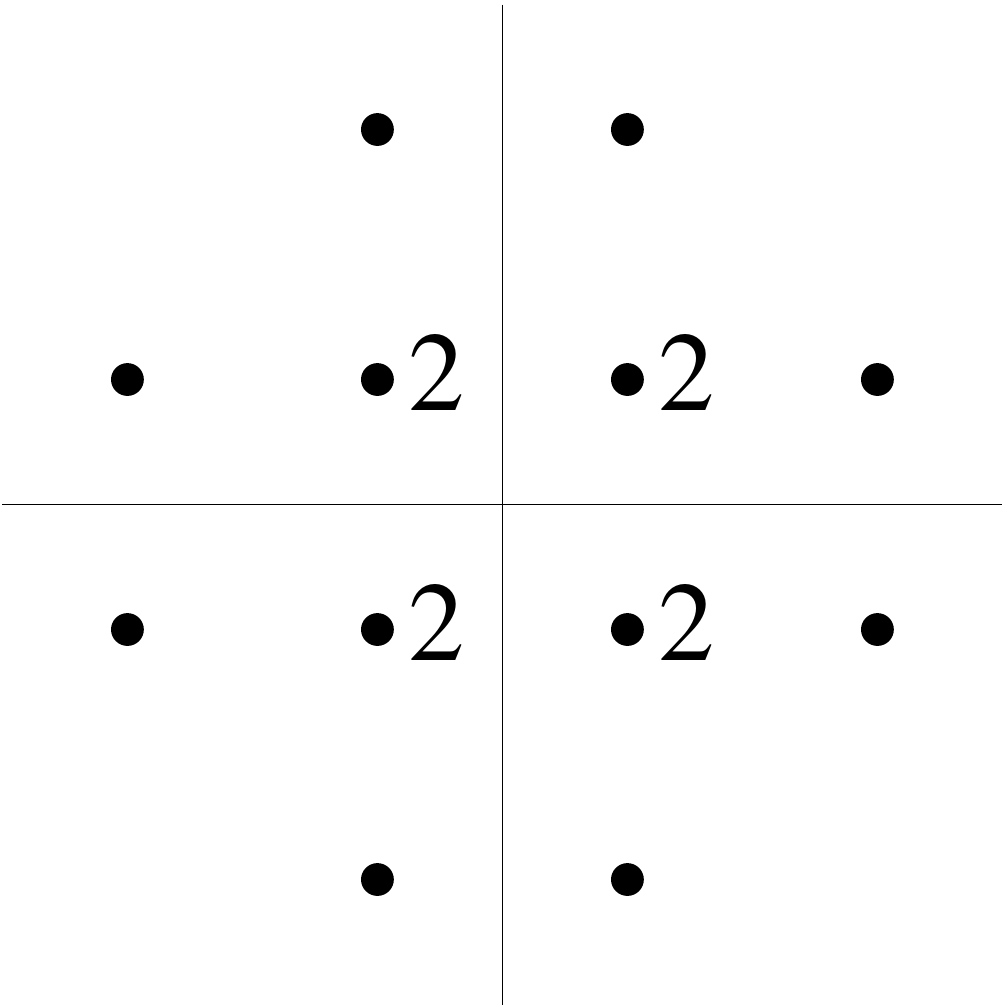} }
 \hskip1cm
\subbottom[20-dimensional $(q_0,s_0) = (\frac{3}{2},\frac{3}{2})$
   \label{weight3-3}]{\includegraphics[width=3cm]{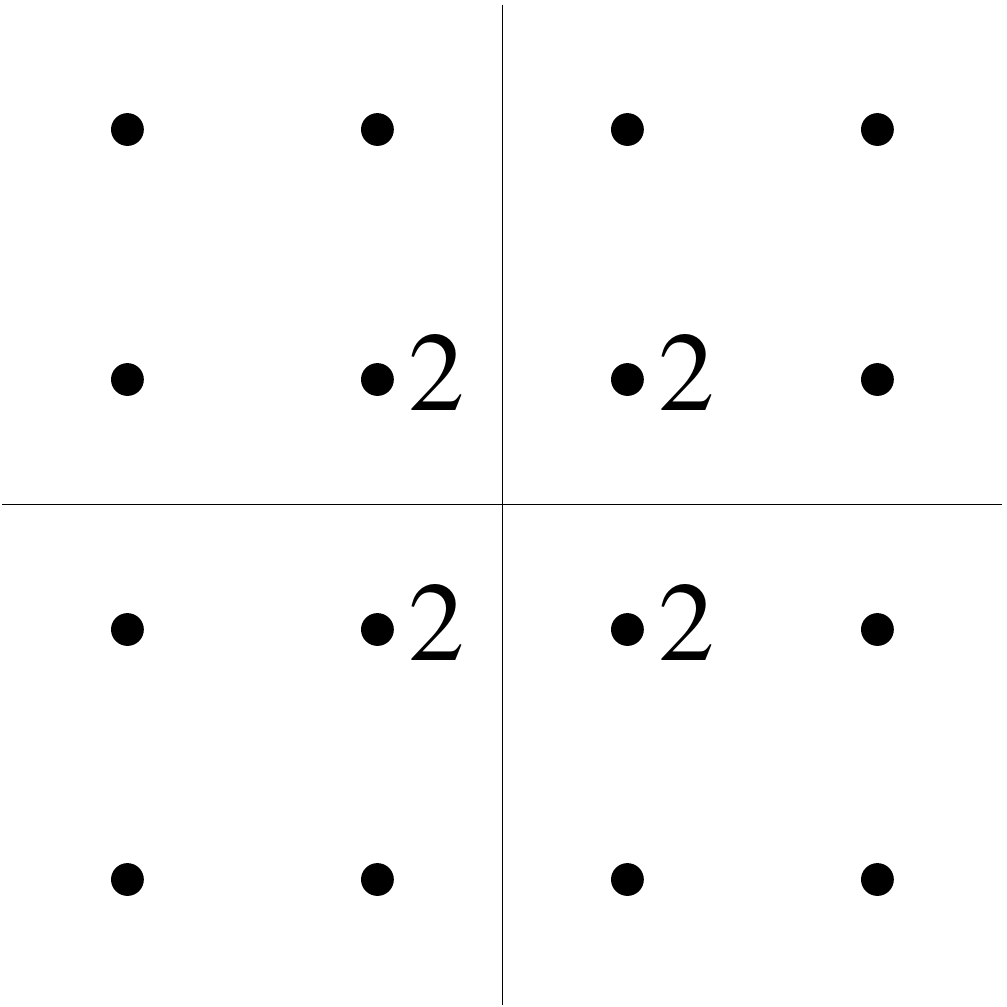} }
\\[1cm]
\subbottom[14-dimensional $(q_0,s_0) = (2,0)$
   \label{weight4-0}]{\includegraphics[width=3cm]{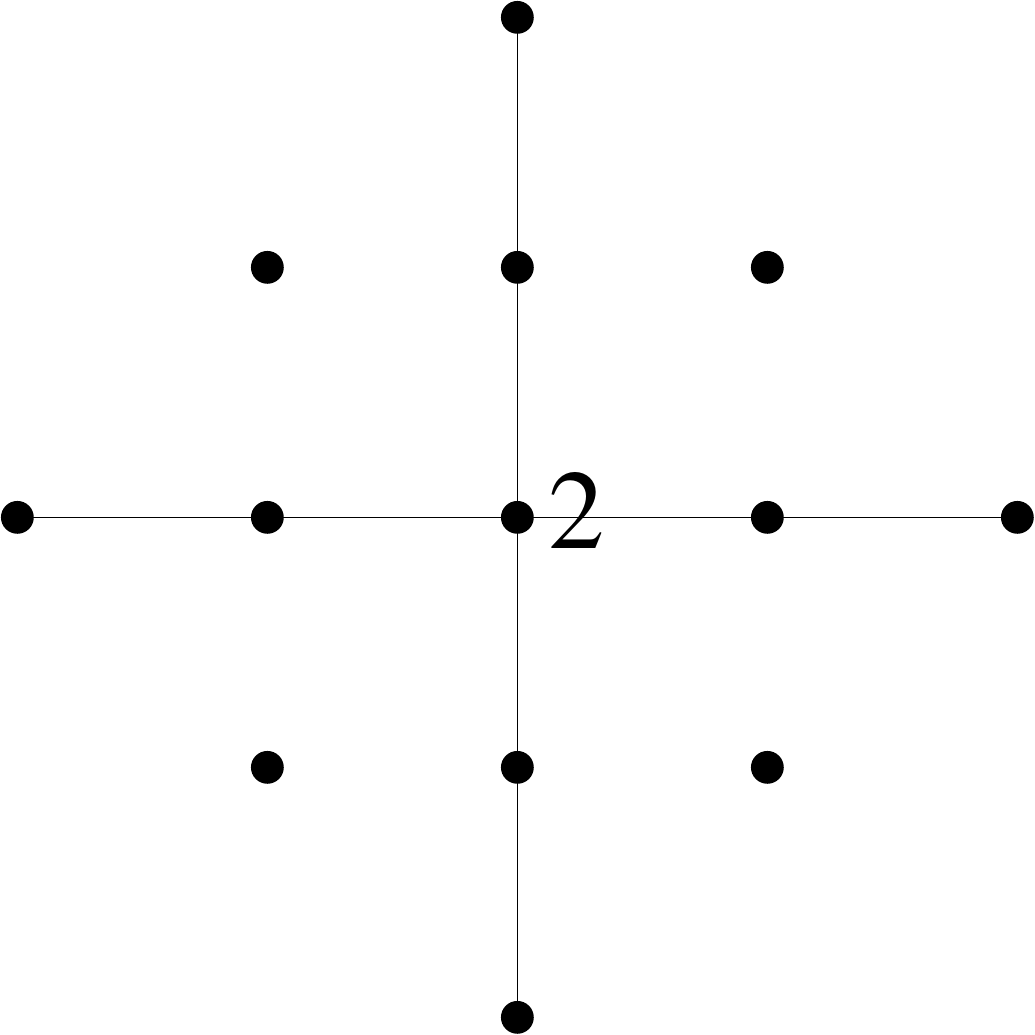} }
 \hskip1cm
\subbottom[35-dimensional $(q_0,s_0) = (2,1)$
   \label{weight4-2}]{\includegraphics[width=3cm]{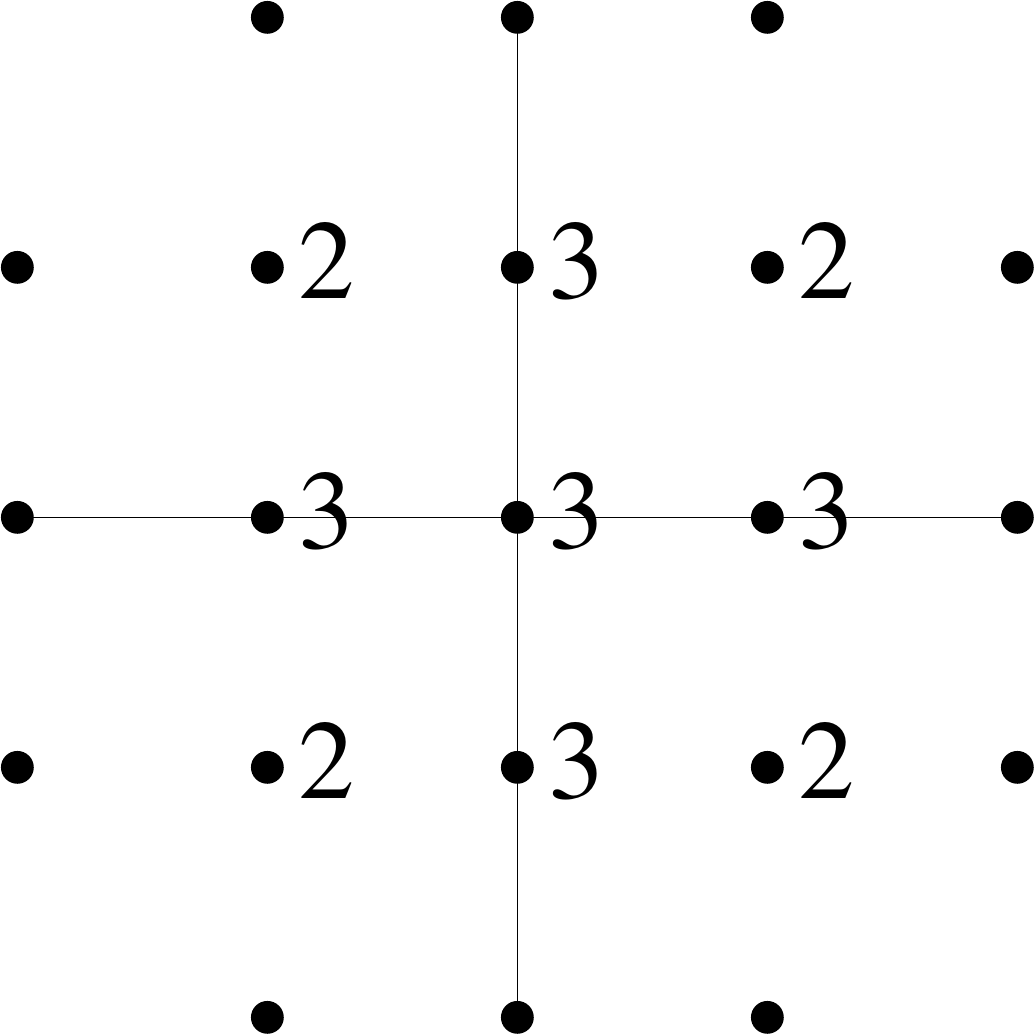} }
 \hskip1cm
\subbottom[35-dimensional $(q_0,s_0) = (2,2)$
   \label{weight4-4}]{\includegraphics[width=3cm]{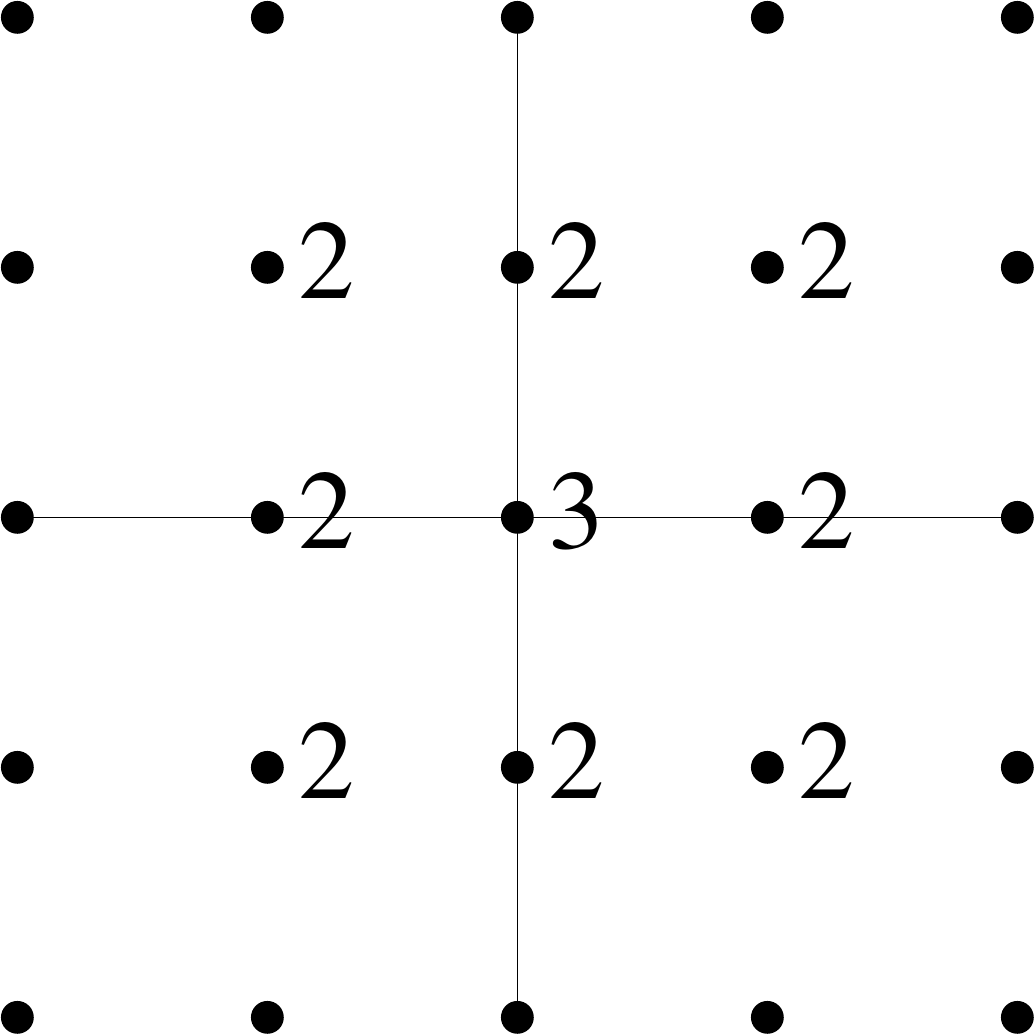} }
\\[1cm]
  \caption{Weight diagrams for small dimensional irreducible 
representations of $\so(2,3)$. Each representation is labelled by its
dimension and highest weight $(q_0,s_0)$ }
\label{weights}
\end{figure}

Any irreducible representation has a basis of weight vectors. The degree 
of each weight in the representation is defined to be the dimension of 
the corresponding subspace of weight vectors. In a finite dimensional 
representation the degrees can be calculated using Kostant's formula 
(Hall \cite{Hall} Theorem 7.42). The sum of the degrees of all weights 
is the dimension of the representation. The weights together with their 
degrees can be depicted in a weight diagram for the representation. 
Weight diagrams for some small dimensional irreducible representations 
are given in figure~\ref{weights}. The representations are labelled by
their maximal weights.

\medskip

Consider now a finite dimensional irreducible representation with 
maximal weight \( (q_0,s_0) \). Then \( Q \) and \( R \) act as 
scalar operators on this representation with eigenvalues \( \mu \) 
and \( \rho \) respectively. Clearly \( \mu \) and \( \rho \) are 
functions of \( q_0 \) and \( s_0 \). We seek now to make this explicit.

The following operators are compositions of roots which begin with a
positive root and hence act as zero operators on the maximal weight-space. 
\begin{multline*}
(Y-iZ+iB+C)(Y+iZ-iB+C) = Y^2 + Z^2 + B^2 + C^2  \\
		+ \{ Y,C\} - \{ Z,B \} - 2iI + 2iT 
\end{multline*}
\begin{multline*}
(Y+iZ+iB-C)(Y-iZ-iB-C) = Y^2 + Z^2 + B^2 + C^2 \\
		- \{ Y,C\} + \{ Z,B \} + 2iI +2iT
\end{multline*}
\begin{flalign*}
(X+iA)(X-iA) &= X^2 + A^2 + iT \\
(J+iK)(J-iK) &= J^2 + K^2 - iI
\end{flalign*}

These combine to give useful identities on the maximal weight-space
\begin{equation}
\label{Qidentity}
Q = -T^2 - I^2 - iI - 3iT
\end{equation}
\begin{equation}
\label{missingpiece}
\{ Y,C\} - \{ Z,B \} = 2iI
\end{equation}

Since \( T \) and \( I \) have eigenvalues \( iq_0 \) and \( is_0 \)
respectively while \( Q \) has eigenvalue \( \mu \), equation~\ref{Qidentity}
gives
\begin{equation}
\label{Casimirformula}
\mu = q_0(q_0+3)+s_0(s_0+1)
\end{equation}

Evaluating this at \( (q_0,s_0) = (\frac{1}{2},\frac{1}{2}) \) gives
\( \mu = \frac{5}{2} \) in agreement with the value directly computed 
using the matrices in table~\ref{sp4R_basis};
while evaluating at \( (q_0,s_0) = (1,0) \) gives
\( \mu = 4 \) which can be verified using the matrices in 
table~\ref{so23_5x5}. The 10 dimensional regular representation with 
\( (q_0,s_0)=(1,1) \) has \( \mu = 6 \).

\medskip

To compute a similar formula for the 4th degree Casimir invariant
we begin by examining the adjoint action of \( \so(2,3) \) on the
5-D space with basis \( \{ P_\lambda,P_t,P_x,P_y,P_z \} \). As 
the operators \( T \) and \( I \) commute, this space has a basis 
of simultaneous eigenvectors which gives a second set of root-like vectors 
on the weights. We call these \( P\)-roots.

\begin{table}[htb]
\begin{center}
\begin{tabular}{!{$($}R!{,}R!{$)$\hskip1cm}C} 
\toprule
\lambda_T & \lambda_I  & \text{Basis element} \\ 
\midrule
 0& 0  &       P_x \\
 i& 0  &  P_\lambda - iP_t  \\ 
-i& 0  &  P_\lambda + iP_t  \\ 
 0& i  &  P_y + iP_z  \\ 
 0&-i  &  P_y - iP_z  \\
\bottomrule
\end{tabular}
\end{center}
\caption{$P$-roots}
\label{Proots}
\end{table}
 
Products of $P$-roots which begin with a positive $P$-root must
act as zero operators on the maximal weight-space. Hence the following
operators are zero on the maximal weight-space.
\begin{flalign*}
(P_\lambda + iP_t)(P_\lambda - iP_t) &= P_\lambda^2 +P_t^2 - iT \\
(P_y - iP_z) (P_y + iP_z) &= P_y^2 + P_z^2 - iI
\end{flalign*}

giving two useful identities on the maximal weight-space
\begin{flalign}
P_\lambda^2 +P_t^2 = iT \\
P_y^2 + P_z^2 = iI
\end{flalign}

The definition of \( P_x \) together with equation~\ref{missingpiece}
gives another identity
\begin{equation}
P_x = \nfrac{2}{3}\left(TI + iI\right) 
\end{equation}

Combining these we obtain
\begin{equation}
-P_\lambda^2-P_t^2+P_x^2+P_y^2+P_z^2 = -iT +iI +\nfrac{4}{9}I^2\left(T + i\right)^2
\end{equation}

which gives a formula for the eigenvalue \( \rho \)  of the 4th order Casimir 
operator \( R \).
\begin{equation}
\rho = q_0 -s_0 +\nfrac{4}{9}s_0^2\left(q_0 + 1\right)^2
\label{Cfourformula}
\end{equation}

Evaluating this at \( (q_0,s_0) = (\frac{1}{2},\frac{1}{2}) \) gives
\( \rho = \frac{1}{4} \); while evaluating at \( (q_0,s_0) = (1,0) \) gives
\( \rho = 1 \). The 10 dimensional regular representation with 
\( (q_0,s_0)=(1,1) \) has \( \rho = \frac{16}{9} \).

\bigskip
\pagebreak[2]

\section{The Charge Multiplier}

We have suggested that we should interpret intrinsic energy as charge with 
units chosen so that \( e = q = \frac{1}{2} \).  While this works nicely for 
electrons and positrons, it works considerably less well for the other 
particles in the modern particle physics menagerie. 

In particular this interpretation of charge would require quarks to have 
representations including weights 
\( (q,s) = (\pm\frac{1}{6},\pm\frac{1}{2})) \) or 
\( (q,s) = (\pm\frac{1}{3}, \pm\frac{1}{2}) \) 
none of which appear in any finite dimensional representation of 
\( \so(2,3) \).  Even neutrinos are problematic as they would need 
the weight \( (q,s) = (0,\pm\frac{1}{2}) \) which doesn't appear either.

This problem must be addressed as we simply can't have particles with these 
kinds of problematic weights if our model is to make sense. Imagine for 
example finding a free particle with spin \( \frac{1}{4} \). How would 
rotations even act on this thing? Our current difficulty is of a similar 
order of magnitude.

To escape this trap which we have constructed for ourselves we must revisit
the identification of intrinsic energy \( q \) with charge \( Q \) which is 
what has caused this problem. We do want intrinsic energy and charge to be 
closely related, but perhaps we went too far when we asserted that they should
actually be the same.

Instead of asserting that \( Q = q \) let us instead suppose that \( Q = qm \) 
where \( m \) is some constant multiplying factor which depends on the nature 
of the particle.

All elementary fermions will fit into this revised picture using the 
fundamental \( (q_0,s_0) = (\frac{1}{2},\frac{1}{2}) \) representation with 
\( m \) in the set \( \{ 0,\frac{1}{3},\frac{2}{3},1 \} \).

The same fix works for bosons as well. Photons and gluons can be described 
using the \( (q_0,s_0) = (1,1) \) and \( (q_0,s_0) = (1,0) \) representations 
with \( m = 0 \). Weak bosons would belong to a \( (q_0,s_0) = (1,1) \) 
representation with \( m = \frac{1}{2} \), while Gravitons would presumably be 
described using one of the three \( q_0 = 2 \) representations with 
\( m = 0 \).

There are some issues remaining with the multiplicities of various states in 
this picture, particularly in the case of bosons. We also must explain the 
origin of this mysterious multiplying factor \( m \). Without trivialising 
these issues however, they are not grounds for abandoning our model at such 
an early stage. We can hope to find explanations for these things later if we 
continue. They are not fatal flaws which would cause us to stop.

	\chapter{Symmetry and Manifolds}
\label{symmetryandmanifolds}

In the first chapter we discussed why it would make a great deal of sense to
use the group \( \so(2,3) \) instead of the Poincar\'e group to describe local
symmetry. In this chapter we try to attach a precise mathematical meaning to
this vague physical statement.

It is not difficult to simply pin a Lie algebra to an arbitrary manifold. 
However this would not serve our purpose. We want the Lie algebra to be 
attached to the manifold in a natural fashion so that it arises from the 
structure of the manifold itself.  In some sense we want the Lie algebra 
at every point to describe local symmetry.

To make this work we will need to make some fairly subtle and careful 
distinctions. On a manifold with curvature, translations do not commute.
Indeed curvature itself is essentially a description of the failure of
parallel transport to commute. On the other hand Lie algebras also describe
the nature of a failure to commute. The Lie algebra we wish to attach to
the manifold involves translations which don't commute. It seems like we
have two mathematical structures competing to describe the same thing.

Resolving this apparent clash requires a deeper understanding of the
relationship between curvature and local symmetry. We will discover that we
are able to do this in very nice way provided the manifold has the same 
dimension as the Lie algebra. Hence the theory for \( \so(2,3) \) arising 
from this approach will work only when the manifold is ten dimensional. 

This is not really a problem.  The 'extra' dimensions in this case are simply 
those of rotation and Lorentz boost and the ten dimensional manifold is just 
the manifold of inertial frames. Hence, unlike some other theoretical 
approaches which introduce extra dimensions, in our case the additional six 
dimensions are quite physical and we need make no attempt to explain why they
are not observed.

Once we have built a ten dimensional curved manifold with local symmetry 
\( \so(2,3) \) the next task will be to look at how other mathematical 
structures, specifically spinor valued wave functions on the manifold
 can be added in a consistent way. Finally we show that all of this 
mathematical structure flows from a small number of very simple and quite
physical assumptions.

\bigskip
\pagebreak[2]

\section{Tensor Derivations and Manifolds}
\label{TensorDerivationsandManifolds}

We will begin by setting up some basic notation describing curvature on 
a manifold.  The approach we follow is inspired by the approach taken in 
Wald's book~\cite{Wald}.

\begin{definition}
A \textbf{tensor derivation} \( D \) on a manifold \( M \) is a map  \linebreak
\( D:\text{tensors} \longrightarrow \text{tensors} \) satisfying

\begin{itemize}
\item Linearity.
\item Leibniz condition on tensor products.
\item Commutes with contraction.
\end{itemize}

\end{definition}

We do not require tensor derivations to conserve degree.
The composition of two tensor derivations usually fails to obey the 
Leibniz condition and is therefore not a tensor derivation. 
However we can easily verify the following


\begin{proposition}
If \( D \) and \( E \) are tensor derivations, then the commutator 
\( [D,E] \) is also a tensor derivation where 
\( [D,E](X) = D(E(X)) - E(D(X)) \).
\end{proposition}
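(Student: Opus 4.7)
The plan is to verify each of the three defining properties of a tensor derivation for $[D,E]$ in turn, treating the Leibniz condition as the one requiring actual calculation.

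First I would dispense with linearity, which is immediate: $D$ and $E$ are both $\R$-linear maps on the space of tensors, so their composition and difference are linear, and hence $[D,E]$ is linear. Next I would handle the fact that $[D,E]$ commutes with contraction. Writing $\mathcal{C}$ for an arbitrary contraction, the hypothesis that $D$ and $E$ each commute with $\mathcal{C}$ gives $D(E(\mathcal{C}(X))) = D(\mathcal{C}(E(X))) = \mathcal{C}(D(E(X)))$, and the analogous identity with $D$ and $E$ interchanged. Subtracting yields $[D,E](\mathcal{C}(X)) = \mathcal{C}([D,E](X))$, as required. This step is essentially purely formal.

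The main content, then, is the Leibniz condition on tensor products. The plan is to expand $D(E(X \otimes Y))$ by first applying Leibniz for $E$ and then Leibniz for $D$, obtaining four terms: $D(E(X)) \otimes Y + E(X) \otimes D(Y) + D(X) \otimes E(Y) + X \otimes D(E(Y))$. Similarly $E(D(X \otimes Y))$ expands to $E(D(X)) \otimes Y + D(X) \otimes E(Y) + E(X) \otimes D(Y) + X \otimes E(D(Y))$. The two cross terms $E(X) \otimes D(Y)$ and $D(X) \otimes E(Y)$ appear with the same sign in each expansion and cancel upon subtraction, leaving precisely
\[
[D,E](X \otimes Y) = [D,E](X) \otimes Y + X \otimes [D,E](Y).
\]

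The step I expect to require the most care is exactly this cancellation: the reason the composition $D \circ E$ fails to be a tensor derivation on its own is the presence of the cross terms $E(X) \otimes D(Y) + D(X) \otimes E(Y)$, which are symmetric in the pair $(D,E)$. Taking the commutator is precisely the operation that kills these symmetric obstructions, which is the genuine content of the proposition. There is no real obstacle beyond writing the two expansions carefully and matching the cross terms; no additional structure (e.g.\ the torsion or curvature of a connection) is needed at this level of generality.
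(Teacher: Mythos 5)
Your verification is correct and is exactly the routine check the paper has in mind (the paper omits the proof entirely, merely remarking that it can be "easily verified"): linearity and commutation with contraction are formal, and the Leibniz condition follows from the cancellation of the symmetric cross terms $D(X)\otimes E(Y) + E(X)\otimes D(Y)$ in the two expansions. One small point worth a sentence in a careful write-up is that the difference $D(E(X)) - E(D(X))$ is well defined because both terms automatically have the same rank (each derivation shifts rank by a fixed amount, so the two compositions agree in rank), which matters here since the paper explicitly warns that sums of arbitrary tensor derivations need not make sense.
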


The commutator of two linear operators obeys the Jacobi identity, so we 
might expect tensor derivations to form a Lie algebra. However addition of
two arbitrary tensor definitions is problematic.  The sum \( D+E \) of two 
tensor derivations is given by \( (D+E)(X) = D(X) + E(X) \). But this
only makes sense if \( D(X) \) and \( E(X) \) have the same rank so that 
they can be added.  If this is not always the case then \( D + E \) is 
undefined. 

Since it lacks a well defined addition the set of all tensor derivations is 
not a Lie algebra.  However any subset of the set of tensor derivations
on which addition is defined will be a Lie algebra if it is closed under 
commutator, addition, and under the obvious scalar multiplication.

\medskip

Let \( D \) be a tensor derivation. A function \( f \) on \( M \) is 
a tensor of rank zero, hence \( D(f) \) is a tensor. Suppose 
\( D(f) \) has rank \( \rank{i}{j} \). Let \( T \) be a tensor of rank
\( \rank{k}{l} \). Now  \( D(fT) = D(f)T + fD(T) \) using the Leibniz 
condition. The first term has rank \( \rank{i+k}{j+l} \), hence the second 
term must have the same rank since the Leibniz condition requires that 
addition be well defined. It follows that the rank of \( D(T) \) is 
\( \rank{i+k}{j+l} \). We call \( \rank{i}{j} \) the rank of \( D \) and
express \( D \) using index notation in the form 
\( D_{\alpha_1\alpha_2\ldots\alpha_j}^{\beta_1\beta_2\ldots\beta_i} \).
We have proved

\begin{proposition} Every tensor derivation has a rank \( \rank{i}{j} \)
and maps tensors of rank \( \rank{k}{l} \) to tensors of rank 
\( \rank{k+i}{l+j} \).
\end{proposition}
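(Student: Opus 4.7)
The plan is to deduce the rank structure of $D$ in two stages: first establishing a constant ``base rank'' on functions, then propagating this rank to arbitrary tensors via the Leibniz condition, essentially following the hint already sketched in the paragraph preceding the proposition.

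First I would show that $D$ has a well-defined rank $(i,j)$ when applied to functions (rank $(0,0)$ tensors). For any two nonzero functions $f_1, f_2$, linearity requires $D(f_1+f_2)=D(f_1)+D(f_2)$ to be a well-defined tensor, which forces $D(f_1)$ and $D(f_2)$ to have the same rank. Hence there is a single pair $(i,j)$ such that $D(f)$ has rank $\rank{i}{j}$ for every (nonvanishing) function $f$.

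Second, given an arbitrary tensor $T$ of rank $\rank{k}{l}$, I would pick any (locally nonzero) function $f$ and expand
\begin{equation*}
D(fT) = D(f)\,T + f\,D(T)
\end{equation*}
using Leibniz on the tensor product. The first summand $D(f)\,T$ has rank $\rank{i+k}{j+l}$ as a tensor product of a rank $\rank{i}{j}$ with a rank $\rank{k}{l}$ tensor. Since the left-hand side is a single tensor and addition is only meaningful between tensors of equal rank, the second summand $f\,D(T)$ must also have rank $\rank{i+k}{j+l}$. Multiplication by the scalar $f$ preserves rank, so $D(T)$ itself has rank $\rank{i+k}{j+l}$, and this depends only on $\rank{k}{l}$, not on the choice of $T$ or $f$. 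This gives both conclusions of the proposition and supplies the index notation $D_{\alpha_1\cdots\alpha_j}^{\beta_1\cdots\beta_i}$.

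The main obstacle, if there is one, is the first step: making sure that the rank of $D$ on functions is actually well defined and constant. This rests on being careful about what ``$D$ is linear'' means when sums of tensors of different rank are undefined — essentially one has to accept that linearity tacitly forces the image of the space of functions to sit inside a single rank. Once that subtlety is accepted, the propagation to higher-rank tensors is a routine application of Leibniz, with only minor attention needed for the degenerate cases (zero tensor, locally vanishing $f$) which can be handled by working locally on an open set where $f\neq 0$.
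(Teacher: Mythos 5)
Your proof is correct and follows essentially the same route as the paper: apply the Leibniz condition to $D(fT)$ and use the fact that addition is only defined between tensors of equal rank to pin down the rank of $D(T)$. Your preliminary step checking that the rank of $D$ on functions is independent of $f$ is a small extra care the paper skips over, but it does not change the argument.
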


\begin{proposition} If \( D \) is a tensor derivation and \( S \) is
any tensor, then \( S\tensor D \) is a tensor derivation where 
\( (S\tensor D)(T) = S\tensor D(T) \).
\end{proposition}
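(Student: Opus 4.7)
The plan is to check the three defining properties of a tensor derivation in turn: linearity, Leibniz on tensor products, and commutation with contraction. Throughout I will use the rank structure established in the previous proposition, writing $D$ with index structure $D^{\beta_1\cdots\beta_i}_{\alpha_1\cdots\alpha_j}$ and $S$ as a tensor of some rank $\rank{p}{q}$, so that $(S\tensor D)(T)$ is simply the tensor $S$ tensored with $D(T)$, with indices concatenated.

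Linearity is immediate: $(S\tensor D)(aT_1+bT_2) = S\tensor (aD(T_1)+bD(T_2)) = a(S\tensor D)(T_1) + b(S\tensor D)(T_2)$, using linearity of $D$ and bilinearity of $\tensor$. For the Leibniz condition, I would compute
\begin{equation*}
(S\tensor D)(T_1\tensor T_2) = S \tensor D(T_1\tensor T_2) = S\tensor D(T_1)\tensor T_2 + S\tensor T_1\tensor D(T_2),
\end{equation*}
using that $D$ itself obeys Leibniz. To match $(S\tensor D)(T_1)\tensor T_2 + T_1\tensor (S\tensor D)(T_2)$ I need to commute $S$ past $T_1$ in the second term. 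Working in abstract index notation (as in Wald, following the earlier setup), tensor products are commutative up to relabeling of index slots, so $S\tensor T_1 \tensor D(T_2)$ and $T_1\tensor S\tensor D(T_2)$ represent the same tensor. For contraction, if $C$ is any contraction on $T$ whose paired indices lie among the indices of $T$ and $D$ (not of $S$), then
\begin{equation*}
(S\tensor D)(C(T)) = S\tensor D(C(T)) = S\tensor C(D(T)) = C(S\tensor D(T)) = C((S\tensor D)(T)),
\end{equation*}
where the middle equality uses that $D$ commutes with contraction and the third uses that tensoring with a fixed $S$ trivially commutes with any contraction not involving $S$'s indices.

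The only subtle point will be the abstract-index commutativity step in the Leibniz verification; if the reader is uncomfortable treating $S\tensor T_1\tensor D(T_2)$ and $T_1\tensor S\tensor D(T_2)$ as equal, the argument can be made explicit by writing both sides in a local chart and comparing components, where the equality reduces to commutativity of multiplication of scalar coefficients. Everything else is a short and essentially forced computation from the definition of $S\tensor D$ as a pointwise tensor product.
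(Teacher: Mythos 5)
Your proof is correct and follows exactly the route the paper intends: the paper's entire proof is ``Just check the definition,'' and you have simply carried out that check for linearity, the Leibniz condition, and contraction, with the index-relabeling remark covering the only point where care is needed.
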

\begin{proof}
Just check the definition.
\end{proof}

Every ordinary derivation can be extended to a tensor derivation of rank 
\( \rank{0}{0} \) by allowing it to act on components. We use 
\( T \mapsto a^i \del_i (T ) \) to denote this tensor 
derivation where \( \del_i = \frac{\del}{\del x_i} \).

Conversely every tensor derivation \( D \) of rank \( \rank{0}{0} \) acts 
on functions as an ordinary derivation. Hence we may associate to it a 
tangent vector field so that with respect to some coordinate system we 
have \( D(f) = a^i\del_i (f) \). 

Note that the two operators need only agree in their action on functions.
It follows that the difference \( D - a^i\del_i \) is a tensor
derivation of rank \( \rank{0}{0} \) which maps all functions to the zero 
function.

\begin{proposition} Let \( E \) be a tensor derivation of rank 
\( \rank{0}{0} \) with \( E(f) = 0 \) for all functions \( f \) on \( M \).
Then there exists a tensor \( \Gamma^i_j \) of rank \( \rank{1}{1} \)
so that 
\[
E\left( 
X^{\alpha_1\alpha_2\ldots\alpha_m}_{\beta_1\beta_2\ldots\beta_n}
\right) = 
\sum_s\Gamma^{\alpha_s}_{\hat{\alpha}_s}
X^{\alpha_1\ldots\hat{\alpha}_s\ldots\alpha_m}_{\beta_1\beta_2\ldots\beta_n}
- \sum_t\Gamma^{\hat{\beta}_t}_{\beta_t}
X^{\alpha_1\alpha_2\ldots\alpha_m}_{\beta_1\ldots\hat{\beta}_t\ldots\beta_n}
\]
\end{proposition}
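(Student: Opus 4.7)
The plan is to reduce the problem to defining a single $\binom{1}{1}$-tensor $\Gamma$ by looking at how $E$ acts on vector fields, then to use the Leibniz condition and commutation with contractions to propagate this definition to arbitrary tensors and to check that the propagated formula agrees with $E$.

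First I would observe the key consequence of the hypothesis $E(f)=0$: applied to a product $fT$ where $f$ is a function and $T$ any tensor, the Leibniz condition gives $E(fT)=E(f)\tensor T+fE(T)=fE(T)$. Hence $E$ is $C^\infty(M)$-linear on every rank of tensor, and in particular its action at a point depends only on the value of the argument at that point. Restricted to vector fields, this pointwise $C^\infty$-linear map $E:\X(M)\to\X(M)$ is therefore given by a $\binom{1}{1}$-tensor; explicitly I define $\Gamma^i_j$ by
\[
E(\del_j)=\Gamma^i_j\,\del_i
\]
in any coordinate chart, which matches the $m=1,n=0$ case of the claimed formula.

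Next I would determine $E$ on $1$-forms using the contraction axiom. For any vector field $V$ and $1$-form $\omega$, the scalar $\omega_iV^i$ is a contraction of $\omega\tensor V$, so $E(\omega_iV^i)=0$ on the one hand, and on the other hand, by Leibniz plus commutation with contraction, equals $E(\omega)_iV^i+\omega_iE(V)^i=E(\omega)_iV^i+\omega_i\Gamma^i_jV^j$. Since $V$ is arbitrary this forces $E(\omega)_j=-\Gamma^i_j\omega_i$, again matching the claimed formula with a single lower index. The general case with $m$ upper and $n$ lower indices follows by induction on $m+n$: an arbitrary rank $\binom{m}{n}$ tensor is locally a sum of tensor products of basis vectors and basis covectors with function coefficients, and Leibniz on tensor products together with the $C^\infty$-linearity established above propagates the formula, each factor contributing one $+\Gamma$ term for an upper index or one $-\Gamma$ term for a lower index.

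Finally I would check independence from the chart used to define $\Gamma$ and verify that the formula as written does define a genuine tensor derivation agreeing with $E$: the $+\Gamma$ and $-\Gamma$ contributions cancel in pairs whenever two indices are contracted, so commutation with contraction holds; linearity and Leibniz on tensor products are immediate from the additive index-by-index structure of the formula; and on functions (no indices) the formula gives zero, matching $E(f)=0$. Uniqueness then follows because two tensor derivations of rank $\binom{0}{0}$ that vanish on functions and agree on $\X(M)$ must agree everywhere, by the same Leibniz-plus-contraction propagation. The main subtlety, and the step I would dwell on longest, is the very first one — extracting genuine $C^\infty$-linearity of $E$ from $E(f)=0$, since this is what lets us speak of a pointwise tensor $\Gamma$ at all; the rest of the argument is bookkeeping with the Leibniz rule.
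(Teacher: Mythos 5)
Your proposal is correct and follows essentially the same route as the paper: establish $C^\infty(M)$-linearity on vector fields from $E(f)=0$ to extract the tensor $\Gamma^i_j$, derive the action on covectors from $E(u_iv^i)=0$ via the contraction axiom, and extend to higher ranks (the paper inducts by contracting down the rank, you decompose into tensor products of basis elements — these are equivalent bookkeeping). Your closing verification that the resulting formula is itself a tensor derivation, and the uniqueness remark, go slightly beyond what the paper writes out but add nothing contentious.
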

\begin{proof}
If \( v \) is a vector field and \( f \) is a scalar field then 
\( E(fv) = E(f)v + fE(v) = fE(v) \). If follows that \( E \) acts linearly
on the tangent vector fields on \( M \). Hence its action on vector fields
is contraction with a tensor of rank \( \rank{1}{1} \) (locally matrix 
multiplication). 

Indeed if the vector fields \( \{ e_i \} \) form a basis 
of the tangent spaces at each point so that we can write \( v = v^ie_i \) 
in terms of this basis, then we can explicitly find this tensor, since
\( E(v) = E(v^ie_i) = v^iE(e_i) = v^i\Gamma^j_i e_j \) where 
\( \Gamma^j_i \) is the \( j\)-th components of \( E(e_i) \). 
Using coordinates to describe tensors, we can thus write
\[ E(v^i) = \Gamma^i_t v^t \]

Now \( 0 = E(u_iv^i) = E(u_i)v^i + u_i\Gamma^i_t v^t \), and since this is
true for all \( v \) it follows that 
\[ E(u_i) = -\Gamma^t_iu_t \]

We can now argue inductively on the total rank of tensor \( X \) by 
contracting with either \( u_i \) or \( v^i \) to diminish the total
rank and applying \( E \) to the resulting tensor. 
\end{proof}

Conversely, let \( \Gamma \) be a tensor of rank \( \rank{1}{1} \). We
can define a tensor derivation denoted \( \Gamma\oo \) by

\begin{equation}
\label{definition_of_oo}
\Gamma\oo (
X^{\alpha_1\alpha_2\ldots\alpha_m}_{\beta_1\beta_2\ldots\beta_n} )
= \sum_s\Gamma^{\alpha_s}_{\hat{\alpha}_s}
X^{\alpha_1\ldots\hat{\alpha}_s\ldots\alpha_m}_{\beta_1\beta_2\ldots\beta_n}
- \sum_t\Gamma^{\hat{\beta}_t}_{\beta_t}
X^{\alpha_1\alpha_2\ldots\alpha_m}_{\beta_1\ldots\hat{\beta}_t\ldots\beta_n}
\end{equation}

We have proved that all rank \( \rank{0}{0} \) tensor derivations
are of the form 
\[ a^i\del_i + \Gamma\oo \] 
for some choice of vector field \( a^i \) and tensor \( \Gamma^t_s \).

Now let \( D^{\lambda_1\ldots\lambda_m}_{\mu_1\ldots\mu_n} \) be a 
tensor derivation of rank \( \rank{m}{n} \). By contracting this with 
component tensors we see that each of the operators obtained by fixing 
the values of all indices \( \lambda_i \) and \( \mu_j \) is a tensor 
derivation of rank \( \rank{0}{0} \). This leads us to the following.

\begin{proposition} Every tensor derivation of rank \( \rank{m}{n} \) takes
the form

\[ D^{\lambda_1\ldots\lambda_m}_{\mu_1\ldots\mu_n}
= a^{\lambda_1\ldots\lambda_m\,i}_{\mu_1\ldots\mu_n}\del_i
+ \Gamma^{\lambda_1\ldots\lambda_m}_{\mu_1\ldots\mu_n}\oo \]

where 
\[ \Gamma^{\lambda_1\ldots\lambda_m}_{\mu_1\ldots\mu_n}\oo 
\left(
T^{\alpha_1\ldots\alpha_p}_{\beta_1\ldots\beta_q} 
\right) = 
\sum_s 
\Gamma^{\lambda_1\ldots\lambda_m\,\alpha_s}_{\mu_1\ldots\mu_n\,\hat{\alpha}_s}
\,\,T^{\alpha_1\ldots\hat{\alpha}_s\ldots\alpha_p}_{\beta_1\ldots\beta_q} 
- \sum_t 
\Gamma^{\lambda_1\ldots\lambda_m\,\hat{\beta}_t}_{\mu_1\ldots\mu_n\,\beta_t}
\,\,T^{\alpha_1\ldots\alpha_p}_{\beta_1\ldots\hat{\beta}_t\ldots\beta_q} 
\]
\end{proposition}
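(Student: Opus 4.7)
The plan is to reduce to the rank $\rank{0}{0}$ result established in the preceding proposition. For fixed numerical values of the indices $\lambda_1,\ldots,\lambda_m,\mu_1,\ldots,\mu_n$, the operator $D^{\lambda_1\ldots\lambda_m}_{\mu_1\ldots\mu_n}$ sends any tensor of rank $\rank{p}{q}$ to a tensor of rank $\rank{p+m}{q+n}$, and selecting those fixed slots (equivalently, contracting the output with the appropriate basis covectors $dx^{\lambda_s}$ and basis vectors $\partial_{\mu_t}$) yields a tensor of rank $\rank{p}{q}$. Linearity, the Leibniz rule on tensor products, and commutation with contraction are all preserved by this process, so for each fixed index configuration we obtain an honest tensor derivation of rank $\rank{0}{0}$.

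Next I would invoke the previous proposition, which asserts that any rank $\rank{0}{0}$ tensor derivation can be written uniquely as $a^i\partial_i + \Gamma\oo$ for some vector field $a^i$ and some rank $\rank{1}{1}$ tensor $\Gamma^i_j$. Applying this to each fixed index configuration produces coefficients that depend parametrically on the indices $\lambda_1,\ldots,\mu_n$, which I would assemble into candidate objects $a^{\lambda_1\ldots\lambda_m\,i}_{\mu_1\ldots\mu_n}$ and $\Gamma^{\lambda_1\ldots\lambda_m\,\alpha}_{\mu_1\ldots\mu_n\,\beta}$.

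The main thing to verify is that these assembled objects really are tensors in \emph{all} their indices — not just in the indices manipulated by the rank $\rank{0}{0}$ proposition. The point here is that $D$ itself is a global tensor-valued operator, so the map $T \mapsto D(T)$ applied to any single tensor $T$ already produces a tensor in the full collection of indices $\lambda_1,\ldots,\mu_n$ together with the indices of $T$. Testing on constant functions isolates the $\Gamma$ part in a tensorial way, while testing on coordinate functions and subtracting the $\Gamma$ contribution isolates the $a$ part; in each case the resulting coefficient field inherits the tensorial transformation law in the upstairs and downstairs indices. Uniqueness of the decomposition is automatic from the rank $\rank{0}{0}$ case: $a^i\partial_i+\Gamma\oo=0$ forces $a^i=0$ by evaluation on scalar fields and then $\Gamma=0$ by evaluation on a basis of vector fields.

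The step most likely to require care is precisely this bookkeeping of indices — confirming that assembling pointwise-defined coefficients (one per fixed value of $\lambda_1,\ldots,\mu_n$) recovers a genuine tensor of the advertised rank and that the action on an arbitrary $T^{\alpha_1\ldots\alpha_p}_{\beta_1\ldots\beta_q}$ matches the formula for $\Gamma^{\lambda_1\ldots\lambda_m}_{\mu_1\ldots\mu_n}\oo$ given by equation~\ref{definition_of_oo} extended in the obvious way. Once the rank $\rank{0}{0}$ version is in hand, however, this is essentially a matter of carrying the extra indices along as spectators, so the result follows.
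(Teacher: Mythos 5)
Your proposal is correct and follows essentially the same route as the paper: fix the free indices $\lambda_1,\ldots,\mu_n$ to obtain a family of rank $\rank{0}{0}$ tensor derivations, apply the earlier decomposition $a^i\del_i + \Gamma\oo$ to each, and reassemble. The paper states this reduction in one sentence and omits the tensoriality bookkeeping you spell out, so your version is if anything slightly more careful.
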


\medskip

The tensor derivations of rank \( \rank{0}{1} \) are of particular importance.
These are  of the form 

\begin{equation}
\label{rank_1_tensor_derivations}
D_i = a_i^k \del_k + \Gamma_i\oo 
\end{equation}

In the special case that \( a_i^k = 1_i^k \) we call this a \textbf{covariant
derivative}. Much of the subsequent work will take place in the context of 
a manifold with a distinguished covariant derivative \( \nabla_i \).

\bigskip
\pagebreak[2]

\section{Torsion and Curvature}
\label{TorsionandCurvature}

Let \( M \) be a manifold with distinguished covariant derivative 
\( \nabla_i \).

Since the commutator of tensor derivations is a tensor derivation
\( [ \nabla_i,\nabla_j] \) is a tensor derivation of rank \( \rank{0}{2} \).
Hence we can write
\begin{equation}
\label{commutator_of_nablas}
[\nabla_i,\nabla_j] = T^k_{ij}\del_k + K_{ij}\oo 
\end{equation}

Applying both sides to a function \( f \) we see that 
\begin{equation}
\label{commuting_nablas_on_f}
T^k_{ij}\frac{\del f}{\del x_k} =
     -\Gamma^k_{ij}\frac{\del f}{\del x_k} +
      \Gamma^k_{ji}\frac{\del f}{\del x_k}
\end{equation}

Hence 
\begin{equation}
\label{torsion_defined}
T^k_{ij} = -\left(\Gamma^k_{ij} - \Gamma^k_{ji}\right) 
\end{equation}

which is the negative of the torsion tensor as usually defined.
There are reasons to believe that the torsion tensor would be
better defined with opposite sign - see for example \cite{Wald} page 31.

Similarly by applying \( [ \nabla_i,\nabla_j] \) to a vector field \( v^k \)
and comparing the terms not involving the partial derivative on both sides, we
see that 
\begin{equation}
\label{commuting_nablas_on_v}
K^x_{ijy} = 
\bigl[ \del_i\Gamma^x_{jy} - 
	\del_j\Gamma^x_{iy}\bigr] +
\bigl[ \Gamma^x_{it}\Gamma^t_{jy} - 
	\Gamma^x_{jt}\Gamma^t_{iy}\bigr]
+ T^t_{ij}\Gamma^x_{ty}
\end{equation}

In the torsion-free case the last term vanishes and \( K^x_{ijy} \) 
becomes the usual Riemann tensor. However we will predominantly not be 
working in the torsion-free case.

The commutator of operators obeys the Jacobi identity.
\begin{equation}
\label{Nabla_Jacobi}
[[ \nabla_i,\nabla_j],\nabla_k] + 
	[[ \nabla_j,\nabla_k],\nabla_i] + 
		[[ \nabla_k,\nabla_i],\nabla_j] = 0
\end{equation}

Resolving the tensor and partial derivative terms of this equation will
give identities involving the torsion and the curvature tensor which turn 
out to be the well known and very important Bianchi identities. However 
in evaluating these identities we note that expressing 
\( [\nabla_i,\nabla_j] \) in terms of \( \del_k \) 
is not very helpful when we wish to subsequently take another commutator. 
It would be much more useful to express \( [\nabla_i,\nabla_j] \) in terms of
\( \nabla_k \). We  define 
\begin{equation}
\label{torsion+curvature_defined}
[\nabla_i,\nabla_j] = T^k_{ij}\nabla_k + R_{ij}\oo
\end{equation}

Since \( \nabla_k \) and \( \del_k \) act identically on functions, 
the coefficients \( T^k_{ij} \) are as given in equation~\ref{torsion_defined} 
(the negative of the usual torsion tensor). Hence
\begin{flalign*}
T^k_{ij}\del_k + K_{ij}\oo 
	&= T^k_{ij}\nabla_k + R_{ij}\oo \\
	& =T^k_{ij}\del_k + T^k_{ij}\Gamma_k\oo + R_{ij}\oo
\end{flalign*}

and so 
\begin{equation}
\label{Riemannian_defined}
R^x_{ijy} = K^x_{ijy} -  T^k_{ij}\Gamma^x_{ky}
=
\Bigl[ \del_i\Gamma^x_{jy} - \del_j\Gamma^x_{iy}\Bigr] +
\Bigl[ \Gamma^x_{it}\Gamma^t_{jy} - \Gamma^x_{jt}\Gamma^t_{iy}\Bigr]
\end{equation}

Hence \( R^x_{ijy} \) is the usual Riemann tensor in all cases, even when
the manifold is not torsion free.

\medskip

We now derive the Bianchi identities by looking at the Jacobi identity for 
the derivations \( \nabla_i \), \( \nabla_j \) and \( \nabla_k \).
\begin{flalign*}
[[ \nabla_i,&\nabla_j],\nabla_k](v^x) 
            \,=\, [T_{ij}^t\nabla_t + R_{ij}\oo,\nabla_k](v^x) \\
& = \Bigl(T_{ij}^s\nabla_s\nabla_k(v^x) + R^x_{ijt}\nabla_k(v^t)
- R^t_{ijk}\nabla_t(v^x)\Bigr) \\
& \mbox{\hskip1cm}
     - \Bigl(\nabla_k(T^t_{ij}) \nabla_t(v^x) + T^s_{ij}\nabla_k\nabla_s(v^x)
           +\nabla_k(R^x_{ijt})v^t + R^x_{ijt}\nabla_k(v^t) \Bigr) \\
& = T^s_{ij}[\nabla_s,\nabla_k](v^x) - R^t_{ijk}\nabla_t(v^x)
           - \nabla_k(T^t_{ij})\nabla_t(v^x) - \nabla_k(R^x_{ijt})v^t \\
& = \Bigl( T^s_{ij}T^t_{sk} - \nabla_k(T^t_{ij})  - R^t_{ijk}\Bigr)\nabla_t(v^x)
     + \Bigl(T^s_{ij}R^x_{skt} - \nabla_k(R^x_{ijt})\Bigr)v^t
\end{flalign*}

If we cyclically permute the indices \( \{ i,j,k\} \) in this expression and add,
then by the Jacobi identity the result must be zero. Separating out the coefficients
of \( v^t \) and \( \nabla_t(v^x) \) we obtain the first and second Bianchi identities.

The first Bianchi identity is
\begin{equation}
\label{1st_Bianchi_original}
R_{ijk}^t + T_{kx}^tT_{ij}^x + \nabla_k(T_{ij}^t) \ijkequal 0 
\end{equation}

where we introduce the notation \( \ijkequal \) to mean that 
applying the permutation \( (ijk) \) cyclically to the left hand side and summing
gives the expression on the right.

The second Bianchi identity is
\begin{equation}
\label{2nd_Bianchi_original}
\nabla_k(R^t_{ijs}) + T^x_{ij}R^t_{kxs} \ijkequal 0 
\end{equation}

\bigskip
\pagebreak[2]

\section{Lie Groups}
\label{LieGroups}

Now that we have developed sufficient mathematical tools to allow us to
intelligently talk about the problem, let us consider the question of local 
symmetry, and in particular how we might specify that a manifold has local 
symmetry \( \so(2,3) \).

We turn for inspiration to the one manifold which we can unambiguously 
say must have local symmetry \( \so(2,3) \), namely the manifold which is
the ten dimensional Lie group \( \SO(2,3) \) itself.

There are many ways to describe the relationship between a Lie group and its
Lie algebra. For example the Lie algebra is commonly described as 
as the algebra of left invariant vector fields.  This uniquely identifies the 
Lie algebra with the tangent space at each point. Note that it also provides us
with an identification of tangent spaces at different points. Consequently 
every Lie group comes equipped with a natural connection arising from viewing 
left multiplication in the group as parallel transport.

The description of the Lie algebra as left invariant vector fields does not 
address its relationship with this natural connection which will be our main 
focus in what follows. The existence of this connection means that there is 
a natural covariant derivative \( \nabla_i \) defined on every Lie group. 
We observe the following.

\begin{theorem}
The natural connection on a Lie group satisfies
\( [\nabla_i,\nabla_i] = T^k_{ij}\nabla_k \) where \( T^k_{ij} \) are the 
structure coefficients of the associated Lie algebra on each tangent space. 
Furthermore \( \nabla_t(T^k_{ij}) = 0 \).
\end{theorem}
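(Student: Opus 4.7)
The plan is to exploit the fact that the natural connection is defined precisely so that left-invariant vector fields are parallel, which makes the left-invariant frame covariantly constant and reduces the whole statement to essentially the Lie bracket on vector fields.

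First I would fix a basis \( \{ e_1, \ldots, e_{10} \} \) of left-invariant vector fields on the Lie group and use this global frame (rather than a coordinate basis) to index the covariant derivative, writing \( \nabla_i \) for the covariant derivative in the direction of \( e_i \). The key structural fact is that because parallel transport is given by left-multiplication, a vector field is parallel along any curve exactly when its values along the curve are related by left-translation; hence every left-invariant vector field is everywhere parallel, and \( \nabla_i e_j = 0 \) for all \( i,j \). Dually, the left-invariant coframe is also parallel.

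Next I would compute \( [\nabla_i,\nabla_j] \) acting on an arbitrary vector field \( v = v^m e_m \). Using \( \nabla_i e_m = 0 \) repeatedly, \( \nabla_i\nabla_j(v^m e_m) = e_i(e_j(v^m))\, e_m \), so
\begin{equation*}
[\nabla_i,\nabla_j](v^m e_m) = [e_i,e_j](v^m)\, e_m.
\end{equation*}
The Lie bracket of vector fields is given by the structure coefficients \( c^k_{ij} \) of the associated Lie algebra, i.e.\ \( [e_i,e_j] = c^k_{ij} e_k \). Hence \( [\nabla_i,\nabla_j](v^m e_m) = c^k_{ij} e_k(v^m)\, e_m = c^k_{ij} \nabla_k(v^m e_m) \). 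A parallel computation on 1-forms (using that the dual coframe is parallel) followed by the usual Leibniz/contraction extension shows that \( [\nabla_i,\nabla_j] = c^k_{ij}\nabla_k \) on all tensors. Comparing with equation~\ref{torsion+curvature_defined} yields \( T^k_{ij} = c^k_{ij} \) and \( R_{ij}\oo = 0 \).

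For the second statement, view \( T = T^k_{ij}\, e_k \otimes e^i \otimes e^j \) as a tensor. In the left-invariant frame its components are globally constant (equal to the structure constants), and every factor in \( e_k \otimes e^i \otimes e^j \) is parallel, so the Leibniz rule gives \( \nabla T = 0 \); reading off components recovers \( \nabla_t(T^k_{ij}) = 0 \). The main subtle point — really the only place care is required — is the sign convention: the paper's torsion \( T^k_{ij} = -(\Gamma^k_{ij}-\Gamma^k_{ji}) \) is the negative of the standard one, so one must check that the identification with the Lie bracket of left-invariant fields comes out with the correct sign, which it does because for the left-invariant connection the standard torsion satisfies \( T_{\text{std}}(e_i,e_j) = -[e_i,e_j] \).
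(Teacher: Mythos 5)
Your proof is correct. Note that the paper does not actually prove this theorem --- it is introduced with ``We observe the following'' and left as an observation --- so your argument supplies the missing justification rather than duplicating one. The only point worth making explicit is a frame issue: the paper's formalism in section~\ref{TensorDerivationsandManifolds} indexes tensor derivations by a coordinate system, so its \( [\nabla_i,\nabla_j] \) is the antisymmetrised second covariant derivative, whereas you compute the operator commutator \( \nabla_{e_i}\nabla_{e_j}-\nabla_{e_j}\nabla_{e_i} \) in the left-invariant frame. In a general frame these two objects differ by connection-coefficient terms acting on the free lower index, but in the left-invariant frame all connection coefficients vanish (precisely because \( \nabla_{e_i}e_j=0 \)), so the two notions coincide and your computation does establish \( [\nabla_i,\nabla_j]=T^k_{ij}\nabla_k \) with \( R_{ij}\oo=0 \); since torsion and curvature are tensors, verifying their components in this one frame suffices. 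Your sign check is also right: the standard torsion of the left-invariant connection is \( -[e_i,e_j] \), so the paper's torsion (the negative of the standard one) is \( +c^k_{ij} \), the structure constants; and \( \nabla_t(T^k_{ij})=0 \) follows as you say from the constancy of those constants in a parallel frame together with the Leibniz rule.
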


Hence the Lie structure is the negative of the torsion as usually defined. 
As noted in the comments following equation~\ref{torsion_defined}, there
are many reasons to believe that torsion would have been better defined with
opposite sign. Rather than live with this annoying negative sign we choose 
to correct the definition. Hence \textbf{we will call \( T^k_{ij} \) as 
defined in equation~\ref{torsion_defined} the torsion 
and not its negative.}

With this definition the torsion with respect to the natural connection on
a Lie group is precisely the Lie structure while the curvature tensor is zero.

This suggests how we might be able to have a curved space which accommodates 
a local symmetry group in which translations do not commute. We can use the 
torsion to specify the local symmetry.  Curvature and Torsion have a clearly 
defined relationship and can coexist on the same manifold. Hence we can 
specify local symmetry using the Torsion and we can do it on a curved 
manifold without any confusion between the local symmetry and the 
curvature.

\bigskip
\pagebreak[2]

\section{Lie Manifolds}
\label{LieManifolds}

\begin{definition} 
\label{local_lie_defined}
A \textbf{Lie manifold} is a connected real manifold with 
invariant torsion. That is with \( \nabla_t(T^k_{ij}) = 0 \).
\end{definition}

The covariant derivatives \( \{ \nabla_i \} \) for fixed indices \( i \)
are tensor derivations of rank \( \rank{0}{0} \), and this set of tensor 
derivations is closed under addition and commutator and so forms a Lie algebra.
The Jacobi identity now gives
\begin{equation}
[\nabla_k,[\nabla_i,\nabla_j]] (f) \ijkequal 0
\end{equation}
Expanding this out and using \( \nabla_k(T_{ij}^t) = 0 \) we obtain
\begin{equation}
\label{T_Jacobi}
T_{kx}^tT_{ij}^x \ijkequal 0 
\end{equation}
and hence the torsion obeys the Jacobi condition defining a Lie algebra 
on each tangent space. By invariance (and connectivity) these Lie algebras 
are all isomorphic, so there is essentially a unique Lie algebra associated 
with each connected Lie manifold.

Equation~\ref{T_Jacobi} implies that on a Lie manifold the first 
Bianchi identity takes the more familiar form
\begin{equation}
\label{1st_Bianchi_basic}
R_{ijk}^t \ijkequal 0
\end{equation}

A Lie algebra has a natural bilinear form, the Killing form, which is 
invariant under the adjoint action. This means every Lie manifold 
has defined on it a natural pseudometric given by the Killing form.
\begin{equation}
\label{killing_form_defined}
k_{ij} = T^a_{ib}T^b_{ja} 
\end{equation}

The Jacobi identity gives \( T_k\oo(k_{ij}) = 0 \) confirming that \( k_{ij} \)
is invariant under the adjoint action of the Lie algebra on each tangent 
space. If the Lie algebra is semisimple (or indeed simple as is the case
for \( \so(2,3) \) then \( k_{ij} \) will be non-degenerate.  Furthermore
\begin{equation}
\label{killing_form_invariance}
\nabla_t(k_{ij}) = \nabla_t(T_{ib}^aT_{ja}^b) = 
	\nabla_t(T_{ib}^a)\, T_{ja}^b + T_{ib}^a\,\nabla_t(T_{ja}^b) = 0
\end{equation}

so this pseudometric is also globally invariant under the action of the 
covariant derivative. As is common in the area of general relativity we
will stop calling it a pseudometric beyond this point and will simply 
call it the metric 

\medskip

Let us quickly examine the nature of this metric in the case that the Lie 
algebra is \( \so(2,3) \). We will use the basis \( \{T,X,Y,Z,A,B,C,I,J,K\} \)
for the Lie algebra and coordinates \( \{ t,x,y,z,a,b,c,i,j,k\} \) so that a
typical element of the Lie algebra takes the form 
\( tT + xX + yY + zZ + aA + bB + cC + iI + jJ + kK \). We use ordinary units
(because we want to see the effect of \( r \)) and compute the Killing form 
directly from table~\ref{table_so23_ordinary}. We obtain 
\begin{multline}
-\nfrac{6}{r^2}
\left(
t_1t_2 -\nfrac{1}{c^2}(x_1x_2 + y_1y_2 + z_1z_2)
\right) \\ - 6
\left(
i_1i_2 + j_1j_2 + k_1k_2 - \nfrac{1}{c^2}(a_1a_2 + b_1b_2 + c_1c_2)
\right)
\end{multline}

This gives us a Lorentz metric across the four space-time dimensions 
\( \{ T,X,Y,Z \} \) together with a spin or helicity term on the six 
Lorentz dimensions. This is precisely the kind of metric we'd expect 
to have on the manifold of frames.

\medskip
\begin{axiom}
\large
\label{axiom:local_lie_so23}
Our universe defines a Lie manifold for \( \so(2,3) \). 
We interpret this as the 10-D manifold of local inertial 
frames\footnote{or possibly a cover of the manifold of inertial frames}.
\end{axiom}
\medskip

We call this an axiom because such a statement cannot be proved mathematically. 
Instead it must be evaluated in terms of the usefulness and physicality of the 
resulting mathematical model.

\medskip

In a Lie manifold for \(  \so(2,3) \) the Lie algebra \( \so(2,3) \) can 
act on vectors in two ways. The global action is given by 
\( v^k \mapsto \nabla_i(v^k) \). There is also a local action given by 
\( v^k \mapsto T^k_{ij}v^j \) which describes the adjoint action of \( \so(2,3) \).
It is a map from the manifold into the 10-D representation.

\bigskip

At this point we have almost everything we need.  However tensors alone 
are not enough for us to do physics. We need spinors to act as the wave 
functions of fermions. This requires us to extend our ideas to allow 
other local actions.

	\chapter{Spinor Manifolds and Frameworks}

In the last chapter we looked at how local symmetry \( \so(2,3) \)
could be specified on a curved manifold. We were able to do this by 
using torsion to describe the local symmetry. A vector on such a 
manifold can be viewed as a function on which the symmetry group
has two actions; an extrinsic action specified by the covariant 
derivative, and an intrinsic action specified by the torsion.

This is exactly the kind of thing we want in order to do physics.
However tensors have representations with integral values for 
spin. We need to be able to use functions into other representations,
and in particular into spinors. We need to extend our model in a 
natural way so that this becomes possible.

\bigskip
To see how this might be done we return to the canonical example where the 
manifold is the Lie group itself. This time however we will begin with the
assumption that we are working with a matrix Lie group -- a closed subgroup 
of the group \( \text{End}(V) \) of endomorphisms some vector space \( V \).
This means our Lie algebra comes equipped with a local action on $V$. In the 
case that \( V \) is the 4-D representation of \( \sp(2,\R) \) we hope to learn
from this example how to naturally include spinor valued functions on a 
Lie manifold for \( \so(2,3) \).

\bigskip
\pagebreak[2]

\section{Matrix Lie Groups}
\label{MatrixLieGroupsandManifolds}

Consider a matrix Lie group \( M \) on a (real or complex) vector space \( V \).
Choose a basis for \( V \) and use Greek indices to denote components with 
respect to this basis. A \( V \)-field is a \( V \) valued function on \( M \)
denoted in terms of our basis as \( v^\alpha \) where it is understood that 
these coordinates are functions on \( M \).  Choose a coordinate system for 
\( M \) treating it as a real manifold.  We will use roman indices for this coordinate 
system, which gives a basis at each point of the Lie algebra. An ordinary vector 
field on \( M \) is denoted \( v^i \) where it is understood that these coordinates 
are functions on \( M \).

The elements of the Lie algebra are matrices over \( V \). Note that the Lie 
algebra is always a real Lie algebra, but the matrices that represent it may be
complex in the case that $V$ is a complex vector space. Hence the Lie algebra elements
can be written in terms of our bases as \( T^\alpha_{i\beta} \) where
the coefficients are functions of \( M \).  Since these matrices represent the 
Lie algebra we have 
\begin{equation}
\label{first_rep_equation}
T^\alpha_{i\lambda}T^\lambda_{j\beta} - T^\alpha_{j\lambda}T^\lambda_{i\beta}
= T^k_{ij}T^\alpha_{k\beta} 
\end{equation}

where \( T^k_{ij} \) is the structure constants for the Lie algebra.
We call this representation the \textbf{ local action} on \( V \)-fields.
The notation is chosen to be consistent with the local action 
on vector fields which is the adjoint action given by \( T^k_{ij} \). 
This allows us to describe the natural local action as \( T_k\oo \) 
regardless of whether vector fields or \( V \)-fields are being acted on.  
This local action can be extended naturally to tensor products and duals 
of \(V\)-fields and vector fields. 

For example a \( V^*\)-field is a map from \( M \) into the 
dual space \( V^*\) of \( V \) and can be denoted \( u_\alpha \) with 
respect to our chosen basis of \( V \). It maps \( v^\alpha \) to  
\( u_\alpha v^\alpha \) calculated under the summation convention. 
The natural local action on \( V^*\)-fields  is given by 
\begin{equation}
T_k\oo(u_\alpha) = - T^\beta_{k\alpha}u_\beta
\end{equation}
which ensures that \( T(u)(v) + u(T(v)) = T(u(v)) = 0 \) consistent with a 
trivial local action on scalar fields. 

Similarly a \( V \tensor V \)-field is a map from \( M \) to \( V \tensor V \)
denoted \( X^{\alpha\beta} \) with respect to our chosen basis of \( V \). 
The local action is defined by requiring a Leibniz condition and is 
given by 
\begin{equation}
T_k\oo(X^{\alpha\beta}) = 
    T^\alpha_{k\lambda}X^{\lambda\beta} + T^\beta_{k\lambda}X^{\alpha\lambda} 
\end{equation}
which ensures that \( T(v\tensor w) = T(v)\tensor w + v \tensor T(w) \).

The most general objects that can be formed using dual and tensor product
from \( V\)-fields and vector fields  are called \( V \)-tensors.
A typical \( V \)-tensor can be described in terms of coordinates for 
$V$ and the tangent spaces $\T$ using a combination of greek and roman, 
upper and lower indices. 
\[ a^{i_1\ldots i_m\lambda_1\ldots\lambda_p}_{j_1\ldots j_n\mu_1\ldots\mu_q} 
\in 
\underbrace{ \T \tensor \!\cdots\! \tensor \T }_{\mbox{\(m \) times }}
\tensor
\underbrace{ \T^* \tensor \!\cdots\! \tensor \T^* }_{\mbox{\(n\) times}} 
\tensor 
\underbrace{ V \tensor \!\cdots\! \tensor V }_{\mbox{\(p \) times }}
\tensor 
\underbrace{ V^* \tensor \!\cdots\! \tensor V^* }_{\mbox{\(q\) times}} 
\]

We define the rank  of such a \( V \)-tensor to be \( \rank{m,p}{n,q} \). 
Thus \( T^\alpha_{i\beta} \) is a \( V \)-tensor of rank \( \rank{0,1}{1,1} \). 

The local action on \( V \)-tensors is denoted \( T_k\oo \) and is 
calculated by contracting \( T^i_{kj} \) or \( T^\alpha_{k\beta} \) as 
appropriate with each upper and lower index in term; adding the results 
in the case of the upper indices, and subtracting in the case of the lower 
ones. This will ensure that the local action obeys a Leibniz condition on 
tensor products and respects contraction. Furthermore we have 
\begin{equation}
[T_i\oo,T_j\oo] = T^k_{ij}T_k\oo
\end{equation}
which means the local action on \( V \)-tensors of arbitrary rank represents
the Lie algebra.

We finish our consideration of the local action by looking at the local action 
on the \( V \)-tensor \( T^\alpha_{i\beta} \).  We see 
\begin{equation}
T_k\oo(T^\alpha_{i\beta}) = T^\alpha_{k\lambda}T^\lambda_{i\beta}
- T^\lambda_{k\beta}T^\alpha_{i\lambda} - T^x_{ki}T^\alpha_{x\beta} = 0 
\end{equation}

This follows directly from equation~\ref{first_rep_equation}.  We say that 
\( T^\alpha_{i\beta} \) is locally invariant.

\bigskip

We now turn to the global action.  As we noted earlier left multiplication 
by elements of \( G \) can be regarded as defining a natural parallel 
transport of vector fields on a Lie group.  In the present case where the 
elements of \( G \) are matrices over \( V \), left multiplication also defines 
a natural parallel transport of \( V \)-valued functions. This can be expressed
via a connection \( \Gamma^\alpha_{i\beta} \). This notation has been
chosen so that we can describe the natural connection as \( \Gamma_i\oo \) 
regardless of what is being acted on. This definition also extends naturally
to $V$-tensors and can be used to define a \textbf{ global action} 
\( \nabla_k = \del_k + \Gamma_k\oo  \) on $V$-tensors. This global action 
also represents the Lie algebra and we have 
\begin{equation}
\label{V_Lie_exact}
[\nabla_i,\nabla_j] = T^k_{ij}\nabla_k
\end{equation}

where \( T^k_{ij} \) is the Lie structure and also obviously the torsion.

Finally we observe that the \textbf{ local and global actions commute}. 
Hence we have 
\begin{equation}
\nabla_k(T^\alpha_{i\beta}) = 0
\end{equation}

and thus \( T^\alpha_{i\beta} \) is globally as well as locally invariant.

\bigskip
So far we have simply been exploring the structure of an ordinary matrix 
Lie group from a somewhat unorthodox perspective. It is now time to consider 
how we might we might add curvature to this picture. But that should not be 
difficult.  We just need to weaken equation~\ref{V_Lie_exact} and replace 
it with the more general equation
\begin{equation}
\label{V_Lie_curved}
[\nabla_i,\nabla_j] = T^k_{ij}\nabla_k + R_{ij}\oo
\end{equation}

where \( R_{ij}\oo \) is a linear operator on arbitrary $V$-tensors 
which we will call the curvature.

\bigskip
\pagebreak[2]

\section{$V$-tensors and $V$-tensor Derivations}

In this section we implement the idea at the end of the last section,
and clarify our notation.

 Let \( M \) be a manifold and choose a local coordinate system on \( M \).
Use the roman alphabet for the indices so that a vector field is written
\( v^i \) with respect to the coordinate system.

Let \( V \) be any vector space (real or complex) with basis 
\( \{ e_\alpha \} \) where we use greek indices to distinguish it from the 
coordinate system on \( M \). A \textbf{\( V \)-field on \( M \)} is a map 
from \( M \) to \( V \) and is denoted \( v^\alpha \) in terms of its 
components with respect to this basis. 

Other types of functions on the manifold can be constructed from 
\( V \)-fields and ordinary vector fields using the operations of dual
and tensor product. Such functions are called $V$-tensors. A typical 
\( V \)-tensor is described with respect to our coordinate system on \( M \)
and basis of \( V \) via a combination of greek and roman, upper and lower
indices. We define the \textbf{rank} of the $V$-tensor
\( a^{i_1\ldots i_m\lambda_1\ldots\lambda_p}_{j_1\ldots j_n\mu_1\ldots\mu_q}  \)
to be \( \rank{m,p}{n,q} \).

\begin{definition}
\label{Vtensor_derivation}
A \textbf{\(V\)-tensor derivation} is a map from
\( V \)-tensors to \( V \)-tensors which satisfies the following 
conditions 
\begin{itemize}
\item Linearity.
\item Leibniz condition on tensor products.
\item Commutes with contraction (trace) of Greek and Roman indices.
\end{itemize}
\end{definition}

\medskip

The results on tensor derivations obtained in 
section~\ref{TensorDerivationsandManifolds} extend to $V$-tensor 
derivations without difficulty. We summarise omitting proofs which are
essentially identical to those given earlier. 

\begin{proposition}
If \( D \) and \( E \) are \( V \)-tensor derivations, then 
\( [D,E\,] \) is also a \( V \)-tensor derivation where 
\( [D,E\,](X) = D(E(X)) - E(D(X)) \).
\end{proposition}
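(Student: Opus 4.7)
The plan is to mimic exactly the proof of the analogous proposition for ordinary tensor derivations, simply verifying in turn the three defining properties of a $V$-tensor derivation for the operator $[D,E]$. Linearity is immediate: since $D$ and $E$ are each linear and composition and subtraction of linear maps are linear, the map $X \mapsto D(E(X)) - E(D(X))$ is linear. So the only substantive checks are the Leibniz condition on tensor products and compatibility with contraction.

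For the Leibniz check, I would apply the definition of $[D,E]$ to a tensor product $X \otimes Y$ and expand each of $D(E(X \otimes Y))$ and $E(D(X \otimes Y))$ using the Leibniz rules satisfied by $E$ and $D$ individually. Each expansion produces four terms, and when the difference is taken, the two "cross" terms of the form $D(X) \otimes E(Y)$ and $E(X) \otimes D(Y)$ appear once with each sign and cancel. What remains is precisely $[D,E](X) \otimes Y + X \otimes [D,E](Y)$, which is the Leibniz condition for $[D,E]$. No new ideas are needed here compared to the ordinary tensor case, because tensor products of $V$-tensors are defined componentwise.

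For compatibility with contraction, I would observe that if $c$ denotes any of the allowed contraction operations (pairing an upper Roman with a lower Roman index, or an upper Greek with a lower Greek index), then by hypothesis $D \circ c = c \circ D$ and $E \circ c = c \circ E$. Composing, $D \circ E \circ c = D \circ c \circ E = c \circ D \circ E$, and similarly $E \circ D \circ c = c \circ E \circ D$. Subtracting gives $[D,E] \circ c = c \circ [D,E]$, as required. The argument is identical for Greek and Roman contractions; this is the only place where the two kinds of indices enter, and since the hypothesis gives us both kinds of compatibility separately, nothing new is needed.

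There is no real obstacle here: the proof is a three-line bookkeeping exercise, essentially identical to the scalar-valued case treated earlier in Section \ref{TensorDerivationsandManifolds}, with the minor remark that the Leibniz and contraction checks now have to be applied to $V$-tensors carrying both types of indices. This is why the author explicitly says the proofs are "essentially identical to those given earlier" and chooses to omit them.
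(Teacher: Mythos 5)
Your proof is correct and is exactly the routine verification the paper has in mind (the paper omits it entirely, both here and for the ordinary tensor case, as "easily verified"): linearity is immediate, the cross terms in the Leibniz expansion cancel, and compatibility with contraction follows by composing the commuting relations. The only minor point you leave implicit is that the difference $D(E(X)) - E(D(X))$ is well defined because both composites raise the rank of $X$ by the same amount, but this is harmless.
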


\begin{proposition} If \( D \) is a \( V \)-tensor derivation and \( A \)
is a \( V \)-tensor, then \( A\tensor D \) is an \( V \)-tensor derivation 
where \( (A\tensor D)(X) = A\tensor D(X) \).
\end{proposition}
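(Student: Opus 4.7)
The plan is to verify the three defining conditions of a \( V \)-tensor derivation (Definition~\ref{Vtensor_derivation}) directly from the definition \( (A\tensor D)(X) = A \tensor D(X) \), reducing each condition to the corresponding property of \( D \). The only subtle point is the bookkeeping of index positions in \( A \tensor D(X) \), but this is cosmetic rather than a real obstacle.

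First I would check linearity, which is immediate since the tensor product is linear in its right argument and \( D \) is linear by assumption, giving \( A \tensor D(\lambda X + \mu Y) = \lambda (A \tensor D(X)) + \mu (A \tensor D(Y)) \). Next I would establish the Leibniz condition by applying the Leibniz condition for \( D \) to a tensor product \( X \tensor Y \), obtaining
\[
(A\tensor D)(X \tensor Y) = A \tensor D(X) \tensor Y + A \tensor X \tensor D(Y).
\]
The first summand on the right is \( (A\tensor D)(X) \tensor Y \) by definition, while the second summand agrees component-wise with \( X \tensor (A \tensor D)(Y) \); the reordering of \( A \) past \( X \) is purely a relabelling of index positions since the components multiply as scalars.

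Finally I would check that \( A\tensor D \) commutes with contraction on Greek and Roman indices. Any contraction of \( A \tensor D(X) \) has three possible types: both contracted indices lie in \( A \), both lie in \( D(X) \), or one lies in each. In the first case \( D \) does not see the contracted indices at all, so the operation commutes trivially. In the second case the statement reduces to the fact that \( D \) commutes with contraction on \( X \). In the third case, the contraction simply pairs a component of \( A \) with a component of \( D(X) \), which again is untouched by the action of \( D \) on the remaining indices of \( X \). With all three conditions verified, \( A\tensor D \) is a \( V \)-tensor derivation, so the main and only mild difficulty is keeping track of the indices, and once a consistent convention is adopted the proof is a routine check.
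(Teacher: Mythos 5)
Your proof is correct and takes the same route as the paper, which simply says ``just check the definition'' (and, for the $V$-tensor version, that the proofs are essentially identical to the ordinary tensor case). Your verification of the three axioms, including the observation that reordering $A$ past $X$ in the Leibniz step is only an index relabelling, is exactly the routine check the paper has in mind.
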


\begin{proposition} Every \( V \)-tensor derivation has a rank 
\( \rank{m,i}{n,j} \) and maps \( V \)-tensors of rank 
\( \rank{p,k}{q,l} \) 
to \( V \)-tensors of rank \( \rank{m+p,k+i}{n+q,l+j} \).
\end{proposition}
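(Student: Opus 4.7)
The plan is to adapt the argument given earlier for ordinary tensor derivations almost verbatim, since the only new feature is the presence of Greek indices (counting tensor powers of $V$ and $V^*$) alongside the Roman ones. First I would pick any scalar field $f$ on $M$, which is a $V$-tensor of rank $\rank{0,0}{0,0}$, and let $\rank{m,i}{n,j}$ denote the rank of $D(f)$. Before going further one should check this is well defined, i.e.\ independent of $f$: applying the Leibniz condition to $D(fg) = D(f)\tensor g + f \tensor D(g)$, both summands must lie in the same tensor space for the addition to be meaningful, and multiplication by a scalar preserves rank, so $D(f)$ and $D(g)$ necessarily have the same rank.

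Next, given an arbitrary $V$-tensor $T$ of rank $\rank{p,k}{q,l}$, I would apply $D$ to the product $fT$ using the Leibniz condition:
\[
D(fT) \;=\; D(f)\tensor T + f\tensor D(T).
\]
The first summand clearly has rank $\rank{m+p,\,i+k}{n+q,\,j+l}$. Since addition of $V$-tensors is only defined within a fixed rank, the second summand must carry the same rank; and because $f$ is a scalar, this forces $D(T)$ itself to have rank $\rank{m+p,\,k+i}{n+q,\,l+j}$. This is exactly the statement to be proved.

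The main (very minor) obstacle is dealing with the degenerate case in which $D$ annihilates every scalar function. In that situation the rank of $D$ is not pinned down by looking at $D(f)$; however one can instead choose any $V$-tensor $T_0$ on which $D$ is nonzero (if no such $T_0$ exists then $D\equiv 0$ and any rank trivially works), read off $\rank{m,i}{n,j}$ from the difference between the ranks of $T_0$ and $D(T_0)$, and then rerun the Leibniz argument with $fT$ replaced by $fT_0\tensor S$ for an arbitrary auxiliary $V$-tensor $S$. The fact that the local/global action distinction does not enter the argument, and that both contraction axioms apply uniformly to Greek and Roman indices, means no substantive new idea is required beyond the one used in the ordinary tensor case.
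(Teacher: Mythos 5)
Your proposal is correct and follows essentially the same route as the paper, which omits the proof here and defers to the ordinary tensor-derivation case: apply the Leibniz condition to $D(fT) = D(f)\tensor T + f\tensor D(T)$ and use the fact that addition is only defined within a fixed rank. Your extra checks (well-definedness of the rank of $D(f)$ independent of $f$, and the degenerate case where $D$ kills all scalars) are minor refinements the paper does not bother with, but they do not change the argument.
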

Every ordinary derivation can be extended to a \( V \)-tensor derivation 
of rank \( \rank{0,0}{0,0} \) by allowing it to act on components. We use 
\( T \mapsto a^i \del_i (T ) \) 
to denote this \(V\)-tensor derivation.

Conversely every \(V\)-tensor derivation \( D \) of rank \( \rank{0,0}{0,0} \)
acts on functions as an ordinary derivation. Hence we may associate to it a 
tangent vector field so that with respect to some coordinate system on 
\( M \) we have \( D(f) = a^i\del_i (f) \). 

The difference \( D - a^i\del_i \) is a \(V\)-tensor
derivation of rank \( \rank{0,0}{0,0} \) which maps all functions to the zero 
function. 

\begin{proposition} Let \( E \) be an \(V\)-tensor derivation of rank 
\( \rank{0,0}{0,0} \) which maps all functions to the zero function. 
Then there exists a \(V\)-tensor \( \Gamma^\beta_\alpha \) of rank 
\( \rank{0,1}{0,1} \) 
and an \( V \)-tensor \( \Gamma^i_j \) of rank \( \rank{1,0}{1,0} \) so 
that \( E = \Gamma\oo \) defined in the obvious way.
\end{proposition}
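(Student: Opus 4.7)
The plan is to mirror the proof given earlier in section~\ref{TensorDerivationsandManifolds} for ordinary tensor derivations, lifted now to the larger setting in which indices come in two flavours: greek ($V$-type) and roman (tangent-type). The structure is the same in each flavour, and the two flavours do not interact because they are contracted by distinct operations that $E$ respects independently.

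First I would exploit that $E(f) = 0$ for every scalar function, together with the Leibniz condition, to establish function-linearity: for any $V$-field $v$ and any scalar $f$, $E(fv) = E(f)v + fE(v) = fE(v)$, and likewise for an ordinary vector field. Hence $E$ restricts to a function-linear operator on each of the two types of basic section, so at each point it is a linear endomorphism of $V$ and of the tangent space respectively. Writing a $V$-field as $v = v^\alpha e_\alpha$ in a chosen basis of $V$ and defining $\Gamma^\beta_\alpha$ pointwise by $E(e_\alpha) = \Gamma^\beta_\alpha e_\beta$ yields the $V$-tensor identity $E(v^\beta) = \Gamma^\beta_\alpha v^\alpha$. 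Similarly, in a coordinate frame $\{\del_j\}$, the rule $E(\del_j) = \Gamma^i_j\del_i$ extracts a rank $\rank{1,0}{1,0}$ tensor $\Gamma^i_j$ with $E(v^i) = \Gamma^i_j v^j$.

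Next I would extend to dual sections by using that $E$ commutes with both greek and roman contractions. For any $V^*$-field $u_\beta$, the quantity $u_\beta v^\beta$ is a scalar, so $0 = E(u_\beta v^\beta) = E(u_\beta)v^\beta + u_\beta\Gamma^\beta_\alpha v^\alpha$, forcing $E(u_\alpha) = -\Gamma^\beta_\alpha u_\beta$. The analogous calculation on the roman side gives $E(u_i) = -\Gamma^j_i u_j$. An induction on the total rank $m+n+p+q$ of a general $V$-tensor then completes the argument: contract any positive-rank $V$-tensor against a suitable basis dual section along one slot, apply $E$ via Leibniz so that one term is the derivation applied to a lower-rank $V$-tensor and the other is the single-slot action just computed, and invoke the inductive hypothesis. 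Summing the contributions over all slots reproduces precisely the formula of equation~\ref{definition_of_oo}, extended in the obvious way to mixed greek and roman indices; this is the action of $\Gamma\oo$ built from the two $\Gamma$ tensors.

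The hardest part is really just bookkeeping rather than a substantive obstacle. One must check that the greek-slot and roman-slot Leibniz contributions combine additively with no cross terms, and that the pointwise components $\Gamma^\beta_\alpha$ and $\Gamma^i_j$ assemble into genuine $V$-tensor fields rather than merely smooth pointwise data. The second point is immediate from the function-linearity established in the first step, which is exactly the criterion for a linear operator on sections to come from a tensor field; the first point follows because greek and roman contractions are independent operations that $E$ commutes with separately, so the derivation splits cleanly into a greek-slot piece and a roman-slot piece whose sum is $\Gamma\oo$.
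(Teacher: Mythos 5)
Your proof is correct and follows essentially the same route as the paper, which omits the proof here precisely because it is "essentially identical" to the one given for ordinary tensor derivations in section~\ref{TensorDerivationsandManifolds}: function-linearity from $E(f)=0$ plus Leibniz, extraction of $\Gamma^\beta_\alpha$ and $\Gamma^i_j$ on basic sections, the sign flip on duals via commuting with contraction, and induction on total rank. Nothing further is needed.
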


\begin{proposition} All \/ \( V \)-tensors of rank \( \rank{m,i}{n,j} \)
can be expressed in the form
\[ 
a^{i_1\ldots i_m\lambda_1\ldots\lambda_p\, k}_{j_1\ldots j_n\mu_1\ldots\mu_q}
	\del_k
+ A^{i_1\ldots i_m\,\lambda_1\ldots\lambda_p}_{j_1\ldots j_n\,\mu_1\ldots\mu_q}\oo 
\]
\end{proposition}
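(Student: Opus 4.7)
The plan is to port the argument for ordinary tensor derivations of arbitrary rank across to the $V$-tensor setting, reducing to the rank $\rank{0,0}{0,0}$ case settled in the two preceding propositions.

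First I would carry out an index-fixing reduction. Given a $V$-tensor derivation $D$ of rank $\rank{m,i}{n,j}$, fix specific numerical values for all of its free Roman and Greek indices. The resulting operator maps $V$-tensors of rank $\rank{p,k}{q,l}$ to $V$-tensors of the same rank. Linearity is immediate; the Leibniz rule on tensor products and commutativity with contraction of Greek and Roman indices pass through index evaluation unchanged because the fixed outer indices act only as spectators during the derivation action. Hence each such index-fixed operator satisfies Definition~\ref{Vtensor_derivation} and is a $V$-tensor derivation of rank $\rank{0,0}{0,0}$.

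Second I would invoke the preceding two propositions on each slice. Each rank $\rank{0,0}{0,0}$ $V$-tensor derivation decomposes as $a^k \del_k + \Gamma\oo$ for some vector field $a^k$ and some pair of $V$-tensors $\Gamma^\beta_\alpha$ and $\Gamma^s_t$ packaged into the $\oo$ action. Allowing the fixed index values to vary, these coefficients assemble into arrays with precisely the upper/lower, Roman/Greek index signatures appearing on the right-hand side of the claimed formula.

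Third—and this is where I expect the real work—I would verify that the assembled arrays are genuine $V$-tensors rather than mere number-valued functions of index labels. The key observation is that extracting $a^k$ from $D(f) = a^k\del_k f$ for a scalar $f$ is a coordinate-free operation: the leading partial-derivative part of $D$ acting on functions transforms tensorially under change of chart on $M$ and change of basis on $V$, since $D$ itself is a $V$-tensor derivation. Once the $a$ coefficients are identified as a $V$-tensor of the stated rank, the residual operator $D - a\,\del$ annihilates all functions on each index slice and the second proposition produces $\oo$-coefficients which must also transform tensorially by the same naturality argument.

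Finally, a short verification that $a\,\del + A\oo$ so constructed coincides with $D$ and is itself a $V$-tensor derivation of rank $\rank{m,i}{n,j}$ reduces immediately to the corresponding facts on each fixed-index slice. The main obstacle throughout is the tensoriality check in the third step: the raw extraction produces components labelled by indices, and one must carefully track how basis changes on $V$ and coordinate changes on $M$ interact with the Leibniz and contraction axioms to confirm the correct transformation law. The rest is essentially bookkeeping carried over from the tensor-derivation proofs of Section~\ref{TensorDerivationsandManifolds}.
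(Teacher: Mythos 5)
Your proposal is correct and follows essentially the same route as the paper: the paper obtains this result by fixing (contracting away) the outer indices to reduce to the rank $\rank{0,0}{0,0}$ case already settled, and explicitly states that the proofs are identical to those for ordinary tensor derivations in section~\ref{TensorDerivationsandManifolds}. Your additional care over the tensoriality of the assembled coefficient arrays is a point the paper glosses over, but your reduction and reassembly match its argument.
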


In particular a \( V \)-tensor derivation of rank \( \rank{0,0}{1,0} \) 
takes the form

\[ a^k_i\del_k + \Gamma_k\oo \]

In the case where \( a^k_i = 1^k_i \) we call this a \textbf{covariant 
derivative}. The expressions $\Gamma^\alpha_{k\beta}$ and $\Gamma^i_{kj}$ are
called \textbf{connections} and these define a notion of parallel transport 
of $V$-fields and vector fields respectively.

\bigskip
\pagebreak[2]

\section{Matrix Lie Manifolds}

We are now in a position to define the structure we need.

\begin{definition}
A \textbf{ Matrix Lie Manifold } is 
\begin{description}
\item[(a)] a real manifold \( M \) and a real or 
	complex vector space $V$ together with
\item[(b)] a \textbf{ local action} \( T_k\oo \) of 
	tangent vectors on $V$-tensors with
	\[ [T_i\oo,T_j\oo] = T^k_{ij}\oo \]
\item[(c)] a \textbf{ global action} 
	\( \nabla_k = \del_k + \Gamma_k\oo  \) on $V$-tensors with
	\[ [\nabla_i,\nabla_j] = T^k_{ij}\nabla_k + R_{ij}\oo \]
	for some linear operator on $V$-tensors \( R_{ij}\oo \) 
\item[(d)] where local and global actions commute 
	\[ \nabla_k(T^\alpha_{i\beta}) = 0 \]
\end{description}
\end{definition}

Part (c) tells us that the Lie structure \( T^k_{ij} \) is also the 
torsion\footnote{remember that our torsion is the negative of the usual 
definition} and parts (d) and (b) give \( \nabla_t(T^k_{ij}) = 0 \).
It follows that every matrix Lie manifold is a Lie manifold with 
a natural local action on \( V \)-valued functions.

Every matrix Lie manifold defines a unique matrix Lie algebra. A tangent 
space gives the Lie algebra, the torsion defines a Lie structure, and the 
local action gives a representation on \( V \).  Furthermore the same matrix 
Lie algebra is obtained at each point on the manifold.  

Matrix Lie manifolds provide a mathematical framework for building curved
models with functions mapping into a chosen local representation. We now need
to decide which local representation of \( \so(2,3) \) to use.
 
\bigskip
\pagebreak[2]

\section{Complex matters}
\label{complex matters}

Spinors are the 4-D representation of our group \( \so(2,3) \). We want to 
be able to have spinor valued functions on our manifold so that we can use
them as wave functions for elementary particles. We can do this by looking
at a matrix Lie manifold for the appropriate representation. But do we want 
to use real spinors or complex ones? What difference does it make?

We make the following temporary definitions to aid our discussion.

\begin{definition} 
A \textbf{real spinor manifold} is defined to be a matrix 
Lie manifold for the 4-D real irreducible representation of the real Lie
algebra \( \sp(2,\R) \).

A \textbf{complex spinor manifold} is defined to be a matrix 
Lie manifold for the 4-D complex irreducible representation of the real Lie
algebra \( \sp(2,\R) \).
\end{definition}

Note that every real spinor manifold is also a complex spinor manifold since 
we can construct the required actions on complex spinors by simply taking 
tensor product with complex scalars. However the converse appears not to be 
true. There is no obvious way to build  a local and global action on real 
spinors from the actions on complex spinors. 

Let us think about how we might try to do this to understand the nature of the 
problem. The obvious approach would be to seek a separation of complex spinors 
into real and imaginary parts in a manner respected by both the local and 
global action. The actions on the real part would then meet the requirements 
for a real spinor manifold. 

This is equivalent to finding a basis in which all  \( T^\alpha_{k\beta} \) 
and \( \Gamma^\beta_{k\alpha} \) are real. While we can find a basis that 
makes all \( T^\beta_{k\alpha} \) real, there is no obvious way to ensure 
that all \( \Gamma^\beta_{k\alpha} \) are real as well.

We can also restate the problem in terms of the search for a conjugation map; 
a conjugate linear involution \( v \mapsto \overline{v} \) on complex spinors, 
which commutes with both local and global actions. The complex spinors in a 
real spinor manifold constructed by tensor product with $\C$ have such an
involution. But there is no obvious reason why complex spinors on a complex 
spinor manifold should have such a map. 

Conjugating a complex vector is not a trivial matter. Simply conjugating 
coordinates does not give a well defined map since this depends on the 
choice of basis. In order to have a well defined conjugation (or conjugate
transpose) additional structure must be added to the space. Often this is 
done via some sort of inner product. However in our case the best approach 
seems to be to simply to require that our complex spinor manifold come 
equipped with a natural invariant conjugation map. As discussed above we 
can then choose bases so that \( T^\alpha_{k\beta} \) and 
\( \Gamma^\beta_{k\alpha} \) are real. 

A complex spinor manifold equipped with such a conjugation is a real spinor 
manifold since we can use the conjugation to identify real components 
preserved by the local and global actions. Since conjugating complex spinors 
seems to be something we will need to be able to do, we will therefore choose 
a real spinor manifold to build our model. Complex spinor manifolds remain 
an obvious generalisation which is definitely worthy of investigation at a 
later date.

\medskip

\begin{definition}
We define a \textbf{framework} to be a real spinor manifold. Real spinor 
manifolds are the structure we will be looking at throughout the rest of the
book, and it is useful to have a shorted name to describe them. 
\end{definition}

\begin{axiom}
\label{axiom:matrix_lie}
\large
Our Universe is a framework.
\end{axiom}

An axiom is simply an assumption. We will be exploring the consequences of 
this assumption throughout the rest of this book. We next discuss whether 
this assumption is physically reasonable.

\bigskip
\pagebreak[2]

\section{Reality Check}
\label{Reality Check}

In ordinary general relativity the fundamental structure is the metric 
\( g_{ij} \).  This is assumed to be invariant.

A metric determines the nature of the manifold by distinguishing between 
two types of coordinate; positive and negative definite; space-like and 
time-like. Global invariance of the metric ensures that this distinction 
is conserved under parallel transport.

\bigskip

On a Lie manifold for \( \so(2,3) \) the fundamental structure is the Lie
structure or torsion \( T^k_{ij} \). This is assumed to be invariant.

A natural metric, the Killing form, is then given by  
\( g_{ij} = \frac{1}{6}T^x_{iy}T^y_{jx} \) (the constant of \( \frac{1}{6} \)
is included for consistency with later results).
Invariance of the Lie structure implies invariance of this metric.  
Hence the assumption that the Lie structure is invariant is a generalisation 
of the idea of metric invariance. 

The Lie structure identifies the physical nature of the ten coordinates 
on the manifold enabling us to distinguish between a time coordinate and a 
rotation  coordinate for example, which a metric alone cannot do. Invariance 
of the Lie structure implies that these identifications will be conserved under 
parallel transport, which seems physically reasonable. 

\bigskip

In a framework (real spinor manifold) the fundamental structure is the 
local action \(T^\beta_{i\alpha} \) which is assumed to be invariant. 

From the equation 
\( T^\beta_{i\lambda}T^\lambda_{j\beta} = T^k_{ij}T^\beta_{k\alpha} \) 
it is clear that global invariance of \( T^\beta_{i\alpha} \) implies 
global invariance of \( T^k_{ij} \) and hence of \( g_{ij} \) 
so this is a further extension of the idea of metric invariance. This
follows since
\begin{equation}
\label{invariant forms}
g_{ij} = \frac{1}{6}T^x_{iy}T^y_{jx} = T^\alpha_{i\beta}T^\beta_{j\alpha} 
\end{equation}

The requirement that \( T^\beta_{i\alpha} \) is globally invariant means
that the action of the Lie algebra on spinors is conserved under 
parallel transport. If \( \phi = X(\psi) \) for a tangent vector 
\( X \) and spinors \( \psi \) and \( \phi \) defined at a point; and
if these are parallel transported to \( X' \), \( \psi' \) and \( \phi' \) 
respectively at an adjacent point, then  we must have \( \phi' = X'(\psi') \).

Invariance of \( T^\beta_{i\alpha} \) also is equivalent to the requirement
that the local action \( T_k\oo \) and the global action \( \nabla_k \) of 
our symmetry group commute. It is not unreasonable for us to ask that the 
global action respect the local action.

\bigskip

\begin{question}\label{canweextend} 
Are axiom~\ref{axiom:local_lie_so23} and axiom~\ref{axiom:matrix_lie} 
equivalent?  Can every Lie manifold for \( \so(2,3) \) be 
extended to a framework?
\end{question}

To answer this question we need to know whether the existence of the
spinor action places any constraints on the nature of the manifold and in 
particular its curvature.  

We will later prove many interesting and significant results about the
nature of curvature on a framework which depend critically on the 
existence of the spinor action. Since we don't see how we might otherwise 
prove these results our intuition is that the answer to 
question~\ref{canweextend} is most probably ``No''.

For example the spinor action will later be used to prove 
equation~\ref{reduceddecomposition}
\begin{equation*}
R^l_{ijk} = R^t_{ij} T^l_{tk}
\end{equation*}

which tells us that on a framework,  the curvature operator 
$R_{ij}\oo $ on vectors can be always be expressed as a linear combination of 
torsion operators $T_k\oo$. A Lie manifold for $\so(2,3)$ where this was not 
true could not be a framework. However the question remains open 
awaiting the construction of an explicit example.

	\chapter{Generalised Tensors}
\label{Generalised Tensors}

A Lie manifold for $\so(2,3)$ is a manifold with the invariant Lie structure
of \( \so(2,3) \). It is a generalisation of the manifold which is the Lie 
group $\SO(2,3)$. A framework additionally has a natural local action
on spinors. It generalises the matrix Lie group $\Sp(2,\R)$ and as such comes
equipped with a local and global action on spinors. Using dual and tensor 
product we were able to extend these actions to \( V \)-tensors.

In all of this local and global actions play a central role. All
$V$-tensors have both a local and global action, and the two actions
commute. These are very physical properties. We need a global action
to discuss extrinsic properties like Energy. We need a local action to 
discuss intrinsic properties like spin. And we need the two actions to 
commute if these are to be simultaneously determined.

This leads us to define a generalised tensor, a quantity on the manifold 
with a well defined local and global action where these commute. All 
\( V \)-tensors are generalised tensors, however as we will see the converse
is not true.. In particular if we analyse a \( V \)-tensor in terms of its 
indecomposable or irreducible components, we will discover that these need 
not be \( V \)-tensors. They will however be generalised tensors.

Generalised tensors represent the end of our search for a suitable 
mathematical structure to do physics with. Tensors and \( V \)-tensors 
were steps along the journey to generalised tensors.

There appears to be a very nice mathematical theory of generalised 
tensors on a Lie manifold, and in this chapter we will look at some of the 
basics of this theory. However much more remains to be done.

\bigskip
\pagebreak[2]

\section{Definition}

We make the following  definition.

\begin{definition}
\label{generalisedtensor}
A \textbf{Generalised Tensor Space} on a Lie manifold \( M \) is 
any vector space \( V \) with  
\begin{enumerate}
\item a smooth local action defined on \( V \)-fields. 
\item a global action defined on \( V \)-fields.
\item where the local and global actions commute
\end{enumerate}
\end{definition}

A \textbf{local action} is a representation of each tangent space (as a Lie
algebra) on \( V \). Hence it can be described using matrices \( T_k\oo \) 
with 
\begin{equation*}
T_i\oo T_j\oo - T_j\oo T_i\oo = T^k_{ij}T_k\oo 
\end{equation*}

A \textbf{global action} is specified by a covariant derivative on 
\( V \)-valued functions which weakly represents the Lie algebra. The
covariant derivative \( \nabla_i = \del_i + \Gamma_i\oo \) is described
in terms of a connection \( \Gamma_i\oo \) which describes parallel transport
of \( V \)-valued functions on \( M \). It weakly presents the Lie algebra
in the sense that 
\begin{equation*}
\nabla_i\nabla_j - \nabla_j\nabla_i = T^k_{ij}\nabla_k + R_{ij}\oo
\end{equation*}
where \( R_{ij}\oo \) are linear maps on \( V \) defined on \( M \).

Clearly all tensors belong to generalised tensor spaces, as do all 
$V$-tensors on a framework.  However this does not preclude other
generalised tensors from existing on the same manifold. Indeed we will 
see that generalised tensors which are not $V$-tensors always exist.

The question of which generalised tensors exist on a Lie manifold is 
an interesting one. The general answer to this question appears 
complicated and leads us away from our objectives, so ultimately we will
provide only a partial answer in the case of a framework.
In working towards that result however we will continue to argue 
generally for as long as possible.

\bigskip
\pagebreak[2]

\section{Basics}

\bigskip

\begin{proposition}
The space of ordinary vector fields on a Lie manifold is a generalised 
tensor space which we denote \( \T \).
\end{proposition}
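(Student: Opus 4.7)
The plan is to verify the three conditions of Definition~\ref{generalisedtensor} directly, using the structural results about Lie manifolds established in Section~\ref{LieManifolds}. Throughout I identify $V$ with the tangent space, so a $V$-field is just a vector field $v^i$.

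First I would exhibit the local action. The natural candidate is the adjoint action: set $T_k\oo(v^i) := T^i_{kj}\,v^j$, where $T^k_{ij}$ is the invariant torsion. To check that this really is a representation of each tangent space as a Lie algebra, I need
\begin{equation*}
T_i\oo T_j\oo - T_j\oo T_i\oo \;=\; T^k_{ij}\,T_k\oo
\end{equation*}
on vector fields. Unpacking both sides, this reduces pointwise to $T^l_{ix}T^x_{jm} - T^l_{jx}T^x_{im} = T^k_{ij}T^l_{km}$, which is precisely the Jacobi condition $T^t_{kx}T^x_{ij}\ijkequal 0$ established at equation~\ref{T_Jacobi}. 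Smoothness is immediate from the smoothness of the torsion tensor.

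Next I would exhibit the global action. On a Lie manifold we already have a distinguished covariant derivative $\nabla_i$ on vector fields, and by equation~\ref{torsion+curvature_defined} this satisfies
\begin{equation*}
[\nabla_i,\nabla_j] \;=\; T^k_{ij}\nabla_k + R_{ij}\oo ,
\end{equation*}
which is exactly the weak representation property required of a global action. So conditions (1) and (2) hold with no further work.

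The final condition is that the local and global actions commute. For any vector field $v^i$,
\begin{equation*}
\nabla_k\bigl(T_j\oo(v^i)\bigr) - T_j\oo\bigl(\nabla_k(v^i)\bigr)
= \nabla_k(T^i_{jl})\,v^l ,
\end{equation*}
by the Leibniz rule for $\nabla_k$ on the contraction $T^i_{jl}v^l$. The defining property of a Lie manifold is precisely $\nabla_t(T^k_{ij}) = 0$, so the right-hand side vanishes and the actions commute. The only place any real content enters is the Jacobi step in verifying the representation property; the rest is simply collecting Definition~\ref{local_lie_defined}, equation~\ref{torsion+curvature_defined}, and the invariance of the Lie structure. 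I expect no serious obstacle: every ingredient has already been proved earlier in the chapter, and the proof is essentially a matter of assembly.
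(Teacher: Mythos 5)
Your verification is correct and follows exactly the route the paper intends: the adjoint local action $v^j \mapsto T^i_{kj}v^j$ (whose representation property is equation~\ref{T_Jacobi}), the distinguished covariant derivative as global action via equation~\ref{torsion+curvature_defined}, and commutativity from the defining invariance $\nabla_t(T^k_{ij})=0$ of a Lie manifold. The paper leaves this proposition unproved as an immediate assembly of those same ingredients, so there is nothing to add.
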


\begin{proposition}
The direct sum \( U \oplus V \) of two generalised tensor spaces is 
a generalised tensor space with
\begin{itemize}
\item local action \( T_k(u,v) = \left(T_k(u),T_k(v)\right) \).
\item global action 
\( \nabla_k(u,v) = \left(\nabla_k(u),\nabla_k(v)\right) \).
\end{itemize}
\end{proposition}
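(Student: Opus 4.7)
The proposition is essentially a routine verification that the three conditions in Definition~\ref{generalisedtensor} are preserved under direct sums, so the plan is simply to check each clause in turn, exploiting the fact that the specified actions act componentwise.

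First I would observe that because the operators $T_k\oo$ and $\nabla_k$ on $U \oplus V$ are defined diagonally, every algebraic identity satisfied separately on $U$ and $V$ lifts automatically to $U \oplus V$. Concretely, writing $T^U_k\oo$, $T^V_k\oo$ for the local actions on the summands and applying the definition to $(u,v)$,
\begin{equation*}
(T_i\oo T_j\oo - T_j\oo T_i\oo)(u,v) = \bigl([T^U_i\oo,T^U_j\oo](u),\,[T^V_i\oo,T^V_j\oo](v)\bigr),
\end{equation*}
which by hypothesis equals $(T^k_{ij}T^U_k(u),\,T^k_{ij}T^V_k(v)) = T^k_{ij}T_k\oo(u,v)$. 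Smoothness is inherited from the summands. This verifies clause~(1).

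Next I would write the global action as $\nabla_k = \del_k + \Gamma_k\oo$, where $\Gamma_k\oo(u,v) = (\Gamma^U_k\oo(u),\Gamma^V_k\oo(v))$ is the block-diagonal connection assembled from the connections on $U$ and $V$. Computing the commutator componentwise and using the weak-representation identities on each summand yields
\begin{equation*}
[\nabla_i,\nabla_j](u,v) = T^k_{ij}\nabla_k(u,v) + \bigl(R^U_{ij}\oo(u),\,R^V_{ij}\oo(v)\bigr),
\end{equation*}
so that $R_{ij}\oo$ for the direct sum is simply the direct sum of the component curvatures. This gives clause~(2). Finally, commutation of local and global actions,
\begin{equation*}
[T_i\oo,\nabla_j](u,v) = \bigl([T^U_i\oo,\nabla^U_j](u),\,[T^V_i\oo,\nabla^V_j](v)\bigr) = 0,
\end{equation*}
follows in the same componentwise fashion from the corresponding property on each summand, establishing clause~(3).

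There is no real obstacle here; the only thing to be careful about is simply recording that the curvature $R_{ij}\oo$ guaranteed by clause~(2) exists on the direct sum (namely, the block-diagonal assembly above) rather than trying to prove some stronger identity. Once the componentwise structure of the two actions is made explicit, each of the three defining conditions is a one-line consequence of the corresponding condition on $U$ and on $V$.
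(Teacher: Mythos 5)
Your proposal is correct and matches the paper's treatment: the paper simply asserts the result "can be checked directly" and records the same observation you make, namely that the curvature on the direct sum is $R_{ij}(u,v) = \left(R_{ij}(u), R_{ij}(v)\right)$. Your componentwise verification of the three clauses is exactly the intended direct check.
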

This can be checked directly. We also note that
\( R_{ij}(u,v) = \left(R_{ij}(u), R_{ij}(v)\right) \).

\begin{proposition}
The tensor product \( U \tensor V \) of two generalised tensor spaces is 
a generalised tensor space with
\begin{itemize}
\item local action \( T_k(u\tensor v) = T_k(u)\tensor v + u \tensor T_k(v) \).
\item global action 
\( \nabla_k(u\tensor v) = \nabla_k(u)\tensor v + u \tensor \nabla_k(v) \).
\end{itemize}
\end{proposition}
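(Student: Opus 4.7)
The plan is to verify in turn each of the three requirements in definition~\ref{generalisedtensor}: that the prescribed local action is a well-defined representation of each tangent space, that the prescribed global action is a covariant derivative that weakly represents the Lie algebra, and that the two actions commute. All three verifications reduce to Leibniz-rule bookkeeping on simple tensors $u \tensor v$ followed by extension by bilinearity. First I would note that both formulas are manifestly bilinear in $(u,v)$ and therefore descend to well-defined linear operators on $U \tensor V$, and smoothness of the local action on $U \tensor V$ follows from smoothness of the local actions on the factors.

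For the local action, I would expand $T_i\oo\, T_j\oo(u \tensor v)$ using the Leibniz rule twice. The mixed terms $T_i(u)\tensor T_j(v)$ and $T_j(u)\tensor T_i(v)$ are symmetric in $(i,j)$ and cancel on forming the commutator, leaving
\begin{equation*}
[T_i\oo, T_j\oo](u \tensor v) = [T_i\oo,T_j\oo](u)\tensor v + u \tensor [T_i\oo,T_j\oo](v).
\end{equation*}
Applying the representation condition on each factor and collecting the $T^k_{ij}$ factor then yields $T^k_{ij} T_k\oo(u \tensor v)$, so the local action represents the Lie algebra.

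For the global action, an analogous expansion of $[\nabla_i,\nabla_j](u \tensor v)$ gives again two Leibniz terms, to each of which I apply the weak representation property on $U$ and on $V$ respectively. This produces
\begin{equation*}
[\nabla_i,\nabla_j](u \tensor v) = T^k_{ij}\,\nabla_k(u \tensor v) + \bigl(R^U_{ij}(u)\tensor v + u \tensor R^V_{ij}(v)\bigr),
\end{equation*}
so the remainder is a bilinear expression in $(u,v)$ and hence defines a linear operator $R^{U \tensor V}_{ij}\oo$ on the tensor product. Thus the global action also weakly represents the Lie algebra. For the commutation requirement, expanding both $T_k\oo\,\nabla_l(u \tensor v)$ and $\nabla_l\, T_k\oo(u \tensor v)$ by two applications of Leibniz produces four terms in each case; the two pure terms agree by commutation of the actions on $U$ and on $V$, while the two cross terms of the form $T_k(u)\tensor\nabla_l(v)$ and $\nabla_l(u)\tensor T_k(v)$ appear in both expansions, giving equality.

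The main obstacle, such as it is, is not computational but notational: one must keep clear the distinction between the input formulas (given only on simple tensors) and the induced operators on the full tensor product, and ensure that the proposed $R^{U \tensor V}_{ij}\oo$ really is linear rather than merely defined on decomposable elements. Since the defining expressions on each side of every identity are bilinear in $(u,v)$, this extension is automatic, so the argument closes without incident.
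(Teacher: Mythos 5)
Your proof is correct and is exactly the direct verification the paper has in mind (the paper simply asserts the result "can be checked directly" and records the formula $R_{ij}(u\tensor v) = R_{ij}(u)\tensor v + u \tensor R_{ij}(v)$, which your computation of the global-action commutator derives). Nothing further is needed.
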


This can also be checked directly. We observe that
\( R_{ij}(u\tensor v) = R_{ij}(u)\tensor v + u \tensor R_{ij}(v) \).

\begin{proposition}
The dual \( V^* \) of a generalised tensor space is 
a generalised tensor space with
\begin{itemize}
\item local action 
\( T_k(f)\,(v) = f\left(T_k(v)\right) - T_k\left(f(v)\right) \).
\item global action 
\( \nabla_k(f)\,(v) = f\left(\nabla_k(v)\right) - \nabla_k\left(f(v)\right) \).
\end{itemize}
\end{proposition}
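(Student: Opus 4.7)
The plan is to verify directly that the two displayed formulas supply $V^*$ with a local action, a global action, and that the two commute. The strategy in each case is the same: expand the defining formulas and reduce the claim to the known corresponding property on $V$, together with the essentially trivial analogous property on scalar fields, using the Leibniz-type identity for the pairing $\langle f,v\rangle = f(v)$.

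First I would check that $T_k(f)$ and $\nabla_k(f)$ are genuinely elements of $V^*$. For the local action this is immediate: the right-hand side is linear in $v$. For the global action I need the $\partial$-derivatives arising from expanding $\nabla_k(v)$ and from $\nabla_k(f(v)) = \partial_k(f(v))$ to cancel, which is precisely the Leibniz rule $\partial_k(f(v)) = (\partial_k f)(v) + f(\partial_k v)$; after cancellation $\nabla_k(f)$ has the expected form $\partial_k(f) + \Gamma_k\oo(f)$ for a connection $\Gamma_k\oo$ on $V^*$ determined by that on $V$. Next, to see the local action represents the Lie algebra, I would expand $T_i\oo T_j\oo(f)(v) - T_j\oo T_i\oo(f)(v)$ using the defining formula twice. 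The mixed terms of the form $T_i(f(T_j v))$ and $T_j(f(T_i v))$ cancel under antisymmetrisation, leaving the cleaner expression $\bigl[\,f\bigl([T_i,T_j]v\bigr) - [T_i,T_j]\bigl(f(v)\bigr)\bigr]$ (up to an overall sign set by convention), which by the representation property on $V$ and on scalars reduces to $T^k_{ij} T_k(f)(v)$.

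An entirely parallel computation for $\nabla_i,\nabla_j$ yields
\[
[\nabla_i,\nabla_j](f) = T^k_{ij}\nabla_k(f) + R_{ij}\oo(f),
\]
where $R_{ij}\oo$ on $V^*$ is read off from the curvatures on $V$ and on scalars by the same formula, namely $R_{ij}\oo(f)(v) = f(R_{ij}(v)) - R_{ij}(f(v))$. Finally, commutativity $[T_i\oo,\nabla_j](f) = 0$ follows from the same expansion trick: every term is of the form (local or global action on $f$) paired with (local or global action on $v$), and the hypothesised commutativity on $V$ together with the corresponding commutativity on scalars causes all residual terms to cancel.

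The main obstacle is purely bookkeeping: tracking sign conventions and ordering the four or so terms produced by each expansion so that the intended cancellations become visible. No substantive new idea is required; the proposition is really the statement that the dual pairing is compatible with both kinds of action, exactly as in the classical constructions of induced representations and connections on dual bundles.
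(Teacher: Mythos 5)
Your direct verification via the Leibniz identity for the dual pairing is exactly what the paper intends: it offers no proof beyond ``this can be checked directly,'' recording only the resulting curvature formula \( R_{ij}(f)(v) = R_{ij}(f(v)) - f(R_{ij}(v)) \), which your expansion reproduces. One caution on signs: if you take the statement's formulas literally, the torsion term in your commutator computation comes out as \( -T^k_{ij}\nabla_k(f) \) rather than \( +T^k_{ij}\nabla_k(f) \); the convention actually consistent with the paper's quoted curvature formula and with its earlier dual-field rule \( T_k(u_\alpha) = -T^\beta_{k\alpha}u_\beta \) is \( \nabla_k(f)(v) = \nabla_k(f(v)) - f(\nabla_k(v)) \) (and likewise for the local action), so fix that ordering before chasing the cancellations.
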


This can be checked directly, and in this case we have 
\( R_{ij}(f)\,(v) = R_{ij}\left(f(v)\right) - f\left(R_{ij}(v)\right) \).

Under these definitions the matrices \( T^\alpha_{k\beta} \) defining the 
local action for a generalised tensor space $V$ belong to the generalised 
tensor space \( \T^* \tensor V^* \tensor V \). Hence there is a well 
defined global action on these. The condition that local and global actions 
commute can be written
\begin{equation*}
\nabla_m\left(T^\alpha_{k\beta}\right) = 0 
\end{equation*}

\medskip

We next consider the question of isomorphisms and homomorphisms between
generalised tensor spaces. In the case of representations of a Lie algebra,
the appropriate notion is that of an intertwining map. In the case of a 
generalised tensor space we have two actions to consider, and both must be 
preserved in an isomorphism of generalised tensor spaces.

\begin{definition}
If \( U \) and \( V \) are generalised tensor spaces on the same 
Lie manifold \( M \) then an \textbf{intertwining map} between \( U \)
and \( V \) is a linear function \( \theta : U \longrightarrow V \) 
defined smoothly on \( M \) which maps \( U\)-fields to \( V\)-fields and
which satisfies
\begin{description}
\item[1.] \( T_k\left( \theta(u)\right) = \theta\left(T_k(u)\right) \) 
\item[2.] \( \nabla_k\left(\theta(u)\right) = \theta\left(\nabla_k(u)\right) \)
\end{description}
for all \( U\)-fields \( u \).
If only the first of these properties holds we call it a 
\textbf{local intertwining map}. 
\end{definition}

Two generalised tensor spaces are said to be \textbf{equivalent} if there
is an invertible intertwining map between them. They are said to be 
\textbf{locally equivalent} if there is an invertible local intertwining map
between them. 

\medskip

Let \( V \) be a generalised tensor space on \( M \), and let \( U \le V \)
be a vector subspace of \( V \). Then the local and global action on 
\( V \)-fields both act on \( U \)-fields by restriction. If the set of 
\( U \)-fields is conserved by the local action we call \( U \) 
a \textbf{locally invariant} subspace. If this set is conserved by both 
actions we say that \( U \) is \textbf{fully invariant} or just 
\textbf{invariant}. If \( U \) is an invariant subspace of \( V \) then it 
is a generalised tensor space under the restriction of the local and global 
actions on \( V \).  

\begin{proposition}
If \( \theta:U\longrightarrow V \) is an intertwining map then the kernel 
and image of \( \theta \), as obviously defined, are invariant in \( U \) 
and \( V \) respectively. If \( \theta \) is a local intertwining map then
the kernel and image are locally invariant.
\end{proposition}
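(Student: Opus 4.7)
The plan is to verify the four claimed invariance properties by a direct check against the two defining conditions of an intertwining map, treating the kernel and image separately and, for each, handling the local action first and the global action second. I take the \emph{kernel} to be the set $\ker\theta = \{u : \theta(u) = 0\}$ of $U$-fields annihilated pointwise by $\theta$, and the \emph{image} to be the set of $V$-fields of the form $\theta(u)$ for some $U$-field $u$; these are the ``obvious'' definitions the statement refers to.

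For local invariance of $\ker\theta$, if $u$ is a section with $\theta(u)=0$, then property~1 of the intertwining definition gives $\theta(T_k(u)) = T_k(\theta(u)) = T_k(0) = 0$, using linearity of the local action $T_k\oo$; hence $T_k(u)\in\ker\theta$. For local invariance of the image, if $v = \theta(u)$ is in the image, then $T_k(v) = T_k(\theta(u)) = \theta(T_k(u))$, which is again in the image by construction.

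The global-invariance arguments are formally identical: substitute $\nabla_k$ for $T_k$ throughout and invoke property~2 in place of property~1. Linearity of $\nabla_k = \del_k + \Gamma_k\oo$ is what underwrites $\nabla_k(0)=0$ in the kernel argument. Combining the two halves, both kernel and image are closed under both actions, so each is a fully invariant subspace in the sense defined immediately before the proposition, and in particular inherits a generalised-tensor-space structure from $U$ and $V$ respectively. The final clause follows because if only property~1 is assumed, only the $T_k$-arguments still run, yielding local invariance of kernel and image.

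The main obstacle here is essentially notational rather than mathematical: one must be confident that ``kernel'' and ``image'' are meaningful as smoothly-varying subspaces on all of $M$, not merely as pointwise sets that could jump in dimension. This is guaranteed by the hypothesis that $\theta$ is linear and defined smoothly on $M$, so the standard pointwise construction yields a smooth sub-bundle-like object on which the restricted local and global actions are automatically well defined. Beyond this observation the argument is a routine, four-line verification.
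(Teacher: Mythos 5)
Your argument is correct and is exactly the direct verification the paper intends; the proposition is stated there without an explicit proof, in the company of neighbouring propositions that are said to be ``checked directly,'' and your four-line check (kernel and image, each under $T_k\oo$ and $\nabla_k$, with the local-only clause following because only property~1 is invoked in the $T_k$ half) is that check. One small caveat on your closing remark: smoothness of $\theta$ does not by itself prevent the pointwise kernel from jumping in dimension (a smooth map can degenerate on a subset of $M$), but this does not affect the proposition, which asserts only closure of the kernel and image under the two actions, and that holds sectionwise exactly as you argue.
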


If \( V \) has a non-trivial invariant subspace we call it 
\textbf{reducible}, otherwise it is called \textbf{irreducible}. 
If a non-trivial locally invariant subspace exists we call it 
\textbf{locally reducible} otherwise it is called \textbf{locally irreducible}.

If \( V \) is equivalent to a non-trivial direct sum of invariant subspaces 
we call it \textbf{decomposable}, otherwise it is called 
\textbf{indecomposable}.  If \( V \) is locally equivalent to a direct 
sum of locally invariant subspaces we call it 
\textbf{locally decomposable} otherwise it is called 
\textbf{locally indecomposable}. If the Lie algebra is semisimple then all
locally indecomposable generalised tensor spaces are locally irreducible.
We will assume that the Lie algebra is semisimple in the remainder of 
this discussion.

\medskip

We are interested in understanding the nature of irreducibles and 
indecomposables. The obvious approach is to decompose in the first 
instance into a direct sum of local irreducibles.  Since local and 
global actions commute we might hope that this decomposition is also 
respected by the global action. 

This is \emph{almost} but not quite true. The problem case involves 
generalised tensor spaces where the local action is trivial. Such spaces 
are easy to construct as we will see in the next section. If the local 
action is trivial (the zero map) then the commuting of the two actions 
does not constrain the global action which is then free to misbehave.


\bigskip
\pagebreak[2]

\section{Locally Trivial Spaces}
\label{locallytrivialspaces}

A generalised tensor space is said to be \textbf{locally trivial} if 
\( T_k\oo = 0 \) at every point so that the local action is trivial. 
These are easy to construct. Indeed choose any generalised tensor space 
with local and global action given by \( T_k\oo \) and \( \nabla_m \) and 
simply replace the local action with the trivial one. It is easy to 
verify that the result is a generalised tensor space. 

The global action still exists and represents (with curvature) the 
Lie algebra. The new (trivial) local action also represents (trivially) 
the Lie algebra. And the global and local actions commute since the zero 
map commutes with everything. Using this trick we can construct a large 
number of different locally trivial generalised tensor spaces. However we
should note that not all locally trivial generalised tensors are 
constructed in this way.  In particular the Crump scalars introduced in 
chapter~\ref{bullets and beyond} are not of this type. 

In traditional terms a locally trivial generalised tensor space would be 
interpreted as a scalar field or collection of scalar fields with non-trivial
parallel transport. The obvious question is whether these play any role in 
the physics. Crump scalars are an example of locally trivial generalised
tensors which are very physical. Indeed we will find that these are essential
for a proper description of the electromagnetic field.


\medskip
We finish this section with a useful example. Let \( U \) be a locally trivial 
irreducible generalised tensor space of dimension \( d \) and let \( V \) be a 
generalised tensor space with irreducible local action. Then the tensor product 
\( W = U \tensor V \) is a generalised tensor space. Under the local action 
we see that it consists of a direct sum of \( d \) irreducible components each 
locally equivalent to \( V \).

Consider a local intertwining map from $V$ to $W$. Then by Schur's lemma this
map gives a scalar map into each of the $d$ components. So the space $E$ of 
all local intertwining maps from $V$ to $W$ is a $d$-dimensional vector space.
There is a natural local and global action on functions between generalised 
tensors and because these maps are local intertwining maps, the natural local 
action on \( E \) is trivial. In fact it isn't hard to show that $E$
and $U$ are isomorphic, and that \( W \) is isomorphic to $E\tensor V$.

\bigskip
\pagebreak[2]

\section{Parallel Transport}

\newcommand{\transport}{\mbox{\small \textsc{Port}}}

The connection defines a notion of parallel transport 
of generalised tensors. In this section we consider parallel transport in 
more detail. We will begin with the notion of parallel transport along a 
path. 

Consider a smooth path \( \pi:[0,1]\longrightarrow M \) from point \( p \) 
to point \( q \) so that \( \pi(0) = p \) and \( \pi(1) = q \). Denote 
the tangent vectors to this path by $ \pi^k $.  Note that $\pi^k$ is only
defined at points on the path, however we could extend it to a vector field
if we wished. Alternatively we could start with a vector field \( \pi^k \) 
and choose \( \pi \) to be a trajectory. Let $\hat{X}$ be a generalised tensor. 

Then we will say that 
$\hat{X}$ \textbf{parallel transports
along \( \pi \)} if 
\begin{equation}
\label{standardtransport}
\pi^k\nabla_k(\hat{X}) = 0 
\end{equation}
at each point on the path.
This definition is the standard one from general relativity. 

\begin{proposition}
\label{translatesonpathproperties}
We list some properties of parallel transport along a path.
\begin{description}
\item[1.] Any globally invariant generalised tensor parallel transports along
all paths. In particular the generalised tensors describing the local action 
$T_k\oo$ parallel transport along all paths.
\item[2.] If \( \hat{X} \) and \( \hat{Y} \) are generalised tensors of
the same type which parallel transport along \( \pi \), then any linear 
combination $a\hat{X}+b\hat{Y}$ also parallel transports along $\pi$.
\item[3.] If \( \hat{X} \) and \( \hat{Y} \) parallel transport along \( \pi \)
then so does $\hat{X}\tensor\hat{Y}$.
\item[4.] If \( \hat{X} \) parallel transports along \( \pi \) then any 
contraction or trace of \( \hat{X} \) also parallel transports along \( \pi \).
\end{description}
\end{proposition}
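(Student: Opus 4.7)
The plan is to observe that each of the four properties reduces to a standard property of the covariant derivative $\nabla_k$ acting on generalised tensors, combined (in the case of $T_k\oo$) with the framework axiom. Since the parallel transport condition $\pi^k\nabla_k(\hat{X})=0$ is linear in $\nabla_k(\hat{X})$, the proof amounts to tracking how $\nabla_k$ interacts with the four algebraic operations listed.

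For item (1), global invariance means $\nabla_k(\hat{X}) = 0$ for every $k$, so contracting with $\pi^k$ gives zero automatically along any path. The special case of $T^\alpha_{i\beta}$ is precisely the defining property of a framework (part (d) of the Matrix Lie Manifold definition), namely $\nabla_m(T^\alpha_{i\beta}) = 0$, so the local action tensor is globally invariant and thus parallel transports along every path. For item (2), linearity of $\nabla_k$ gives
\begin{equation*}
\pi^k\nabla_k(a\hat{X}+b\hat{Y}) = a\,\pi^k\nabla_k(\hat{X}) + b\,\pi^k\nabla_k(\hat{Y}) = 0
\end{equation*}
whenever $a,b$ are constant scalars (and more generally one sees that the parallel-transporting generalised tensors of a fixed type form a vector space at each point of the path).

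For item (3), the Leibniz condition for the tensor derivation $\nabla_k$ on $V$-tensors (and more generally on generalised tensors, by Definition of the local and global action on $U\tensor V$) gives
\begin{equation*}
\pi^k\nabla_k(\hat{X}\tensor\hat{Y}) = \bigl(\pi^k\nabla_k(\hat{X})\bigr)\tensor\hat{Y} + \hat{X}\tensor\bigl(\pi^k\nabla_k(\hat{Y})\bigr),
\end{equation*}
and both terms vanish by hypothesis. For item (4), since $\nabla_k$ is a tensor derivation it commutes with contraction; writing $\mathsf{c}$ for any contraction operation one obtains $\pi^k\nabla_k(\mathsf{c}\hat{X}) = \mathsf{c}(\pi^k\nabla_k(\hat{X})) = \mathsf{c}(0) = 0$.

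There is no real obstacle here; the proposition is essentially a catalogue of corollaries of the axioms for a covariant derivative on generalised tensors. The only step with any mathematical content is (1), because it is the place where the framework axiom $\nabla_m(T^\alpha_{i\beta}) = 0$ is invoked to conclude that the local action itself is a globally invariant generalised tensor. The remaining three items are formal consequences of linearity, the Leibniz condition, and compatibility with contraction, and each can be dispatched in a single line.
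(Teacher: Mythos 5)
Your proof is correct and matches the paper's intent: the paper simply states ``Proofs are elementary,'' and the elementary arguments it has in mind are exactly the ones you give (global invariance for item 1, linearity, the Leibniz condition, and commutation with contraction for items 2--4). Your parenthetical caveat that the coefficients in item 2 should be constant (at least along the path) is a reasonable and correct refinement that the paper leaves implicit.
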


Proofs are elementary. 

\medskip

Considering $\hat{X}$ evaluated at $\pi(s)$ as a function of $s$ we have

\begin{equation}
\label{DEtosolve} 
\frac{d}{ds}\hat{X} = -\pi^k\Gamma_k\oo(\hat{X})
\end{equation}

Specifying the value of \( \hat{X} \) at $p$ gives us a DE initial value 
problem which has unique solution. If the initial value at $p$ is obtained 
from another generalised tensor $X$, then the value of $\hat{X}$ at $q$ is 
the value of $X$ at $p$ parallel transported along the path $\pi$ to $q$.

Parallel transport along a path forces us to work at specific points.
In some cases it is easier to work with the related notion of parallel 
transport of a generalised tensor along a vector field which 
can be defined in a region or even on the entire manifold. If $X$ is 
a generalised tensor and $\pi^k$ is a 
suitable\footnote{the carpet under which we will sweep any technical issues} 
vector field, then \( \transport(X,\pi^k,t) \) denotes the generalised tensor 
obtained by transporting $X$ along the vector field $\pi^k$ with transport 
parameter $h$. If the path \( \pi \) is a trajectory of \( \pi^k \) then the 
value of \(  \transport(X,\pi^k,t) \) at \( \pi(t) \) is defined to be the 
value of \( X \) at \( \pi(0) \) parallel transported along the path 
\( \pi \) from \( \pi(0) \) to \( \pi(t) \). 

We state without proof a number of important properties of this parallel 
transport operator

\begin{proposition}
\label{propertiesparallel}
The parallel transport operator on generalised tensors has the 
following properties 
\begin{enumerate}
\item \( \displaystyle
	\transport(X,\pi^k,0) = X \)
\item \( \displaystyle
	\transport(\transport(X,\pi^k,t),\pi^k,s) =
	\transport(X,\pi^k,t+s) \)
\item \( \displaystyle
	\transport(X,\pi^k,-h) = 
	\transport(X,-\pi^k,h) \)
\item \( \displaystyle
	\frac{d}{dt}\left(\transport(X, \pi^k,t)\right) = 
	- \pi^k\nabla_k\left(\transport(X, \pi^k,t)\right) \)
\item \label{portlinktonabla}
	\( \displaystyle
	\left.\frac{d}{dt}\left(\transport(X, \pi^k,t)\right)\right|_{t=0} = 
	- \pi^k\nabla_k(X) \).
\item If \( X \) is globally invariant then 
	\( \transport(X,\pi^k,t) = X \) for all \( \pi^k \) and \( t \).
        In particular the generalised tensors describing the local action 
	$T_k\oo$ have this property.
\item \( \displaystyle \transport(aX+bY,\pi^k,t) =
	a\,\transport(X,\pi^k,t) + b\,\transport(Y,\pi^k,t) \)
\item \( \displaystyle \transport(X\tensor Y,\pi^k,t) =
	\transport(X,\pi^k,t)\tensor\transport(Y,\pi^k,t) \)
\item Transport preserves contraction or trace. If \/ \( Y \) is a 
	contraction of \( X \) then \(  \transport(Y,\pi^k,t) \)  
	is a contraction of \( \transport(X,\pi^k,t) \).
\item \label{transportlocalintertwining}
	\(\displaystyle T_k\oo\transport(X,\pi^k,t) = 
	\transport(T_k\oo X,\pi^k,t) \) and hence transport is a local intertwining map.
\end{enumerate}
\end{proposition}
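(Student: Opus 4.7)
The plan is to reduce all ten items to the uniqueness theorem for the defining ODE (\ref{DEtosolve}) together with the already-established properties of parallel transport along a path collected in Proposition~\ref{translatesonpathproperties}.

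First fix a point $q$. By definition, $\transport(X,\pi^k,t)|_q$ is the value at $s = t$ of the unique solution to
\[
\frac{d\hat{X}}{ds} = -\pi^k\Gamma_k\oo(\hat{X}), \qquad \hat{X}(0) = X|_{\pi(0)},
\]
taken along the trajectory $\pi$ of $\pi^k$ satisfying $\pi(t) = q$. Rewritten as $\pi^k\nabla_k(\hat{X}) = 0$, this is precisely (\ref{standardtransport}). With this identification, items (1), (4) and (5) are respectively the initial condition, the defining equation restated in transport form, and the value of (4) at $t = 0$.

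Items (2) and (3) follow from uniqueness. For (2), both sides, viewed as functions of $s$ along the trajectory of $\pi^k$ through $\pi(t)$, satisfy the same linear ODE with the same initial value at $s = 0$. For (3), reversing the sign of $\pi^k$ traces the same trajectory in the opposite direction, and the defining equation is invariant under the simultaneous change $\pi^k \mapsto -\pi^k$, $ds \mapsto -ds$, so the two transports produce the same result. Items (6)--(10) are handled uniformly by verifying that the claimed right-hand side satisfies the defining initial value problem and again invoking uniqueness. For (6), global invariance gives $\nabla_k(X) = 0$, so $X$ itself trivially solves the ODE. For (7), (8), (9), linearity, the Leibniz rule, and compatibility with contraction for $\nabla_k$ all transfer immediately through $\pi^k\Gamma_k\oo$; the corresponding items of Proposition~\ref{translatesonpathproperties} supply the pointwise analogues. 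Item (10) is the most substantive and uses the commuting axiom $\nabla_m(T^\alpha_{k\beta}) = 0$: by the Leibniz rule this gives $\pi^j\nabla_j(T_k\oo \transport(X,\pi^k,t)) = T_k\oo(\pi^j\nabla_j\transport(X,\pi^k,t)) = 0$, so $T_k\oo\transport(X,\pi^k,t)$ satisfies the same initial value problem as $\transport(T_k\oo X,\pi^k,t)$.

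The main obstacle is analytic rather than algebraic: one must ensure that trajectories of $\pi^k$ exist long enough for the transport to be defined on the asserted intervals, and that smooth dependence on the base point is preserved so that $\transport(X,\pi^k,t)$ is genuinely a generalised tensor field rather than just a pointwise assignment. These are precisely the technical issues the authors acknowledge sweeping aside in the footnote, and they are the reason the proposition is stated without proof in the text.
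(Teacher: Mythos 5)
The paper states this proposition explicitly without proof (``We state without proof a number of important properties\ldots''), so there is no argument of record to compare against; your strategy of reducing everything to the path-transport ODE~\ref{DEtosolve} and its uniqueness is the natural one, and items (1)--(3) and (6)--(9) go through exactly as you describe, with Proposition~\ref{translatesonpathproperties} supplying the pointwise Leibniz and contraction facts. Your closing remark correctly locates the only real difficulty in the analytic footnote about existence of trajectories and smooth dependence.

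There is, however, one step that is false as written. In your argument for item (10) you assert \( \pi^j\nabla_j\bigl(\transport(X,\pi^k,t)\bigr) = 0 \). That identity holds for the path-section \( s \mapsto \transport(X,\pi^k,s)\big|_{\pi(s)} \), which is what equation~\ref{standardtransport} governs, but it fails for the field \( \transport(X,\pi^k,t) \) at fixed \( t \): that field is not covariantly constant along the trajectories unless \( X \) itself is (its correct evolution is item (4), \( \frac{d}{dt} = -\pi^k\nabla_k \), not \( \pi^k\nabla_k = 0 \)). The repair is short and you have both ingredients already: either work with the path-section \( \hat{X} \), where \( \pi^k\nabla_k\hat{X} = 0 \) genuinely holds, apply \( T_j\oo \), and use \( \nabla_k(T^\alpha_{i\beta}) = 0 \) together with compatibility with contraction to conclude \( \pi^k\nabla_k(T_j\oo\hat{X}) = T_j\oo(\pi^k\nabla_k\hat{X}) = 0 \), so that \( T_j\oo\hat{X} \) is itself parallel along the path with initial value \( T_j\oo X \) and uniqueness finishes the claim; or, equivalently, differentiate \( T_j\oo\transport(X,\pi^k,t) \) in \( t \) using item (4) and the same invariance of \( T \) to show it satisfies the evolution equation and initial condition defining \( \transport(T_j\oo X,\pi^k,t) \). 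A similar caution applies to your gloss on item (4): it is not literally a restatement of~\ref{DEtosolve} (which involves only \( \Gamma_k\oo \) along the path), but follows from it by converting the \( t \)-derivative at a fixed point into the difference of the \( s \)-derivative along the trajectory and the covariant derivative; this one-line computation is worth recording since item (5), and hence Proposition~\ref{globalrespectshomogeneous}, rests on it.
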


\bigskip
\pagebreak[2]

\section{Indecomposable Generalised Tensors}
\label{IndecomposableGeneralisedTensors}

From property~\ref{transportlocalintertwining} of 
proposition~\ref{propertiesparallel} parallel transport 
preserves the decomposition of a generalised tensor into 
locally homogeneous components as this decomposition is preserved by 
intertwining maps. From property~\ref{portlinktonabla} we can then 
conclude that the global action also preserves the decomposition 
into locally homogeneous components so that this decomposition is also 
global. We have proved the following.

\begin{proposition}
\label{globalrespectshomogeneous}
The decomposition of a generalised tensor into locally homogeneous 
components is also respected by the global action.  Hence all 
indecomposable generalised tensors are locally homogeneous.
\end{proposition}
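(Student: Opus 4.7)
The plan is to leverage the two flagged properties of parallel transport from Proposition~\ref{propertiesparallel} — namely property~\ref{transportlocalintertwining} (transport is a local intertwining map) and property~\ref{portlinktonabla} (the global action is recovered by differentiating transport at \( t=0 \)). These together force the global action to preserve whatever the local action preserves, modulo one subtlety that must be handled.

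First I would fix notation. Pick a point \( p \in M \) and, since the Lie algebra is semisimple, decompose the fibre \( V_p \) under the local action as \( V_p = \bigoplus_\tau V_p^{(\tau)} \), where \( \tau \) runs over the locally irreducible types that occur and \( V_p^{(\tau)} \) is the isotypic (locally homogeneous) component. I would then define a subspace \( V^{(\tau)} \subseteq V \) globally by parallel transporting \( V_p^{(\tau)} \) along paths — but to make this well-defined I must check that the isotypic decomposition at any other point \( q \) obtained this way is intrinsic, i.e.\ independent of the chosen path. This is the step that does the real work.

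Independence follows from property~\ref{transportlocalintertwining}. Given a path \( \pi \) from \( p \) to \( q \), the map \( \transport(\cdot,\pi^k,1) : V_p \to V_q \) intertwines the local actions on the two fibres. By Schur's lemma applied componentwise, an invertible local intertwining map must carry each isotypic component \( V_p^{(\tau)} \) onto \( V_q^{(\tau)} \). Hence the subspace \( V^{(\tau)}_q \) defined by transporting \( V_p^{(\tau)} \) along any path landing at \( q \) coincides with the intrinsic isotypic component at \( q \), and in particular is path-independent. This shows each \( V^{(\tau)} \) is a well-defined generalised tensor subspace that is preserved by parallel transport along every path.

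Now I invoke property~\ref{portlinktonabla}: for any \( \hat X \) a section of \( V^{(\tau)} \) and any vector field \( \pi^k \), the derivative \( \pi^k\nabla_k(\hat X) \) is the \( t=0 \) derivative of \( \transport(\hat X,\pi^k,t) \), which stays inside \( V^{(\tau)} \) by the preceding paragraph. Hence \( \nabla_k(\hat X) \in V^{(\tau)} \), so each \( V^{(\tau)} \) is globally invariant as well. This proves the first sentence of the proposition. For the second sentence: if \( V \) is indecomposable, then the direct sum \( V = \bigoplus_\tau V^{(\tau)} \) into globally invariant pieces can contain only one nonzero summand, which is exactly the statement that \( V \) is locally homogeneous. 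The main obstacle is the well-definedness step — checking that the fibrewise isotypic decompositions assemble into smooth globally defined invariant subbundles — and property~\ref{transportlocalintertwining} is precisely the tool that handles it.
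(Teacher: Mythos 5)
Your proposal is correct and follows essentially the same route as the paper: parallel transport is a local intertwining map (property~\ref{transportlocalintertwining}), so it preserves the isotypic decomposition, and differentiating transport at \( t=0 \) (property~\ref{portlinktonabla}) then shows the global action preserves it too. Your additional care over path-independence via Schur's lemma is a welcome elaboration of a step the paper leaves implicit, but it is not a different argument.
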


To go beyond this we must consider the question of which irreducible
representations of the Lie algebra can be realised as the local action
for a generalised tensor.

\begin{definition}
Let \( M \) be a Lie manifold for the Lie algebra \( \g \). Let
\( V \) be an irreducible representation of \( \g \). Then we say that 
\( V \) is \textbf{local on $M$} if there is a generalised tensor on
$M$ where the local action is equivalent to the action of \( \g \) on 
\( V \).
\end{definition}

Note that a framework or framework or spinor manifold $M$ is precisely a 
Lie manifold for \( \so(2,3) \) where the four dimensional irreducible 
representation is local on $M$.  

\medskip

The question of which irreducible representations are local on a given
Lie manifold is an interesting one. As we have already noted requiring
that the spinor representation is local appears to introduce non-trivial 
constraints on the nature of the curvature on $M$. The fully general situation
therefore appears complicated and beyond the scope of this work. We 
therefore restrict our attention to the case that $M$ is a framework.

Proposition~\ref{globalrespectshomogeneous} is very useful for determining
whether an irreducible $V$ is local on $M$. If we can find a generalised 
tensor on $M$ which has a single copy of $V$ in its local decomposition into
irreducibles, then the decomposition into homogeneous components will give us
a generalised tensor with local action equivalent to $V$ and we can
conclude that $V$ is local on $M$. 

All irreducible representations of $\so(2,3)$ can be ultimately formed from 
the spinor representation by taking tensor products and picking out 
irreducible components. All the small dimensional representations appearing
in figure~\ref{weights} can be formed in such a way that they arise 
with multiplicity one in a homogeneous component and hence are local on 
$M$. We will look at the details of how to do this shortly. So far no example 
of an irreducible which is not local on a framework $M$ has been found,
which leads us to make the following conjecture.

\begin{conjecture}
\label{conjecturelocalirreds} 
All irreducibles are local on a framework $M$.
\end{conjecture}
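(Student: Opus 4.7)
The plan is to realise every irreducible $V$ of $\so(2,3)$ as a component of a tensor product of spinors and their duals, then invoke Proposition~\ref{globalrespectshomogeneous} to extract from it a generalised tensor whose local action is equivalent to $V$.

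First, since $M$ is a framework, the 4-dimensional spinor representation $S$ is local on $M$ by definition. The basic propositions of the preceding section show that duals, direct sums and tensor products of generalised tensor spaces remain generalised tensor spaces, so every tensor power $W = S^{\tensor p} \tensor (S^*)^{\tensor q}$ is a generalised tensor space on $M$. Invoking the representation theory of $\sp(2,\R)$, every finite-dimensional complex irreducible of $\so(2,3)$ arises as a subrepresentation of some such $W$, because $S$ is a fundamental representation of $\sp(2,\R)$ and all irreducibles appear in tensor powers of the fundamentals.

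Now apply Proposition~\ref{globalrespectshomogeneous}: the $V$-isotypic component $W_V$ is invariant under both the local and the global action, and is therefore itself a generalised tensor. If $V$ appears in $W$ with multiplicity one then $W_V$ is locally equivalent to $V$ and we are done. For higher multiplicity, apply the construction at the end of Section~\ref{locallytrivialspaces}: $W_V$ is locally isomorphic to $E \tensor V$, where $E$ is the locally trivial space of local intertwining maps from $V$ into $W_V$. The conjecture then reduces to exhibiting a one-dimensional globally invariant subspace of $E$, since tensoring such a line with $V$ produces a generalised tensor with local action equivalent to $V$.

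The main obstacle is the multiplicity issue. Decompositions of $S^{\tensor k}$ into $\sp(2,\R)$-irreducibles produce multiplicities greater than one for most weights, so the easy case does not cover everything, and the Young-type projectors that are canonical in the $\mathsf{GL}$ setting do not in general commute with the global action on a curved framework. Proving that the locally trivial space $E$ attached to each isotypic component always contains a globally invariant line is an assertion about the curvature-dependent behaviour of parallel transport on $W_V$, and appears to require a substantive classification of the locally trivial generalised tensors that exist on a framework; this is precisely the theme that motivated Crump scalars but was not fully developed in the text. That, I expect, is why the authors state the result as a conjecture rather than a theorem.
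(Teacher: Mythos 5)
Your proposal is not a proof, and neither does the paper contain one: the statement is explicitly left as a conjecture, supported only by the observation that no counterexample has been found. What you have written is essentially the paper's own partial argument, accurately reconstructed — the paper establishes locality only for the small-dimensional irreducibles of figure~\ref{weights}, and does so exactly as you describe, by exhibiting each one with multiplicity one inside a tensor product built from spinors and vectors and then invoking Proposition~\ref{globalrespectshomogeneous}. Your diagnosis of the obstruction is also the right one: for general irreducibles the isotypic component $W_V$ is only locally of the form $E\tensor V$ with $E$ a locally trivial space of local intertwining maps (Proposition~\ref{irredtensordecomp}), and nothing in the axioms of a framework guarantees that $E$ contains a globally invariant line, since locally trivial generalised tensors can carry non-trivial curvature (as Crump scalars demonstrate). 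So your proposal correctly identifies both the extent of what can be proved and the precise gap that keeps the general statement conjectural; it neither closes that gap nor makes a false step.
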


\medskip

Suppose that $ W $ is an indecomposable generalised tensor. Then $W$ is 
homogeneous. Suppose that the local irreducible appearing in the local 
decomposition of $W$ is local on $M$ and let $V$ be a generalised tensor 
with this local action.

Consider the set $F$ of all functions from $V$ to $W$. There is a well defined
local and global action on $F$ under which $F$ is a generalised tensor. If
\( f \in F \) then the local action is given by 
\begin{equation}
\left(T_k\oo(f)\right)\,(v) = T_k\oo\left( f(v)\right) - f(T_k\oo(v))
\end{equation}
while the global action is 
\begin{equation}
\left(\nabla_k(f)\right)\,(v) = \nabla_k\left( f(v)\right) - f(\nabla_k(v))
\end{equation}
Let $E$ be the set of all functions in $F$ which are local intertwining maps.
Then $E$ is precisely the homogeneous component belonging to the locally 
trivial action on $F$ and hence $E$ is invariant also under the global action 
giving a generalised tensor.

\begin{proposition}
\label{irredtensordecomp}
The generalised tensor $E$ is indecomposable and $W$ is isomorphic to 
the generalised tensor space $E\tensor V$. 
\end{proposition}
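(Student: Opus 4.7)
The plan is to exhibit an explicit natural isomorphism $W \cong E \tensor V$ via the evaluation map, verify it is an intertwining map on both actions, check that it is a fibrewise (hence generalised tensor) isomorphism using Schur's lemma, and then read off indecomposability of $E$ from that of $W$.

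First I would define the evaluation map $\text{ev}: E \tensor V \longrightarrow W$ by $\text{ev}(f \tensor v) = f(v)$. I would verify directly that it is an intertwining map. For the local action, the standard formulas for the local actions on $F$, on $V$, and on the tensor product give
\begin{equation*}
T_k\oo\bigl(\text{ev}(f\tensor v)\bigr) = T_k\oo(f(v)),
\end{equation*}
while
\begin{equation*}
\text{ev}\bigl(T_k\oo(f\tensor v)\bigr) = (T_k\oo f)(v) + f(T_k\oo v) = T_k\oo(f(v)) - f(T_k\oo v) + f(T_k\oo v),
\end{equation*}
and the two sides agree. The same Leibniz bookkeeping with $\nabla_k$ in place of $T_k\oo$ gives compatibility with the global action, so $\text{ev}$ is an intertwining map of generalised tensors.

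The main work is showing $\text{ev}$ is fibrewise bijective. Fix a point $p$ on $M$; by Proposition~\ref{globalrespectshomogeneous}, $W$ is locally homogeneous, so the local action on $W_p$ is a finite direct sum of copies of the irreducible local action of $V_p$. Schur's lemma (applied to the semisimple Lie algebra at $p$) identifies $E_p$ — the space of local intertwiners $V_p \to W_p$ — with the multiplicity space of $V_p$ in $W_p$, and the natural map $E_p \tensor V_p \to W_p$ sending $f\tensor v$ to $f(v)$ is the standard multiplicity-space isomorphism. Hence $\text{ev}$ is bijective at every point, and being smooth and intertwining with both actions, it is an isomorphism of generalised tensors.

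Finally, suppose $E = E_1 \oplus E_2$ as generalised tensor spaces. Tensoring with $V$ gives $W \cong E\tensor V \cong (E_1\tensor V) \oplus (E_2\tensor V)$ as generalised tensor spaces, and because $V$ is non-zero neither summand vanishes unless the corresponding $E_i$ does. Indecomposability of $W$ then forces one summand to be trivial, so $E$ is indecomposable as well. The main obstacle is the Schur-lemma step: one must be careful that the multiplicity argument applies pointwise in this smoothly varying setting, which is exactly where the assumption that the local action of $V$ is realised on $M$ (and that the Lie algebra is semisimple) pays off.
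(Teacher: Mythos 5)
Your proof is correct and follows exactly the route the paper intends: the paper states Proposition~\ref{irredtensordecomp} without proof, but its construction of $E$ as the space of local intertwining maps and the Schur's-lemma multiplicity argument sketched in the locally trivial section are precisely your evaluation-map argument, and your sign bookkeeping for the local and global actions on $F$ matches the paper's definitions. The only point worth flagging is that the Schur step needs local endomorphisms of $V_p$ to be genuinely scalar, which holds here because the paper has already noted that the irreducible representations of $\so(2,3)$ are of real type.
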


As a consequence of this proposition, if conjecture~\ref{conjecturelocalirreds} 
holds, then every indecomposable generalised  tensor on a framework 
arises as the tensor product of a  locally irreducible generalised tensor 
with an indecomposable locally trivial generalised tensor. This reduces
the problem of finding indecomposable generalised tensors on a framework
to two cases; the locally trivial case and the locally irreducible case. 


	\chapter{Specific Tensors}
\label{chapter specific tensors}

In the interests of  conservation of ink and provided there is no potential
for confusion, the word \textbf{tensor} by itself will be used to mean 
generalised tensor in what follows. Note that a tensor must have a specified 
local and global action.

Our main objection in this chapter is to show that the small dimensional 
representations of \( \so(2,3) \)  in figure~\ref{weights} are local on 
a framework $M$ by constructing tensors with these local actions.
We will also develop the algebra of tensors.

\bigskip
\pagebreak[2]

\section{Notation and General Discussion}

Consider a generalised tensor space \( V \). Given a basis of $V$ we can
describe elements of this space using index notation. We will
distinguish between different types of generalised tensor by our choice
of alphabet for the indices. We can however use repeated indices to denote
tensor product and the same alphabet with lowered indices to denote the 
dual. 

Some choices of alphabet are fixed at the outset. We will use upper indices
from the lowercase roman alphabet for vector fields, and upper indices from
the lowercase greek alphabet for spinor fields. 

For the purposes of this discussion let us use upper indices with the 
boldface lowercase roman alphabet to describe some arbitrary generalised space.
Hence a generalised tensor from this space would be written as  
\( v^\textbf{a} \) in terms of some chosen basis of $V$. Should we have 
need to consider a different basis of $V$ we would use primes to denote this. 
For example we would write a change of basis equation as
\begin{equation} 
v^{\textbf{a}'} = \delta^{\textbf{a}'}_{\textbf{a}}v^\textbf{a}
\end{equation}

where \( \delta^{\textbf{a}'}_{\textbf{a}} \) denotes the change of basis 
matrix. We will allow the change of basis matrix be a function of location 
in $M$. There is no canonical way to identify $V$ values at one point 
with those at another; the best we can do is to use parallel transport
which is path dependent. Hence a fully general treatment should allow 
basis changes of this type.

The local action is given by \( T^\textbf{b}_{k\textbf{a}} \) where
\begin{equation}
T_k\oo (v^\textbf{a}) = T^\textbf{a}_{k\textbf{b}} v^\textbf{b}
\end{equation} 

using the summation convention. The
connection for the global action is \( \Gamma^\textbf{b}_{m\textbf{a}} \)
and the global action is
\begin{equation} 
\nabla_k(v^\textbf{a}) =  \del_k(v^\textbf{a}) +
 \Gamma^\textbf{a}_{k\textbf{b}}v^\textbf{b} 
\end{equation}

Hence we have
\begin{equation}
T^\textbf{b}_{i\textbf{c}} T^\textbf{c}_{j\textbf{a}}
- T^\textbf{b}_{j\textbf{c}} T^\textbf{c}_{i\textbf{a}} =
T^k_{ij} T^\textbf{b}_{k\textbf{a}}
\end{equation}
and 
\begin{equation}
\bigl[\nabla_i,\nabla_j\bigr](x^\textbf{a}) =
T^k_{ij} \nabla_k(x^\textbf{a}) + R^\textbf{a}_{ij\textbf{b}}x^\textbf{b}
\end{equation}

where \( R^\textbf{b}_{ij\textbf{a}} \)  expresses the curvature in terms
of our bases.

\medskip

A change of coordinates is an intertwining map for both actions. Hence
\begin{flalign*}
&   T_k\oo    1^{\textbf{a}'}_\textbf{a}  = 0 \\
&   \nabla_m( 1^{\textbf{a}'}_\textbf{a}) = 0 
\end{flalign*}

These give equations describing the effect of a change of bases on
$T^\textbf{a}_{k\textbf{b}}$ and $\Gamma^\textbf{a}_{k\textbf{b}}$.
\begin{flalign}
T^{\textbf{a}'}_{k'\textbf{b}'}
&= 1^{\textbf{a}'}_{\textbf{a}} 1^{\textbf{b}}_{\textbf{b}'} 1^k_{k'}
T^{\textbf{a}}_{k\textbf{b}} \\
\Gamma^{\textbf{a}'}_{k'\textbf{b}'}
&= 1^{\textbf{a}'}_{\textbf{a}} 1^{\textbf{b}}_{\textbf{b}'} 1^k_{k'}
\Gamma^{\textbf{a}}_{k\textbf{b}} -
1^k_{k'}\del_k\left( 1^{\textbf{a}'}_{\textbf{t}} \right)
1^{\textbf{t}}_{\textbf{b}'}
\end{flalign}

We see that \( T^{\textbf{a}}_{k\textbf{b}} \) changes basis as expected 
for a generalised tensor in the space \( V\tensor \T^*\tensor V^* \). 
Indeed it is such a generalised tensor. However the connection  
\( \Gamma^{\textbf{a}}_{k\textbf{b}} \) has different behavior 
under a change of basis. Although we write it using index notation which 
would suggest that it belongs to the space \( V\tensor \T^*\tensor V^* \),
it is not a generalised tensor, and we therefore need to be careful with it 
particularly when changing basis.

\medskip
This completes our general discussion. We end this section by listing the 
tensors we have met so far.

\bigskip

\textbf{Scalars} are tensors with trivial local action and where
the global action is simply partial differentiation (the scalar connection 
is zero). Scalars are usually denoted without index. However we will
find it sometimes expedient to use a dummy index for scalars, and in such
situations a \( \circ \) may be used; for example \( f^\circ = f = f_\circ \).
Hence \( T^\circ_{k\circ} = 0 \) and \( \Gamma^\circ_{m\circ} = 0 \).
The existence of scalars shows that the \( (q_0,s_0) = (0,0) \)
representation is local on $M$.

\textbf{Spinors} are defined on a framework; they are mentioned in the
definition of a spinor manifold.  We denote them \( x^\alpha \) with a 
greek lowercase index. The local action is given by 
\( T^\beta_{k\alpha} \) and global action \( \nabla_k \) (with connection 
\( \Gamma^\beta_{k\alpha} \)).  Their existence 
shows that the \( (q_0,s_0) = (\nfrac{1}{2},\nfrac{1}{2}) \) representation
is local on $M$.

\textbf{Vectors} are vectors on the manifold in the usual sense of the word.
This means they are ten dimensional. We may from time to time need to use 
the word vector to refer to other things, but unless some other meaning is 
clear from the context, the word vector by itself will refer to a ten 
dimensional vector. 
We denote vectors \( v^i \) with a roman lower case index.  Vectors are
tensors with local action \( T^k_{ij} \) and global action \( \nabla_k \) 
(with connection \( \Gamma^k_{ij} \)).  Their existence is specified in the 
definition of a spinor manifold and shows that the \( (q_0,s_0) = (1,1) \) 
representation is local on $M$.

\bigskip
\pagebreak[2]

\section{Spinor Transformations}
\label{SpinorTransformations}

A real spinor transformation is a \( V \)-tensor of rank \( \rank{0,1}{0,1} \).
A typical tensor of this type is denoted \( X^\alpha_\beta \).  We begin by 
considering a smooth decomposition  into irreducibles under the local action. 
This will mirror the decomposition of \( 4 \times 4 \) matrices into 
irreducible representations carried out in chapter 1.
As there is no degeneracy in this decomposition (no two irreducibles are 
equivalent) it follows that each irreducible component is conserved by 
both the local and global actions and hence constitutes a type of 
generalised tensor.

The ten dimensional irreducible component is spanned by the set of 
\( V \)-tensors \( T^\alpha_{k\beta} \) for various \( k \).
If we choose \( \{ T^\alpha_{k\beta} \} \) as our basis, then the maps
\begin{flalign}
X^\alpha_\beta &\mapsto g^{ki}T^\beta_{i\alpha} = x^k \\
x^k &\mapsto T^\alpha_{k\beta}x^k = X^\alpha_\beta
\end{flalign}

represent the projection and injection maps to and from this component. We
should define local and global actions on the vector component so that these
maps are intertwining. However since  both \( T^\alpha_{k\beta} \) and 
\(g^{ij} \) are totally invariant, these maps are intertwining with respect to
the usual local and global action on vectors. We conclude that this component
gives precisely the space of vectors and not some other type of tensor. 
\medskip
The composition of projection and injection gives the totally invariant 
idempotent projection onto the vector component
\begin{equation}
X^\alpha_\beta \mapsto
	 g^{ij}T^\alpha_{i\beta}T^{\lambda}_{j\mu}X^{\mu}_{\lambda}
\end{equation}

\bigskip

The transformation \( \nfrac{1}{2} 1^\alpha_\beta \) is locally invariant
and spans a 1 dimensional irreducible component of the space of spinor 
transformations. Elements of this representation need no basis and will be 
denoted as scalars. In fact as we will see they are precisely scalars. 
To show this we need to check the nature of the local and global actions.

Projection maps to and from this component are given by 
\begin{flalign}
X^\alpha_\beta &\mapsto \nfrac{1}{2}1^\beta_\alpha . X^\alpha_\beta = 
\nfrac{1}{2} . X^\alpha_\alpha = x \\
x &\mapsto \nfrac{1}{2} 1^\alpha_\beta . x = X^\alpha_\beta
\end{flalign}

The factor of \( \nfrac{1}{2} \) was included to ensure idempotency of the 
composition 
\begin{equation}
X^\alpha_\beta \mapsto \nfrac{1}{4}1^\alpha_\beta . X^\lambda_\lambda
\end{equation}

Local and global actions are defined with respect to our chosen basis so that 
these the change of basis maps are totally invariant. This gives 
\( T_k\oo ( x ) = 0 \) and \( \nabla_m( x ) = \del_m (x) \) which we 
recognise as the usual local and global actions on scalars. We conclude that 
the one dimensional component is actually scalar and not some other kind of 
one dimensional generalised tensor.

\bigskip

We can find an idempotent projection onto the remaining 5-D component 
by subtracting the idempotent projections onto the other two components from
the identity.
\begin{equation}
X^\alpha_\beta \mapsto X^\alpha_\beta 
	- \nfrac{1}{4}1^\alpha_\beta . X^\lambda_\lambda
	- g^{ij}T^\alpha_{i\beta}T^{\lambda}_{j\mu}X^{\mu}_{\lambda}
\end{equation}

Choose a basis for this component \( \{ T^\alpha_{A\beta} \} \) which we will
denote with a capital roman index. Assume such a choice of basis has been made 
smoothly on the manifold so that we can discuss the derivative using this 
basis. 

These basis elements project trivially onto the other components giving
\begin{flalign}
& T^\alpha_{A\alpha} = 0 \\
& T^\alpha_{A\beta}T^\beta_{k\alpha} = 0 
\end{flalign}

At this point it is also useful to define
\begin{equation}
g_{AB} = T^\alpha_{A\beta}T^\beta_{B\alpha}
\end{equation}

\medskip

Projection maps to and from this component are given by 
\begin{flalign}
X^A &\mapsto  T^\alpha_{A\beta} X^A = X^\alpha_\beta \\
X^\alpha_\beta &\mapsto S^{A\beta}_\alpha . X^\alpha_\beta = X^A
\end{flalign}

A combination of these two maps gives the idempotent projection, hence
\begin{equation}
\label{idemopotentversorprojection}
T^\alpha_{A\beta} S^{A\nu}_{\mu} =
1^\alpha_{\mu}1^{\nu}_\beta 
	- g^{ij}T^\alpha_{i\beta}T^{\nu}_{j\mu}
	- \nfrac{1}{4}1^\alpha_\beta.1^{\nu}_{\mu}
\end{equation}

Contracting both sides of equation~\ref{idemopotentversorprojection}
 with \( T^\beta_{B\alpha} \) we obtain
\begin{equation}
S^{A\nu}_{\mu}g_{AB} = T^{\nu}_{B\mu}
\end{equation}

While contracting equation~\ref{idemopotentversorprojection} with 
\( S^{C\mu}_{\nu} \) gives
\begin{equation}
\left( S^{C\mu}_{\nu} S^{A\nu}_{\mu}\right)g_{AB} = 1^C_B
\end{equation}

hence \( g_{AB} \) is non-singular with inverse \( g^{AB}
= S^{A\mu}_{\nu} S^{B\nu}_{\mu} \) and
\begin{equation}
S^{A\beta}_{\alpha} = T^{\beta}_{B\alpha}g^{AB}
\end{equation}

The local and global actions in our new basis are such that these maps are
all invariant. The local action \( T^B_{iA} \) is thus given by 
\begin{equation}
T_i\oo\left(T^\alpha_{A\beta}\right) = 0
\end{equation}

and the connection \( \Gamma^B_{mA} \) giving parallel transport is defined 
by the equation
\begin{equation}
\nabla_m\left(T^\alpha_{A\beta}\right) = 0 
\end{equation}

Since \( T^\alpha_{A\beta} \) is totally invariant it follows that 
\( g_{AB} \) and \( g^{AB} \) are also totally invariant and We may use 
the metric \( g_{AB} \) and its dual \( g^{AB} \) to raise and lower these 
indices. This process will commute with both the local and global actions.

\medskip

The 5-dimensional local representation on \( \{ x^A \} \) is the canonical 
representation of \( \so(2,3) \) while the symmetric bilinear form 
\( g_{AB} \) is the canonical  metric. Hence the irreducible representation
of \( \so(2,3) \) with maximal weight \( (q_0,s_0) = (1,0) \) is local on $M$.

We call this type of tensor a \textbf{versor}. Versors are denoted \( k^A \)
with an upper-case roman index.  The word versor was originally coined by 
Hamilton to refer to a specific type of quaternion associated with an axis 
of reversal, but has fallen into disuse.  The discussion at the end of 
section~\ref{sectionsp4r} where the 5-D canonical representation was 
associated with axes of reversal makes this choice of notation seem 
appropriate. It is also a great word which deserves to be recycled.

\bigskip

We finish this section with some identities obtained by applying our 
decomposition to the product \( T^\alpha_{i\lambda}T^\lambda_{j\beta} \). 
We note that we can initially decompose this product into components
which are symmetric and antisymmetric in terms of the indices \( i \) 
and \( j \), and this decomposition will be totally invariant. The 
antisymmetric component is obtained from the commutator
\begin{equation}
T^\alpha_{i\lambda}T^\lambda_{j\beta} -
T^\alpha_{j\lambda}T^\lambda_{i\beta} =
T^k_{ij}T^\alpha_{k\beta}
\end{equation}

and is the vector component. The symmetric component is obtained from 
the Jordan bracket. As the Jordan bracket projects trivially onto the 
vector component it must decompose into the other components. The scalar 
component of the Jordan bracket is \( g_{ij} \). We denote the 5-D component
\( g^A_{ij} \) so that 
\begin{equation}
\label{JordanVersor}
T^\alpha_{i\lambda}T^\lambda_{j\beta} +
T^\alpha_{j\lambda}T^\lambda_{i\beta} =
\nfrac{1}{2}g_{ij}1^\alpha_\beta + g^A_{ij}T^\alpha_{A\beta}
\end{equation}

Since all other quantities in this equation are globally invariant it follows 
that \( g^A_{ij} \) is also globally invariant. It is symmetric in \( i \)
and \( j \) and behaves similarly to the metric in many respects. We can 
obtain an explicit expression for it by projecting the Jordan bracket.
\begin{equation}
g^A_{ij} = g^{AB}T^\beta_{B\alpha}\left(
T^\alpha_{i\lambda}T^\lambda_{j\beta} +
T^\alpha_{j\lambda}T^\lambda_{i\beta} \right)
\end{equation}

Putting these results together the product can be decomposed as follows
\begin{equation}
\label{ProductofTs}
T^\alpha_{i\lambda}T^\lambda_{j\beta}
= \nfrac{1}{2}T^k_{ij}T^\alpha_{k\beta} + 
\nfrac{1}{4}g_{ij}1^\alpha_\beta + \nfrac{1}{2}g^A_{ij}T^\alpha_{A\beta}
\end{equation}

\bigskip

In the case that the spinors are complex then as discussed in 
section~\ref{complex matters} there is a conjugation map conserved by
both the local and global actions on spinors. This extends naturally to 
give a conjugation on spinor transformations which is also conserved by
the local and global action. Complex spinor transformations are therefore 
simply the tensor product of real spinor transformations with \( \C \). 

The conjugation map will preserve the decomposition into irreducibles and 
hence the complex irreducible components are simply the tensor product with 
\( \C \) of the real irreducibles identified in this section.

\bigskip
\pagebreak[2]

\section{Action of $\so(3,3)$}

The operators \( T^\beta_{A\alpha} \) together with the operators
\( T^\beta_{i\alpha} \) define a natural local representation of the Lie
algebra \( \so(3,3) \) on the manifold. Local invariance of 
\( T^\beta_{A\alpha} \) gives
\begin{equation}
T^\beta_{i\mu}T^\mu_{A\alpha}
- T^\beta_{A\mu}T^\mu_{i\alpha} = T^B_{iA}T^\beta_{B\alpha} 
\end{equation}

We can thus regard \( T^B_{iA} \) in two ways, firstly as defining the 
action of $\so(2,3)$ on versors, and secondly as specifying some of the 
mixed index structure coefficients for the Lie algebra \( \so(3,3) \). 
Examining table~\ref{so33_table} we note that most other mixed index 
structure coefficients will be identically zero. Using compatible $T $ 
notation in particular we observe
\begin{equation}
T^k_{iA} = 0 
\mbox{ \hskip1cm}
T^A_{ij} = 0 
\mbox{ \hskip1cm}
T^C_{AB} = 0 
\end{equation}

However mixed index structure coefficients of the form \( T^k_{AB} \)
need not be zero and we can put them on our manifold by defining them via
\begin{equation}
\label{ABstructure}
T^\beta_{A\lambda}T^\lambda_{B\alpha}
- T^\beta_{B\lambda}T^\lambda_{A\alpha}
= T^k_{AB} T^\beta_{k\alpha}
\end{equation}

\medskip

The invariant metric on $\so(3,3)$ defined in the usual way to be 
\begin{equation}
g_{\Lambda\Pi} = T^\alpha_{\Lambda\beta}T^\beta_{\Pi\alpha}
\end{equation}

where the indices \( \Lambda \) and \( \Pi \) are lower-case or upper-case.
This simply evaluates to \( g_{ij} \) on lower-case indices and \( g_{AB} \) 
on upper-case indices. The mixed terms are all zero.

\medskip
Note that the totally invariant tensors \( g^{ij} \), \( g_{ij} \), 
\( g^{AB} \), \( g_{AB} \) can be used to raise and lower vector and 
versor indices in the obvious way. Contraction with these defines a 
bijective map which respects both local and global action.

\bigskip
\pagebreak[2]

\section{Tensors with Two Versor Indices}
\label{Tensors with Two Versor Indices}

Consider the space \( \{ X^{AB} \} \) of tensors with two 
versor indices. This space is 25 dimensional and decomposes into 
irreducible representations under the local action of dimensions
1 , 10 and 14. Since there is no degeneracy in this decomposition 
each of these irreducibles is preserved by the global as well as 
the local action and defines a type of tensor. The antisymmetric 
subspace is 10 dimensional and is irreducible. We can construct a 
totally invariant operator mapping this component to vectors
as follows.
\begin{equation}
X^{AB} \mapsto g^{ki}g_{AC}T^C_{iB}X^{AB}  = x^k
\end{equation}

As this map is invariant under the local and global action 
this identifies the ten dimensional component as being ordinary vectors.

The symmetric tensors with two versor indices form a 15 dimensional 
space which obviously has the trivial representation as a 
component. The intertwining operator here is
\begin{equation}
X^{AB} \mapsto g_{AB}X^{AB} = x
\end{equation}

which is invariant. Hence this 1-D component is scalar. 

\medskip
The trace free symmetric tensors with two versor components give the 
irreducible 14 dimensional representation, as can be confirmed by checking
eigenvalues for \( T \) and \( I \). This demonstrates that the representation
of \( \so(2,3) \) with \( (q_0,s_0) = (2,0) \) is local on $M$. 

We will rarely have need to talk about these tensors so we do not reserve an 
exclusive alphabet for them. Instead we will assign a temporary alphabet at 
the time of use.

\bigskip
\pagebreak[2]

\section{Tensors with one Versor and one Spinor Index}

Consider the space \( \{ X^{A\alpha} \} \) of tensors with one
versor and one spinor index. This is 20 dimensional. There are
natural invariant maps
\begin{flalign}
X^{A\alpha} &\mapsto T^\beta_{A\alpha}X^{A\alpha} = x^\beta \\
x^\beta &\mapsto g^{AB}T^\alpha_{B\beta}x^\beta = X^{A\alpha} 
\end{flalign}

which identify a 4-D spinor component. The kernel of the projection onto 
spinors is 16 dimensional. If \( v^A \) is a versor with weight 
\( (q,s) = (1,0) \) and \( x^\alpha \) is a spinor with weight
\( (q,s) = (\nfrac{1}{2},\nfrac{1}{2} \) then the tensor product 
\( v^Ax^\alpha = X^{A\alpha} \) has weight \( (\nfrac{3}{2},\nfrac{1}{2}) \).
Hence this weight must appear as a weight in one of the irreducible components.
The only way this can happen is if the entire 16 dimensional kernel
of the projection into spinors is the irreducible component with 
\( (q_0,s_0) = (\nfrac{3}{2},\nfrac{1}{2}) \). So this representation is 
also local on $M$.

We will rarely have need to talk about these tensors so we do not reserve an 
exclusive alphabet for them. Instead we will assign a temporary alphabet at 
the time of use.

\bigskip
\pagebreak[2]

\section{Symmetric 3 and 4 Component Spinors}

Consider the space of three component spinors \( \{ X^{\alpha\beta\gamma} \).
This is a 64 dimensional space of tensors which splits into a symmetric 
component of dimension 20, an antisymmetric component of dimension 4, and 
a mixed symmetry component of dimension 40. We focus our attention here 
on the totally symmetric component.

If \( v^\alpha \) has weight \( (q,s) = (\nfrac{1}{2},\nfrac{1}{2}) \) then
the tensor product of three of these \( v^\alpha v^\beta v^\gamma \) is
totally symmetric and hence lies in our space. We can compute its weight 
to be \( (q,s) = (\nfrac{3}{2},\nfrac{3}{2}) \) so this must be a weight
of one of the irreducible components of the 20 dimensional space of 
totally symmetric 3 component spinors. But the only way that can happen 
is if the space of totally symmetric 3 component spinors is irreducible
with maximal weight \( (q_0,s_0) = (\nfrac{3}{2},\nfrac{3}{2}) \).
Hence this irreducible is also local on $M$.

\bigskip

Similarly the space of symmetric 4-component spinors is 35 dimensional.
It includes the tensor \( v^\alpha v^\beta v^\gamma v^\delta \) where
\( v^\alpha \) has weight \( (q,s) = (\nfrac{1}{2},\nfrac{1}{2}) \).
This tensor has weight \( (q,s) = (2,2) \) it follows that the space of
symmetric 4-component spinors is locally irreducible with maximal weight
\( (q_0,s_0) = (2,2) \). Hence this irreducible is local on $M$.

\bigskip
\pagebreak[2]

\section{Tensors with Two Vector Indices}
\label{section twovectorindices}

Consider the space \( \{ X^{ij} \} \) of tensors with two 
vector indices. This space is 100 dimensional and decomposes into six
different irreducibles under the local action. Due to the lack of 
degeneracy each is globally invariant and we may discover a local and global
action for it.

The antisymmetric subspace is 45 dimensional and is the sum of two irreducible 
components of dimensions 10 and 35 corresponding to \( (q_0,s_0) = (1,1) \)
and \( (q_0,s_0) = (2,1) \) respectively. The tensor \( T^k_{ij} \) is 
totally invariant and defines an intertwining map 
\( X^{ij} \mapsto T^k_{ij}X^{ij} \) which projects onto the 10-dimensional 
irreducible component and identifies it as having the correct global and
local action to be the space of vectors. The antisymmetric tensors which 
lie in the kernel of this map give the 35 dimensional irreducible component.

The symmetric subspace is 55 dimensional and is the sum of irreducibles
of dimension 1, 5, 14 and 35 corresponding to \( (q_0,s_0) = (0,0) \) ,
\( (1,0) \), \( (2,0) \) and \( (2,2) \) respectively. We seek totally
invariant maps which will enable us to perform this decomposition.

\medskip

The trace map 
\begin{equation}
X^{ij} \mapsto g_{ij}X^{ij}
\end{equation} 

is a totally invariant map into the scalar component.

\medskip
A totally invariant map into the versor component can be constructed
using Jordan Brackets.

In equation~\ref{JordanVersor} we defined the totally invariant tensor
\( g^A_{ij} \) via a Jordan bracket of \( T \) operators.

The map
\begin{equation} 
X^{ij} \mapsto X^{ij}g^A_{ij} 
\end{equation}

is a fully invariant map from two component vectors to versors and
identifies the versor component. 

\medskip

An intertwining map which takes two dimensional symmetric tensors to two 
dimensional symmetric versors can be constructed as follows.
\begin{equation}
s^{AB}_{ij} = T^\alpha_{i\beta}T^\beta_{X\gamma}T^\gamma_{j\delta}T^\delta_{Y\alpha}g^{AX}g^{BY}
\end{equation}

This allows us to use our previous identification of the 14 dimensional 
irreducible as the space of two dimensional symmetric versors of trace 
zero to pick out the 14 dimensional component here.

\medskip

The 35 dimensional component of the symmetric part can be identified as the 
kernel of all the maps mentioned so far. This representation, which is 
locally irreducible with maximal weight \( (q_0,s_0) = (2,2) \) is therefore
locally equivalent to the space of fully symmetric spinors with four indices.
The global actions differ however. We will look at this in more detail
once we understand Crump scalars which are introduced in the next chapter.

\bigskip

In section~\ref{IndecomposableGeneralisedTensors} we claimed that all the 
small dimensional representations appearing in figure~\ref{weights} are local 
on $M$. We have now proved this by finding generalised tensors for each of 
these local actions. This means proposition~\ref{irredtensordecomp} applies
to indecomposable generalised tensors whose local action consists of
these irreducibles. Such generalised tensors can thus be written as a tensor
product of the generalised tensors we have found with locally trivial 
indecomposable generalised tensors.

	\chapter{Crump Scalars and Beyond}
\label{bullets and beyond}

In this chapter we continue our exploration of tensors applicable to physics.
We will also continue to develop algebraic techniques and will discuss a number
of useful identities.

Crump scalars are an important topic in this chapter. These locally 
trivial tensors arise naturally from a consideration of the invariant 
symplectic form. Crump scalars turn out to be particularly important in 
understanding electromagnetism.

We begin by looking at the Casimir identities which are useful in 
identifying locally homogeneous components of a tensor.

\bigskip
\pagebreak[2]

\section{Casimir Identities}

The quadratic and quartic Casimir invariants for representations of 
\( \so(2,3) \) were introduced in equations~\ref{2nd_degree_Casimir}
and~\ref{4th_degree_Casimir}. We can extend these to operators on the 
manifold which commute not only with the local action but also with the
global action.

The operator \( g^{ij}T_i\oo T_j\oo \) expresses the quadratic Casimir operator
\( Q \) discussed in the first chapter at every point on our manifold and is
both globally and locally locally invariant. By Schur's lemma it is scalar on 
every local irreducible. Furthermore its eigenspaces are globally invariant 
and give a decomposition of any generalised tensor. Hence this operator
is scalar on any indecomposable generalised tensor space. 

Applying this to specific spaces gives the useful \textbf{Casimir identities}.
\begin{equation}
\begin{split}
g^{ij}T^\beta_{i\lambda}T^\lambda_{j\alpha} &= \nfrac{5}{2} \,.\, 1^\beta_\alpha \\
g^{ij}T^B_{iX}T^X_{jA} &= 4\,.\, 1^B_A \\
g^{ij}T^b_{ix}T^x_{ja} &= 6\,.\, 1^b_a
\end{split}
\label{CasimirIdentities}
\end{equation}

There is a Casimir identity for every finite dimensional irreducible 
representation. In the general case, using the formula in 
equation~\ref{Casimirformula} we have 
\begin{equation}
g^{ij}T^\Delta_{i\Theta}T^\Theta_{j\Lambda} =\bigl( q(q+3)+s(s+1)\bigr) \,.\, 1^\Delta_\Lambda
\end{equation}

where the constant is a function of the maximum weight 
vector \( (q,s) \) for the irreducible local representation.

\medskip

The quartic Casimir operators are also globally invariant and give rise 
to a similar set of identities. The 4th degree Casimir invariant takes 
the form
\begin{equation}
g^{AB}g_A^{ij}g_B^{kl}T_i\oo T_j\oo T_k\oo T_l\oo
\end{equation}

and the general identity arising from it is
\begin{equation}
g^{AB}g_A^{ij}g_B^{kl}T^\Delta_{i\Theta_1} T^{\Theta_1}_{j\Theta_2} T^{\Theta_2}_{k\Theta_3} T^{\Theta_3}_{l\Lambda}
= \bigl(q -s +\nfrac{4}{9}s^2\left(q + 1\right)^2\bigr) \,.\, 1^\Delta_\Lambda
\end{equation}

where the irreducible representation has maximal weight vector \( (q,s) \).

\bigskip
\pagebreak[2]

\section{The Symplectic Form.}
\label{TheSymplecticForm}

In section~\ref{SpinorTransformations} we defined totally invariant 
bilinear forms
\begin{flalign*}
g_{ij} &= T^\beta_{i\alpha}T^\alpha_{j\beta} \\
g_{AB} &= T^\beta_{A\alpha}T^\alpha_{B\beta} 
\end{flalign*}

on vectors and versors. We now look at whether we can similarly define an
invariant  bilinear form on spinors. In the local case we have such a form, 
the canonical symplectic form defining \( \sp(2,\R) \) which is unique up to 
a choice of scalar. Consequently a locally invariant form 
\( s_{\alpha\beta} \) can be chosen smoothly on the manifold and is 
unique up to a choice of scalar field. Explicitly, if \( t_{\alpha\beta} \) 
is another such invariant form then 
\begin{equation}
t_{\alpha\beta} = f.s_{\alpha\beta}
\end{equation}

for some scalar field \( f \).  Local invariance of \( s_{\alpha\beta} \) 
gives the equation 
\begin{equation}
T^\lambda_{k\alpha}s_{\lambda\beta} +
T^\lambda_{k\beta}s_{\alpha\lambda} = 0
\end{equation}

Taking a covariant derivative we obtain
\begin{equation} 
T^\lambda_{k\alpha}\nabla_m(s_{\lambda\beta}) 
	+ T^\lambda_{k\beta}\nabla_m(s_{\alpha\lambda}) = 0 
\end{equation}

Thus \( \nabla_m(s_{\alpha\beta}) \) is also a locally invariant bilinear
form for each \( m \). We conclude that there is a vector \( A_m \) with
\begin{equation}
\nabla_m(s_{\alpha\beta}) = - A_m . s_{\alpha\beta} 
\end{equation}

The negative sign here anticipates results that will follow later.
For the dual form defined by 
\( 1^\alpha_\beta = s^{\alpha\lambda}s_{\lambda\beta}
= s_{\beta\lambda} s^{\lambda\alpha} \) we will then have
\begin{equation}
\nabla_m(s^{\alpha\beta}) = A_m .  s^{\alpha\beta}
\end{equation}

Of course the vector field \( A_m \) here depends on the choice
of \( s_{\alpha\beta} \). Choose another locally invariant 
non-degenerate form \( t_{\alpha\beta} \) and denote the associated
vector field \( B_m \). Then \( t_{\alpha\beta} \) is a multiple of 
\( s_{\alpha\beta} \) at every point by a non-zero (due to non-degeneracy) 
scalar.  Assume (with very little loss of generality) that the scalar is 
positive.  Then we may write 
\begin{equation}
\label{cobbullet}
t_{\alpha\beta} = e^f s_{\alpha\beta}
\end{equation}

for some scalar function \( f \). And
\begin{equation}
\label{DElast}
\nabla_m(t_{\alpha\beta}) = \left(\nabla_m(f)-A_m\right) . t_{\alpha\beta} 
\end{equation}

We conclude that 
\begin{equation}
\label{newA} 
B_m = A_m - \nabla_m(f) 
\end{equation}

We are interested in the question of whether or not it is possible to choose
\( s_{\alpha\beta} \) in such a way that it is globally invariant. 
We will be able to do this if we can find a scalar function \( f \) solving 
the differential equation
\begin{equation}
\label{theDEforf}
\nabla_m(f) = A_m 
\end{equation} 

\medskip
If a solution \( f \) to equation~\ref{theDEforf} exists, then 
applying \( [\nabla_i,\nabla_j] \) to \( f \) gives the necessary condition 
\begin{equation}
\nabla_i(A_j) - \nabla_j(A_i) = T^k_{ij}A_k 
\end{equation}

We define 
\begin{equation}
\label{defineF}
F_{ij} = \nabla_i(A_j) - \nabla_j(A_i) - T^k_{ij}A_k
\end{equation}

so that our condition becomes
\begin{equation}
\label{condition1}
F_{ij} = 0
\end{equation}

A differential equation like equation~\ref{theDEforf} can be solved by taking 
line integrals from a fixed point provided that this gives a well defined 
answer independent of path. For a simply connected region this will be true 
when line integrals around a small loop are zero, which is true if and only 
if condition~\ref{condition1} holds.

Hence condition~\ref{condition1} is both a necessary and sufficient condition 
that a globally invariant bilinear form \( s_{\alpha\beta} \) exists in a
simply connected region.

\medskip

Having gone as far as we can go via a purely mathematical argument it is 
tempting at this point to appeal to a physical argument and simply assert 
that a globally invariant symplectic form should exist. Such a form would
be very useful. It would for example allow us to raise and lower spinor 
indices in an invariant manner. Indeed in the initial development of this 
theory this was the path we followed. 

However the notation above was not chosen by chance. We later will find
reason to interpret \( A_i \) as the electromagnetic potential. 
This makes \( F_{ij} \) the electromagnetic field and 
condition~\ref{condition1} is thus effectively a zero field condition 
and constrains the physics in an undesirable way. 

Hence it isn't simply a case of not being able to find a globally invariant 
sympletic form. The physics is telling us that the non-existence of such a 
form is essential for a proper physical description of electromagnetism.

\bigskip

This link between the electromagnetic field and the lack of a globally 
invariant symplectic form was discovered by the lead author and 
William Crump during work on his thesis. An early account of this work 
can thus be found in~\cite{Cthesis}. Prior to this revelation we had been 
falsely assuming that a unique global action should exist for each local 
action. Since the local action on the symplectic form is trivial we had 
therefore identified it as scalar and thus globally invariant.

This error meant that the version of Maxwell's equations we had obtained 
included a non-physical constraint; effectively a zero field condition. 
We also wish to thank Dr. Yuri Litvinenko for pointing out that this extra 
constraint was indeed unphysical and was not simply a gauge condition 
as we had at first hoped.

\bigskip
\pagebreak[2]

\section{Tensors with Two Spinor Indices}
\label{TensorsWithTwoSpinorIndices}

Consider the space \( \{ X^{\alpha\beta} \} \) of tensors with two 
spinor indices. This space is 16 dimensional and, like the space of 
spinor transformations, decomposes into irreducible representations 
under the local action of dimensions 1, 5 and 10. Indeed the space of 
tensors with two spinor indices and the space of spinor transformations
are locally equivalent. Contraction with \( s_{\alpha\beta} \)
gives an isomorphism with respect to the local action. This map however
does not conserve the global action and so these two types of tensors 
are not equivalent.

Choose an element (a basis) of the trivial component smoothly at each point
and denote it \( e_\bullet^{\alpha\beta} \).
We will use a bullet as our alphabet for this trivial representation (as the 
space is one dimensional we only need an alphabet with one letter). 
A projection map onto this component is denoted \( s^\bullet_{\alpha\beta} \). 
Hence \( s^\bullet_{\alpha\beta} 
e_\bullet^{\alpha\beta} = 1^\bullet_\bullet = 1\).

We will call tensors of the same type as the trivial component 
\textbf{Crump scalars} or sometimes bullet scalars after the index used to
describe them. I have named them after my student William Crump who
demonstrated in his thesis that these are essential for a proper description 
of electromagnetism. Local and global actions on Crump scalars are then defined to ensure that 
\( e_\bullet^{\alpha\beta} \) and \( s^\bullet_{\alpha\beta} \) are 
totally invariant. 

In the case where our spinors are complex, our Crump scalars will also 
be complex. In this case we will want to extend the definition of the 
natural conjugation map on spinors discussed in section~\ref{complex matters} 
to Crump scalars. This is done by requiring that the basis elements 
\( s^\bullet_{\alpha\beta} \) are invariant under spinor conjugation
(in other words are real).  Choosing a real Crump scalar as the bullet 
basis will ensure that conjugation of Crump scalars is simply complex
conjugation of the single component.

As the local action on Crump scalars is trivial, we have 
\( T^\bullet_{i\bullet} = 0 \). Local invariance of 
\( s^\bullet_{\alpha\beta} \) then gives
\begin{equation}
   T^\lambda_{k\alpha}s^\bullet_{\lambda\beta} +
   T^\lambda_{k\beta} s^\bullet_{\alpha\lambda} = 0
\end{equation}

We conclude that \( s_{\alpha\beta} = s^\bullet_{\alpha\beta} \) is a 
locally invariant symplectic form. From 
section~\ref{TheSymplecticForm} we can write
\begin{equation}
\nabla_m(s_{\alpha\beta}) = - A_m . s_{\alpha\beta}
\end{equation}

for some vector field \( A_m \). Note however that this equation ignores 
any effect of the connection on the bullet index which is suppressed
in this equation.

By definition we must have \( \nabla_m(s^\bullet_{\alpha\beta}) = 0 \)
and indeed this condition defines the Crump connection 
\( \Gamma^\bullet_{m\bullet} \) which cannot then be zero.
We conclude that the bullet index is subject to non-trivial parallel 
transport with
\begin{equation}
\label{gammabullet}
\Gamma^\bullet_{m\bullet} = A_m
\end{equation}

Crump scalars \( x^\bullet \) are locally trivial and behave like 
scalars under the local action. However they transform 
differently under the global action. Note that this is an example of 
a locally trivial generalised tensor not formed using the trick from
section~\ref{locallytrivialspaces}.

We may take tensor products with Crump scalars to form bullet versions of
other tensors. These will have the same local action, but the global
action will be modified by the Crump connection. 

\bigskip

We can now identify the remaining irreducible components of the two 
component spinors as bullet vectors \( X^{k\bullet} \) and bullet versors 
\( X^{A\bullet} \) respectively via the invariant maps
\begin{flalign}
X^{\alpha\beta} &\mapsto
	X^{\alpha\lambda}s^\bullet_{\lambda\beta}T^\beta_{i\alpha}g^{ik} \\
X^{\alpha\beta} &\mapsto
	X^{\alpha\lambda}s^\bullet_{\lambda\beta}T^\beta_{B\alpha}g^{BA}
\end{flalign}

projecting onto each component.

\bigskip
\pagebreak[2]

\section{The Bullet Index}

Additional types of scalar can be obtained from Crump scalars using the 
dual and tensor product operations. In our notation these would be denoted
\( x_\bullet \) or \( x^{\bullet\bullet} \).  Note however that 
upper and lower bullets appearing together will cancel via what is 
essentially a process of automatic contraction. We see this for example 
in equation~\ref{gammabullet}.

We define the \textbf{bullet index} to be the number of bullets with the 
number being negative if the bullets are contravariant.  So \( x_{\bullet} \) 
will have bullet index \( -1 \) while \( x^{\bullet\bullet} \) has bullet 
index \( 2 \).  We may also combine bullets with other alphabets. So for 
example a quantity like \( v^k_{\bullet\bullet} \) would be described as a 
vector with bullet index \( -2 \). 

We can extend this definition of bullet index to tensors with a combination
of vector, versor, spinor and bullet indices by ignoring the vector and 
versor indices, and counting each spinor index as half a bullet. Hence
the bullet index of \( X_{k\bullet}^{\alpha\beta\gamma} \) is
\( \nfrac{1}{2} \). This will ensure that contraction with 
\( s^\bullet_{\alpha\beta} \) does not change the bullet index, which is 
reasonable since it is an intertwining map and does not change the nature
of the generalised tensor.
\medskip

The one element set \( \{ 1^\bullet \} \) is a basis of the 1-D trivial 
representation we have chosen in order to quantify Crump scalars. 
We shouldn't get too attached to it however as, like all bases, 
it is somewhat arbitrary.  
Consider a different basis \( 1^{\bullet'} \) with change of basis 
matrices \( \delta^{\bullet}_{\bullet'} \) and 
\( \delta^{\bullet'}_{\bullet} \).
Then the change of basis matrix is non-singular everywhere so we can write
(after adjusting the sign if needed) 
\( \delta^{\bullet'}_{\bullet} = e^f  \) for some scalar \( f \) and so
\begin{equation}
s^{\bullet'}_{\alpha\beta} 
	= \delta^{\bullet'}_{\bullet}s^\bullet_{\alpha\beta} 
 	= e^{f} . s^\bullet_{\alpha\beta} 
\end{equation}

which is equation~\ref{cobbullet}. We conclude that changing the basis for 
Crump scalars is equivalent to choosing a different symplectic form, and hence
Crump scalars can be regarded as specifying a choice of symplectic form.

\bigskip

To raise and lower spinor indices we would need
invariant bilinear forms \( s_{\alpha\beta} \) and 
\( s^{\alpha\beta} \) where
\begin{equation*}
s^{\alpha\lambda}s_{\beta\lambda} = 1^\alpha_\beta 
\end{equation*}
However as we have seen no such invariant forms exist. The 
symplectic form is in general only locally invariant.
To ensure global invariance we must instead use the bilinear form 
\( s^\bullet_{\alpha\beta} \) which maps a pair of spinors onto a 
Crump scalar and not to an ordinary scalar. The dual of this will 
then be \( s_\bullet^{\alpha\beta} \) where
\begin{equation}
\label{raiselowerspinor}
s_\bullet^{\alpha\lambda}s^\bullet_{\beta\lambda} = 1^\alpha_\beta 
\end{equation}

It follows that
\begin{equation}
\label{s_fullycontracted}
s_\bullet^{\alpha\beta}s^\bullet_{\alpha\beta} = 4
\end{equation}
and that the map \( \Pi^{\alpha\beta}_{\gamma\delta} = 
\nfrac{1}{4}s_\bullet^{\alpha\beta}s^\bullet_{\gamma\delta} \) on the set 
of generalised tensors with two spinor indices satisfies \( \Pi^2 = \Pi \)
and hence is a projection. Hence we can identify
\begin{equation}
s_\bullet^{\alpha\lambda} = 4e_\bullet^{\alpha\beta} 
\end{equation}

where  \( e_\bullet^{\alpha\beta}  \) is the basis of the bullet component
for generalised tensors with two spinor indices specified earlier. Essentially
we have adjusted our notation for \( s^\bullet_{\alpha\beta} \) and 
\( s_\bullet^{\alpha\beta} \) to avoid having the constant \( 4 \) appear
every time we raise and then subsequently lower a spinor index. 

Note that raising or lowering a spinor index in this way will leave a bullet 
behind.  Also as these forms are antisymmetric care is needed when raising 
and lowering to avoid possibly uncertainty over sign.  If, when we raise and 
lower, we place the contracted indices adjacent, then a consistent application 
of the rule 

\begin{quote}
\textbf{R}aise on the \textbf{R}ight ;
\textbf{L}ower on the \textbf{L}eft
\end{quote}

will ensure that raising and lowering are opposite operations.
\begin{equation*}
v^\gamma = \left(s^\bullet_{\beta\alpha}v^\alpha\right)s_\bullet^{\beta\gamma}
\end{equation*}

\begin{equation*}
v_\gamma = s^\bullet_{\gamma\beta}\left(v_\alpha s_\bullet^{\alpha\beta}\right)
\end{equation*}

\bigskip

In section~\ref{section twovectorindices} we noted that the 
35 dimensional irreducible \( (q_0,s_0) = (2,2) \) component of the 
symmetric two component vectors had a different global action to the 
35 dimensional irreducible \( (q_0,s_0) = (2,2) \) space of symmetric
four component spinors. Hence they are not the same kind of tensor.
We can now describe the nature of this difference which turns out to 
be simply a difference in bullet index.

We can construct intertwining maps between the space \( \{ X^{ij} \} \) 
of two component vectors and the space \( \{ 
Y^{\alpha\beta\gamma\delta}_{\bullet\bullet} \} \) of tensors with four 
spinor and two bullet indices by
\begin{flalign}
X^{ij} &\mapsto 
	X^{ij}  T^\alpha_{i\lambda}T^\gamma_{j\mu}
		s^{\lambda\beta}_\bullet s^{\mu\delta}_\bullet =
Y^{\alpha\beta\gamma\delta}_{\bullet\bullet} \\
Y^{\alpha\beta\gamma\delta}_{\bullet\bullet} &mapsto
s^\bullet_{\beta\lambda}
s^\bullet_{\delta\mu}
Y^{\alpha\beta\gamma\delta}_{\bullet\bullet}
T^\lambda_{x\alpha}T^\mu_{y\gamma}g^{xi}g^{yj} = X^{ij} 
\end{flalign}

These intertwining maps identify the two component vectors with the four
component spinors of bullet index two which are symmetric under the
exchange of the first two and last two spinor indices. Schur's lemma then 
tells us that these maps give an isomorphism between the 35 dimensional 
irreducible components.

\bigskip
\pagebreak[2]

\section{Scalars and the Zero Curvature Test}
\label{scalarsandzerocurvature}

Suppose that we come across a type of one dimensional generalised tensor and
want to know if it is scalar. Clearly any one 
dimensional tensor is locally trivial. But as Crump scalars demonstrate, 
one dimensional tensors need not be globally trivial and therefore scalar. 

One way to show that our tensor is indeed scalar would be to find a 
recognisable intertwining map to scalars. That is what we did in 
sections~\ref{Tensors with Two Versor Indices} 
and~\ref{section twovectorindices} when we identified the 1-D components 
of the symmetric tensors with two versor, or with two vector indices as 
scalar. However not being able to find such an intertwining map does not 
answer the question. What we need is a definitive test.

\medskip
Scalars of course are the ones with trivial global action. So it would seem
that all we need to do is look at the global action and see if it is trivial. 
However recognising when the global action is trivial is not easy since the 
global action is specified by the connection which changes basis in a 
complicated way. What is clearly trivial in one basis may not appear so in 
another.

The usual basis for scalars is denoted with the symbol \( \circ \), and in 
that basis we have \( \Gamma^\circ_{k\circ} = 0 \). Suppose we have a different 
basis \( \circ'\) with change of basis matrix \( \delta^\circ_{\circ'} \) 
and connection \( \Gamma^{\circ'}_{k\circ'} = B_k \).

As change of basis is an intertwining map we have 
\( \nabla_k( \delta^\circ_{\circ'} ) = 0 \) which gives
\begin{equation}
\del_k( \delta^\circ_{\circ'} ) = \Gamma^{\circ'}_{k\circ'} 
\delta^\circ_{\circ'} 
\end{equation}

Since our change of basis matrix is \( 1\times 1 \) it is a non-zero 
scalar function on the manifold and we can write it as either \( e^f \) or 
\( -e^{f} \) for some function \( f \). In both cases we have
\( \del_k( \delta^\circ_{\circ'} ) = \del_k(f).  \delta^\circ_{\circ'} \)
and can conclude
\begin{equation}
\label{Bisnablaf}
\Gamma^{\circ'}_{k\circ'} = B_k = \del_k(f) 
\end{equation}

So a one dimensional tensor will be scalar if and only if its connection 
is the gradient of a function.

A simpler test can be found by looking at the curvature which is a tensor 
and therefore changes basis in a simple fashion. The curvature of scalars
is \( R^\circ_{ij\circ} = 0 \).  It follows that it is zero in any basis
and so \( R^{\circ'}_{ij\circ'} = 0 \). This gives a necessary condition
for a one dimensional tensor to be scalar which we call the 
\textbf{zero curvature test}. We need
\begin{equation} \del_i(B_j) - \del_j(B_i) = 0 \end{equation}

if equation~\ref{Bisnablaf} is to hold. Furthermore if the manifold is simply 
connected the zero curvature test is also sufficient. It ensures that 
equation~\ref{Bisnablaf} can be solved by path integration with the result 
independent of path.

The constraint that the manifold is simply connected is a technical one that
can be satisfied by restriction to a simply connected region. Alternatively 
we could ensure that our entire manifold is simply connected by switching to 
a simply connected cover. 

However this does raise the interesting possibility of tensors that are
scalar in any simply connected region but where there is 
a global topological obstruction which prevents them from being 
scalar on the entire manifold.

\bigskip

Now that we have found a test to determine if a representation is 
scalar it is time to use it by applying it to some one dimensional
tensors. Consider the completely antisymmetric tensors with 
\begin{itemize}
\item four spinor indices \( \{ e^{\alpha\beta\gamma\delta} \} \).
\item five versor indices \( \{ k^{ABCDE} \} \).
\item ten vector indices \( \{ m^{abcdefghij} \} \).
\end{itemize}

To apply the zero curvature test to these tensors we need to understand 
the respective connections. The argument in all three cases is similar. 

Consider \( \Gamma_k\oo(e^{\alpha\beta\gamma\delta}) \).
Each spinor index can take four values which we choose to represent 
by the numbers 1 through 4. By antisymmetry it is enough to look at one 
component. Looking at the \( 1234\) component we have
\begin{equation}
\Gamma_k\oo(e^{1234}) =
\Gamma^1_{k\lambda}e^{\lambda234} +
\Gamma^2_{k\lambda}e^{1\lambda34} +
\Gamma^3_{k\lambda}e^{12\lambda4} +
\Gamma^4_{k\lambda}e^{123\lambda}
\end{equation}

By antisymmetry the coefficients of \( e^{\alpha\beta\gamma\delta} \) with 
repeated indices are zero. Hence the only non-zero components in this equation
are
\begin{equation}
\Gamma_k\oo(e^{1234}) =
\Gamma^1_{k1}e^{1234} +
\Gamma^2_{k2}e^{1234} +
\Gamma^3_{k3}e^{1234} +
\Gamma^4_{k4}e^{1234} = \Gamma^\lambda_{k\lambda}e^{1234}
\end{equation}

We conclude that
\begin{equation}
\Gamma_k\oo(e^{\alpha\beta\gamma\delta}) =
\Gamma^\lambda_{k\lambda} e^{\alpha\beta\gamma\delta}
\end{equation}

The same argument in the other two cases gives
\begin{flalign}
\Gamma_k\oo(k^{ABCDE}) &= \Gamma^M_{kM} k^{ABCDE} \\
\Gamma_k\oo(m^{abcdefghij}) &= \Gamma^t_{kt} m^{abcdefghij}
\end{flalign}

Knowing the connections allows us to compute the curvature and 
apply the zero curvature test to check whether these tensors are scalar.
The three curvatures are
\begin{flalign}
\del_i(\Gamma^\lambda_{j\lambda}) - \del_j(\Gamma^\lambda_{i\lambda})
&= R^\lambda_{ij\lambda} \\
\del_i(\Gamma^M_{jM}) - \del_j(\Gamma^M_{iM})
&= R^M_{ijM} \\
\del_i(\Gamma^t_{jt}) - \del_j(\Gamma^t_{it})
&= R^t_{ijt}
\end{flalign}

which we recognise as the traces of the spinor, versor and vector curvature 
tensors respectively.  We will be looking at the nature of these curvature 
tensors in more detail in the next chapter.  Equations~\ref{curvature_reduced}, 
\ref{reduceddecomposition} and~\ref{reducedversorcurvature} from that chapter 
show that the first of these expressions is \( F_{ij} \) (not usually zero), 
while the other two are indeed zero. 

We conclude that the completely antisymmetric tensors with four spinor 
indices are not scalar, but that the completely antisymmetric tensors with 
five versor or with ten vector indices are scalar.

\medskip

The fact that completely antisymmetric tensors with ten vector components 
are scalar has important consequences. We can use these tensors to build
differential forms and define a measure and hence an integration on the 
manifold. Because tensors of this type are scalar we can define this 
measure and integration to be invariant and to commute with both the 
local and global action. Such an invariant integration will be useful 
later when we look at Lagrangian methods. 

\medskip
This leaves us however with a question. If the completely antisymmetric 
tensors with four spinor indices are not scalar then what are they? Have
we found a new type of locally trivial tensor? In the next section we will
answer this question by finding intertwining maps that allow us to identify 
these tensors as two-bullet scalars of the form \( \{ x^{\bullet\bullet} \} \).

\section{Superalgebra Identities}


The basic idea of a Lie superalgebra is that \( T^i_{jk} \), 
\(  T^\beta_{i\alpha} \) and \( T^i_{\alpha\beta} = 
s_{\alpha\lambda}T^\lambda_{k\beta}g^{ki} \) can be viewed
as structure coefficients for a \( \Z_2 \)-graded algebra 
on vectors and spinors which satisfies a `super-Jacobi' identity.

Note however that in doing this we suppressed the bullet index. 
Putting it back to restore the proper global action breaks the 
interpretation in terms of a \( \Z_2 \)-graded algebra. 
Consequently the definition of Lie superalgebra is broken in a 
fairly fundamental way by the global action and is not natural 
on a framework.

\medskip
We can however salvage an interesting identity from the wreckage.
The super-Jacobi identity still holds and the various components 
of the super-Jacobi identity give many identities we are already 
familiar with, plus one new identity which is the main subject of 
this section.

\begin{theorem}[The Superalgebra Identity]
\begin{equation}\label{superalgebra identity}
s^\bullet_{\alpha\lambda}T^\lambda_{i\beta}T^\delta_{j\gamma}g^{ij} \abcequal 0 
\end{equation}
Where the notation \( \abcequal \) means the left hand side of the expression 
should be cyclically summed over the permutation \( (\alpha\beta\gamma) \) 
before being equated to the right.
\end{theorem}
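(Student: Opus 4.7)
My plan is to use Schur's lemma, applied to the fully symmetric three-spinor representation of \( \so(2,3) \). Write \( D^{\delta\bullet}_{\alpha\beta\gamma} = s^\bullet_{\alpha\lambda}T^\lambda_{i\beta}T^\delta_{j\gamma}g^{ij} \) for the summand on the left of (\ref{superalgebra identity}) and \( S^{\delta\bullet}_{\alpha\beta\gamma} \) for its cyclic sum over \( (\alpha\beta\gamma) \). Everything in sight---\( s^\bullet_{\alpha\beta} \), \( T^\lambda_{k\alpha} \), and \( g^{ij} \)---is locally invariant, so \( S \) is a locally invariant tensor. Contraction with \( S \) therefore defines a local intertwining map from the triple tensor product of spinors to spinors carrying one bullet index.

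The key preliminary step is the identity \( s^\bullet_{\alpha\lambda}T^\lambda_{k\beta} = s^\bullet_{\beta\lambda}T^\lambda_{k\alpha} \), which follows at once by writing out \( T_k\oo(s^\bullet_{\alpha\beta}) = 0 \) and using antisymmetry of the symplectic form. Applied to \( D \), this shows that each term in the cyclic sum is symmetric in its first two lower spinor indices. Combining this pairwise symmetry with the built-in cyclic invariance of \( S \), a short check shows that \( S \) is also invariant under the transposition \( (\alpha\beta) \); together with cyclic invariance this generates the whole of \( S_3 \), so \( S^{\delta\bullet}_{\alpha\beta\gamma} \) is fully symmetric in \( \alpha,\beta,\gamma \).

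Full symmetry means \( S \) factors through the space of fully symmetric three-component spinors, identified earlier as the locally irreducible representation of \( \so(2,3) \) with maximal weight \( (q_0,s_0) = (\frac{3}{2},\frac{3}{2}) \). The target---spinors tensored with Crump scalars---has trivial local action on the Crump factor, so it is locally the spinor representation with maximal weight \( (\frac{1}{2},\frac{1}{2}) \). These two irreducibles are inequivalent, so by Schur's lemma the map vanishes, giving \( S^{\delta\bullet}_{\alpha\beta\gamma} = 0 \), which is precisely (\ref{superalgebra identity}).

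The main obstacle I anticipate is bookkeeping in the symmetry verification: one must keep antisymmetry of \( s^\bullet_{\alpha\beta} \) and the lower-spinor sign convention for \( T_k\oo \) straight, and one must verify carefully that cyclic invariance together with symmetry in only the first two arguments of each term really does upgrade to full \( S_3 \) invariance of the sum. Once that is in place, the Schur step is essentially automatic and no representation-specific calculation with the matrices of Table~\ref{sp4R_basis} is needed.
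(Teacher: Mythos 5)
Your proof is correct, and it takes a genuinely different route from the paper, which simply verifies the identity by brute-force computation with the explicit sparse matrices of chapter 1. Your two key observations both check out: local invariance of \( s^\bullet_{\alpha\beta} \) gives \( T^\lambda_{k\alpha}s^\bullet_{\lambda\beta} + T^\lambda_{k\beta}s^\bullet_{\alpha\lambda} = 0 \), which combined with antisymmetry of the symplectic form yields \( s^\bullet_{\alpha\lambda}T^\lambda_{k\beta} = s^\bullet_{\beta\lambda}T^\lambda_{k\alpha} \) (the usual fact that the adjoint of a symplectic algebra sits inside the symmetric square of the fundamental); and the upgrade from "each summand symmetric in its first two slots, plus cyclic invariance of the sum" to full \( S_3 \)-symmetry is a one-line verification exactly as you describe. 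Since the totally symmetric three-spinor space is the \( 20 \)-dimensional irreducible with maximal weight \( (\nfrac{3}{2},\nfrac{3}{2}) \) while the target is locally the \( 4 \)-dimensional spinor representation (the bullet index carrying trivial local action), Schur's lemma kills the intertwiner, and a fully symmetric tensor that annihilates every symmetric three-spinor is zero. What your argument buys is an explanation of \emph{why} the identity holds --- it is forced by the absence of a spinor constituent in the symmetric cube of spinors --- and it would generalise to any symplectic Lie algebra; what the paper's computation buys is independence from the classification of irreducibles and a concrete check of signs and conventions. The only point worth making explicit in a written-up version is that Schur's lemma here is applied at the level of the \emph{local} action only, which is all that is needed since the identity is a pointwise algebraic statement; the global action plays no role.
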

\begin{proof}
This can be proved by direct computation using the explicit matrices
given in chapter 1. Each tensor in the sum on the left is formed as the sum
of ten 256 component tensors. Fortunately these tensors are sparse and can be 
described by simply listing non-zero components. The indices \( \alpha \), 
\( \beta \) and \( \gamma \) in this list can then be cyclically permuted and 
added. The calculation is simple enough to be done by hand, 
although it would be rather messy to present here.
\end{proof}

We can also generalise this result to obtain two further identities. 
Consider the expressions
\begin{flalign}
s^\bullet_{\alpha\beta} s^\bullet_{\gamma\delta}
\abcequal & \;\;
	E^{\bullet\bullet}_{\alpha\beta\gamma\delta} \\
 s^\bullet_{\alpha\lambda}T^\lambda_{A\beta} g^{AB} 
	T^\mu_{B\gamma} s^\bullet_{\mu\delta} 
\abcequal & \;\;
	 B^{\bullet\bullet}_{\alpha\beta\gamma\delta}
\end{flalign}

By examining the effect of transposing two of the spinor indices it is not 
hard to show that both of these expressions are completely antisymmetric.
Hence they are completely determined by one of their coefficients and in 
particular they differ by a scalar field. Furthermore both of these tensors 
are invariant. Hence they actually differ by a constant scalar. 

Note that the invariance of these tensors also means that we could use them 
to define an intertwining map between antisymmetric tensors with four spinor 
indices, and two-bullet scalars.

To understand more we must choose bases and compute. If, with respect to our 
chosen bases, we define \( \epsilon_{\alpha\beta\gamma\delta} \) to be the 
unique completely antisymmetric tensor with \( \epsilon_{1234} = 1 \) then 
we can write 
\begin{flalign}
E^{\bullet\bullet}_{\alpha\beta\gamma\delta} 
&= \; k^{\bullet\bullet}\epsilon_{\alpha\beta\gamma\delta} \\
B^{\bullet\bullet}_{\alpha\beta\gamma\delta} 
&= \; c k^{\bullet\bullet}\epsilon_{\alpha\beta\gamma\delta} 
\end{flalign}

for a suitable two-bullet scalar \( k^{\bullet\bullet} \) and 
constant scalar \( c \). Suppose we choose bases so that
at that at a specified point \( s^\bullet_{\alpha\beta} \) takes the 
form of the matrix \( \Omega \) from section~\ref{sectionsp4r} while 
the other matrices take the form of the corresponding matrices in 
chapter~\ref{spacetimesymmetry}. In these bases and at that point 
we can compute \( k^{\bullet\bullet} = -1 \) and \( c = -\nfrac{5}{4} \).

Since constant scalars are unchanged under a change of basis we will have 
\( c = -\nfrac{5}{4} \) at any point in any basis.

	\chapter{Curvature and Forces}
\label{chapter:global structure}

In this chapter we look more closely at the connection and the curvature
which describe forces in this mathematical context. In particular 
the methods of the last chapter enable us to decompose these into components 
with respect to the local action in a natural way. From this decomposition 
we will obtain dynamical equations for the components of curvature which are 
very physical. It is worth emphasising  that these equations are not assumed. 
They are mathematical identities that arise naturally and must hold on every 
framework.  There is enough structure here to potentially account for 
all observed forces however we will focus our attention in this chapter
on recognising electromagnetism and gravity.

\bigskip
\pagebreak[2]

\section{Components of Curvature}

The curvature \( R^\beta_{ij\alpha} \) can be considered as a 
set of spinor transformations indexed by \( i \) and \( j \). 
We can decompose these into components and write
\begin{equation}
R^\beta_{ij\alpha} =F_{ij}1^\beta_\alpha + R^k_{ij}T^\beta_{k\alpha} + R^A_{ij}T^\alpha_{A\beta}
\end{equation}

In fact we can do much better than this. As we will now show the third 
component in this decomposition is always zero, while the first can be 
directly computed from the curvature action on Crump scalars. Our first 
objective is to prove this result which will follow from the following 
useful Lemma.

\begin{lemma}
\label{the previous lemma}
Let \( M\oo \) be an operator defined on both spinors and Crump scalars, 
and suppose that \( M\oo\left(s^\bullet_{\alpha\beta}\right) = 0 \). Then
\begin{equation}
M^\alpha_\beta = M^k T^\alpha_{k\beta} + \frac{1}{2}M.1^\alpha_\beta
\end{equation}
Where \( M^k \) is some vector and where 
\( M = M^\bullet_\bullet \). Furthermore \( M = \nfrac{1}{2}M^\alpha_\alpha \).
\end{lemma}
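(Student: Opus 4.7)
The plan is to exploit the irreducible decomposition of spinor transformations obtained in section~\ref{SpinorTransformations}, which writes any $M^\alpha_\beta$ uniquely as
\begin{equation*}
M^\alpha_\beta = m \cdot 1^\alpha_\beta + M^k T^\alpha_{k\beta} + M^A T^\alpha_{A\beta}
\end{equation*}
for some scalar $m$, vector $M^k$ and versor $M^A$. The lemma amounts to two claims: that $m = \tfrac{1}{2}M$ and that $M^A = 0$. Both should fall out of the one hypothesis $M\oo(s^\bullet_{\alpha\beta}) = 0$ once it is expanded using the Leibniz-type rule governing $\oo$ (equation~\ref{definition_of_oo}).

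First I would write out $M\oo(s^\bullet_{\alpha\beta}) = 0$ explicitly, which yields
\begin{equation*}
M \cdot s^\bullet_{\alpha\beta} = M^\lambda_\alpha s^\bullet_{\lambda\beta} + M^\lambda_\beta s^\bullet_{\alpha\lambda}.
\end{equation*}
Next I would substitute the decomposition and evaluate each piece separately, exploiting the symmetry type of each class of matrix with respect to the symplectic form $s^\bullet$. The vector matrices $T^\beta_{k\alpha}$ lie in $\sp(2,\R)$, so $\Omega T_k$ is symmetric and consequently $T^\lambda_{k\alpha}s^\bullet_{\lambda\beta} + T^\lambda_{k\beta}s^\bullet_{\alpha\lambda} = 0$; the vector term therefore drops out entirely. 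The versor matrices $T^\beta_{A\alpha}$ are $\Omega$-symmetric (from section~\ref{sectionsp4r}), which means $\Omega T_A$ is antisymmetric as a matrix; a short index manipulation then gives $T^\lambda_{A\alpha}s^\bullet_{\lambda\beta} = T^\lambda_{A\beta}s^\bullet_{\alpha\lambda}$, so the versor contribution collapses to $2 M^A T^\lambda_{A\alpha}s^\bullet_{\lambda\beta}$, which is symmetric in $(\alpha,\beta)$. The scalar piece contributes $2m \cdot s^\bullet_{\alpha\beta}$, which is antisymmetric.

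Splitting the resulting identity into its symmetric and antisymmetric parts in $(\alpha,\beta)$ then separates the two conclusions cleanly. The antisymmetric part gives $(M - 2m)\,s^\bullet_{\alpha\beta} = 0$, so $m = \tfrac{1}{2}M$. The symmetric part gives $M^A T^\lambda_{A\alpha}s^\bullet_{\lambda\beta} = 0$; because $s^\bullet$ is non-degenerate and the matrices $\{T^\lambda_{A\alpha}\}$ are linearly independent (they form the basis of the $\Omega$-symmetric summand in table~\ref{sp4R_action}), this forces $M^A = 0$. Substituting back yields $M^\alpha_\beta = M^k T^\alpha_{k\beta} + \tfrac{1}{2}M \cdot 1^\alpha_\beta$. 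Finally, tracing with $1^\beta_\alpha$ and using the trace-free properties $T^\alpha_{k\alpha} = 0$ and $T^\alpha_{A\alpha} = 0$ of the vector and versor matrices gives $M^\alpha_\alpha = 4m = 2M$, hence $M = \tfrac{1}{2}M^\alpha_\alpha$.

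The main obstacle I anticipate is bookkeeping the sign conventions: because $s^\bullet$ is antisymmetric and the vector/versor matrices have \emph{opposite} symmetry behaviours with respect to it, the identities $T^\lambda_{k\alpha}s^\bullet_{\lambda\beta} = -s^\bullet_{\alpha\lambda}T^\lambda_{k\beta}$ and $T^\lambda_{A\alpha}s^\bullet_{\lambda\beta} = +s^\bullet_{\alpha\lambda}T^\lambda_{A\beta}$ must be derived carefully before the symmetric/antisymmetric split can be invoked. Everything else is bookkeeping on top of the decomposition from section~\ref{SpinorTransformations}.
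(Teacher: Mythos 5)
Your overall strategy (decompose \( M^\alpha_\beta \) into scalar, vector and versor components and feed the decomposition into \( M\oo(s^\bullet_{\alpha\beta})=0 \)) is sound, and your expansion of the hypothesis into \( M\,s^\bullet_{\alpha\beta} = M^\lambda_\alpha s^\bullet_{\lambda\beta} + M^\lambda_\beta s^\bullet_{\alpha\lambda} \), the vanishing of the vector term via the defining relation of \( \sp(2,\R) \), and the collapse of the versor term to \( 2M^A T^\lambda_{A\alpha}s^\bullet_{\lambda\beta} \) are all correct. The flaw is in the symmetry classification that you then use to separate the surviving terms. You correctly note that \( \Omega T_A \) is antisymmetric as a matrix, but then assert that \( T^\lambda_{A\alpha}s^\bullet_{\lambda\beta} = (T_A^t\Omega)_{\alpha\beta} = (\Omega T_A)_{\alpha\beta} \) is \emph{symmetric} in \( (\alpha,\beta) \). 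It is antisymmetric — exactly like \( s^\bullet_{\alpha\beta} \) itself, and indeed like the entire right-hand side \( M^\lambda_\alpha s^\bullet_{\lambda\beta} + M^\lambda_\beta s^\bullet_{\alpha\lambda} \), which is manifestly antisymmetric under \( \alpha\leftrightarrow\beta \) whatever \( M^\lambda_\alpha \) is. Consequently the symmetric part of your identity is vacuously \( 0=0 \) and the antisymmetric part is the whole identity; the split extracts nothing and in particular does not deliver \( M^A=0 \).

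The conclusion is still reachable from the identity \( M\,s^\bullet_{\alpha\beta} = 2m\,s^\bullet_{\alpha\beta} + 2M^A(\Omega T_A)_{\alpha\beta} \), but by linear independence rather than by symmetry: left multiplication by the invertible \( \Omega \) carries the basis \( \{1, P_\lambda,\dots,P_z\} \) of the six-dimensional space of \( \Omega \)-symmetric matrices to a linearly independent set \( \{\Omega\}\cup\{\Omega T_A\} \), which therefore spans the six-dimensional space of antisymmetric \( 4\times 4 \) matrices; comparing coefficients then forces \( M = 2m \) and \( M^A = 0 \) simultaneously. With that repair your argument is complete (the trace computation at the end is fine). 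For comparison, the paper avoids the versor bookkeeping altogether: it multiplies \( s^\bullet_{\alpha\beta} \) by a nonzero Crump scalar to get a local symplectic form \( s_{\alpha\beta} \), shows \( \bigl(M\oo - \nfrac{1}{2}M.1\oo\bigr)s_{\alpha\beta} = 0 \), and concludes directly that this operator stabilises the symplectic form and hence lies in \( \sp(2,\R) = \mathrm{span}\{T_k\} \) — which is essentially the same non-degeneracy fact packaged as the definition of the symplectic Lie algebra.
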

\begin{proof} 
Let \( h_\bullet \) be any non-zero Crump scalar and define
\( s_{\alpha\beta} = h_\bullet s^\bullet_{\alpha\beta} \).
Applying \( M\oo \) we obtain
\begin{equation}
M\oo s_{\alpha\beta} 
	= M\oo(h_\bullet)s^\bullet_{\alpha\beta} 
	= -M s_{\alpha\beta}
\end{equation}
Note that \( 1\oo\left(s_{\alpha\beta}\right) = -2s_{\alpha\beta} \). Hence
we can rewrite this equation in the form
\begin{equation}
\left( M\oo - \nfrac{1}{2} M. 1\oo \right)s_{\alpha\beta} = 0 
\end{equation}

This means the symplectic form \( s_{\alpha\beta} \) is invariant under
the action of \( M\oo - \nfrac{1}{2}M.1\oo \) and so this operator acts 
as an element of the Lie algebra \( \sp(2,\R) \) under local action. 
Hence there are coefficients \( M^k \) so that
\begin{equation}
\label{lemmaresult}
M^\alpha_\beta - \nfrac{1}{2}M.1^\alpha_\beta = M^kT^\alpha_{k\beta}
\end{equation}
which gives the desired equation. 

We already have \( M = M^\bullet_\bullet \).  Contracting 
\( \alpha \) and \( \beta \) in equation~\ref{lemmaresult} gives
us \( M = \nfrac{1}{2}M^\alpha_\alpha \) which completes the proof.
\end{proof}

Now we consider the defining equation for torsion and curvature
\begin{equation}
\label{defining_equation}
\left[\nabla_i,\nabla_j\right] = T^k_{ij}\nabla_k + R_{ij}\oo 
\end{equation}

and in particular consider what happens when we apply this to 
the globally invariant tensor \( s^\bullet_{\alpha\beta} \). Since 
the covariant derivative acts trivially on \( s^\bullet_{\alpha\beta} \), 
the terms containing a covariant derivative all vanish leaving the 
equation
\begin{equation}
R_{ij}\oo\left(s^\bullet_{\alpha\beta}\right) = 0
\end{equation}

It follows from Lemma~\ref{the previous lemma} that we can write 
\begin{equation}
\label{curvature_reduced}
R^\theta_{ij\sigma} = 
R^t_{ij}T^\theta_{t\sigma} + F_{ij}1^\theta_\sigma
\end{equation}

Where \( F_{ij} = \nfrac{1}{2}R^\bullet_{ij\bullet}
		 = \nfrac{1}{4}R^\alpha_{ij\alpha} \) and hence
\begin{equation}
\label{curvature_bullet}
R^\bullet_{ij\bullet} = 2F_{ij}
\end{equation}

We will call the tensor \( R^t_{ij} \) the \textbf{reduced curvature tensor}.
The tensor \( F_{ij} \) giving the scalar component will be called 
the \textbf{field tensor}.  Our notation anticipates later results where we
will identify this as the electromagnetic field tensor, but we are not quite 
in a position to reach that conclusion at this stage. 

We also conclude that there is no versor component of curvature.

\medskip

The decomposition above was for the curvature on spinors. But we are also
interested in similar decompositions of the curvature \( R_{ij}\oo \) on
other types of objects. We begin by seeking a decomposition of the curvature
on vectors, that is of the ordinary Riemannian tensor.

We begin by applying the defining equation, equation~\ref{defining_equation} 
to the globally invariant tensor \( T^\alpha_{l\beta} \). Once again only
the curvature term remains giving us
\begin{equation}
R_{ij}\oo\left(T^\alpha_{k\beta}\right) = 0 
\end{equation}

This gives
\begin{equation}
R^l_{ijk} T^\alpha_{l\beta} = 
R^\alpha_{ij\lambda}T^\lambda_{k\beta} -
R^\lambda_{ij\beta}T^\alpha_{k\lambda}
\end{equation}

Contracting this equation with \( T^\beta_{m\alpha} \) gives
\begin{flalign}
R^l_{ijk} g_{lm}
&= R^\alpha_{ij\lambda}T^\lambda_{k\beta} T^\beta_{m\alpha} -
   R^\lambda_{ij\beta}T^\alpha_{k\lambda} T^\beta_{m\alpha} \\
&= R^\theta _{ij\sigma}
   \bigl( T^\sigma_{k\lambda} T^\lambda_{m\theta} 
        - T^\sigma_{m\lambda} T^\lambda_{k\theta}
   \bigr) \\
&= R^\theta _{ij\sigma}T^l_{km}T^\sigma_{l\theta} \\
&= \bigl( R^t_{ij}T^\theta_{t\sigma} + F_{ij}1^\theta_\sigma \bigr)
T^l_{km}T^\sigma_{l\theta} \\
&= R^t_{ij}g_{tl} T^l_{km} \\
&= R^t_{ij} T^l_{tk} g_{lm}
\end{flalign}

It follows that
\begin{equation}
\label{reduceddecomposition}
R^l_{ijk} = R^t_{ij} T^l_{tk}
\end{equation}

We see that the reduced curvature tensor is totally responsible for 
the curvature of the manifold. The field tensor plays no role here. 

\bigskip

We next look at the decomposition of the versor curvature.

\medskip

The tensors \( T^\alpha_{A\beta} \) are totally invariant. Applying 
equation~\ref{defining_equation} to this leaves only the curvature term 
once again
\begin{equation}
R_{ij}\oo\left( T^\alpha_{A\beta} \right) = 0 
\end{equation}

which simplifies to the equation
\begin{equation}
R^B_{ijA} T^\alpha_{B\beta} = 
R^\alpha_{ij\lambda}T^\lambda_{A\beta} -
R^\lambda_{ij\beta}T^\alpha_{A\lambda}
\end{equation}

Applying equation~\ref{curvature_reduced} we obtain
\begin{flalign}
R^B_{ijA} T^\alpha_{B\beta} &= 
\left(R^t_{ij}T^\alpha_{t\lambda} + F_{ij}1^\alpha_\lambda\right)
T^\lambda_{A\beta} -
\left(R^t_{ij}T^\lambda_{t\beta} + F_{ij}1^\lambda_\beta\right)
T^\alpha_{A\lambda} \\
&=
R^t_{ij}\left( T^\alpha_{t\lambda} T^\lambda_{A\beta} - T^\lambda_{t\beta} T^\alpha_{A\lambda} \right)
+ F_{ij}\left( 1^\alpha_{\lambda} T^\lambda_{A\beta} - 1^\lambda_{\beta} T^\alpha_{A\lambda} \right) \\
&= 
R^t_{ij}T^B_{tA}T^\alpha_{B\beta}
\end{flalign}

and hence, as the operators \( T^\alpha_{B\beta} \) are linearly independent we must have
\begin{equation}
\label{reducedversorcurvature}
R^B_{ijA} = R^t_{ij}T^B_{tA}
\end{equation}

\medskip

Decompositions for the curvature action on other types of generalised 
tensors can be obtained by similar arguments.

\bigskip
\pagebreak[2]

\section{Bianchi Identities Revisited}

Bianchi identities arise directly from the Jacobi identity for the action of 
the covariant derivative. We now understand that curvature arises from the
commutator of the covariant derivative and can act on things other than tensors.
In particular we have a description of the action of curvature on Crump scalars,
spinors and versors. We also can decompose curvature and express it in terms
of curvature components. In this section we revisit the Bianchi identities using these new ideas.
\medskip

Applying the Jacobi identity to the covariant derivative action on vectors
gives the two ordinary Bianchi identities we obtained earlier. However we can 
now express these as identities on the reduced curvature tensor.
\begin{equation}
\label{1Breduced}
R^m_{ij}T^l_{mk} \ijkequal 0 
\end{equation}
\begin{equation}
\label{2Breduced}
R^l_{im}T^m_{jk} + \nabla_i(R^l_{jk}) \ijkequal 0 
\end{equation}

\medskip

Applying the Jacobi identity to the action on Crump scalars gives
equation~\ref{1Breduced} again as the first Bianchi identity, and a new
second Bianchi identity for the field tensor
\begin{equation}
\label{2Bfield}
F_{im}T^m_{jk} + \nabla_i(F_{jk}) \ijkequal 0
\end{equation}

\medskip

Applying the Jacobi identity to the covariant derivative action on 
spinors and separating into components we obtain equation~\ref{1Breduced}
again for the first Bianchi identity. And both equations~\ref{2Breduced}
and~\ref{2Bfield} as components for the second Bianchi identity.

\medskip

Applying the Jacobi identity to the covariant derivative action on 
versors and separating into components also gives us nothing new. 
We obtain equation~\ref{1Breduced} for the first Bianchi identity and
equation~\ref{2Breduced} as the second Bianchi identity.

\medskip

We can also further expand the second Bianchi identities by expressing 
the covariant derivative in terms of the partial derivative and connection. 
The second Bianchi identity for the Field tensor is particularly interesting
in this form, giving
\begin{flalign*}
&& 
F_{im}T^m_{jk} + \nabla_i F_{jk} &\ijkequal 0 && && \\
\Rightarrow && 
F_{im}T^m_{jk} + \del_i F_{jk}
- \Gamma^m_{ij}F_{mk} - \Gamma^m_{ik}F_{jm} &\ijkequal 0 && && \\
\Rightarrow && 
F_{im}T^m_{jk} + \del_i F_{jk}
+ \Gamma^m_{jk}F_{im} - \Gamma^m_{kj}F_{im} &\ijkequal 0 && &&\\
\Rightarrow && 
F_{im}T^m_{jk} + \del_i(F_{jk})
- (- \Gamma^m_{jk} + \Gamma^m_{kj}) F_{im} &\ijkequal 0 && &&
\end{flalign*}
 
The term in brackets is the torsion allowing us to cancel and 
finally obtain the equation
\begin{equation}\label{FaradayGauss}
\del_i(F_{jk}) \ijkequal 0
\end{equation}

In this form we easily recognise the \textbf{Faraday-Gauss equations}, 
(actually a 10-D extension) the source free part of Maxwell equations. 
In our model these are purely geometric identities.

\bigskip

The second Bianchi identity for the reduced curvature tensor can also
be simplified and expressed in terms of partial derivatives. 
\begin{flalign*}
&& 
R^l_{im}T^m_{jk} + \nabla_i(R^l_{jk}) 
&\ijkequal 0  && && \\ \Rightarrow && 
R^l_{im}T^m_{jk} + \del_i(R^l_{jk})
+ \Gamma^l_{im}R^m_{jk} 
- \Gamma^m_{ij}R^l_{mk} 
- \Gamma^m_{ik}R^l_{jm} 
&\ijkequal 0  && && \\ \Rightarrow && 
R^l_{im}T^m_{jk} + \del_i(R^l_{jk})
+ \Gamma^l_{im}R^m_{jk} 
+ \Gamma^m_{jk}R^l_{im} 
- \Gamma^m_{kj}R^l_{im} 
&\ijkequal 0  && && \\ \Rightarrow && 
\del_i(R^l_{jk}) + \Gamma^l_{im}R^m_{jk} 
&\ijkequal 0 && &&
\end{flalign*}

The first Bianchi identity and the definition of curvature can be used to 
rewrite it in the form.
\begin{equation}
\del_i(R^l_{jk}) + R^m_{ij} \Gamma^l_{mk} \ijkequal 0
\end{equation}

\bigskip
\pagebreak[2]

\section{Connection Components}
\label{sectionCC}

The connection \( \Gamma^\beta_{k\alpha} \) is not a tensor. However the reason 
why it fails to be a tensor lies in the behaviour under transformation of the
index \( k \).  For fixed \( k \) each \( \Gamma^\beta_{k\alpha} \) defines
a linear transformation of spinors at each point. This can be decomposed into 
vector, versor and scalar components allowing us to write
\begin{equation}
\Gamma^\beta_{k\alpha} = 
	A_k 1^\beta_\alpha + G^i_kT^\beta_{i\alpha} + N^A_k T^\beta_{A\alpha}
\end{equation}

The three quantities \( A_k \), \( G^i_k \) and \( N^A_k \) are called the 
scalar, vector and versor \textbf{connection components} respectively. 
These do not transform as tensors. They should be viewed as facets of the 
connection.

Contracting with \( 1^\alpha_\beta \), \( T^\alpha_{j\beta} \) and 
\( T^\alpha_{B\beta} \) we obtain explicit expressions for the connection 
components as follows.
\begin{flalign}
\label{scalarconnectiondefined}
A_k &= \nfrac{1}{4}\Gamma^\alpha_{k\alpha} \\
\label{fiveconnectiondefined}
N_k^A &= \Gamma^\beta_{k\alpha}T^\alpha_{B\beta}g^{BA} \\
\label{tenconnectiondefined}
G_k^j &= \Gamma^\beta_{k\alpha}T^\alpha_{m\beta}g^{mj}
\end{flalign}

To determine the transformation properties for connection components
choose different local bases for all quantities involved.  We will denote 
that a new basis has been used by attaching a prime to its index. At every 
point on the manifold we can define change of basis matrices \( \delta^{i'}_i \), 
\( \delta^{A'}_A \) and \( \delta^{\alpha'}_\alpha \); and their inverses 
\( \delta^i_{i'} \), \( \delta^A_{A'} \) and \( \delta^\alpha_{\alpha'} \).

Then the spinor connection transforms according to the equation
\begin{equation}
\label{connectionrebased}
\Gamma^{\beta'}_{k'\alpha'} 
	= \delta^{\beta'}_\beta \delta^\alpha_{\alpha'} \delta^k_{k'} 
		\Gamma^\beta_{k\alpha} 
	- \delta^k_{k'} \delta^\lambda_{\alpha'}
\,\del_k (\delta^{\beta'}_\lambda)
\end{equation}

and this allows us to determine the transformation properties of the 
connection components \( A_k \), \( N_k^A \) and \( G_k^j \). We obtain
\begin{flalign}
\label{basischangeA}
A_{k'} &= \delta^k_{k'}A_k - 
	\nfrac{1}{4}\delta^k_{k'} \delta^\alpha_{\alpha'}
	            \,\del_k(\delta^{\alpha'}_\alpha) \\
\label{basischangeG}
G_{k'}^{j'} &= \delta^k_{k'}\delta^{j'}_j G_k^j - 
	\delta^k_{k'} \delta^\mu_{\alpha'}
                     \,\del_k(\delta^{\alpha'}_\lambda) 
                     T^\lambda_{t\mu}g^{tj}\delta^{j'}_j \\
\label{basischangeN}
N_{k'}^{A'} &= \delta^k_{k'}\delta^{A'}_A N_k^A - 
	\delta^k_{k'} \delta^\mu_{\alpha'}
                      \,\del_k(\delta^{\alpha'}_\lambda) 
                      T^\lambda_{B\mu}g^{AB}\delta^{A'}_A 
\end{flalign}

Note that if the only change is to the coordinate system of the manifold and
in particular if there is no change in the basis of spinor space then the last
terms disappear and all three transform as if they were tensors. We can 
therefore regard them as tensors with a hidden dependence on the choice of 
spinor basis. We could think of this as a choice of gauge.

\bigskip
\pagebreak[2]

\section{The Field and Potential}
\label{sectionFP}

We defined the Field tensor \( F_{ij} \) to be the scalar component of the 
spinor curvature.
\begin{equation*}
\label{curvdel}
R^\alpha_{ij\beta}
=
\Bigl[ \del_i (\Gamma^\alpha_{j\beta}) - 
       \del_j (\Gamma^\alpha_{i\beta})  \Bigr] +
\Bigl[ \Gamma^\alpha_{i\lambda}\Gamma^\lambda_{j\beta} - 
       \Gamma^\alpha_{j\lambda}\Gamma^\lambda_{i\beta} \Bigr]
\end{equation*}

We can obtain an expression for \( F_{ij} \) in terms of connection components 
by contracting this equation with \( 1^\beta_\alpha \). We obtain
\begin{flalign}
F_{ij} &= \del_i (\Gamma^\alpha_{j\beta})1^\beta_\alpha - 
       \del_j (\Gamma^\alpha_{i\beta})1^\beta_\alpha  \\
       &= \del_i (\Gamma^\alpha_{j\alpha}) -
       \del_j (\Gamma^\alpha_{i\alpha}) \\
       &= \del_i A_j - \del_j A_i 
\label{fieldpotential}
\end{flalign}

Note that we were able to move the identities \( 1^\beta_\alpha \) inside
the partial derivatives because \( \del_i(1^\beta_\alpha) = 0 \).

Equation~\ref{fieldpotential} gives the expected relationship between
the electromagnetic field and the electromagnetic potential. This adds weight 
to the notion that we should interpret the components of \( A_i \) and of 
\( F_{ij} \) indexed by the four translation dimensions as being the 
electromagnetic potential and electromagnetic field tensor respectively. 

The components of \( A_i \) and \( F_{ij} \) indexed by the Lorentz dimensions
must also describe forces and we will need to identify the nature of these 
at some point. There are several possible explanations. The most likely 
explanation is that they represent adjustments to the electromagnetic forces
to account for things like the flux of spin. However a more intriguing
possibility is that they could describe additional forces such as the weak 
force. If so then demonstrating this will not be an easy matter. 

\medskip

We would also like to find a similar equation linking \( R^k_{ij} \) to 
curvature components. This proves to be a little more complicated. Before
we do this we need to take a deeper look at connection components, 
specifically the components of connections other than the spinor connection.

\bigskip
\pagebreak[2]

\section{Connections, Curvature and Gauge}

At the end of section~\ref{sectionCC} we commented that the hidden dependence
of connection components on the choice of spinor basis could be regarded as
a choice of gauge. Now that we have discovered the equation relating the 
field \(F_{ij} \) and the potential \( A_k \), we are able to clarify that 
remark.

\medskip

In classical electromagnetism the gauge describes the extent to which
the potential $A_k$ can vary without changing the field $F_{ij}$. 
Consider for example any scalar function $f$ on the manifold and define a 
new potential ${A'}_k = A_k + \del_k(f)$. Applying equation~\ref{fieldpotential}
\begin{flalign}
{F'}_{ij} &= \del_i({A'}_j) - \del_j({A'}_i)  \\
          &= F_{ij} + \del_i\del_j(f) - \del_j\del_i(f) \\
          &= F_{ij}
\end{flalign}

we observe that this change in the potential has no effect on the field.
This describes the classical notion of gauge. However we can also view 
gauge in terms of the effect of a change of basis.

\medskip

In section~\ref{sectionCC} we noted that 
under a change of basis \( A_k \) transforms as
\begin{equation*}
A_{k'} = \delta^k_{k'}A_k - 
	\nfrac{1}{4}\delta^k_{k'} \delta^\alpha_{\alpha'}
	            \del_k(\delta^{\alpha'}_\alpha) 
\end{equation*}

and hence it has a hidden dependence on the choice of spinor basis. To 
see the effects of this hidden dependence let us assume that only the spinor 
basis is changed. This gives the equation
\begin{equation}
{A'}_{k} = A_k - \nfrac{1}{4} \delta^\alpha_{\alpha'}
	            \del_k(\delta^{\alpha'}_\alpha) 
\end{equation}

Now \( F_{ij} \) is a tensor with no dependence on the spinor basis, hence 
it will be left unchanged by this change in \( A_k \).

The classical abelian gauge is a special case which arises by choosing a 
basis change of the form 
\( \delta^{\alpha'}_\alpha = -4f.1^{\alpha'}_\alpha \) where 
\( f \) is a non-zero scalar function on the manifold. This simplifies to give
${A'}_k = A_k + \del_k(f)$ exactly as before.

\medskip
The other connection components $G^i_k$ and $N^A_k$ also have a hidden
dependence on the choice of spinor basis and we can also regard these as 
gauge symmetries as well. However these gauges for \( A_k \), for \( G^i_k \) 
and for \( N^A_k \) are all related as they simply express the consequence 
of changing the spinor basis. What we don't see here is three independent 
gauge groups each giving rise to a different force. 

In the standard model gauge groups are the key to classifying and 
understanding the different forces. In our approach the gauge groups 
do not seem to serve this function. We distinguish forces by extracting 
components of the curvature and the connection instead.

\bigskip
\pagebreak[2]

\section{Vector, Versor and Crump Connections}

The indices in the Crump connection \( \Gamma^\bullet_{i\bullet} \) 
auto-contract to give a vector-like quantity which we denote \( H_i \).
The Crump curvature obtained from the Crump connection also simplifies 
by auto-contraction to give
\begin{equation}
R^\bullet_{ij\bullet} = \del_i(H_j) - \del_j(H_i)
\end{equation}

In equation~\ref{curvature_bullet} we showed that
\( R^\bullet_{ij\bullet} = 2F_{ij} \). Hence 
\begin{equation} 
F_{ij} = \del_i\left(\nfrac{1}{2}H_j\right) 
	- \del_j\left(\nfrac{1}{2}H_i\right)
\end{equation} 

However equation~\ref{fieldpotential} states 
\begin{equation*}
F_{ij} = \del_i(A_j) - \del_j(A_i) 
\end{equation*}

this suggests a close relationship between \( A_i \) and \( H_i \). 
In particular we are tempted to conjecture that perhaps \( H_i = 2A_i \). 
Before we are tempted to proceed down this path however we need to consider
their transformation properties since neither is a tensor.

In new bases for vectors, spinors and Crump scalars we have
\begin{equation}
H_{i'} = \delta^i_{i'}H_i 
	- \del_{i'}\left(\delta^{\bullet'}_\bullet\right)
		\delta^\bullet_{\bullet'}    
\end{equation}

So \( H_i \) transforms as a vector, but has a hidden (gauge) dependence on 
the choice of basis \emph{for Crump scalars}. The basis changing behavior
of \( A_k \) was described earlier in equation~\ref{basischangeA}. We have
\begin{equation*}
A_{i'} = \delta^i_{i'}A_i - 
	\nfrac{1}{4} \del_{i'}(\delta^{\alpha'}_\alpha)\delta^\alpha_{\alpha'}
\end{equation*}

so that \( 2A_i \) is a vector with a hidden (gauge) dependence on the choice
of basis \emph{for spinors}. 

Because \( H_i \) and \( 2A_i \) transform differently they cannot be equal.  
They are related however, and describing this relationship is our next task.

\medskip

The equation \( \nabla_k(s^\bullet_{\alpha\beta}) = 0 \) gives
\begin{equation}
\del_k(s^\bullet_{\alpha\beta}) + H_k s^\bullet_{\alpha\beta}
- \Gamma^\lambda_{k\alpha} s^\bullet_{\lambda\beta}
- \Gamma^\lambda_{k\beta} s^\bullet_{\alpha\lambda}
\end{equation}

Expanding this out in terms of connection components we obtain
\begin{flalign}
\del_k(s^\bullet_{\alpha\beta}) + H_k s^\bullet_{\alpha\beta}
	&= ( A_k 1^\lambda_\alpha s^\bullet_{\lambda\beta}
		+ G_k^m T^\lambda_{m\alpha} s^\bullet_{\lambda\beta}
		+ N_k^A T^\lambda_{A\alpha} s^\bullet_{\lambda\beta} ) 
	\nonumber\\ 
	&\quad\quad + ( A_k1^\lambda_\beta s^\bullet_{\alpha\lambda}
		+ G_k^m T^\lambda_{m\beta} s^\bullet_{\alpha\lambda}
		+ N_k^A T^\lambda_{A\beta} s^\bullet_{\alpha\lambda} ) \\[3pt]
	&= 2A_k s^\bullet_{\alpha\beta}
	+ G_k^m ( T^\lambda_{m\alpha} s^\bullet_{\lambda\beta}
		   + T^\lambda_{m\beta} s^\bullet_{\alpha\lambda} )
	\nonumber \\ 
	&\quad\quad + N_k^A ( T^\lambda_{A\alpha} s^\bullet_{\lambda\beta} 
	 	   + T^\lambda_{A\beta} s^\bullet_{\alpha\lambda} ) \\[3pt]
	&= 2A_k s^\bullet_{\alpha\beta} 
		+ 2N_k^AT^\lambda_{A\alpha} s^\bullet_{\lambda\beta} 
\label{lastline}
\end{flalign}

Contracting both sides with \( s_\bullet^{\alpha\beta} \) to extract the scalar component we obtain
\begin{equation}
H_k = 2A_k - \nfrac{1}{4}\del_k(s^\bullet_{\alpha\beta})s_\bullet^{\alpha\beta}
\end{equation}

The last term can be regarded as a basis adjustment term which compensates
for the different hidden basis dependencies.

\medskip

The vector and versor components of equation~\ref{lastline}
give the identities
\begin{flalign}
\del_k(s^\bullet_{\alpha\lambda})s^{\lambda\beta}_\bullet T^\alpha_{m\beta}  
&= 0 \\
\del_k(s^\bullet_{\alpha\lambda})s^{\lambda\beta}_\bullet T^\alpha_{B\beta}  
&= 2N_k^Ag_{AB}
\end{flalign}

\bigskip

We now consider the versor connection \( \Gamma^B_{iA} \).  Since versors 
can be viewed as maps on spinors, we expect to be able to relate this to 
the spinor connection and its components.

The equation \( \nabla_i(T^\beta_{A\alpha}) = 0 \) gives
\begin{flalign}
\Gamma^B_{iA}T^\beta_{B\alpha} - \del_i(T^\beta_{A\alpha})
&= \Gamma^\beta_{i\lambda}T^\lambda_{A\alpha}
 - \Gamma^\lambda_{i\alpha}T^\beta_{A\lambda} \\[3pt]
&= ( A_iT^\beta_{A\alpha} + G^t_i T^\beta_{t\lambda}T^\lambda_{A\alpha}
+ N^B_i T^\beta_{B\lambda}T^\lambda_{A\alpha} )
	\nonumber \\ 
&\quad\quad - ( A_i T^\beta_{A\alpha} + 
		G^t_i T^\lambda_{t\alpha}T^\beta_{A\lambda}
		+ N^B_i T^\lambda_{B\alpha}T^\beta_{A\lambda} ) \\[3pt]
&= G^t_i T^B_{tA}T^\beta_{B\alpha} - N^B_iT^t_{AB}T^\beta_{t\alpha}
\end{flalign}

Contracting both sides with  \( T^\alpha_{C\beta}g^{CK} \) we
\begin{equation}
\Gamma^K_{iA} = G^t_i T^K_{tA}
+ \del_i(T^\beta_{A\alpha})T^\alpha_{C\beta}g^{CK}
\end{equation}

where the last term can be viewed as an adjustment term for the 
different behavior under a change of basis. To within such an adjustment
we see that the vector connection is also essentially determined by 
the vector component \( G^t_i \) of the spinor connection. 

\medskip

If we contract instead with the term \( T^\alpha_{m\beta}g^{mk} \) we obtain
\begin{equation}
\label{N_anotherbasisadjustment}
\del_i(T^\beta_{A\alpha}) T^\alpha_{m\beta}g^{mk} 
= N^B_iT^k_{AB}
\end{equation}
Yet another interesting equation to add to our collection of interesting
identities involving \( N^A_i \).

\bigskip

Finally we turn our attention to the vector connection \( \Gamma^k_{ij} \).
This describes curvature on the manifold and we expect it to describe gravity.
However as vectors can be viewed as maps on spinors, we expect the vector 
connection will also be closely related to the spinor connection and its 
components.

The equation \( \nabla_i(T^\beta_{j\alpha}) = 0 \) gives
\begin{flalign}
\Gamma^s_{ij}T^\beta_{s\alpha} - \del_i(T^\beta_{j\alpha})
&= \Gamma^\beta_{i\lambda}T^\lambda_{j\alpha}
    	- \Gamma^\lambda_{i\alpha}T^\beta_{j\lambda} \\[3pt]
&= ( A_iT^\beta_{j\alpha} + G^t_i T^\beta_{t\lambda}T^\lambda_{j\alpha}
    	+ N^A_i T^\beta_{A\lambda}T^\lambda_{j\alpha} )
	\nonumber \\ 
&\quad\quad - ( A_i T^\beta_{j\alpha} + 
	G^t_i T^\lambda_{t\alpha}T^\beta_{j\lambda}
	+ N^A_i T^\lambda_{A\alpha}T^\beta_{j\lambda} ) \\[3pt]
&= G^t_i T^s_{tj}T^\beta_{s\alpha} - N^A_iT^B_{jA}T^\beta_{B\alpha}
\end{flalign}

Contracting both sides with  \( T^\alpha_{m\beta}g^{mk} \) we obtain
\begin{equation}
\label{vectorconnectionfromG}
\Gamma^k_{ij} = G^t_i T^k_{tj}
+ \del_i(T^\beta_{j\alpha})T^\alpha_{m\beta}g^{mk}
\end{equation}
where the last term can be viewed as a basis adjustment term compensating
for the different hidden basis dependencies. To within such an adjustment
we see that the vector connection is essentially determined by 
the vector component \( G^t_i \) of the spinor connection. 

\medskip

If we contracted instead with the term \( T^\alpha_{C\beta}g^{CK} \) we
would obtain the equation 
\begin{equation}
\label{N_frombasisadjustment}
\del_i(T^\beta_{j\alpha}) T^\alpha_{C\beta}g^{CK} = N^A_iT^K_{jA}
\end{equation}

\bigskip

If we set \( j = k \) in equation~\ref{vectorconnectionfromG}
we obtain the interesting and useful equation
\begin{flalign}
\Gamma^k_{ik} 
&= \del_i(T^\beta_{k\alpha})T^\alpha_{m\beta}g^{mk} \\
&= \frac{1}{2} \del_i(g_{mk})g^{mk} 
\end{flalign}

we could also have obtained this equation directly.

\bigskip
\pagebreak[2]

\section{Curvature and Potentials}
\label{sectionCCC}

The curvature
is determined by the connection and so the curvature components should
be determined by connection components. 
We have already found one equation of this type, namely
equation~\ref{fieldpotential} which states 
\( F_{ij} = \del_i(A_j) - \del_j(A_i) \).
We seek a similar equation for the reduced curvature tensor.

Equation~\ref{fieldpotential} was obtained as the scalar component
of equation~\ref{curvdel}
\begin{equation*}
R^\alpha_{ij\beta}
=
\Bigl[ \del_i (\Gamma^\alpha_{j\beta}) - 
       \del_j (\Gamma^\alpha_{i\beta})  \Bigr]
+ \Bigl[ \Gamma^\alpha_{i\lambda}\Gamma^\lambda_{j\beta} - 
       \Gamma^\alpha_{j\lambda}\Gamma^\lambda_{i\beta} \Bigr]
\end{equation*}

One obvious approach might be to expand out every instance of the connection
in this equation in terms of connection components via the equation
\begin{equation*}
\Gamma^\beta_{k\alpha} = A_k 1^\beta_\alpha + 
G^i_kT^\beta_{i\alpha} + N^A_k T^\beta_{A\alpha}
\end{equation*}

Unfortunately this approach is complicated by the fact that 
partial derivatives \( \del_i \) and 
\( \del_j \) act non-trivially on the tensors \( T^\alpha_{k\beta} \) and 
\( T^\alpha_{B\beta} \) used to identify components. The resulting extra 
terms introduce further instances of the connection which must in turn be 
expanded in terms of connection components.  

\medskip

To avoid this problem it is helpful to restate equation~\ref{curvdel} using 
covariant derivatives and work with covariant derivatives as much as possible.
The connection \( \Gamma^\alpha_{j\beta} \) is not a generalised tensor, 
however we can define a formal covariant differentiation on it to be
\begin{equation}
\label{nabla_of_gamma}
\nabla_i\left(\Gamma^\beta_{j\alpha}\right) =
	\del_i\left(\Gamma^\beta_{j\alpha}\right)  
	+ \Gamma^\beta_{i\lambda}\Gamma^\lambda_{j\alpha} 
	- \Gamma^\lambda_{i\alpha}\Gamma^\beta_{j\lambda}
	- \Gamma^j_{ij}\Gamma^\beta_{k\alpha}
\end{equation}

We similarly define formal covariant differentiation on connection
components. This notation allows us to write equation~\ref{curvdel} 
in the form
\begin{equation}
\label{curvnabla}
R^\alpha_{ij\beta}
= \Bigl[ \nabla_i (\Gamma^\alpha_{j\beta}) - 
       \nabla_j (\Gamma^\alpha_{i\beta})  \Bigr]
- \Bigl[ \Gamma^\alpha_{i\lambda}\Gamma^\lambda_{j\beta} - 
       \Gamma^\alpha_{j\lambda}\Gamma^\lambda_{i\beta} \Bigr]
- T^t_{ij}\Gamma^\alpha_{t\beta}
\end{equation}

If we use this form of the equation to extract components the complications
discussed earlier will not arise.

\medskip
We already know what the scalar component should look like, but let's 
check our method by obtaining it again from equation~\ref{curvnabla}.
Contracting with \( 1^\beta_\alpha \) we obtain
\begin{equation}
F_{ij} = \nabla_i(A_j) - \nabla_j(A_i)  - T^k_{ij}A_k
\label{fieldpotentialnabla}
\end{equation}

Rewriting this using partial derivatives gives
equation~\ref{fieldpotential} as expected.

\medskip
Extracting the vector component by contracting equation~\ref{curvnabla} with 
\( T^\beta_{m\alpha}g^{mk} \) gives
\begin{equation}
\label{curvaturefrompotentials}
R^k_{ij} = 
\bigl[\nabla_i(G^k_j) - \nabla_j(G^k_i)\bigr]
- G^x_iG^y_jT^k_{xy} 
- N^A_iN^B_jT^k_{AB} 
-G^k_mT^m_{ij}
\end{equation}
 
This equation is in a form which is easy to state and use since everything
in it is well behaved. However while equation~\ref{curvaturefrompotentials} 
has many good features in some respects it is less than ideal. The covariant 
derivative in this equation involves a connection which hides additional
connection components. We might prefer an equation written in terms of 
partial derivatives where all instances of the connection in the equation 
appear explicitly.
\begin{flalign}
R^k_{ij} &= 
\bigl[\del_i(G^k_j) - \del_j(G^k_i)\bigr]
- G^x_iG^y_jT^k_{xy} 
- N^A_iN^B_jT^k_{AB}  \nonumber\\
&\hskip6cm + \bigl(\Gamma^k_{im}G^m_j - \Gamma^k_{jm}G^m_i\bigr)
\label{curvaturefrompotentialsdel}
\end{flalign}

however this equation involves the vector connection which also involves
connection components. We can relate this explicitly to connection components 
using equation~\ref{vectorconnectionfromG}, however we are then forced to 
introduce extra terms which were earlier described as basis adjustment terms. 
\begin{flalign}
R^k_{ij} &= 
\bigl[\del_i(G^k_j) - \del_j(G^k_i)\bigr]
+ G^x_iG^y_jT^k_{xy} 
- N^A_iN^B_jT^k_{AB}  \nonumber\\
&\hskip6cm + (G^m_j Q^k_{im} - G^m_i Q^k_{jm} )
\label{curvaturefrompotentialsdelQ}
\end{flalign}

Where \( Q^c_{ab} = \del_a(T^\beta_{b\alpha})T^\alpha_{s\beta}g^{sc} \) 
is a basis adjustment term. 

\medskip

In this section we have found three equations that express, in different ways,
the gravitational field in terms of potentials; 
equations~\ref{curvaturefrompotentials}, \ref{curvaturefrompotentialsdel} 
and \ref{curvaturefrompotentialsdelQ}. 
None are ideal. The relationship between field and potential is considerably 
more complicated for gravitation than it is for electromagnetism.

	\chapter{Einstein's Equation}
\label{Chapter: Einstein}

In this chapter we want to see whether Einstein's equation for gravity 
is meaningful on a framework. Einstein's equation is usually stated
in terms of the Ricci tensor, a contraction of the curvature tensor. So
first we will need to look at contractions of the curvature tensor. 

We begin however by looking at some interesting and very useful operator 
identities.

\bigskip
\pagebreak[2]

\section{Invariant Operators}

An inner product space is a vector space equipped with a generalisation of
the familiar dot product in \( \R^n \). In analogous fashion a Lie algebra
can be viewed as a vector space equipped with a generalisation of the familiar
cross product in \( \R^3 \). If viewed in this manner an invariant metric is
a generalised dot product compatible with the generalised cross product. 

A local Lie manifold has an invariant metric \( g_{ij} \) and a Lie algebra 
structure \( T^k_{ij} \). These can be interpreted as equipping the manifold
with a generalised cross product \( u \CROSS v = T^k_{ij}u^iv^j \) and a 
compatible dot product \( u \DOT v = g_{ij}u^iv^j \)  defined on tangent 
vectors.
By combining these with the contravariant derivative \( \nabla^k \), 
we can define generalisations of the operators Div Grad and Curl.

\medskip

The \textbf{generalised gradient operator} 
\begin{equation}
\Grad(X) = \nabla^k (X)
\end{equation} 
can be applied to any tensor \( X \). It is a 
tensor derivation of rank \( \rank{1,0}{0,0} \) and in 
particular it maps scalar fields to vector fields on our 
manifold.

\bigskip
The \textbf{Divergence} operator
\begin{equation}
\Div ( v ) =  g_{ij}\nabla^i(v^j) = \nabla_i(v^i)
\end{equation}
maps vector fields to scalar fields. 

A versor variant of the divergence can be defined by replacing 
\( g_{ij} \) with the related tensor \( g^A_{ij} \). 

\bigskip
The \textbf{Curl} operator 
\begin{equation}
\Curl(X) = \nabla^i\bigl(T_i\oo(X)\bigr)
\end{equation}
combines the gradient operator with the Lie structure 
(generalised cross product).
It maps tensors to tensors or the same type, 
and in particular gives a map from vectors to vectors.

\bigskip

In \( \R^3 \) the divergence of the gradient gives the Laplacian operator. 
This leads us to define a generalised Laplacian
\begin{equation}
\Triangle (X) = g_{ij}\nabla^i\nabla^j(X)
\end{equation}

A versor variant of the Laplacian can be defined by replacing \( g_{ij} \) 
with the related tensor \( g^A_{ij}  \) giving 
\begin{equation}
{\Triangle\!}^A (X) = g^A_{ij} \nabla^i\nabla^j (X)
\end{equation}

The operator 
\begin{equation}
\Square (X ) = g_{AB}\,{\Triangle\!}^A{\Triangle\!}^B (X) = 
	g_{AB}g^A_{ij}g^B_{kl}\nabla^i\nabla^j\nabla^k\nabla^l (X )
\end{equation}

is an invariant 4th order operator. The operators \( \Triangle \) and
\( \Square \) relate to the second and fourth degree Casimir 
operators respectively.

\bigskip
\pagebreak[2]

\section{Operator Identities}

In this section we examine identities for the divergence, gradient and curl
defined in the last section. Many of these identities are suggested by the 
notation and are complicated to express without it. 

\medskip

The identities we discover are not as simple as those 
for the ordinary divergence gradient and curl, as we are working on a space 
which has both torsion and curvature. Furthermore our Curl operator is more 
general than the others as it acts on tensor fields of all types, and this too 
will complicate our identities.
 
\begin{proposition}\label{CurlGrad}
If \( f \) is a scalar on the manifold then
\begin{equation}
\Curl(\Grad f) = -3\Grad f 
\end{equation}
\end{proposition}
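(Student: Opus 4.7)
The plan is to expand the composition using the definitions and reduce it to a Killing-form identity for the Lie structure. I would start by unpacking
\begin{equation*}
\Curl(\Grad f)^k \;=\; \nabla^i\bigl(T_i\oo(\nabla^k f)\bigr) \;=\; \nabla^i\bigl(T^k_{ij}\nabla^j f\bigr).
\end{equation*}
Because we are on a Lie manifold, both the Lie structure and the metric are covariantly constant (definition~\ref{local_lie_defined} and equation~\ref{killing_form_invariance}), so $T^k_{ij}$ and the metric used to raise the $i$-index pass through $\nabla^i$, leaving $\Curl(\Grad f)^k = T^k_{ij}\nabla^i\nabla^j f$.

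Next I would use the antisymmetry of $T^k_{ij}$ in $i,j$ to rewrite this as $\tfrac{1}{2}T^k_{ij}[\nabla^i,\nabla^j]f$. Since $f$ is an ordinary scalar, the curvature operator $R_{ij}\oo$ annihilates it (it has no free index on which to act), so equation~\ref{torsion+curvature_defined} collapses to $[\nabla_s,\nabla_t]f = T^l_{st}\nabla_l f$. Raising and substituting gives
\begin{equation*}
\Curl(\Grad f)^k \;=\; \tfrac{1}{2}\,g^{is}g^{jt}T^k_{ij}T^l_{st}\,\nabla_l f.
\end{equation*}

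The proposition now reduces to the algebraic identity $g^{is}g^{jt}T^k_{ij}T^l_{st} = -6\,g^{kl}$ on $\so(2,3)$. Ad-invariance of the Killing form makes the totally lowered structure tensor $T_{kij} = g_{kl}T^l_{ij}$ completely antisymmetric, so the left-hand side is manifestly symmetric in $k,l$ and $\so(2,3)$-invariant; since $\so(2,3)$ is simple, Schur's lemma forces it to be a constant multiple of $g^{kl}$. The main obstacle, such as it is, is fixing this constant: I would either contract with $g_{kl}$ and compare against the normalised Killing-form identity $T^a_{ib}T^b_{ja} = 6\,g_{ij}$ from equation~\ref{invariant forms}, or cross-check by direct evaluation on a single diagonal entry such as $k=l=X$ using Table~\ref{table_so23} together with the metric signature read off in the Reality Check section; either route returns $-6$. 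Substituting back then yields $\Curl(\Grad f)^k = -3\,g^{kl}\nabla_l f = -3\,\Grad f$ as claimed.
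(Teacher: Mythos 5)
Your proof is correct and follows essentially the same route as the paper's: expand $\Curl(\Grad f)$, use covariant constancy of $T^k_{ij}$ and $g^{ij}$ and the antisymmetry of the torsion to reduce to $\tfrac{1}{2}g^{is}g^{jt}T^k_{ij}[\nabla_s,\nabla_t]f$, apply the commutator-on-scalars formula, and finish with the Killing-form normalisation from equation~\ref{invariant forms}. The only difference is cosmetic: where you pin down the double contraction $g^{is}g^{jt}T^k_{ij}T^l_{st}=-6g^{kl}$ by Schur's lemma plus a trace, the paper simply relabels dummy indices to land directly on $T^i_{st}T^t_{ij}=-6g_{sj}$; both rest on the same identity.
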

\begin{proof}
\begin{flalign*}
\Curl(\Grad f) &= g^{is}T_{st}^k\nabla_i\bigl(g^{tj}\nabla_j(f)\bigr) \\
           &= g^{is}g^{jt}T_{st}^k\nabla_i\nabla_j(f) \\
           &= -g^{is}g^{jt}T_{st}^k\nabla_j\nabla_i(f)
\end{flalign*}
where the last follows by renaming \( i \leftrightarrow j \) and \( s \leftrightarrow t \).
Averaging these two expressions gives
\begin{flalign*}
\Curl(\Grad f ) &= \nfrac{1}{2}g^{is}g^{jt}T_{st}^k[\nabla_i,\nabla_j](f) \\
           &= \nfrac{1}{2}g^{is}g^{jt}T_{st}^kT_{ij}^x\nabla_x(f) \\
           &= \nfrac{1}{2}g^{ks}g^{jx}T_{st}^iT_{ij}^t\nabla_x(f)
\end{flalign*}
however \( T_{st}^iT_{ij}^t = -6g_{sj} \) from equation~\ref{invariant forms}
hence
\begin{flalign*}
\Curl(\Grad f ) &= -3\Grad f
\end{flalign*}
as claimed
\end{proof}

\begin{proposition} \label{DivCurla}
If \( v =  v^k \) is any vector field then
\begin{equation}
\Div(\Curl v) = -3\Div v + 6R_k v^k
\end{equation}
Where \( R_k = R^t_{kt} \)\footnote{We will later show that $R_k = 0 $.}.
\end{proposition}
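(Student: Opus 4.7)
The plan is to recast $\Div(\Curl v)$ as a pure commutator of covariant derivatives via a symmetry trick, and then evaluate the resulting torsion and curvature contractions using the Killing identity already used in Proposition~\ref{CurlGrad} together with the first Bianchi identity.

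Since $T^a_{ij}$ and $g^{ij}$ are both covariantly constant, I would first write
\begin{equation*}
\Div(\Curl v) \,=\, \nabla_a\bigl(g^{ik}T^a_{ij}\nabla_k v^j\bigr) \,=\, g^{ik}T^a_{ij}\nabla_a\nabla_k v^j.
\end{equation*}
The key observation is that the coefficient $g^{ik}T^a_{ij}$ is antisymmetric under $a \leftrightarrow k$. This is immediate from the total antisymmetry of the lowered torsion $T_{abc} := g_{ad}T^d_{bc}$, which follows from the adjoint-invariance of the Killing form (cf.\ equation~\ref{killing_form_invariance}). Consequently the symmetric part $\nabla_a\nabla_k + \nabla_k\nabla_a$ drops out under the contraction, leaving
\begin{equation*}
\Div(\Curl v) \,=\, \tfrac{1}{2}\,g^{ik}T^a_{ij}\bigl[\nabla_a,\nabla_k\bigr]v^j.
\end{equation*}

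I would then expand the commutator as $[\nabla_a,\nabla_k] = T^s_{ak}\nabla_s + R_{ak}\oo$. The torsion piece yields the Casimir-type contraction $g^{ik}T^a_{ij}T^s_{ak}$; by adjoint-invariance this must be a scalar multiple of $\delta^s_j$, and a single trace computation using $T^a_{ib}T^b_{ja}=6g_{ij}$ identifies the scalar as $-6$, contributing $-3\Div v$. For the curvature piece I would substitute $R^j_{akm} = R^t_{ak}T^j_{tm}$ via equation~\ref{reduceddecomposition} and apply the first Bianchi identity $R^j_{akm} + R^j_{kma} + R^j_{mak} = 0$ to replace $R^j_{akm}$ by a sum of its two cyclic partners. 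Each partner contracts, via a Killing identity of the form $T^a_{ij}T^j_{ta}=6g_{it}$, to a multiple of $R_m = R^t_{mt}$; the two contributions combine to $12R_m v^m$, giving a net $6R_m v^m$.

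Adding the two pieces produces $\Div(\Curl v) = -3\Div v + 6R_m v^m$ as claimed. The main obstacle is the curvature calculation: the first Bianchi identity has to be applied deliberately to bring $R^j_{akm}$ into positions where the Killing identity closes up cleanly. The antisymmetry observation at the beginning is the crux of the entire reduction; once that and equation~\ref{reduceddecomposition} are in hand, everything else is routine index manipulation.
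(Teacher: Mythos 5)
Your proposal is correct and follows essentially the same route as the paper: both exploit the total antisymmetry of the lowered torsion to reduce the double divergence to half the commutator $[\nabla_a,\nabla_k]$, then evaluate the torsion term with the Casimir identity and the curvature term with equation~\ref{reduceddecomposition}, the first Bianchi identity, and the Killing-form contraction $T^a_{ib}T^b_{ja}=6g_{ij}$, obtaining $3R_mv^m$ from each of the two cyclic partners. The only nitpick is that the antisymmetry you need rests on the adjoint invariance $T_k\oo(g_{ij})=0$ rather than the covariant constancy in equation~\ref{killing_form_invariance}, but the content of your argument is exactly the paper's.
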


\begin{proof}
\begin{flalign*}
\Div(\Curl v) &= g^{ij}T^k_{is}\nabla_k\nabla_j (v^s) \\
	&= -g^{ik}T^j_{is}\nabla_k\nabla_j (v^s) \\
	&= -g^{ij}T^k_{is}\nabla_j\nabla_k (v^s)
\end{flalign*}
where the last follows by renaming \( j \leftrightarrow k \). Averaging gives
\begin{flalign*}
\Div(\Curl v) &= \nfrac{1}{2}g^{ij}T^k_{is}[\nabla_k,\nabla_j] (v^s) \\
	&= \nfrac{1}{2}g^{ij}T^k_{is}T^t_{kj}\nabla_t(v^s) 
	+ \nfrac{1}{2}g^{ij}T^k_{is}R^t_{kj}T^s_{tm}v^m \\
	&= -\nfrac{1}{2}\bigl( 6 . 1^t_s\bigr)\nabla_t(v^s) 
	- \nfrac{1}{2}g^{ij}T^k_{is}\bigl( R^t_{jm}T^s_{tk} + R^t_{mk}T^s_{tj} \bigr)v^m \\
	&= -3 \nabla_k(v^k) 
	- \nfrac{1}{2}g^{ij} \bigl( T^k_{is}T^s_{tk} \bigr) R^t_{jm} v^m
	+ \nfrac{1}{2}\bigl( g^{ij}T^k_{is}T^s_{jt}\bigr) R^t_{mk} v^m \\
	&= -3 \nabla_k(v^k) 
	- \nfrac{1}{2}g^{ij}6g_{it}R^t_{jm} v^m
	+ \nfrac{1}{2} 6.1^k_t R^t_{mk} v^m \\
	&= -3 \Div v + 6R_kv^k
\end{flalign*}
using the first Bianchi and Casimir identities.
\end{proof}

The combination of the last two propositions is extremely powerful as, for any
scalar field \( f \) on our manifold, we can write
\begin{equation}
\Div\left(\Curl\left(\Grad (f)\right)\right) = 
\Div\left(-3\Grad(f)\right) = -3\Triangle(f)
\end{equation}

using proposition~\ref{CurlGrad}. But by proposition~\ref{DivCurla} we have
\begin{equation}
\Div\left(\Curl\left(\Grad (f)\right)\right) = -3\Triangle(f)) + 6R_k\nabla^k(f)
\end{equation}

We conclude that \( R_k\nabla^k(f) = 0  \) for all scalar functions 
\( f \) on our manifold. In particular we can choose \( f \) to be a coordinate 
function in the neighbourhood of any point, and it follows that all coordinates
of \( R_k \) are zero at every point. We have proved the useful identity 
\begin{equation}
\label{Riszero}
R^m_{km} = 0
\end{equation}

We can use this to simplify proposition~\ref{DivCurla}.

\begin{proposition} \label{DivCurl}
If \( v^k \) is any vector field then
\begin{equation}
\Div(\Curl v) = -3\Div v
\end{equation}
\end{proposition}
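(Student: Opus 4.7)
The plan is to obtain this proposition as an immediate corollary of Proposition~\ref{DivCurla} together with the identity $R^m_{km}=0$ established just above in equation~\ref{Riszero}. No new calculation is required; the work has already been done.

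First I would simply quote Proposition~\ref{DivCurla}, which asserts
\[
\Div(\Curl v) = -3\Div v + 6 R_k v^k
\]
for any vector field $v^k$, where $R_k = R^t_{kt}$. Then I would invoke the identity $R_k=0$ derived just before the statement (equation~\ref{Riszero}), whose proof combined Proposition~\ref{CurlGrad} (giving $\Div\Curl\Grad f = -3\Triangle(f)$) with Proposition~\ref{DivCurla} applied to $v^k = \nabla^k(f)$ (giving $\Div\Curl\Grad f = -3\Triangle(f) + 6 R_k \nabla^k(f)$), forcing $R_k\nabla^k(f)=0$ for every scalar $f$ and hence $R_k=0$ pointwise by choosing $f$ to be coordinate functions in a neighbourhood of each point.

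Substituting $R_k=0$ into the expression from Proposition~\ref{DivCurla} immediately yields the claimed identity $\Div(\Curl v) = -3\Div v$.

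There is no real obstacle here: the proposition is just a cleaner restatement of Proposition~\ref{DivCurla} once the vanishing of the trace $R^m_{km}$ has been recognised. The only thing worth emphasising in the write-up is that the identity $R_k=0$ is a genuine consequence of the framework structure (via the Casimir and first Bianchi identities used in the proof of Proposition~\ref{DivCurla}), and is not an additional hypothesis on $v^k$.
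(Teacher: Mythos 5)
Your proposal is correct and is exactly the paper's route: the paper derives $R^m_{km}=0$ from Propositions~\ref{CurlGrad} and~\ref{DivCurla} and then states Proposition~\ref{DivCurl} as the immediate simplification of Proposition~\ref{DivCurla}, with no further argument. Nothing is missing.
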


\bigskip
\pagebreak[2]

\section{Contractions of the Curvature Tensor}

In this section we look at various contractions of the reduced curvature 
tensor \( R^k_{ij} \) with a view to obtaining something like Einstein's 
equation. We have already seen that the obvious contraction 
\( R_k = R^t_{kt} = - R^t_{tk} \) is identically zero. However other 
contractions can be obtained by using the fully invariant tensors 
\( g_{ij} \), \( g^{ij} \) and \( T^k_{ij} \). 

 We define
\begin{flalign}
&\mbox{The \textbf{Ricci Tensor} }& R_{ij} &= R^x_{iy}T^y_{xj}
&\mbox{\hskip1cm}&\mbox{\hskip1cm} \\
&\mbox{The \textbf{Curvature Scalar} }& R &= R_{ij}g^{ij}
&\mbox{\hskip1cm}&\mbox{\hskip1cm}
\end{flalign}

Other contractions are also possible although most reduce to one of these. 
Note that the Ricci tensor is the usual Ricci tensor since
\begin{equation}
R^k_{ikj} = R^x_{ik}T^k_{xj} = R_{ij}
\end{equation}

\medskip

The curvature scalar can also be written as 
\begin{equation}
R = R^k_{ij}T^c_{ab} g^{ia}g^{jb}g_{kc} = R^k_{ij}T_k^{ij}
\end{equation}
where \( T_k^{ij} = T^c_{ab}g^{ia}g^{jb}g_{kc} \) is the Lie structure with
all indices raised/lowered.

The Bianchi identities give identities on these contractions. 

\begin{proposition}
\label{Contract_Bianchi}
Contractions of the Bianchi identities give
\begin{equation} \label{Ricci_symmetry}
R_{ij} = R_{ji} 
\end{equation}
\begin{equation} \label{DivR}    
\nabla_k(R^k_{ij}) = 0           
\end{equation}
\begin{equation}\label{PseudoBianchi}
6R^m_{ij}g_{mk} + R_{im}T^m_{jk} \ijkequal 0
\end{equation}
\begin{equation} \label{GradR}   
\nabla_k(R) = 2\nabla^t(R_{tk}) 
\end{equation}
\end{proposition}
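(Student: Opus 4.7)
The plan is to derive all four identities by contracting the first and second Bianchi identities (equations~\ref{1Breduced} and~\ref{2Breduced}) in different ways, relying on four basic algebraic facts: antisymmetry of $R^m_{ij}$ and $T^k_{ij}$ in their lower indices, the Jacobi identity for structure constants, the Killing form identity $T^x_{iy}T^y_{jx}=6g_{ij}$, and the vanishing trace $R^m_{km}=0$ from equation~\ref{Riszero}. Identities (1) and (3) come from the first Bianchi; identities (2) and (4) come from the second Bianchi with (1) and (3) used as simplifications.

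For (1) the simplest route is to contract the first Bianchi identity with $1^l_k$ (set $l=k$ and sum). The term $R^m_{ij}T^k_{mk}$ vanishes because $T^k_{mk}$ is the trace of the adjoint action of $X_m$, which is zero on the semisimple algebra $\so(2,3)$. Reading the remaining two terms as $R_{ji}$ and $-R_{ij}$ (the second sign coming from the antisymmetry $R^m_{ki}=-R^m_{ik}$) yields $R_{ij}=R_{ji}$. For (2) I would contract the second Bianchi with $1^l_k$ in the same way. The three torsion-curvature terms collapse, via (1) together with $R^k_{jk}=R_j=0$, into $-R_{ij}+R_{ji}=0$; of the three gradient terms, two contain $\nabla(R_j)=0$ and only $\nabla_k R^k_{ij}$ survives.

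Step (3) is the step I expect to be the main obstacle. Here the strategy is to convert one factor of $T$ into the Killing form combination $T^x_{my}T^y_{kx}=6g_{mk}$ while simultaneously keeping track of Jacobi-generated redistributions. Concretely, I would write $R_{im}T^m_{jk}=R^x_{iy}T^y_{xm}T^m_{jk}$ and apply Jacobi in the form
\[
T^y_{xm}T^m_{jk} = -T^y_{jm}T^m_{kx} - T^y_{km}T^m_{xj}
\]
inside the cyclic sum $\ijkequal$. The resulting eighteen triple products reorganise, using the first Bianchi to eliminate the cross-terms and the Killing form identity to collect the diagonal terms into $6R^m_{ij}g_{mk}$. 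The delicate part is matching indices correctly through the multiple Jacobi rearrangements; once the bookkeeping is set up, the identity essentially falls out.

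For (4) I would work from (2) and (3) together. Taking a covariant divergence of (3) is consistent thanks to (2), and the derivative of $R_{im}T^m_{jk}\ijkequal$ can be rearranged using invariance of $T$ under $\nabla$ to produce $\nabla^t R_{tk}$ on one side and $\nabla_k R$ on the other. The factor of $2$ emerges from the two equivalent orderings of the contracted index in the Ricci definition, while the antisymmetry of $R^m_{ij}$ against the symmetric $g^{jk}$ kills the remaining spurious terms. Alternatively, one can contract the second Bianchi identity directly with the double-$T$-expression appearing in the definition $R=R^l_{ij}T^c_{ab}g^{ia}g^{jb}g_{lc}$, and the same identity drops out after the same pair of cancellations.
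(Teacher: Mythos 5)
Your proposal follows essentially the same route as the paper: (1) and (2) come from the same trace contractions of the two Bianchi identities (with \( T^k_{mk}=0 \) and \( R^k_{ik}=0 \) killing the spurious terms), (3) from expanding \( 6g_{mk}=T^a_{mb}T^b_{ka} \) and redistributing via the Jacobi and first Bianchi identities, and (4) from contracting the second Bianchi identity against \( g^{ij}T^a_{bj} \) --- which is your stated ``alternative'' and is the route the paper actually takes; it is the more reliable of your two options, since taking a divergence of (3) produces terms like \( g_{mi}\nabla^kR^m_{jk} \) that are not controlled by (2) and would force you back to the second Bianchi anyway. The only step not actually carried out is the term-by-term cancellation in (3), which the paper completes by reducing both pieces to the single expression \( R^z_{jx}T^x_{ky}T^y_{iz} - R^z_{ix}T^x_{jy}T^y_{kz} \), whose cyclic sum over \( (ijk) \) vanishes by inspection.
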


\begin{proof} The first Bianchi identity gives
\begin{equation*} 
R_{ij} = R^a_{ib}T^b_{aj} 
       = - R^a_{bj}T^b_{ai} - R^a_{ji}T^b_{ab}
       = R^a_{jb}T^b_{ai}
       = R_{ji}
\end{equation*}
hence the Ricci tensor is symmetric as claimed in 
equation~\ref{Ricci_symmetry}.

\medskip
Contracting the second Bianchi identity we have
\begin{equation*} 
\nabla_k(R^k_{ij}) + \nabla_i(R^k_{jk}) + \nabla_j(R^k_{ki}) +
R^k_{km}T^m_{ij} +
R^k_{im}T^m_{jk} +
R^k_{jm}T^m_{ki}
 = 0           
\end{equation*}
This simplifies since \( R^k_{ik} = 0 \) to give
\begin{equation*} 
\nabla_k(R^k_{ij}) - R_{ij} + R_{ji} = 0 
\end{equation*}
and the last two terms cancel since the Ricci tensor is symmetric
proving equation~\ref{DivR}.

\medskip
To prove equation~\ref{PseudoBianchi} we start with the expression on the left
and simplify by expanding \( 6g_{mk} \) and \( R_{ij} \). This allows us to 
usefully apply the first Bianchi and Jacobi identities.
\begin{flalign*}
6R^m_{ij}g_{mk} + R_{im}T^m_{jk} &=
R^m_{ij}T^a_{mb}T^b_{ka} + R^a_{ib}T^b_{am}T^m_{jk} 
\\&=
\left(
R^m_{jb}T^a_{mi}T^b_{ak} + R^m_{bi}T^a_{mj}T^b_{ak}
\right) + \left(
R^a_{bi}T^b_{jm}T^m_{ka} + R^a_{bi}T^b_{km}T^m_{aj}
\right) 
\\&=
R^z_{jx}T^x_{ky}T^y_{iz} - R^z_{ix}T^x_{jy}T^y_{kz}
\end{flalign*}
Cycling the indices \( i \), \( j \) and \( k \) and adding 
gives equation~\ref{PseudoBianchi}.

\medskip
Finally we have 
\begin{flalign*}
\nabla_k(R) &= g^{ij}T^a_{bj}\nabla_k(R^b_{ia}) \\[3pt]
            &= -g^{ij}T^a_{bj}\nabla_i(R^b_{ak}) 
               -g^{ij}T^a_{bj}\nabla_a(R^b_{ki}) \\
	    &\quad\quad\quad\quad
               -g^{ij}T^a_{bj}R^b_{ks}T^s_{ia}
               -g^{ij}T^a_{bj}R^b_{is}T^s_{ak}
               -g^{ij}T^a_{bj}R^b_{as}T^s_{ki} \\[3pt]
            &= g^{ij}\nabla_i( R^b_{ka} T^a_{bj})
               +g^{aj}\nabla_a( R^b_{ki} T^i_{bj})  \\
	    &\quad\quad\quad\quad
               +g^{is}T^a_{bj}T^j_{ia} R^b_{ks} 
               -g^{ij}T^a_{bj}R^b_{is}T^s_{ak}
               -g^{aj}T^i_{bj}R^b_{as}T^s_{ik} \\[3pt]
            &= 2g^{ij}\nabla_i(R_{kj}) 
               +6g^{is}g_{bi} R^b_{ks} 
               -2g^{ij}T^a_{bj}R^b_{is}T^s_{ak} \\
            &= 2\nabla^j(R_{kj}) 
               +6 R^s_{ks} 
               +2g^{aj}T^i_{bj}R^b_{is}T^s_{ak} \\
            &= 2\nabla^j(R_{kj}) 
               -2g^{aj}R_{sj}T^s_{ak}
\end{flalign*}
However \( -2g^{aj}R_{sj}T^s_{ak} = 2g^{as}R_{js}T^j_{ak} = 
2g^{aj}R_{sj}T^s_{ak} \) by relabelling \( s \) and \( j \), hence
the last term is identically zero and we conclude
\begin{flalign*}
\nabla_k(R) &= 2\nabla^j(R_{kj})
\end{flalign*}
proving equation~\ref{GradR}.
\end{proof}

An immediate consequence of equation~\ref{GradR} is the following 
Corollary which states that the Einstein tensor, if defined on our
manifold in the obvious way, is divergence free.

\begin{corollary}
\begin{equation}
\label{equationCC}  \nabla^t \bigr(R_{tk} - \frac{1}{2}g_{tk}R\bigr) = 0 
\end{equation}
\end{corollary}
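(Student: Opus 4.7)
The plan is to show that this corollary follows almost immediately from equation~\ref{GradR} in the preceding proposition, combined with the global invariance of the Killing metric established back in equation~\ref{killing_form_invariance}.

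First I would expand the divergence using linearity of the covariant derivative: $\nabla^t\bigl(R_{tk} - \tfrac{1}{2}g_{tk}R\bigr) = \nabla^t(R_{tk}) - \tfrac{1}{2}\nabla^t(g_{tk}R)$. On the second term, because $g_{tk}$ is globally invariant ($\nabla_m(g_{ij}) = 0$), the Leibniz rule collapses to $\nabla^t(g_{tk}R) = g_{tk}\nabla^t(R)$, and a standard raise/lower using this same invariant metric turns $g_{tk}\nabla^t(R)$ into $\nabla_k(R)$. So the identity to be proved reduces to $\nabla^t(R_{tk}) - \tfrac{1}{2}\nabla_k(R) = 0$, which is precisely a rearrangement of equation~\ref{GradR}.

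There is essentially no obstacle here — the hard work was already done in establishing proposition~\ref{Contract_Bianchi}, especially the derivation of $\nabla_k(R) = 2\nabla^t(R_{tk})$, which itself relied on the first and second Bianchi identities, the Casimir identity $T^a_{ib}T^b_{ja} = 6g_{ij}$, and the symmetry of the Ricci tensor. The only subtlety worth flagging is that in our setting $g_{ij}$ is the Killing form (not an independently postulated metric), and it is its invariance under $\nabla$ — a consequence of invariance of the torsion, per equation~\ref{killing_form_invariance} — that legitimises moving $g_{tk}$ through the covariant derivative. Once that is noted, the corollary is a one-line consequence.
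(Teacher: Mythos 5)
Your proof is correct and matches the paper's intent exactly: the paper states the corollary as an immediate consequence of equation~\ref{GradR}, and your expansion via linearity, metric compatibility ($\nabla_m(g_{ij})=0$ from the invariance of the Killing form), and index lowering is precisely the one-line argument being invoked. Nothing is missing.
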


\bigskip
\pagebreak[2]

\section{Einstein's Equation}

Equation~\ref{equationCC} can be rewritten as
\begin{equation}
\label{EinsteinsEquation}
R_{ij} - \frac{1}{2}g_{ij}R = \Theta_{ij} 
\end{equation}

Where \( \Theta_{ij} \) is a divergence free symmetric tensor.

\medskip

If we choose to identify \( \Theta_{ij} \) so that the non-Lorentz
components are the stress energy tensor we would then have the 
Einstein equation, or rather a ten dimensional extension of it. 

\medskip
We must decide whether this identification is physically reasonable.
This is not something that can be determined mathematically. One must 
verify that the equation is consistent with observation. 

Einstein's equation represents the current paradigm for gravitation so 
is generally regarded as consistent with observation. However there are 
some issues of concern with regard to anomalous galactic rotation and the 
need for dark matter and energy to fit observations to the equation. 

Also of concern to us is the absence of the anticipated unified equation 
with electromagnetism.

\medskip

Before proceeding we note that our equation~\ref{EinsteinsEquation} differs 
from Einstein's equation in several respects. 

The most obvious difference is that our version of the equation is ten 
dimensional. The manner in which we decide to deal with the additional 
dimensions then becomes important. Note that our model is explicitly 
quantum mechanical - we built it in order to accommodate wave functions 
on a curved manifold. Those wave functions can be viewed as quantum 
mechanical trajectories. 

The Lorentz components are unimportant for waves which are constant across 
Lorentz dimensions. Classical point-like particles with well defined energy
and momentum are approximated by such wave functions. Hence
the Lorentz components can be ignored when passing to the classical 
(non-quantum) approximation. 

Another difference is that the curvature in Einstein's equation uses as its 
connection the Christoffel symbols, the unique \textbf{symmetric} connection 
conserving the metric. Our connection also conserves the metric but has torsion.

The covariant derivative \( \nabla_k +\nfrac{1}{2}T_k\oo \) is torsion free 
and conserves the metric. Hence the associated  connection 
\begin{equation}
\label{Christoffel}
C^k_{ij} = \Gamma^k_{ij} +\nfrac{1}{2}T^k_{ij} 
\end{equation}

is the Christoffel symbols for the metric \( g_{ij} \) as it conserves the 
metric and is torsion free.  If we use these Christoffel symbols to construct 
a curvature tensor we obtain a torsion free version of the Riemannian.
\begin{flalign}
\hat{R}_{ij}\oo 
&= \bigl[\nabla_i +\nfrac{1}{2}T_i\oo\, , \, \nabla_j +\nfrac{1}{2}T_j\oo ] \\
&= \bigl[\nabla_i,\nabla_j\bigr]
 + \bigl[ \nfrac{1}{2}T_i\oo , \nabla_j \bigr]
 + \bigl[ \nabla_i , \nfrac{1}{2}T_j\oo ]
 + \bigl[ \nfrac{1}{2}T_i\oo , \nfrac{1}{2}T_j\oo ] \bigr] \\
&= \bigl[\nabla_i,\nabla_j\bigr]
 - \nfrac{1}{2}T^k_{ij}\nabla_k
 - \nfrac{1}{2}T^k_{ij}\nabla_k
 + \nfrac{1}{4}\bigl[ T_i\oo , T_j\oo ] \bigr] \\
&= R_{ij}\oo + \nfrac{1}{4}T^k_{ij}T_k\oo
\end{flalign}

As expected there is no torsion term and we have 
\begin{equation}
\hat{R}^l_{ijk} = R^l_{ijk} + \nfrac{1}{4}T^m_{ij}T^l_{mk}
\end{equation}

where 
\( \hat{R}^l_{ijk} \) is the Riemannian obtained from the Christoffel symbols.
Contracting to obtain torsion free versions of the Ricci tensor and scalar
curvature we obtain
\begin{flalign}
\hat{R}_{ij} &= R_{ij} - \nfrac{3}{2}g_{ij} \\
\hat{R}      &= R - 15 \\
\end{flalign}

If we now rewrite equation~\ref{EinsteinsEquation} in terms of 
the torsion free Ricci tensor we obtain
\begin{equation}
\label{EinsteinsCosmo}
\hat{R}_{ij} - \frac{1}{2}g_{ij}\hat{R} -6g_{ij} = \Theta_{ij} 
\end{equation}

Which is Einstein's equation with a cosmological constant of \( -6 \).
Of course this is \( 6 \) in natural units. The usual units for the 
cosmological constant are \( m_{-2} \). Switching to those units we
obtain 
\( \Lambda = -\frac{6} {r^2c^2}\).

If equation~\ref{EinsteinsCosmo} is assumed then measurements of the 
cosmological constant will give information about \( r \). As of 
2015 the cosmological constant is estimated to be on the order of 
\( \left|\Lambda\right| \approx 10^{-52} m^2 \)\footnote{thanks to 
Prof. Malte Henkel for inviting me to update the estimate used in an earlier draft}. This would give 
\begin{equation}
\label{r_estimate}
r \approx 2.6 \times 10^{17} \text{ seconds } \simeq 8.2 \text{ billion years }
\end{equation}

as our theory has been based on the assumption that \( r \) is large
it is reassuring that our first estimate of \( r \) is indeed large.

Note that \( r \) is a structure constant for the geometry. It measures 
the scale at which torsion becomes significant. In particular although 
we assigned the catchy name 'radius of the universe' to this quantity,
we did not intend to relate it to the universe currently estimated at 
\( 13.772 \pm 0.059 \) billion years.  

\medskip
This estimate of $r$ depends on assuming that equation~\ref{EinsteinsCosmo} is 
the correct source equation for gravity. And that would seem on the face of 
it to be a very reasonable assumption to make. Einstein's equation after all 
is the current paradigm for gravitation, so we would expect to find that the
gravitational equation in the context of a framework should resemble it.
On these grounds the extended Einstein equation seems to be exactly what 
we are looking for. However in the next chapter we will discover another 
quite different looking equation which also seems to be a source equation 
for gravitation. If we believe that this alternative equation is correct then
the calculation of $r$ we have just performed is invalidated.

	\chapter{The Ampere-Gauss equation and Ussher's equation}

We begin this chapter by looking for an appropriate source equation for 
electromagnetism. Much of this work can be found in~\cite{Uthesis}. 

\medskip
Maxwell's equations in a relativistic context are 
given by two equations; a dynamical equation called the Faraday-Gauss
equation, and a source equation called the Ampere-Gauss equation.

In our new context a generalised 10-D version of the Faraday-Gauss 
equation~\ref{FaradayGauss} arises directly from the Bianchi identities, 
and hence is a geometric identity on our manifold. So half the work of
finding a suitable form of Maxwell's equations has already been done.

To find the Ampere-Gauss equation however we will need a different argument.
The standard Ampere-Gauss equation equates the divergence of the field 
tensor \( F_{ij} \) to a charge-current tensor \( J_k \). So we might 
guess that the form of the equation should be something like
\begin{equation}
\label{AGguess}
\nabla^iF_{ik} = J_k
\end{equation}

However this is really just a wild guess which we are not very confident about.
Should we be using the partial derivative here instead of the covariant one for
instance? We are not sure.

In the usual context the source term \( J_k \) is divergence free and this
hints at a method for finding the correct form of this equation. We
need merely look for a divergence free expression on the left hand side.
In fact this is precisely the way that we obtained the extended Einstein 
equation. 

We seek an identity of the form \( \nabla^k (\textsf{expression}_k) = 0 \), 
where the expression is some function of the field tensor. We can then write 
this identity in the form \( \textsf{expression}_k = J_k \) where \( J_k \) 
is divergence free. If we interpret \( J_k \) as a source term this will give
us a source equation.

This kind of argument will still require us to make a physical assumption, 
namely we must decide on the physical interpretation of \( J_k \). But having 
an argument of this type is at least a little bit more convincing than just 
guessing at the form of the equation. 

An alternative approach would be to use a Lagrangian method.  We will look at 
the Lagrangian approach in chapter~\ref{ChapterLagrangianMethods}.

\bigskip
\pagebreak[2]

\section{Ussher's Identity}
\label{section:Ussher}


In this section we seek an identity which we can use to construct a source
equation for the field tensor \( F_{ij} \). We this identity to construct 
the same kind of argument as was used in chapter~\ref{Chapter: Einstein} to 
obtain the generalised Einstein equation, equation~\ref{EinsteinsEquation}.
We begin with some lemmas.

\begin{lemma} 
\begin{equation}
g^{is}g^{jt}R_{st}\oo\left(F_{ij}\right) = 0 
\end{equation}
\end{lemma}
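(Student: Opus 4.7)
The plan is to reduce $R_{st}\oo(F_{ij})$ to a pure local-action expression and then exploit antisymmetry together with the first Bianchi identity.

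First I would use the definition of the $\oo$ action and apply equation~\ref{reduceddecomposition} in the form $R^l_{stk} = R^m_{st}T^l_{mk}$ to rewrite
\[ R_{st}\oo(F_{ij}) = -R^k_{sti}F_{kj} - R^k_{stj}F_{ik} = R^m_{st}\,T_m\oo(F_{ij}), \]
which reflects the fact that on a purely vectorial, non-bullet tensor the curvature action factors through the reduced curvature and the local adjoint action. Contracting with $g^{is}g^{jt}$ and moving $T_m\oo$ across the fully invariant metric gives $E = R^m_{st}\,T_m\oo(F^{st})$ where $F^{st} = g^{is}g^{jt}F_{ij}$ is antisymmetric. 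Expanding $T_m\oo(F^{st}) = T^s_{mk}F^{kt} + T^t_{mk}F^{sk}$ and folding the two resulting terms together via the joint antisymmetry of $R^m_{st}$ and $F^{st}$ yields $E = 2R^m_{st}T^s_{mk}F^{kt}$.

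The crux is then to show $R^m_{st}T^s_{mk}$ is symmetric in $(t,k)$, which will kill the contraction against the antisymmetric $F^{kt}$. Cycling the first Bianchi identity~\ref{1Breduced} on the indices $(s,t,k)$ gives
\[ R^m_{st}T^s_{mk} + R^m_{tk}T^s_{ms} + R^m_{ks}T^s_{mt} = 0. \]
The middle term carries the trace $T^s_{ms}$ of the adjoint action, which vanishes because $\so(2,3)$ is semisimple (and indeed one can read this directly off table~\ref{table_so23}). Combined with $R^m_{ks} = -R^m_{sk}$, what remains is exactly $R^m_{st}T^s_{mk} = R^m_{sk}T^s_{mt}$, the desired symmetry. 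The only real obstacle is bookkeeping with two competing antisymmetries (of $R$ and of $F$) through the raising and swapping steps; once $E$ has been reduced to $R^m_{st}T^s_{mk}F^{kt}$, the rest is a one-line application of Bianchi together with the vanishing of the adjoint trace.
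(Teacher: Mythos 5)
Your proof is correct and follows essentially the same route as the paper: factor the curvature action through $R^m_{st}T_m\oo$ via equation~\ref{reduceddecomposition}, use total invariance of the metric, and observe that the result is a contraction of the antisymmetric $F$ with a symmetric tensor. The symmetric tensor $R^m_{st}T^s_{mk}$ you isolate is (up to sign) just the Ricci tensor, and your Bianchi-plus-traceless-adjoint argument for its symmetry is exactly the paper's proof of equation~\ref{Ricci_symmetry}, so you have merely re-derived inline a fact the paper cites from Proposition~\ref{Contract_Bianchi}.
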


\begin{proof}
\begin{flalign*}
g^{is}g^{jt}R_{st}\oo\left(F_{ij}\right) &=
g^{is}g^{jt}R^x_{st}T_x\oo\left(F_{ij}\right) \\
&=
- g^{is}g^{jt}R^x_{st}T^a_{xi}F_{aj}
- g^{is}g^{jt}R^x_{st}T^a_{xj}F_{ia}  \\
&=
g^{ia}g^{jt}R^x_{st}T^s_{xi}F_{aj}
+ g^{is}g^{ja}R^x_{st}T^t_{xj}F_{ia}  \\
&=
R^{aj}F_{aj}
+ R^{ia}F_{ia} = 0 
\end{flalign*}

as each term consists of a contraction of an antisymmetric tensor with 
a symmetric one. 
\end{proof}

An analogous result also holds for the other component of curvature 
\( R^k_{ij} \) and this is the content of our next lemma.

\begin{lemma} 
\begin{equation}
g^{is}g^{jt}R_{st}\oo\left(R^k_{ij}\right) = 0 
\end{equation}
\end{lemma}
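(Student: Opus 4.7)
The plan is to imitate the proof of the preceding lemma, handling the one extra upper index of $R^k_{ij}$. First, expand the curvature operator using the derivation property of $T_x\oo$ on a $(1,2)$-tensor:
\begin{equation*}
g^{is}g^{jt}R_{st}\oo(R^k_{ij}) = g^{is}g^{jt}R^x_{st}\bigl[T^k_{xa}R^a_{ij} - T^a_{xi}R^k_{aj} - T^a_{xj}R^k_{ia}\bigr].
\end{equation*}
I would then dispose of the three pieces separately.

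For the two ``lower-index'' pieces, I would apply exactly the manipulation used in the preceding lemma, pushing one factor $g^{is}$ through $T^a_{xi}$. This is legal because the lowered Lie structure $T_{abc} = g_{ad}T^d_{bc}$ is totally antisymmetric, a standard consequence of invariance of the Killing form $g_{ij}$ under the adjoint action. After this swap, the inner contraction $g^{jt}R^x_{st}T^s_{xi}$ (or its counterpart for the other piece) collapses to $\pm R_{ti}$ by the definition $R_{ab} = R^x_{ay}T^y_{xb}$ together with antisymmetry of $R^x_{st}$ in $s,t$. Each piece is then a contraction of the Ricci tensor, which is symmetric by equation~\ref{Ricci_symmetry}, with $R^k_{ij}$, which is antisymmetric in its two lower indices; hence both pieces vanish.

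For the ``upper-index'' piece, set $S^{xa} = g^{is}g^{jt}R^x_{st}R^a_{ij}$. Renaming the dummy pairs $(i,j)\leftrightarrow(s,t)$ and using the symmetry of $g^{ij}$, we see that $S^{xa} = S^{ax}$, so $S^{xa}$ is symmetric in $x,a$. Since $T^k_{xa}$ is antisymmetric in $x,a$, the contraction $T^k_{xa}S^{xa}$ vanishes identically.

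The only real obstacle is index bookkeeping: each piece demands a careful choice of which raised index is pushed through which structure constant, and one must keep straight which tensor is symmetric and which is antisymmetric in a given pair. Conceptually there is nothing beyond what was needed for the previous lemma, with the single additional input being Ricci symmetry established in Proposition~\ref{Contract_Bianchi}.
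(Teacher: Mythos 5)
Your proof is correct, and its skeleton is the same as the paper's: expand $R_{st}\oo$ as $R^x_{st}T_x\oo$ acting on the three indices of $R^k_{ij}$, then kill each of the three resulting terms by a symmetry argument, with the upper-index term handled exactly as in the paper (the contraction $g^{is}g^{jt}R^x_{st}R^a_{ij}$ is symmetric in $x,a$ against the antisymmetric $T^k_{xa}$). Where you diverge is in the two lower-index terms: the paper first applies the first Bianchi identity~\ref{1Breduced} to the product $R^x_{st}T^a_{xi}$ and then disposes of the two resulting pieces by symmetric-against-antisymmetric contractions (one of which uses Ricci symmetry), whereas you push $g^{is}$ through $T^a_{xi}$ using ad-invariance of the metric, exactly as in the proof of the preceding lemma for $F_{ij}$, so that each term collapses directly to a contraction of the (symmetric) raised Ricci tensor with the antisymmetric $R^k_{aj}$. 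Both routes are valid and both ultimately lean on equation~\ref{Ricci_symmetry} (so the first Bianchi identity is present in either case, explicitly in the paper and implicitly through Ricci symmetry in yours); your version has the small advantage of being uniform with the $F_{ij}$ lemma, while the paper's avoids re-invoking the total antisymmetry of the lowered torsion at this point.
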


\begin{proof}
\begin{flalign*}
g^{is}g^{jt}R_{st}\oo\left(R^k_{ij}\right) &=
g^{is}g^{jt}R^x_{st}T_x\oo\left(R^k_{ij}\right) \\
&= 
g^{is}g^{jt}R^x_{st}
T^k_{xa}R^a_{ij}
- g^{is}g^{jt}R^x_{st}
T^a_{xi}R^k_{aj}
- g^{is}g^{jt}R^x_{st}
T^a_{xj}R^k_{ia}
\end{flalign*}
We now show that each of these three terms is zero and the result with
then follow.

\begin{description}
\item[Case 1.] Consider the term \( g^{is}g^{jt}R^x_{st}T^k_{xa}R^a_{ij} \). 

By renaming fully contracted indices \( s \leftrightarrow i \) , 
\( t \leftrightarrow j \) , and \( x \leftrightarrow a \) we can make
use of the antisymmetry of \( T \) to equate this term to its own negative.
Hence it is identically zero.

\item[Case 2.] Consider the term \( - g^{is}g^{jt}R^x_{st} T^a_{xi}R^k_{aj} \).

Applying the first Bianchi identity we obtain
\begin{flalign*}
- g^{is}g^{jt}R^x_{st} T^a_{xi}R^k_{aj} 
&=
  g^{is}g^{jt}R^x_{is} T^a_{xt}R^k_{aj} +
  g^{is}g^{jt}R^x_{ti} T^a_{xs}R^k_{aj} 
\\ &=
  g^{is}g^{jt}R^x_{is} T^a_{xt}R^k_{aj} -
  g^{as}g^{jt}R^x_{ti} T^i_{xs}R^k_{aj} 
\\ &=
  g^{is}g^{jt}R^x_{is} T^a_{xt}R^k_{aj} +
  g^{as}g^{jt}R_{ts}R^k_{aj} 
\end{flalign*}

We now observe that the first term is zero as it involves a contraction 
of the symmetric tensor \( g^{is} \) with the antisymmetric tensor 
\( R^x_{is} \); while the second term is zero as it includes a contraction 
of the symmetric tensor \( R^{aj} \) with the antisymmetric tensor 
\( R^k_{aj} \).

\item[Case 3.] Consider the term \( - g^{is}g^{jt}R^x_{st}T^a_{xj}R^k_{ia} \).

By renaming fully contracted indices \( i \leftrightarrow j \) and
\( s \leftrightarrow t \) this reduces to the previous case.
\end{description}

Hence all three terms are zero completing the proof.
\end{proof}

\begin{lemma}
\label{RofR}
\begin{equation}
g^{si}g^{yj} R_{st}\oo\bigl(R^\beta_{ij\alpha}\bigr) = 0 
\end{equation}
\end{lemma}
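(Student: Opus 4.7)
The plan is to reduce this lemma to the two preceding ones by substituting the decomposition of the spinor curvature from equation~\ref{curvature_reduced}, namely $R^\beta_{ij\alpha} = R^k_{ij}T^\beta_{k\alpha} + F_{ij}1^\beta_\alpha$. I read the $y$ in the statement as a typographical slip for $t$, to match the placement of contracted indices in the two preceding lemmas; otherwise the indices do not balance.

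Treating $R_{st}\oo$ as a $V$-tensor derivation (which it is, being the difference of the tensor derivations $[\nabla_s,\nabla_t]$ and $T^k_{st}\nabla_k$ from equation~\ref{torsion+curvature_defined}), I would apply the Leibniz rule to the decomposition. The coefficient tensors $T^\beta_{k\alpha}$ and $1^\beta_\alpha$ are both globally invariant, so $\nabla_m$ annihilates each, and hence so does $R_{st}\oo$. The Leibniz expansion therefore collapses to
\[ R_{st}\oo\bigl(R^\beta_{ij\alpha}\bigr) = R_{st}\oo\bigl(R^k_{ij}\bigr)\,T^\beta_{k\alpha} + R_{st}\oo\bigl(F_{ij}\bigr)\,1^\beta_\alpha, \]
where the summation index $k$ may need renaming to stay disjoint from $s$ and $t$.

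Contracting both sides with $g^{is}g^{jt}$ yields a $T^\beta_{k\alpha}$-multiple of $g^{is}g^{jt}R_{st}\oo(R^k_{ij})$ plus a $1^\beta_\alpha$-multiple of $g^{is}g^{jt}R_{st}\oo(F_{ij})$. The first vanishes by the immediately preceding lemma and the second by the first lemma of this section, and the result follows. There is no real obstacle; the argument is a direct corollary of the two preceding lemmas once the curvature decomposition is substituted, and the only points requiring minor care are the index hygiene noted above and the verification that $R_{st}\oo$ really is a derivation so that Leibniz applies.
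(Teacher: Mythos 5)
Your proof is correct and follows essentially the same route as the paper's: substitute the decomposition $R^\beta_{ij\alpha} = F_{ij}1^\beta_\alpha + R^k_{ij}T^\beta_{k\alpha}$, use that $R_{st}\oo$ is a derivation annihilating the invariant coefficient tensors $T^\beta_{k\alpha}$ and $1^\beta_\alpha$, and invoke the two preceding lemmas. Your reading of $y$ as a typo for $t$ is also the right one.
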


\begin{proof}
Simply write 
\( R^\beta_{ij\alpha} = F_{ij}1^\beta_\alpha + R^k_{ij}T^\beta_{k\alpha} \) 
and apply the tensor derivation \( R_{st}\oo = R^x_{st}T_x\oo \). The
result follows from the previous two lemmas and from the local invariance
of \( T^\beta_{k\alpha} \) and \( 1^\beta_\alpha \).
\end{proof}

\begin{theorem}[Ussher's Identity]
\label{UI}
\begin{equation}
\label{UI1}
\nabla^i \bigl( \nabla^jR^\beta_{ji\alpha} - 
	\nfrac{1}{2}g^{ts}T^r_{it}R^\beta_{rs\alpha}\bigr) = 0 
\end{equation}
hence in particular
\begin{flalign}
\label{UI2}
\nabla^i \bigl( \nabla^jR^k_{ji} - 
	\nfrac{1}{2}g^{ts}T^r_{it}R^k_{rs}\bigr) &= 0 \\
\label{UI3}
\nabla^i \bigl( \nabla^jF_{ji} - 
	\nfrac{1}{2}g^{ts}T^r_{it}F_{rs}\bigr) &= 0  
\end{flalign}
\end{theorem}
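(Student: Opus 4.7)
The plan is to establish the spinor-valued statement~\eqref{UI1}; the reduced and field versions \eqref{UI2} and~\eqref{UI3} then follow by extracting the vector and scalar components under the decomposition $R^\beta_{ij\alpha} = R^k_{ij}T^\beta_{k\alpha} + F_{ij}1^\beta_\alpha$, since every other tensor appearing in the identity is globally invariant.

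First I would rewrite $\nabla^i\nabla^j R^\beta_{ji\alpha} = g^{ip}g^{jq}\nabla_p\nabla_q R^\beta_{ji\alpha}$ and exploit symmetry. Relabelling both $p\leftrightarrow q$ and $i\leftrightarrow j$, combined with symmetry of $g$ and antisymmetry of $R^\beta_{\cdot\cdot\alpha}$ in its vector indices, yields
\[
g^{ip}g^{jq}\nabla_p\nabla_q R^\beta_{ji\alpha} = -g^{ip}g^{jq}\nabla_q\nabla_p R^\beta_{ji\alpha},
\]
so averaging gives $\nabla^i\nabla^j R^\beta_{ji\alpha} = \nfrac{1}{2}g^{ip}g^{jq}[\nabla_p,\nabla_q]R^\beta_{ji\alpha}$. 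Expanding the commutator as $T^k_{pq}\nabla_k + R_{pq}\oo$, the curvature piece vanishes by Lemma~\ref{RofR} (up to the sign from the vector-index antisymmetry of $R^\beta$), leaving
\[
\nabla^i\nabla^j R^\beta_{ji\alpha} = \nfrac{1}{2}\,g^{ip}g^{jq}T^k_{pq}\,\nabla_k R^\beta_{ji\alpha}.
\]

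The remaining task is to match this with $\nfrac{1}{2}\nabla^i(g^{ts}T^r_{it}R^\beta_{rs\alpha}) = \nfrac{1}{2}g^{ia}g^{ts}T^r_{it}\nabla_a R^\beta_{rs\alpha}$, using global invariance of $g$ and $T$. The decisive fact is that the fully raised structure constant $T^{ijk} := g^{ip}g^{jq}T^k_{pq}$ is totally antisymmetric in $i,j,k$. Antisymmetry in the first two indices is intrinsic; antisymmetry in the last two follows from local invariance of the metric, for $T_k\oo(g_{ij})=0$ reads $T^a_{ki}g_{aj}+T^a_{kj}g_{ia}=0$, which on lowering the upper index of $T$ asserts antisymmetry of $T_{kij}$ in $i,j$. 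Combined, these give full antisymmetry of $T_{ijk}$, and hence of $T^{ijk}$. A suitable cyclic relabelling (an even permutation) then identifies the two expressions, completing \eqref{UI1}.

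The main obstacle is recognising that one needs the full antisymmetry of the raised torsion, not merely the obvious antisymmetry of $T^k_{pq}$ in its lower indices. Without this stronger property, inherited from adjoint-invariance of the Killing form, the commutator-generated torsion term cannot be matched with the specific contraction $g^{ts}T^r_{it}R^\beta_{rs\alpha}$ appearing in the identity.
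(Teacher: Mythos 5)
Your proposal is correct and follows essentially the same route as the paper: antisymmetrise the double divergence into half the commutator $[\nabla^i,\nabla^j]$, kill the curvature contribution via Lemma~\ref{RofR}, and identify the surviving torsion term with $\nfrac{1}{2}g^{ts}T^r_{it}\nabla^iR^\beta_{rs\alpha}$. Your only addition is to make explicit the total antisymmetry of the fully lowered torsion (from invariance of the metric) that justifies the paper's terse ``relabelling and rearranging'' step — a worthwhile clarification, but the same proof.
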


\begin{proof}
Consider the double divergence of the curvature tensor 
\( \nabla^i\nabla^j\bigl(R^\beta_{ij\alpha} \bigr) \).
By renaming \( i \leftrightarrow j \) and using the antisymmetry of \( R \)
we obtain
\begin{flalign}
2\nabla^i\nabla^j
\bigl(R^\beta_{ij\alpha} \bigr) &= 
	[\nabla^i,\nabla^j] \bigl(R^\beta_{ij\alpha} \bigr) \\
&= 	g^{is}g^{jt}T^x_{st}\nabla_x\bigl(R^\beta_{ij\alpha}\bigr)
+ 	g^{is}g^{jt}R_{st}\oo\bigl(R^\beta_{ij\alpha}\bigr) 
\end{flalign}
where the second term is zero by lemma~\ref{RofR}. 
Relabelling and rearranging we can write this in the form
\begin{equation*}
\nabla^i \bigl( \nabla^jR^\beta_{ji\alpha} - 
	\nfrac{1}{2}g^{ts}T^r_{it}R^\beta_{rs\alpha}\bigr) = 0 
\end{equation*}
thereby proving equation~\ref{UI1}. Equations~\ref{UI2} and~\ref{UI3} now 
follow by taking components.
\end{proof}

As an immediate consequence of Ussher's identity we can write 

\begin{corollary}
\begin{equation}
\label{FieldEquation}
\nabla^jR^\beta_{ji\alpha} - \nfrac{1}{2}g^{ts}T^r_{it}R^\beta_{rs\alpha}
= \Theta^\beta_{i\alpha}
\end{equation}

Where the divergence \( \nabla^i \Theta^\beta_{i\alpha} = 0 \).
\end{corollary}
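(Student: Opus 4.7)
The corollary is a direct restatement of Ussher's identity with the divergence-free bracket given a name. My plan is simply to set
$$\Theta^\beta_{i\alpha} := \nabla^jR^\beta_{ji\alpha} - \nfrac{1}{2}g^{ts}T^r_{it}R^\beta_{rs\alpha},$$
which makes equation \ref{FieldEquation} true by construction, and then quote equation \ref{UI1} of Ussher's identity to conclude that $\nabla^i\Theta^\beta_{i\alpha} = 0$.

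This mirrors the strategy used in chapter \ref{Chapter: Einstein} to obtain the generalised Einstein equation from the contracted second Bianchi identity: a geometric identity that equates a divergence to zero is rewritten as the divergence of a named tensor, which is then declared to be a source term whose physical interpretation is supplied separately. Because all the real work was already carried out in the proof of the theorem (the double-divergence trick, antisymmetry of $R^\beta_{ij\alpha}$ in $i,j$, and the three vanishing lemmas leading to lemma \ref{RofR}), no further calculation is required for the corollary itself.

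Consequently there is no mathematical obstacle here --- the corollary is a definition plus a one-line invocation of the preceding theorem. The substantive obstacles all lie outside the statement: first, interpreting the scalar component of $\Theta^\beta_{i\alpha}$ (obtained by contracting with $1^\alpha_\beta$) as the electromagnetic charge-current via equation \ref{UI3}, and second, interpreting the vector component (obtained via $T^\alpha_{k\beta}$) via equation \ref{UI2} as a gravitational source --- Ussher's equation --- and comparing it with the extended Einstein equation \ref{EinsteinsEquation} already derived in the previous chapter.
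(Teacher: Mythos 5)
Your proposal is correct and coincides with the paper's treatment: the corollary is stated there as an ``immediate consequence'' of Ussher's identity, with $\Theta^\beta_{i\alpha}$ simply naming the quantity whose divergence equation~\ref{UI1} shows to vanish. No further argument is needed, and your remarks about the component interpretations correctly anticipate the discussion that follows in the paper.
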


Ussher's identity is a purely geometric identity. It must hold on any 
Lie manifold obeying the very weak assumptions of the model. We 
cannot choose to disregard it but we can choose how to interpret it. In 
particular we must decide whether it is reasonable to interpret 
\( \Theta^\beta_{i\alpha} \) as a source term determined from the 
distribution of matter.

\bigskip
\pagebreak[2]

\section{Extended Ampere-Gauss Equation}

Ussher's identity is the right kind of identity to give source equations.
In particular we hope to find the Ampere-Gauss equation by looking at the 
scalar component. So we need to start by separating this equation into 
component equations. We begin by writing \( \Theta^\beta_{i\alpha} \) 
in terms of its components.
\begin{equation}
\Theta^\beta_{i\alpha} = J_i1^\beta_\alpha + K^k_iT^\beta_{k\alpha}
\end{equation} 

Note that equation~\ref{FieldEquation} implies there is no versor 
component. The equation \( \nabla^i \Theta^\beta_{i\alpha} = 0 \) now
gives two component equations \( \nabla^i J_i = 0 \) and 
\( \nabla^i K^k_i = 0 \). 
\medskip

The scalar component of equation~\ref{FieldEquation} is
\begin{equation}
\label{ScalarFieldEquation}
\nabla^jF_{ji} - \nfrac{1}{2}g^{ts}T^r_{it}F_{rs} = J_i 
\end{equation}

which is in the right form to be a suitable generalisation of the Ampere-Gauss
equation and to act as a source equation for the field tensor \( F_{ij} \). 
We will call this the \textbf{extended Ampere-Gauss equation}.

Note that this equation seems to have an extra term when compared to our 
initial guess, equation~\ref{AGguess}. This extra term is not a problem 
for three reasons.

\begin{enumerate}
\item \( T^i_{kb} \) is zero when all three indices \( i \), \( k \) and 
\( b \) are non-Lorentz. Hence this extra term will have no effect on the 
behavior of the translation components which give the classical fields. 

\item All spacetime components of the extra term vanish in the Poincar\'e 
limit and are insignificant over distance and time scales less than \( r \).

\item If we rewrite the equation in terms of the torsion free 
covariant derivative  \( \hat{\nabla}_k = \nabla_k + \nfrac{1}{2}T_k\oo \)
it becomes simply
\begin{equation}
\label{extendedAGnotorsion}
g^{ij}\hat{\nabla}_iF_{jk} = J_k 
\end{equation}
So the extra term is simply a consequence of writing the equation using a 
covariant derivative with torsion.
\end{enumerate}

This achieves our objective of obtaining a plausible source equation for 
electromagnetism. However we obtained it as a component of a larger equation,
and we now have a leftover component which we must decide what to do with.
We need to take a very hard look at that component equation because if we 
accept the Ampere-Gauss equation it will be difficult to argue that we should 
not also accept this other component.

\bigskip
\pagebreak[2]

\section{Ussher's equation.}

The vector component of equation~\ref{FieldEquation} is 
\begin{equation}
\label{UsshersEquation}
\nabla^j R^k_{ji} - \nfrac{1}{2} T^j_{is} g^{st} R^k_{jt} = K^k_i
\end{equation}

which looks like a source equation for the gravitational field described 
given by the reduced curvature tensor \( R^k_{ij} \). That is unexpected 
and is very interesting. This equation is claiming to do the same job as
Einstein's equation; it can be viewed as a source equation for gravity. 
Is that reasonable? Which source equation should we be using?

\medskip
In our picture, the gravitational and electromagnetic fields are 
components of the overall field specified by the curvature tensor 
\( R^\alpha_{ij\beta} \). It is entirely reasonable therefore to expect
that the source equations for electromagnetism and gravity should be the 
components of a combined source equation for the curvature tensor. 

Equations~\ref{UsshersEquation} and~\ref{ScalarFieldEquation} are
indeed  components of equation~\ref{FieldEquation} exactly as 
we would expect.   But Einstein's equation~\ref{EinsteinsEquation} 
does not seem to unify with a source equation for electromagnetism in 
this way. 

\medskip
The form of equation~\ref{UsshersEquation} 
\begin{equation*}
\nabla^j R^k_{ji} - \nfrac{1}{2} T^j_{is} g^{st} R^k_{jt} = K^k_i
\end{equation*}

where the divergence of a field is related to the presence of a source, is 
actually typical of what we would expect the source equation for a field to 
look like. Einstein's equation is the odd one out. It doesn't fit this pattern.

\medskip
If we were to ignore history and imagine that we were meeting
these equations for the first time, we would probably regard 
equation~\ref{UsshersEquation} as a stronger candidate for a source 
equation for gravity than Einstein's equation for these reasons.

Of course we still do have Einstein's equation, \ref{EinsteinsEquation} 
\begin{equation*}
R_{ij} - \frac{1}{2}g_{ij}R = \Theta_{ij} 
\end{equation*}
as a geometric identity on our manifold. Indeed
equations~\ref{EinsteinsEquation}~and~\ref{UsshersEquation} 
are both geometric identities on our manifold.  It therefore isn't a matter 
of choosing one of these equations over the other equation. Rather the issue
is how we decide how to interpret the two divergence free quantities 
$\Theta_{ij}$ and $K^k_i$. The question is which of these quantities, 
if any, is to be interpreted as a source term determined by the 
distribution of matter. 

\medskip

The ultimate test is of course whether equation~\ref{UsshersEquation}
is consistent with the experimental tests of general relativity which have been 
carried out to date. This may not be as unlikely as one would think given that 
Ussher's equation and Einsteins equation look very different.  Einstein's 
equation is notoriously difficult to test. Most experimental verifications 
have involved looking at small fields in the absence of sources, so we need 
simply show that the two equations are compatible in those limited 
circumstances. 

One way to validate equation~\ref{UsshersEquation} would be to relate
it in some way to Einstein's equation, the existing paradigm for gravity. 
But if the two equations make slightly different predictions we won't 
succeed in linking them exactly via a mathematical identity. We would 
need to use some kind of approximation argument and these are often difficult
to construct.

Furthermore there are instances such as anomalous galactic rotation where 
Einstein's equation does not seem to do very well. Assorted dark quantities 
have been conjectured to explain the difference between prediction and 
observation in these instances. Our inability to directly observe these dark 
things or indeed to understand their nature is however embarrassing, and gives
us reason to suspect that Einstein's equations might not be the final answer 
to the question of gravitation, even in the weak field case. For this reason 
perhaps trying to link Ussher's equation directly to Einstein's equation is 
not the best approach.

It would instead be simpler and indeed more certain to seek direct validation 
from observation. That would require us to construct the equivalent of 
Schwarzschild and Kerr solutions for equation~\ref{UsshersEquation} which would
then allow direct validation from experimental data.
 
\medskip

We will not attempt to construct specific solutions, however we will 
comment briefly on the issues that arise in attempting to do so. We begin
by noting the similarity between the equations for gravity in the absence 
of sources
\begin{flalign*}
&R^m_{ia}T^a_{jk} + \nabla_iR^m_{jk} \ijkequal 0 \\
&\nabla^iR^m_{ik} - \nfrac{1}{2}R^m_{ij}g^{jt}T^i_{kt} = 0 
\end{flalign*}

and the equations for electromagnetism in the absence of sources
\begin{flalign*}
&F_{ia}T^a_{jk} + \nabla_iF_{jk} \ijkequal 0 \\
&\nabla^iF_{ik} - \nfrac{1}{2}F_{ij}g^{jt}T^i_{kt} = 0 
\end{flalign*}

For each fixed value of \( m \) in the first set of equations the resulting
tensor satisfies the second set of equations. Hence solutions to the 
gravitational equations consist of ten solutions to Maxwells equations, 
one for each value of \( m \). These are not independent but must be 
compatible with the first Bianchi identity.
\begin{equation*}
R^m_{ij}T^n_{mk} \ijkequal 0 
\end{equation*}

This suggests that we might be able to construct a Schwartzchild type 
solution for the gravitational equations from central charge solutions 
to Maxwell's equations.

However things are not as simple as they seem. The problem of course is 
that the covariant derivative in the above equations involves the connection 
which also defines the curvature. Hence the equations above are considerably
more complicated than they appear and finding specific solutions is likely to 
be difficult. This looks like a job for an applied mathematician.

\medskip

Finally of course there is the issue of how we deal with the extra six 
dimensions. How these are perceived will depend on the nature of the 
matter on which they act. We do not envisage a universe filled with 
pointlike particles whose trajectories are described by geodesics across 
all ten dimensions of the manifold. Instead matter will be described by a 
spinor wave function satisfying a wave equation.

As we will see in chapter~\ref{Chapter MoM}, electrons seem to be described 
by wave functions that are constant (at least locally) across the Lorentz 
dimensions.  For this reason we expect that electrons will be insensitive to 
what is happening there. We also expect protons to be similar to electrons in 
this respect.  Hence it seems likely that components of curvature involving 
the Lorentz dimensions will exert only a second order effect on ordinary
matter. To first order we we should be able to ignore the extra dimensions 
and just look at what is happening in the first four. 

\bigskip
\pagebreak[2]

\section{Further Identities}

The tensors in equation~\ref{UsshersEquation} satisfy a number of 
interesting algebraic constraints. These may also prove useful in 
our later work. To facilitate  exploring these identities 
we define 
\begin{equation} 
\label{Udefined}
U^k_m = \nabla^j R^k_{jm} -\nfrac{1}{2} T^j_{ms} g^{st} R^k_{jt}
\end{equation}

Hence Ussher's equation~\ref{UsshersEquation} takes the form 
\( U^k_m = K^k_m \) where the left hand side is a tensor related 
to curvature and constraining gravity as described in equation~\ref{Udefined};
while the right hand side is a tensor related to the distribution of energy 
and matter, and acting as a source. 

The tensor \( U^k_m \) satisfies a number of algebraic identities which must
therefore also constrain \( K^k_m \).  

\begin{lemma}
\label{Ulemmas}
The tensor \( U^k_m \) satisfies
\begin{flalign}
&\nabla^mU^k_m = 0 
\label{Uidentity1}
\\
&U^k_k = \nfrac{1}{2}R 
\label{Uidentity2}
\\
& T^m_{ik} U^k_m = - \nabla^jR_{ji}
\label{Uidentity3}
\\
&T^m_{ik} U^k_m = - \nfrac{1}{2} \nabla_i R 
\label{Uidentity4}
\\
&T^m_{ik} U^k_m = - \nabla_i U^k_k
\label{Uidentity5}
\\
&\nabla_kU^k_m + T^a_{mb}U^b_a = g^{ij}R^l_{ik}T^s_{lm}R^k_{js}
\label{Uidentity6}
\end{flalign}
\end{lemma}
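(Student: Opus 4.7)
The plan is to treat the six identities in Lemma~\ref{Ulemmas} as essentially bookkeeping consequences of the definition~\eqref{Udefined} combined with Ussher's identity~\eqref{UI2}, the first and second Bianchi identities~\eqref{1Breduced}, \eqref{2Breduced}, the Jacobi condition~\eqref{T_Jacobi} on the torsion, the trace identity $R^m_{km} = 0$ from~\eqref{Riszero}, the Ricci symmetry and $\nabla_k R = 2\nabla^t R_{tk}$ from Proposition~\ref{Contract_Bianchi}, and the Casimir identities. Identity~\eqref{Uidentity1} is immediate: after renaming the dummy indices $r\to j$, $t\to s$, $s\to t$, the expression inside the divergence in~\eqref{UI2} is precisely $U^k_m$, so $\nabla^m U^k_m = 0$ is nothing but a restatement of Ussher's identity.

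For~\eqref{Uidentity2}, I would contract the free indices in the definition of $U^k_m$. The first term $\nabla^j R^k_{jk}$ vanishes because $R^k_{jk} = -R^k_{kj} = 0$ by~\eqref{Riszero}. The second term $-\tfrac{1}{2}T^j_{ks}g^{st}R^k_{jt}$ should then be massaged into $\tfrac{1}{2}R^k_{jt}T_k^{jt} = \tfrac{1}{2}R$ using the identity $R = R^k_{ij}T_k^{ij}$ stated immediately after the definition of the curvature scalar, with antisymmetry of $T$ absorbing the sign. For~\eqref{Uidentity3}, contracting $U^k_m$ with $T^m_{ik}$ and using invariance of the torsion to pull $T^m_{ik}$ inside $\nabla^j$ turns the first term into $\nabla^j(T^m_{ik}R^k_{jm}) = -\nabla^j R_{ji}$ by the definition of $R_{ji}$ and antisymmetry of $T$. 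The remaining piece $-\tfrac{1}{2}T^m_{ik}T^j_{ms}g^{st}R^k_{jt}$ must be shown to vanish; my approach is to apply the Jacobi identity $T^m_{ik}T^j_{ms} + T^m_{ks}T^j_{mi} + T^m_{si}T^j_{mk} = 0$ to rewrite this term as a sum which, after contracting with the antisymmetric tensor $g^{st}R^k_{jt}$ and using symmetry of the Killing form, cancels against itself. Identity~\eqref{Uidentity4} follows by combining~\eqref{Uidentity3} with equation~\eqref{GradR}, and~\eqref{Uidentity5} is then immediate from~\eqref{Uidentity4} and~\eqref{Uidentity2} since $\nabla_i U^k_k = \tfrac{1}{2}\nabla_i R$.

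Identity~\eqref{Uidentity6} will be the main obstacle. Substituting the definition, $\nabla_k U^k_m$ expands to $\nabla_k\nabla^j R^k_{jm} - \tfrac{1}{2}T^j_{ms}g^{st}\nabla_k R^k_{jt}$ (using $\nabla T = 0$ and $\nabla g = 0$), and $T^a_{mb}U^b_a$ expands similarly. The natural plan is to commute the nested covariant derivatives $[\nabla_k,\nabla^j]$ in the first term, producing a torsion piece $T^x_{kj}\nabla_x$ plus a curvature piece $R_{kj}\oo$ acting on the two-index tensor $R^k_{jm}$; the torsion piece should cancel against corresponding contributions from the second term of $\nabla_k U^k_m$ and from $T^a_{mb}U^b_a$ after applying the second Bianchi identity~\eqref{2Breduced} to rewrite $\nabla_k R^k_{jm}$ and $\nabla^j R^b_{ja}$. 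What should survive is exactly the curvature-on-curvature term, and the calculation needs to identify $g^{ij}R^l_{ik}T^s_{lm}R^k_{js}$ as the net contribution of $R_{kj}\oo(R^k_{jm})$ (after using the decomposition $R_{kj}\oo = R^l_{kj}T_l\oo$ and the action of $T_l\oo$ on a tensor with one upper and two lower vector indices). The hard part will be tracking all the torsion cross-terms and confirming they collapse via Jacobi; the structure of~\eqref{Uidentity6} — only a single quadratic curvature term on the right — is a strong hint that most such cross-terms annihilate pairwise, but verifying the exact coefficient will require patient index juggling.
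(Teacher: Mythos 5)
Your overall architecture coincides with the paper's: (\ref{Uidentity1}) is Ussher's identity~\ref{UI2} after renaming dummies, (\ref{Uidentity2}) follows by contracting and using $R^k_{jk}=0$ together with $R = R^k_{ij}T^{ij}_k$, (\ref{Uidentity4}) comes from (\ref{Uidentity3}) plus $\nabla_kR = 2\nabla^tR_{tk}$, (\ref{Uidentity5}) from (\ref{Uidentity2}) and (\ref{Uidentity4}), and your outline for (\ref{Uidentity6}) --- commute the covariant derivatives, discard the divergence terms via $\nabla_kR^k_{ij}=0$, and isolate the surviving curvature-on-curvature term --- is exactly the paper's computation in sketch form (you will also need (\ref{Uidentity3}) at the end to convert the leftover $\nabla^iR_{im}$ into $T^a_{mb}U^b_a$). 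The genuine problem is your mechanism for the residual term in (\ref{Uidentity3}).

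The term $T^m_{ik}T^j_{ms}g^{st}R^k_{jt}$ does \emph{not} vanish by the Jacobi identity of the torsion together with antisymmetry and symmetry of the Killing form: its vanishing is a property of the curvature, not of the index structure. To see this, substitute the test tensor $R^k_{jt} = v^k T^y_{jt}w_y$, which is antisymmetric in $(j,t)$ but violates the first Bianchi identity; the Casimir identity $g^{ts}T^y_{tj}T^j_{sm} = 6\cdot 1^y_m$ reduces your residual term to $6\,T^m_{ik}v^kw_m$, which is generically nonzero. Concretely, your Jacobi split produces two pieces. The piece $-T^m_{si}T^j_{mk}g^{st}R^k_{jt} = -T^m_{si}g^{st}R_{tm}$ does vanish, but only because the Ricci tensor is symmetric while $T^m_{si}g^{st}$ is antisymmetric in $(m,t)$ --- and Ricci symmetry (equation~\ref{Ricci_symmetry}) is itself a consequence of the first Bianchi identity, not of Jacobi. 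The other piece, $-T^m_{ks}T^j_{mi}g^{st}R^k_{jt}$, does not cancel against anything, and applying the first Bianchi identity to it in the obvious way merely reproduces it. The paper escapes this trap by a different ordering: contract $U^k_i$ with $T^a_{kb}$ keeping $a$ free, apply the first Bianchi identity to the product $R^k_{rs}T^a_{kb}$, and only then set $a=i$, so that the Killing contraction $T^r_{it}T^i_{kr} = -6g_{tk}$ turns the residue into a multiple of $R^k_{kb}$, which dies by equation~\ref{Riszero}. The ingredients (first Bianchi identity, $R^m_{km}=0$, the Casimir/Killing contraction) are all in the toolbox you list, but the cancellation you describe will not close the argument as stated.
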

\begin{proof}
Equation~\ref{Uidentity1} is simply Ussher's identity~\ref{UI2}. 
\medskip

Contracting the two indices in equation~\ref{Udefined} gives the equation
\begin{equation*}
U^k_k = \nabla^j R_j - \nfrac{1}{2}g^{ts}T^r_{kt}R^k_{rs}
\end{equation*}

The curvature vector \( R_j \) is identically zero while the other 
term on the right hand side simplifies to give
equation~\ref{Uidentity2}
\medskip

Next contract equation~\ref{Udefined} with \( T^a_{kb} \) to give
\begin{equation*}
\nabla^jR^k_{ji}T^a_{kb} - 
\nfrac{1}{2}g^{ts}T^r_{it}\bigl(R^k_{rs}T^a_{kb}\bigr)
= U^k_iT^a_{kb}
\end{equation*}

Applying the 1st Bianchi identity to the term in brackets we obtain
\begin{equation*}
\nabla^jR^k_{ji}T^a_{kb}
+ \nfrac{1}{2}g^{ts}T^r_{it} R^k_{sb}T^a_{kr} 
+ \nfrac{1}{2}g^{ts}T^r_{it} R^k_{br}T^a_{ks}
= U^k_iT^a_{kb}
\end{equation*}

By swapping the contracted indices \( s \) and \( r \) in the last term on the
left, we see that this term is the same as the one before it giving the 
equation
\begin{equation*}
\nabla^jR^k_{ji}T^a_{kb}
+ g^{ts}T^r_{it} R^k_{sb}T^a_{kr} 
= U^k_iT^a_{kb}
\end{equation*}

contracting the indices \( a \) and \( i \) in this equation we thus obtain
\begin{equation*}
\nabla^jR_{jb}
= U^k_aT^a_{kb}
\end{equation*}

which rearranges to give~\ref{Uidentity3}.  Equation~\ref{equationCC} 
can now be applied to obtain equation~\ref{Uidentity4}

\medskip
Equation~\ref{Uidentity5} follows directly from 
equations~\ref{Uidentity2} and~\ref{Uidentity4}.

\medskip
Starting again from the definition we have
\begin{equation*}
\nabla_kU^k_m = g^{ij}\nabla_k\nabla_iR^k_{jm} 
			- \nfrac{1}{2}T^j_{ms}g^{st}\nabla_kR^k_{jt}
\end{equation*}

The last term vanishes as \(\nabla_kR^k_{jt} = 0 \) by equation~\ref{DivR}.
We would also like to exploit this fact in the first term which requires us 
to commute the two covariant derivatives.  We obtain
\begin{equation*}
\nabla_kU^k_m = g^{ij}[\nabla_k,\nabla_i]R^k_{jm} 
+ g^{ij}\nabla_i\nabla_kR^k_{jm} 
\end{equation*}
where the last term will vanish once again by equation~\ref{DivR}. Hence
\begin{flalign*}
\nabla_kU^k_m &= g^{ij}T^l_{ki}\nabla_l R^k_{jm} 
	+ g^{ij}R^l_{ki}
	\bigl(T^k_{ls}R^s_{jm} - T^s_{lj}R^k_{sm} - T^s_{lm}R^k_{js} \bigr)
\\ 
&= g^{il}\nabla_l (T^j_{ik} R^k_{jm}) 
	+ g^{ij}(R^l_{ki} T^k_{ls})R^s_{jm} 
	+ g^{sj}(R^l_{ki} T^i_{lj})R^k_{sm} 
	- g^{ij}R^l_{ki} T^s_{lm}R^k_{js} 
\\
&=  g^{il}\nabla_l R_{im}
	- g^{ij}R_{is}R^s_{jm} 
	+ g^{sj}R_{kj}R^k_{sm} 
	- g^{ij}R^l_{ki} T^s_{lm}R^k_{js} 
\\
&=  \nabla^i R_{im} + g^{ij}R^l_{ik} T^s_{lm} R^k_{js} 
\end{flalign*}
Applying \ref{Uidentity3} and rearranging we obtain \ref{Uidentity6}.
\end{proof}


	\chapter{Lagrangian Methods and Forces}
\label{ChapterLagrangianMethods}

In this chapter we use Lagrangian methods to find source equations 
for the forces. Since electromagnetism and gravity are simply components
of curvature we might expect to find a single source equation for curvature
that gives equations for electromagnetism and gravity as its components.
We begin by looking at the mathematical machinery required to apply a
Lagrangian method.

\pagebreak[2]

\section{Integration and Stokes' Theorem}

Lagrangian methods depend on Stokes theorem which relates the integral in 
an oriented simply connected\footnote{assuming simply connected will
simplify the discussion. We will only have need to integrate on regions
of this type} compact region \( \Omega \) to an integral on 
its boundary \( \del\Omega \).
\begin{equation}
\label{Stokes}
\integral_\Omega \nabla_k v^k \, dx^\circ = 
	\integral_{\del\Omega} v^k \, dx^\circ_k
\end{equation}

Stokes' theorem is a more general result, but this special case written in 
terms of the natural measure and integration on our framework is 
sufficient to our purpose.

\medskip
Integration is the measure weighted summation of values at different points. 
To add values at different points we must be able to compare them. To compare 
values at different points we need a notion of parallel transport, which 
we do have on a framework.

The integral on a framework is naturally defined by partitioning the 
region of integration into small pieces; computing a value for the integrand 
at each piece by combining the quantity to be integrated with the measure of 
the piece; and then parallel transporting these values to a single location 
to be added.  The integral is a limit obtained by refining the partition. 
We omit the technical details of the limiting process.

For this procedure to be well defined, parallel transport of values from 
one point to another must give the same answer regardless of path. 
Equivalently parallel transport of these values around a loop must be trivial. 
If the region of integration is simply connected it is sufficient that this 
be the case for small loops. Hence the associated curvature is zero and via the 
same arguments that were used in section~\ref{scalarsandzerocurvature} there 
must exist a basis in which the global action on these values is trivial and 
the associated connection is zero.

So this kind of natural integration is only well defined where the integrand
parallel transports in the same way as a scalar. Note that both integrands 
in equation~\ref{Stokes} are scalar.

\bigskip

A measure on the manifold is defined by specifying its value on a small 
parallelepiped. Consider the 10-D parallelepiped at a point \( p \) 
spanned by the ordered set of small vectors 
\( \{ dx_0^{i_0},\cdots, dx_9^{i_9} \} \) (indexing from 0 to 9
to avoid double digits). We define its measure to be
\begin{equation}
d{x_0}\wedge\cdots\wedge d{x_9} 
 = \eta_{i_0 \cdots i_9}\,\, d{x_0}^{i_0}\!\cdots d{x_9}^{i_9} 
\end{equation}

where \( \eta_{i_0 \cdots i_9} \) is a completely antisymmetric tensor 
with ten indices.  Different measures will arise by choosing different 
antisymmetric tensors.  One obvious choice is to pick 
\( \eta_{i_0 \cdots i_9} \) so that all non-zero entries are \( \pm 1 \). 
This has the advantage that it is easy to specify given a basis and that 
\( \del_k\eta_{i_0 \cdots i_9}  = 0 \). However this is basis dependent.

The completely antisymmetric tensors with ten indices are scalar and 
can be written as scalar functions via a change of basis. Conversely 
any scalar function gives a completely antisymmetric tensor with ten 
indices and will define a measure. If we start with a constant function 
then the measure we obtain will be invariant under parallel transport. 
Such a measure is called a Haar measure and will be denoted \( dx^\circ \).

Every framework is equipped with a Haar measure which is unique up to a 
choice of scale.  The scale is unimportant for Lagrangian calculations.
To find the Haar measure we use the fact that all completely antisymmetric
ten dimensional tensors differ by a scalar function. Hence we can write 
\begin{equation}
\label{explicit_measure}
dx^\circ
 = \epsilon\,\, \eta_{i_0 \cdots i_9}\,\, d{x_0}^{i_0}\!\cdots d{x_9}^{i_9} 
\end{equation}

where \( \eta_{i_0 \cdots i_9} \) has entries \( \pm 1 \) and where 
\( \epsilon \) is a scalar function chosen so that 
\( \eta^\circ = \epsilon\,\, \eta_{i_0 \cdots i_9} \) is constant.
Since  \( \eta^\circ \) is constant we must have 
\( \nabla_k\eta^\circ = 0  \). Since \( \del_k \eta_{i_0 \cdots i_9} = 0 \)
this simplifies to give
\begin{equation}
\del_k\epsilon = \epsilon\Gamma^a_{ka}
\label{determine_epsilon}
\end{equation}
which determines \( \epsilon \) uniquely to within a choice of scale.

\medskip

The Haar measure can also be defined from the metric. Choose oriented 
bases so that \( T^\alpha_{i\beta} \) look like the matrices in 
table~\ref{sp4R_basis} on page~\pageref{sp4R_basis}. Then 
the Haar measure of a parallelepiped spanned by small vectors \( dx_k \)
along the axes in this basis can be defined as the product of their lengths
computed from the metric. The invariance of the metric means this 
construction will be invariant. The \( \sqrt{g} \) which appears in some 
notations for the integral refers to this procedure.

\bigskip
We now turn to the right hand side of equation~\ref{Stokes} which involves
integration on the nine dimensional boundary of our region. A measure 
on a nine dimensional surface is defined as
\begin{equation}
d{x_1}\wedge\cdots\wedge d{x_9} 
 = \eta_{i_1 \cdots i_9}\,\, d{x_1}^{i_1}\!\cdots d{x_9}^{i_9} 
\end{equation}

where \( \eta_{i_1 \cdots i_9} \) is a completely antisymmetric tensor
with nine components. The completely antisymmetric tensors with nine components
are vectors. Indeed we can use a completely antisymmetric tensor with ten
components as an intertwining map
\begin{equation}
m^k \mapsto m^k\,\,\eta_{k\,\, i_1 \cdots i_9} = 
\eta_{i_1 \cdots i_9} 
\end{equation}

If we use the completely antisymmetric tensor chosen to construct the Haar
measure to do this then the measure given by \( \eta_{i_1\cdots i_9} \) will 
parallel transport as a vector. We denote this measure \( dx^\circ_k \) 
where the superscript reminds us that we used the scalar Haar measure to 
define it, and the subscript reminds us that it transforms as a vector.
 We can also write
\begin{equation}
dx^\circ_k
      = \epsilon\,\, 
	\eta_{k\,\, i_1 \cdots i_9}\,\, 
	d{x_1}^{i_1}\!\cdots d{x_9}^{i_9} 
\end{equation}
where \( \epsilon \) is determined by equation \ref{determine_epsilon}, 
and \( \eta_{i_0 \cdots i_9} \) is the completely antisymmetric tensor 
with values \( \pm 1 \).
\medskip

Integration is only well defined where the integrand parallel transports
as a scalar. Hence the vector measure on the boundary must combine with 
other quantities in the integral to form a scalar integrand. We observe
this in equation~\ref{Stokes}. The parallel transport in the integral 
can be around the boundary. However we will also get the same result by 
parallel transporting across the interior of the region.

\bigskip

Apart from the unusual notation which was chosen to relate to the natural 
structures on a framework, equation~\ref{Stokes} is simply the 
standard Stokes' theorem.  

\bigskip
\pagebreak[2]

\section{Lagrange Methods: Electromagnetism}
\label{section_EMLagrangian}

A Lagrangian approach uses a scalar function \( \LL \) known as the 
Lagrangian density. Obtaining this function is an art. The Lagrangian
density is a function of quantities on the manifold for which 
we are seeking some kind of dynamical equation. The Lagrangian 
hypothesis is that these quantities will arrange themselves so that 
the integral of the Lagrangian density is extremal under any small
variation on a compact region.

Consider for example the Lagrangian density
\begin{equation}
\LL_e = F_{ij}F^{ij} = g^{ia}g^{jb}F_{ij}F_{ab}
\end{equation}

which is the usual Lagrangian density for electromagnetism, albeit with a 
few extra dimensions. This is a function of the field tensor which in turn 
is a function of the potential \( A_i \). Equation~\ref{fieldpotentialnabla} 
on page~\pageref{fieldpotentialnabla} gives
\begin{equation*}
F_{ij} = \nabla_iA_j - \nabla_jA_i  - T^k_{ij}A_k
\end{equation*}

We prefer this rather than equation~\ref{fieldpotential} as we prefer
to work with the covariant derivative.

Assume now that \( A_i \) undergoes a variation 
\( A_i \mapsto A_i + \updelta A_i \), where \( \updelta A_i \) is 
small and is zero outside the interior of a compact region \( \Omega \). 
In particular \( \updelta A_i = 0 \) on the boundary \( \del\Omega \). 
Then other quantities that depend on \( A_i \) will also vary and 
in particular \( F_{ij} \) and the Lagrangian density \( \LL_e \) will also 
vary. Note that other quantities like \( g_{ij} \) which do not depend on 
\( A_i \) will not vary. To first order the variation in \( F_{ij} \) is 
\begin{equation}
\updelta F_{ij} 
= \nabla_i\updelta A_j - \nabla_j\updelta A_i  - T^k_{ij}\updelta A_k
\end{equation}

while the variation in \( \LL_e \) is 
\begin{equation}
\updelta \LL_e = 
g^{ia}g^{jb}\updelta F_{ij}F_{ab} + g^{ia}g^{jb}F_{ij}\updelta F_{ab}
=2g^{ia}g^{jb}\updelta F_{ij}F_{ab}
\end{equation}

the Lagrangian principle now states that 
\begin{equation}
L_e = \integral_\Omega \LL_e \, {dx}^\circ 
\end{equation}

is extremal under this variation. That is we have \( \updelta L_e = 0 \) 
to first order in \( \updelta A_i \). But
\begin{flalign*}
\updelta L_e 
& = \integral_\Omega 2 F^{ij}\updelta F_{ij} \, dx^\circ \\
&= \integral_\Omega 
	  2 F^{ij} \nabla_i\updelta A_j 
	- 2 F^{ij} \nabla_j\updelta A_i 
	- 2 F^{ij} T^k_{ij}\updelta A_k 
  \,\,dx^\circ \\
&= \integral_\Omega 
	  4 F^{ij} \nabla_i\updelta A_j 
	- 2 F^{ij} T^k_{ij}\updelta A_k 
  \,\,dx^\circ \\
&= \integral_\Omega 
	- 4 \nabla_i F^{ij} \updelta A_j 
	- 2 F^{ij} T^k_{ij}\updelta A_k 
	+ 4 \nabla_i\left( F^{ij} \updelta A_j \right) 
  \,\,dx^\circ \\
&= \integral_\Omega 
	- 4 \nabla_i F^{ij} \updelta A_j 
	- 2 F^{ij} T^k_{ij}\updelta A_k 
  \,\,dx^\circ 
 + 4 \integral_{\del\Omega}
	F^{ij} \updelta A_j
  \,\,dx_i^\circ 
\end{flalign*}

The integral on the boundary \( \del\Omega \) is zero because the variation 
is zero there. Thus
\begin{equation}
\updelta L_e = \integral_\Omega 
	\left(-4\nabla_iF^{ik} -2 F^{ij}T^k_{ij}\right)
	\updelta A_k dx^\circ = 0
\label{last_Le_equation}
\end{equation}

and since this must be true for any variation, we must have
\begin{equation}
2\nabla_iF^{ik} + F^{ij}T^k_{ij} = 0 
\end{equation}

This is interpreted as the dynamic equation for electromagnetic fields
in the absence of sources. When sources are present the right hand side
will be replaced by a source term.

We would prefer an equation in terms of \( F_{ij} \). Rearranging we
obtain
\begin{equation}
\label{extendedAG}
\nabla^iF_{ik} - \nfrac{1}{2}F_{ia}g^{ab}T^i_{kb} = J_k 
\end{equation}

where \( J_k \) is a source term. This is equation~\ref{ScalarFieldEquation},
the scalar component of Ussher's equation, which reinforces our belief that
this is the correct extension of the Ampere-Gauss equation in our geometry.

\medskip
The advantage of obtaining this equation via a Lagrangian approach rather than
as a consequence of Ussher's identity is that it gives us a method of finding
the source term. In particular the source term \( J_k \) can be determined for 
a particle by varying its Lagrangian with respect to \( A_i \). 

\medskip
The disadvantage of a Lagrangian approach is that we must guess at the form 
of the Lagrangian.The Lagrangian that we used in this case was the obvious 
extension of the usual electromagnetic Lagrangian into our model, formed
by contracting \( F_{ij} \) with its dual \( F^{ij} \). Contracting a tensor 
with its dual is one of the most obvious and natural ways to obtain a scalar 
from it. 

We call this scalar the \textbf{norm} of \( F_{ij} \) , and denote it 
\( ||F_{ij}||^2 \). Such a norm is defined whenever we can define a unique 
dual. 

\bigskip
\pagebreak[2]

\section{Lagrange Methods: Gravity}
\label{section:Lagrangegravity}

Equation~\ref{extendedAG} is the dynamic equation for \( F_{ij} \) 
obtained via a Lagrangian principle by variation of \( A_k \). 
But \( F_{ij} \)  is merely a component of the overall spinor curvature 
tensor \( R^\alpha_{ij\beta} \) while \( A_k \) is merely a component 
of the spinor connection \( \Gamma^\alpha_{k\beta} \). 

This suggests that we should seek a dynamical equation for the entire 
spinor curvature tensor by varying the spinor connection. To do this we
need a Lagrangian constructed from the curvature \( R^\alpha_{ij\beta} \).

\medskip
One natural way to do this is to contract it with its dual
\begin{equation}
R_\alpha^{ij\beta} = 
s^\bullet_{\alpha\lambda} R^\lambda_{ab\mu} s_\bullet^{\mu\beta}
g^{ia}g^{jb} 
\end{equation}

Note the use of the convention for spinor indices that they are lowered
on the left and raised on the right. This ensures that raising and 
lowering are opposite operations.  This gives
\begin{equation}
\LL	= ||R^\alpha_{ij\beta}||^2 
	= R^\alpha_{ij\beta}R^\lambda_{ab\mu}g^{ia}g^{jb} 
	  s^\bullet_{\alpha\lambda}s_\bullet^{\mu\beta}
\end{equation}

Resolving the spinor curvature into its components we obtain
\begin{equation}
\LL = R^k_{ij}R^c_{ab}g^{ia}g^{jb}g_{kc} - 4F_{ij}F_{ab}g^{ia}g^{jb}
	= ||R^k_{ij}||^2 - 4||F_{ij}||^2 
	= \LL_g -4\LL_e
\end{equation}

Where \( \LL_e = ||F_{ij}||^2 \) is the electromagnetic Lagrangian we considered
previously, and \( \LL_g = ||R^k_{ij}||^2 \) is a Lagrangian constructed from 
the reduced curvature tensor responsible for the gravitational field. 

\medskip
This is very promising. It is a natural Lagrangian constructed from the 
curvature tensor, and it extends the Lagrangian for electromagnetism in 
exactly the way that we wanted. It contains two terms, the expected  
electromagnetic term and a new and very natural gravitational term which 
we might hope to use to obtain a source equation for gravity.

Unfortunately at this point the weakness in the Lagrangian method is about 
to bite us, because there is another natural way in which we can form a 
scalar from the curvature tensor \( R^\alpha_{ij\beta} \). Since this 
tensor has one upper and one lower spinor index, these can be contracted 
without raising or lowering to obtain the Lagrangian
\begin{equation}
\LL = R^\alpha_{ij\beta} g^{ia}g^{jb}R^\beta_{ab\alpha} = \LL_g +4\LL_e
\end{equation}

The gravitational and electromagnetic components \( \LL_g \) and \( \LL_e \) 
are the same as before, but there is a difference in sign in the way they 
are combined. 

We expect this to lead to a measurable difference in physical 
phenomenology. In particular we expect the resulting gravitational field 
equations will have a different source term for the electromagnetic 
contribution to the gravitational field. In principle we should eventually 
be able to rule out one of these Lagrangians on that basis. 

Meanwhile we continue the analysis by working with the components 
\( \LL_g \) and \( \LL_e \) which will enable us to defer the question of 
how these should combine until later. It will also have the benefit of 
minimising algebraic complexity. We define Lagrangians
to go along with these Lagrangian densities
\begin{flalign}
\begin{split}
L_g &=  \int_\Omega \LL_g \, dx^\circ \\
L_e &=  \int_\Omega \LL_e \, dx^\circ
\end{split}
\end{flalign}

The overall Lagrangian will be \( L = L_g \pm 4 L_e \), with the 
sign determined by our choice of Lagrangian density \( \LL \).

\medskip

Having chosen a Lagrangian we must decide what to vary and how to vary it.
The obvious approach is to simply vary the connection. However 
the extent to which the connection by itself can vary without breaking the 
structure of our framework is very limited. For example if the 
local action \( T^\alpha_{i\beta} \) does not vary then this fixes both 
the torsion \( T^k_{ij} \) and the metric \( g_{ij} \). The metric 
determines the Christoffel symbols so both the antisymmetric and 
symmetric parts of \( \Gamma^k_{ij} \) are fixed which implies 
\( R^l_{ijk} = R^t_{ij}T^l_{tk} \) is also fixed. Hence no variation 
in the gravitational field is possible. 

\medskip

This implies that any non-trivial variation which does not break the structure
of the framework must vary both the local and global actions.
In considering the general situation the following Lemma will be useful.

\begin{lemma}
\label{lemma:variation not versor} 
Consider a tensor derivation \( M\oo \) acting on spinors and vectors,
given by \( M^\alpha_\beta \) and \( M^i_j \) respectively. 

Then \( M\oo (T^\alpha_{i\beta}) = 0 \) if and only if there is a scalar 
\( a \) and vector \( b^k \) with
\begin{flalign}
M^\alpha_\beta &= a 1^\alpha_\beta + b^k T^\alpha_{k\beta} \\
M^i_j &= b^kT^i_{kj} 
\end{flalign}
\end{lemma}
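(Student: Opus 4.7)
The plan is to use the local decomposition of spinor transformations developed in Section~\ref{SpinorTransformations} together with the mixed commutation relations for $\so(3,3)$ from Section~``Action of $\so(3,3)$''. Since $M\oo$ is a tensor derivation acting as $M^\alpha_\beta$ on upper spinor indices, $-M^\alpha_\beta$ on lower spinor indices, and $-M^k_j$ on lower vector indices, the Leibniz condition gives
\begin{equation*}
M\oo(T^\alpha_{i\beta}) = M^\alpha_\lambda T^\lambda_{i\beta} - M^\lambda_\beta T^\alpha_{i\lambda} - M^k_i T^\alpha_{k\beta} = 0,
\end{equation*}
so the hypothesis can be rewritten as
\begin{equation*}
M^\alpha_\lambda T^\lambda_{i\beta} - M^\lambda_\beta T^\alpha_{i\lambda} = M^k_i T^\alpha_{k\beta}. \tag{$\ast$}
\end{equation*}

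Next I would expand $M^\alpha_\beta$ in terms of the irreducible components of the space of spinor transformations, writing
\begin{equation*}
M^\alpha_\beta = a\, 1^\alpha_\beta + b^k T^\alpha_{k\beta} + c^A T^\alpha_{A\beta}
\end{equation*}
for a scalar $a$, a vector $b^k$, and a versor $c^A$. Substituting this into the left side of ($\ast$), the $a$-term drops out because $1^\alpha_\beta$ commutes with everything. The $b$-term collapses to a vector component via the defining commutator $T^\alpha_{k\lambda}T^\lambda_{i\beta} - T^\lambda_{k\beta}T^\alpha_{i\lambda} = T^m_{ki}T^\alpha_{m\beta}$. For the $c$-term I would use the mixed $\so(3,3)$ relation $T^\beta_{i\mu}T^\mu_{A\alpha} - T^\beta_{A\mu}T^\mu_{i\alpha} = T^B_{iA}T^\beta_{B\alpha}$, rearranged to show that $T^\alpha_{A\lambda}T^\lambda_{i\beta} - T^\lambda_{A\beta}T^\alpha_{i\lambda} = T^B_{iA}T^\alpha_{B\beta}$. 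Altogether ($\ast$) becomes
\begin{equation*}
b^k T^m_{ki}T^\alpha_{m\beta} + c^A T^B_{iA}T^\alpha_{B\beta} = M^k_i T^\alpha_{k\beta}.
\end{equation*}

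Now I would invoke the uniqueness of the decomposition into irreducibles: the vector basis $\{T^\alpha_{k\beta}\}$ and versor basis $\{T^\alpha_{B\beta}\}$ span inequivalent irreducible components and are linearly independent. Matching versor components gives $c^A T^B_{iA}=0$ for every $i$; since the action of $\so(2,3)$ on versors is the faithful irreducible $5$-D canonical representation, this forces $c^A=0$. Matching vector components and using the antisymmetry $T^m_{ki}=-T^m_{ik}=T^m_{ji}$ (after relabelling) yields $M^k_i = b^j T^k_{ji}$, equivalently $M^i_j = b^k T^i_{kj}$. The converse direction is an immediate substitution, since $a\,1\oo$ acts as zero on $T^\alpha_{i\beta}$ and $b^k T_k\oo$ acts on $T^\alpha_{i\beta}$ as the adjoint-like operation whose components are exactly $b^k T^\alpha_{k\beta}$ on the spinor slot and $b^k T^i_{kj}$ on the vector slot by the very definition of the local action.

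The only real obstacle is bookkeeping: confirming that the versor expression $c^A(T^\alpha_{A\lambda}T^\lambda_{i\beta}-T^\lambda_{A\beta}T^\alpha_{i\lambda})$ collapses cleanly onto the versor basis via the $\so(3,3)$ commutator, and verifying that it has no hidden vector or scalar piece. Both follow from local invariance of $T^\alpha_{A\beta}$ and the fact that $T^B_{iA}$ are honest $\so(3,3)$ structure constants as established in Section~``Action of $\so(3,3)$''.
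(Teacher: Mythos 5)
Your proposal is correct and follows essentially the same route as the paper's own proof: rewrite the hypothesis as the commutator condition, decompose $M^\alpha_\beta$ into scalar, vector and versor components, use the $\so(3,3)$ mixed structure relations to see that the versor piece contributes $c^A T^B_{iA}T^\alpha_{B\beta}$, force $c^A=0$ by faithfulness of the versor action, and read off $M^k_i = b^jT^k_{ji}$ from the vector component. Your explicit check of the converse is a small addition the paper leaves implicit, but otherwise the arguments coincide.
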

\begin{proof}
Firstly, since \( M\oo(T^\alpha_{i\beta}) = 0 \) we have
\begin{equation}
M^\alpha_\lambda T^\lambda_{i\beta} - T^\alpha_{i\lambda}M^\lambda_\beta =
M^k_iT^\alpha_{k\beta} 
\end{equation}
the left hand side here can be viewed as the commutator \( [M,T_i] \) of
two matrices.
\medskip
Next we write \( M^\alpha_\beta  \) in terms of the decomposition of 
spinor transformations into components to obtain
\begin{equation}
M^\alpha_\beta 
= a 1^\alpha_\beta + b^k T^\beta_{k\alpha} + c^AT^\beta_{A\alpha}
\end{equation}
putting these things together we must have 
\begin{equation}
b^kT^m_{ki}T^\alpha_{m\beta} - c^AT^B_{iA}T^\alpha_{B\beta} 
	= M^m_iT^\alpha_{m\beta}
\end{equation}

Comparing versor components on each side of this equation we must have 
\( c^AT^B_{iA} = 0 \) for all \( i \) and \( B \) which gives
\( c^A = 0 \). Comparing vector components we have 
\( b^kT^m_{ki} = M^m_i \) completing the proof.
\end{proof} 

\medskip
Now suppose that both the local and global actions vary in a manner 
consistent with the structure of a framework.  
\begin{flalign}
T^\beta_{i\alpha} 
	&\mapsto T^\beta_{i\alpha} + \updelta T^\beta_{i\alpha} \\
\Gamma^\beta_{i\alpha} 
	&\mapsto \Gamma^\beta_{i\alpha} + \updelta \Gamma^\beta_{i\alpha}
\end{flalign}

Where the expressions on the right specify valid local and global actions
for the Lie algebra \( \so(2,3) \) and where these commute. 

On any framework bases can be chosen so that the matrices 
\( T^\beta_{i\alpha} \) at a chosen point take the form of the matrices on 
page~\pageref{sp4R_basis}. Since after variation we have a valid spinor
manifold we must therefore be able to describe the variation in 
\( T^\alpha_{i\beta} \) as if it were a small change of basis. 

Note that we are not saying here that the variation is simply a change of 
basis. A variation is a real change whereas a change of basis is only a 
change in description. However these two things will have the same algebraic 
form. We have met this kind of thing before in basic linear algebra where a 
rotation is a transformation having the same algebraic description as a change 
of orthonormal basis.

Hence there are small tensors \( \delta^a_b \) and \( \delta^\alpha_\beta \) 
so that to first order 
\begin{equation}
T^\beta_{i\alpha} + \updelta T^\beta_{i\alpha}
= (1^k_i - \delta^k_i)T^\mu_{k\nu}
	(1^\beta_\mu + \delta^\beta_\mu)(1^\nu_\alpha - \delta^\nu_\alpha)
\end{equation}

where the negative signs arise from taking the inverse assuming 
\( \delta^a_b \) and \( \delta^\alpha_\beta \) are small. Expanding out 
we find that to first order we have 
\begin{equation}
\updelta T^\beta_{i\alpha}
= \delta^\beta_\mu T^\mu_{i\alpha}
- \delta^k_iT^\beta_{k\alpha}
- \delta^\nu_\alpha  T^\beta_{i\nu} = \delta\oo (T^\beta_{i\alpha})
\end{equation}

Hence the variation in the local action can be described via
small tensor derivation \( \delta\oo \). 

The variation in the global action given by \( \updelta \Gamma_k\oo \)
must be consistent with the variation in the local action. Requiring that 
\( \nabla_kT^\alpha_{i\beta} = 0 \) after variation gives (to first order)
\begin{equation}
\bigl[\updelta \Gamma_k\oo + \nabla_k(\delta)\oo\bigr]T^\alpha_{i\beta} = 0
\end{equation}

Hence by lemma~\ref{lemma:variation not versor} we can write
\begin{flalign}
\updelta \Gamma^\alpha_{k\beta} + \nabla_k(\delta^\alpha_\beta) &=
a_k 1^\alpha_\beta + b^t_k T^\alpha_{t\beta} \\
\updelta \Gamma^i_{kj} + \nabla_k(\delta^i_j) &= b^t_kT^i_{tj} 
\label{equation:varygammmab}
\end{flalign}

for scalar fields \( a_k \) and vector fields \( b^t_k \). Using this 
description a general variation in the local and global actions will be 
written in the form
\begin{flalign}
\begin{split}
\updelta T^\alpha_{k\beta} &= \delta\oo T^\alpha_{k\beta} \\
\updelta T^i_{kj} &= \delta\oo T^i_{kj} \\
\updelta \Gamma^\alpha_{k\beta} &= 
\delta^\alpha_\lambda \Gamma^\lambda_{k\beta}
- \delta^\lambda_\beta \Gamma^\alpha_{k\lambda}
-\del_k(\delta^\alpha_\beta) + a_k1^\alpha_\beta + b^t_kT^\alpha_{t\beta} \\
\updelta \Gamma^i_{kj} &= 
\delta^i_m \Gamma^m_{kj}
- \delta^m_j \Gamma^i_{km}
-\del_k(\delta^i_j) + b^t_kT^i_{tj} 
\end{split}
\label{general_variation}
\end{flalign}

Such variations preserve the requirements
that the local action represent \( \so(2,3) \) and that the global action 
commute with it. We now need to require that the global action 
represent (albeit with curvature) the Lie algebra even after variation.
This is true if the variation in the torsion agrees with the variation in the 
Lie structure. Hence we must have 
\begin{equation}
\updelta \Gamma^i_{kj} - \updelta \Gamma^i_{jk} 
	= \updelta T^i_{jk} 
	= \delta\oo T^i_{jk} 
\end{equation}

But from equation~\ref{equation:varygammmab} we also have
\begin{equation}
\updelta \Gamma^i_{kj} - \updelta \Gamma^i_{jk} 
	= \left( b^t_kT^i_{tj}   - b^t_jT^i_{tk} \right) 
	+ \left( \nabla_j\delta^i_k - \nabla_k\delta^i_j \right)
\end{equation}

Putting these together gives a constraint relating \( b^t_k \) to 
\( \delta\oo \).
\begin{flalign}
\label{equation_constraint}
\begin{split}
\left( b^t_kT^i_{tj}   - b^t_jT^i_{tk} \right) 
&= \left(\nabla_k\delta^i_j - \nabla_j\delta^i_k \right) 
	+ \delta\oo T^i_{jk} \\
&= \left(\del_k\delta^i_j - \del_j\delta^i_k \right)  
	- \left( \delta^t_k\Gamma^i_{tj} - \delta^t_j\Gamma^i_{tk} \right) 
\end{split}
\end{flalign}

Variations which satisfy equation~\ref{equation_constraint} are called 
\textbf{physical variations} as the variation conserves all axioms of 
a framework.  The physical variations form an abelian group under 
addition. Our task is to require that the Lagrangian is invariant 
under an arbitrary physical variation. 

\bigskip

First however we discuss the effect of these variations on the measure 
which we will need to know about since the variation \( \updelta L \)
in the Lagrangian depends on both the variation \( \updelta\LL \) in the 
Lagrangian density and on the variation \( \updelta dx^\circ \) in the 
measure.

The measure is specified by equation~\ref{explicit_measure}  to be
\begin{equation*}
dx^\circ
 = \epsilon\,\, \eta_{i_0 \cdots i_9}\,\, d{x_0}^{i_0}\!\cdots d{x_9}^{i_9} 
\end{equation*}

Its variation \( \updelta dx^\circ \) can be described in terms of the 
variation \( \updelta \epsilon \) in the multiplying factor.
\begin{equation*}
\updelta dx^\circ
 = \updelta \epsilon\,\, 
	\eta_{i_0 \cdots i_9}\,\, d{x_0}^{i_0}\!\cdots d{x_9}^{i_9} 
\end{equation*}

This can be simplified by writing \( \updelta \epsilon \) as a multiple 
of epsilon
\begin{equation}
\updelta \epsilon = \delta . \epsilon 
\end{equation}
where \( \delta \) is a small scalar function specifying the variation. 
We obtain
\begin{equation}
\updelta dx^\circ
 = \delta . \epsilon\,\, 
	\eta_{i_0 \cdots i_9}\,\, d{x_0}^{i_0}\!\cdots d{x_9}^{i_9} 
= \delta . dx^\circ
\end{equation}

Applying the variation to equation~\ref{determine_epsilon} we must also have
\begin{flalign}
    & \del_k(\updelta \epsilon) 
	= (\updelta \epsilon) \Gamma^a_{ka} + \epsilon(\updelta \Gamma^a_{ka})
\nonumber \\ \Rightarrow \quad &
    (\del_k\delta).\epsilon + \delta.(\del_k\epsilon) 
	= (\updelta \epsilon) \Gamma^a_{ka} + \epsilon(\updelta \Gamma^a_{ka})
\nonumber \\ \Rightarrow \quad &
    \del_k\delta = \updelta \Gamma^a_{ka}
\end{flalign}

and~\ref{general_variation} then can be applied to obtain
\begin{equation}
    \del_k\delta = -\del_k(\delta^a_a)
\end{equation}

It follows that \( \delta = -\delta^a_a + c \) for some constant \( c \),
and since both are zero on the boundary \( \del\Omega \) the constant of
\( c \) must be zero. We have proved that 
\begin{equation}
\label{varydx}
\updelta dx^\circ = -\delta^a_a dx^\circ
\end{equation} 

Now that we understand the effect of the variation on the measure we are 
at last ready to look at the Lagrangian problem. 

\bigskip

We break the problem up into three cases, each involving a different type
of physical variation. We may consider these separately since an arbitrary 
physical variation can be written as a sum of variations from these three cases.

\label{page3cases}
\begin{description}
\bigskip
\item[Case 1.] Variations given by \( a_k \) only with 
\( b^t_k = 0 \), \( \delta^\alpha_\beta = 0 \) and \( \delta^i_j = 0 \).
\begin{equation}
\updelta T^\alpha_{k\beta} = 0 
\quad\quad\,
\updelta T^i_{kj} = 0 
\quad\quad\,
\updelta \Gamma^\alpha_{k\beta} = a_k 1^\alpha_\beta 
\quad\quad\,
\updelta \Gamma^i_{kj} = 0 
\quad\quad\,
\label{VariationCase1}
\end{equation}

These are physical. There is no variation on the measure and since they
only affect the \( F_{ij} \) component of the spinor curvature only the
\( \LL_e \) component of the Lagrangian density can change under a variation 
of this type. The Lagrangian method in this cases reduces to the 
situation in section~\ref{section_EMLagrangian} giving 
equation~\ref{ScalarFieldEquation}, the extended Ampere-Gauss equation.
\end{description}
 
\bigskip
\begin{description}
\item[Case 2.] Variations given by \( \delta^\beta_\alpha \) only with
\( a_k = 0 \), \( b^t_k = 0 \) and \( \delta^i_j = 0 \).
\begin{flalign}
\begin{split}
\updelta T^\alpha_{k\beta} &= 
	\delta^\alpha_\lambda T^\lambda_{k\beta} 
	- \delta^\lambda_\beta T^\alpha_{k\lambda}  \\
\updelta T^i_{kj} &= 0 \\
\updelta \Gamma^\alpha_{k\beta} &= 
	  \delta^\alpha_\lambda\Gamma^\lambda_{k\beta}
 	- \delta^\lambda_\beta\Gamma^\alpha_{k\lambda} 
	- \del_k\delta^\alpha_\beta \\
\updelta \Gamma^i_{kj} &= 0 
\end{split}
\label{VariationCase2}
\end{flalign}

All variations of this type are physical. They fix the measure, and as
we will now show they also fix the Lagrangian density. Hence the Lagrangian
does not change under a variation of this type and they play no role in 
determining dynamic equations via a Lagrangian method.
\end{description}
\medskip
 To determine the variation of the Lagrangian we need to know 
the variations in \( R^\alpha_{ij\beta} \), \( g^{ij} \),
\( s_\bullet^{\alpha\beta}  \) and \( s^\bullet_{\alpha\beta} \).   But 
\begin{equation}
\updelta g^{ij} = 0 = \delta\oo g^{ij} 
\label{Case2g}
\end{equation}
as it is determined by \( T^k_{ij} \) and \( \updelta T^i_{kj} = 0 \). 
The last part follows since \( \delta^i_j = 0 \).

The symplectic form \( s^\bullet_{\alpha\beta} \) is only determined up to 
a scalar by \( T^\alpha_{i\beta} \), so we don't expect to find  a unique 
variation for it. However the variations
\begin{flalign}
\updelta s^\bullet_{\alpha\beta} &= \delta\oo s^\bullet_{\alpha\beta}  \\
\updelta s_\bullet^{\alpha\beta} &= \delta\oo s_\bullet^{\alpha\beta} 
\end{flalign}

where \( \delta\oo \) is defined in terms of \( \delta^\alpha_\beta \) and
an arbitrary  scalar \( \delta^\bullet_\bullet \), ensure the proper 
relationships are conserved under variation and so must be correct. Note
\label{Case2arbitrarybulletNote}
that the scalar \( \delta^\bullet_\bullet \) can take any value and we 
can if we wish choose it to be zero. It represents a local variation 
only in our choice of symplectic form.
To find \( \updelta R^\alpha_{ij\beta} \) we write
\begin{flalign}
\updelta R^\alpha_{ij\beta} = 
\bigl(\del_i\updelta \Gamma^\alpha_{j\beta}
	- \del_j\updelta \Gamma^\alpha_{j\beta} \bigr)
&+ \bigl( \updelta \Gamma^\alpha_{i\lambda} \Gamma^\lambda_{j\beta}
	- \Gamma^\alpha_{j\lambda} \updelta \Gamma^\lambda_{i\beta} \bigr)
\nonumber\\
&+ \bigl( \Gamma^\alpha_{i\lambda} \updelta \Gamma^\lambda_{j\beta}
	- \updelta \Gamma^\alpha_{j\lambda} \Gamma^\lambda_{i\beta} \bigr)
\end{flalign}

We next substitute for the variations using~\ref{VariationCase2}. This 
gives an equation with 24 terms. Fortunately 16 of these cancel and
the remaining 8 terms can be collected up to show
\begin{equation}
\updelta R^\alpha_{ij\beta} = 
	  \delta^\alpha_\lambda R^\lambda_{ij\beta}  
	- \delta^\lambda_\beta R^\alpha_{ij\lambda}  
\label{EvalVarR}
\end{equation}
and since \( \delta^i_j = 0 \) we can write this in the form
\begin{equation}
\updelta R^\alpha_{ij\beta} = 
\delta\oo R^\alpha_{ij\beta}
\end{equation}

\medskip
At this point we can conclude that the variation acts as the tensor 
derivation \( \delta\oo \) on \( R^\alpha_{ij\beta} \), \( g^{ij} \),
\( T^\alpha_{k\beta} \), \( s^{\alpha\beta}_\bullet \),
\( s_{\alpha\beta}^\bullet \) and \( 1^\alpha_\beta \). Hence it 
will act as the tensor derivation \( \delta\oo \) on all tensors constructed
from these via tensor product and contraction. 

As our Lagrangian densities \( \LL_e \) and \( \LL_g \) are constructed in 
precisely this way, the variation will act as \( \delta\oo \) on these too. 
But the tensor derivation \( \delta\oo \) acts trivially on scalars. 
Hence variations of this type have no effect on \( \LL_e \) and \( \LL_g \)
as claimed.

\bigskip

\begin{description}
\item[Case 3.] Variations given by \( b^t_k \) and \( \delta^i_j \) only
with \( a_k = 0 \) and \( \delta^\alpha_\beta = 0 \).
\begin{flalign}
\begin{split}
\updelta T^\alpha_{k\beta} &= \delta\oo T^\alpha_{k\beta} \\ 
\updelta T^i_{kj} &= \delta\oo T^i_{kj} \\
\updelta \Gamma^\alpha_{k\beta} &= b^t_kT^\alpha_{t\beta} \\
\updelta \Gamma^i_{kj} &= b^t_kT^i_{tj}  - \nabla_k\delta^i_j \\
\end{split}
\label{VariationCase3}
\end{flalign}

Not all such variations are physical so we must also require
equation~\ref{equation_constraint}. This is the interesting and
difficult case.
\end{description}
\medskip

Since the variation in the connection \( \Gamma^\alpha_{k\beta} \) is
expressed only in terms of \( b^t_k \), we can obtain an expression
for the variation in the curvature purely in terms of \( b^t_k \).
After a little algebraic heroism we obtain
\begin{equation}
\updelta R^\alpha_{ij\beta} = 
\Bigl[ \bigl( \del_ib^m_j - \del_jb^m_i \bigr) +
       \bigl( b^t_j\Gamma^m_{it} - b^t_i\Gamma^m_{jt} \bigr) 
\Bigr]T^\alpha_{m\beta} 
\end{equation}
which, when expressed in terms of the covariant derivative, gives
\begin{equation}
\updelta R^\alpha_{ij\beta} = 
\Bigl[ \nabla_ib^m_j - \nabla_jb^m_i
- T^t_{ij}b^m_t \Bigr]T^\alpha_{m\beta}
\end{equation}

separating out components we obtain
\medskip
\begin{equation}
\begin{split}
\updelta F_{ij} &= 0 \\
\updelta R^k_{ij} 
	&= \nabla_ib^m_j - \nabla_jb^m_i - T^t_{ij}b^m_t +R^m_{ij}\delta^k_m 
\end{split}
\end{equation}

The variations in the Lagrangian densities \( \updelta \LL_e \) and 
\( \updelta \LL_g \) depend on the variations \( \updelta g_{ij} \) 
and \( \updelta g^{ij} \) in the metric, while the variation in the 
Lagrangian depends also on the variation \( \updelta dx^\circ \) in 
the measure. The metric variation is
\begin{flalign}
\begin{split}
\updelta g_{ij} &= \delta\oo g_{ij} \\
\updelta g^{ij} &= \delta\oo g^{ij} \\
\end{split}
\end{flalign}


We can now determine \( \updelta \LL_e \) and \( \updelta \LL_g \). We get
\begin{flalign}
\updelta \LL_e  
 &= F_{ij}F_{ab}\bigl(\updelta g^{ia}g^{jb} + g^{ia}\updelta g^{jb}\bigr) \\
 &= \bigl[4F_{ij}F_{an}g^{ia}g^{jm}\bigr]\delta^n_m
\end{flalign}
for the variation of the electromagnetic Lagrangian density, where the symmetry
properties of the tensors have been exploited to show all four terms are equal.
The gravitational Lagrangian density gives
\begin{flalign}
\begin{split}
\updelta \LL_g  
 &= 2\updelta R^k_{ij}R^c_{ab}g_{kc}g^{ia}g^{jb}
 +2R^k_{ij}R^c_{ab}g_{kc}\updelta g^{ia}g^{jb}
 +R^k_{ij}R^c_{ab}\updelta g_{kc}g^{ia}g^{jb} \\
 &= 2\updelta R^k_{ij}R^c_{ab}g_{kc}g^{ia}g^{jb}
 +4R^k_{ij}R^c_{ab}g_{kc}g^{ia}g^{jm} \delta^b_m
 -2R^k_{ij}R^c_{ab}g_{mc}g^{ia}g^{jb}\delta^m_k \\[2pt]
 &= 2\bigl(\nabla_ib^k_j - \nabla_jb^k_i 
	- T^t_{ij}b^k_t +R^m_{ij}\delta^k_m\bigr) 
	R^c_{ab}g_{kc}g^{ia}g^{jb} \\
&\quad\quad
 +4R^k_{ij}R^c_{ab}g_{kc}g^{ia}g^{jm} \delta^b_m
 -2R^m_{ij}R^c_{ab}g_{kc}g^{ia}g^{jb}\delta^k_m \\[2pt]
 &= 2\bigl(\nabla_ib^k_j - \nabla_jb^k_i - T^t_{ij}b^k_t\bigr) 
	R^c_{ab}g_{kc}g^{ia}g^{jb}
 +4R^k_{ij}R^c_{ab}g_{kc}g^{ia}g^{jm} \delta^b_m \\
 &= 2\bigl(2\nabla_ib^k_j - T^t_{ij}b^k_t\bigr) 
	R^c_{ab}g_{kc}g^{ia}g^{jb}
 +4R^k_{ij}R^c_{ab}g_{kc}g^{ia}g^{jm} \delta^b_m 
\end{split}
\end{flalign}

We can now integrate these to obtain the variations in the
Lagrangians \( L_e \) and \( L_g \). In doing so we must also use 
$ \updelta dx^\circ = -\delta^a_a dx^\circ $ from 
equation~\ref{varydx}.
\medskip

\begin{flalign}
\updelta L_e &= \nonumber 
\int_\Omega\updelta \LL_e \, dx^\circ + \int_\Omega \LL_e \updelta dx^\circ \\
&= \int_\Omega 
	4F_{ij}F_{an}g^{jm} g^{ia}\delta^n_m 
	- F_{ij}F_{ab}g^{jb}g^{ia}\delta^m_m \, dx^\circ \\
&= \int_\Omega 
  F_{ij}F_{ab} g^{ia} 
	\bigl( 4 g^{jm} 1^b_n  - g^{jb} 1^m_n \bigr) 
		\delta^n_m \, dx^\circ 
\label{emlagrangiandensity}
\end{flalign}

The integrand here involves the obvious extension of the stress energy 
tensor for the electromagnetic field to our ten dimensional context. 
This is the expected source term for the contribution to gravity 
of the energy in an electromagnetic field.

The gravitational Lagrangian \( L_g \) is as follows. Note the use of 
Stoke's theorem in the third step.
\begin{flalign}
\updelta L_g &= \nonumber 
\int_\Omega\updelta \LL_g \, dx^\circ + \int_\Omega \LL_g \updelta dx^\circ 
\\[2pt] &= \nonumber \int_\Omega
4\nabla_ib^k_j R^c_{ab}g_{kc}g^{ia}g^{jb}
   - 2T^m_{ij} R^c_{ab}g_{nc}g^{ia}g^{jb} b^n_m 
\\ \nonumber &\quad\quad\quad 
    +4R^k_{ij}R^c_{an}g_{kc}g^{ia}g^{jm}\delta^n_m
    -R^k_{ij}R^c_{ab}g_{kc}g^{ia}g^{jb}\delta^a_a \,\,dx^\circ \\[2pt]
&= \nonumber \int_\Omega
\bigl(-4\nabla_iR^c_{ab}g^{mb}
- 2T^m_{ij} R^c_{ab}g^{jb}\bigr)g_{nc}g^{ia}\, b^n_m 
\\ \nonumber &\quad\quad\quad
 +\bigl(4R^k_{ij}R^c_{an}g^{jm}
-R^k_{ij}R^c_{ab}g^{jb}1^m_n\bigr)g_{kc}g^{ia}\,\delta^n_m \,\,dx^\circ \\
&= \nonumber \int_\Omega
-4\bigl(\nabla^jR^k_{ji} - \nfrac{1}{2}T^j_{is} g^{st} R^k_{jt} \bigr)
g^{mi} g_{nk} \, b^n_m 
\\ \nonumber &\quad\quad\quad
 + R^k_{ij}R^c_{ab} g^{ia} g_{kc}
	\bigl(4 g^{jm} 1^b_n -g^{jb}1^m_n\bigr)
		\,\delta^n_m \,\,dx^\circ \\
&= \int_\Omega
-4U^k_i g^{mi} g_{nk} \, b^n_m 
\\ \nonumber &\quad\quad\quad
 + R^k_{ij}R^c_{ab} g^{ia} g_{kc}
	\bigl(4 g^{jm} 1^b_n -g^{jb}1^m_n\bigr)
		\,\delta^n_m \,\,dx^\circ 
\end{flalign}

Where \( U^k_i \) is Ussher's tensor from equation~\ref{Udefined}.
Note the similarity of the last term to equation~\ref{emlagrangiandensity}.
This term describes the energy held in the gravitational field and 
suggests the presence of non-linear phenomena where gravity acts 
as a source of gravity. 

Combining these two into the Lagrangian \( L = L_g \pm 4L_e \) we
obtain
\begin{flalign}
\updelta L 
&= \nonumber \int_\Omega
-4U^k_i g^{mi} g_{nk} \, b^n_m 
\\&\quad\quad\quad
+ \bigl( R^k_{ij}R^c_{ab} g^{ia} g_{kc} \pm 4 F_{ij}F_{ab} g^{ia} \bigr)
	\bigl( 4 g^{jm} 1^b_n - g^{jb} 1^m_n \bigr) 
		\delta^n_m \, dx^\circ 
\label{fulllagrangian}
\end{flalign}

To go beyond this we will need to use an equation relating \(b^n_m \) 
and \( \delta^n_m \).  The algebra involved is long and complicated; 
we will work our way through it in section~\ref{section:exactequation}.  

\medskip

However we are now in a position to obtain a good approximate solution
in the small field\footnote{calling it the `weak field case' might be 
misunderstood} case where \( F_{ij} \) and \( R^k_{ij} \) are small.
Under these conditions we can drop all terms which are second order in 
\( F_{ij} \) or \( R^k_{ij} \). Equation~\ref{fulllagrangian}
then simplifies to 
\begin{equation}
\updelta L_{\text{small field}} 
= \int_\Omega -4U^k_i g^{mi} g_{nk} \, b^n_m \,\,dx^\circ
\label{approximatelagrangian}
\end{equation}

which we can now solve quite easily since all the terms involving 
\( \delta^n_m \) have been eliminated. If this is to be zero for all 
\( b^n_m \) we must have 
\begin{equation}
-4U^k_ig^{mi}g_{nk} = 0
\end{equation}
Rewriting this in terms of \( R^k_{ij} \), adding in a source term
and rearranging we obtain
\begin{equation*} 
\nabla^j R^k_{ji} -\nfrac{1}{2} T^j_{is} g^{st} R^k_{jt} = K^k_i
\end{equation*}

Which is precisely Ussher's equation~\ref{UsshersEquation} for 
gravitation which we obtained earlier from Ussher's identity. We have now
obtained this equation again via a Lagrangian approach. However we obtained
it as an approximate solution in the small field case. Fortunately this
is likely to be sufficient for many applications.

\bigskip

We end this section with some words of caution about the small field 
solution we have just obtained. 
\begin{enumerate}
\item \label{paragraph:smallfield}
 Eliminating \( \delta^n_m \) from the Lagrangian problem by making 
the small field assumption does not relieve us of our obligation to use 
only physical variations. To the extent that this 
constrains the possible values of \( b^n_m \), \emph{additional solutions 
may be possible in the small field case}. We will revisit this very subtle 
technicality once we have solved the exact equation.
\item \label{paragraph:UssherSourceProblem} We have not yet specified the
 nature of the source term in Ussher's equation. We were hoping to find 
this by varying the Lagrangian for matter. However the small field 
approximation we have just made will make this 
difficult as we have no reason to expect that such a variation will 
depend only on \( b^n_m \). We would need to express any \( \delta^n_m \) 
term as a function of \( b^n_m \) and to the extent that \( \delta^n_m \) 
is not fully determined by \( b^n_m \) this may not be possible. We may 
have to seek a best approximation.
\end{enumerate}

\bigskip
\pagebreak[2]

\section{The Exact Equation}
\label{section:exactequation}

We resume our efforts to obtain an exact solution for the full
Lagrangian problem continuing on from equation~\ref{fulllagrangian}.
\begin{flalign*}
\updelta L 
&= \nonumber \int_\Omega
-4U^k_i g^{mi} g_{nk} \, b^n_m 
\\&\quad\quad\quad
+ \bigl( R^k_{ij}R^c_{ab} g^{ia} g_{kc} \pm 4 F_{ij}F_{ab} g^{ia} \bigr)
	\bigl( 4 g^{jm} 1^b_n - g^{jb} 1^m_n \bigr) 
		\delta^n_m \, dx^\circ 
\end{flalign*}

The variation is described in terms of two quantities 
\( \delta^a_b \) and \( b^m_n \) which are small in 
the interior of a region \( \Omega \) and zero on its boundary;
if the variation is to be physical these are constrained by 
equation~\ref{equation_constraint} which states. 
\begin{flalign*}
\begin{split}
\left( b^t_kT^i_{tj}   - b^t_jT^i_{tk} \right) 
&= \left(\nabla_k\delta^i_j - \nabla_j\delta^i_k \right) 
	+ \delta\oo T^i_{jk} \\
&= \left(\del_k\delta^i_j - \del_j\delta^i_k \right)  
	- \left( \delta^t_k\Gamma^i_{tj} - \delta^t_j\Gamma^i_{tk} \right) 
\end{split}
\end{flalign*}

However in this form the constraint is difficult to use.
We would prefer a constraint with \( b^m_n \) or \( \delta^m_n \) as the 
subject which might allow us to substitute. We begin this section by 
seeking an equivalent constraint in this form.

\medskip

The vector connection \( \Gamma^k_{ij} \) can be written in terms of 
the Christoffel connection and the torsion as given in 
equation~\ref{Christoffel}. Furthermore the Christoffel connection 
can be written purely in terms of the metric and its partial derivatives. 
Hence the connection can be expressed in terms of the torsion,
the metric and its partial derivatives.
\begin{equation}
\label{viametricandtorsion}
\Gamma^k_{ij} = \frac{1}{2}\Bigl[
	\bigl( \del_ig_{jm}+\del_jg_{im} -\del_mg_{ij} \bigr)g^{mk}
	- T^k_{ij}\Bigr]
\end{equation}

Since the variations of the quantities on the right hand side only
depend on \( \delta^m_n \) we can use this to write 
\( \updelta\Gamma^k_{ij} \) in terms of \( \delta^m_n \) and its partial 
derivatives. We can then use this to write \( b^m_n \) as a function 
of \( \delta^m_n \).

\medskip

We begin by writing equation~\ref{viametricandtorsion} in a form
which minimises contractions.
\begin{equation}
\label{viametricandtorsionsimplified}
2\Gamma^m_{ij}g_{mk} 
	= \del_ig_{jk}+\del_jg_{ik} -\del_kg_{ij} - T^m_{ij}g_{mk}
\end{equation}

We do this in order to minimise the number of terms in the equation after 
variation. We can also simplify the algebra by working using lowered index 
versions, \( \delta_{ab} = \delta^t_ag_{tb} \) and
\( b_{ab}=b^t_ag_{tb} \), of our variation.  In particular we have
\begin{flalign}
\updelta g_{ij} &= - \delta_{ij} - \delta_{ji} \\
b_i^tT^m_{tj}g_{mk} &= b_{im}T^m_{jk} \\
- \updelta (T^m_{ij}g_{mk}) &= 
	T^m_{ij}\delta_{mk} + T^m_{jk}\delta_{mi} + T^m_{ki}\delta_{mj} 
\end{flalign}

Applying the variation we obtain in the first instance
\begin{equation}
\begin{split}
& 2 b_{im}T^m_{jk} 
- 2\nabla_i(\delta^m_j)g_{mk} 
- 2\Gamma^m_{ij}\delta_{mk}  - 2\Gamma^m_{ij}\delta_{km}  
\\[3pt]&\quad\quad\quad\quad = \quad
  - \del_i\delta_{jk} - \del_i\delta_{kj} 
  - \del_j\delta_{ik} - \del_j\delta_{ki} 
  + \del_k\delta_{ij} + \del_k\delta_{ji} 
\\&     \quad\quad\quad\quad\quad\quad \quad\quad\quad\quad
  + T^m_{ij}\delta_{mk} + T^m_{jk}\delta_{mi} + T^m_{ki}\delta_{mj} 
\end{split}
\end{equation}

The next step is to replace all of the partial derivatives with covariant 
derivatives, and to cancel and collect up all the resulting terms 
involving the connection. These pair up nicely to give torsion terms
yielding the equation
\begin{equation}
\begin{split}
2 b_{im}T^m_{jk} &= 
(  \nabla_i\delta_{jk} 
- \nabla_i\delta_{kj}) 
+ (\nabla_k\delta_{ij} 
- \nabla_j\delta_{ik})
+ (\nabla_k\delta_{ji}
- \nabla_j\delta_{ki}) 
\\&\quad
+ 2\delta_{mi}T^m_{jk}
+ \delta_{im}T^m_{jk}
+ (\delta_{jm}T^m_{ik}
- \delta_{km}T^m_{ij})
\end{split}
\end{equation}

Note that the equation is antisymmetric with respect to the indices
\( j \) and \( k \) and has been written to illustrate this. We now contract
with the tensor \( \nfrac{1}{2} T^{jk}_t = \nfrac{1}{2} T^j_{ts}g^{sk} \) which is also 
antisymmetric with respect to the indices \( j \) and \( k \) to obtain 
the equation
\begin{equation}
\begin{split}
  6 b_{it} &= 
   \nabla_i\delta_{jk}  T^j_{ts}g^{sk}
+  \nabla_k\delta_{ij}  T^j_{ts}g^{sk}
+  \nabla_k\delta_{ji}  T^j_{ts}g^{sk}
\\&\quad
+ 6 \delta_{ti} + 3\delta_{it} 
+  \delta_{jm} T^m_{ik} g^{sk} T^j_{ts}
\end{split}
\end{equation}

We finish up by renaming contracted indices in a more systematic fashion.
\begin{equation}
\begin{split}
  6 b_{mn} &= 
   \nabla_m\delta_{ij}  T^i_{ns}g^{sj}
+  \nabla_j\delta_{mi}  T^i_{ns}g^{sj}
+  \nabla_j\delta_{im}  T^i_{ns}g^{sj}
\\&\quad
+ 3\delta_{mn} + 6 \delta_{nm} 
+  \delta_{ij} T^j_{ms} g^{st} T^i_{nt}
\end{split}
\label{substituter}
\end{equation}

\medskip

We now prepare equation~\ref{fulllagrangian} for substitution by rewriting 
it in the form
\begin{equation}
\updelta L 
= \int_\Omega -4U^n_k g^{km} b_{mn} + W^m_p g^{pn} \delta_{mn}
 \,\,dx^\circ
\label{lagrangianUW}
\end{equation}
where \( U^n_k \) is Ussher's tensor from equation~\ref{Udefined}, and
\begin{equation}
W^m_p = \bigl(R^s_{ai} R^t_{bj} g_{st}g^{ab} \pm 4 F_{aj}F_{bi}g^{ab} \bigr)
\bigl(4 g^{jm} 1^i_p - g^{ij} 1^m_p \bigr)
\end{equation}

\bigskip

We now at last are ready to substitute equation~\ref{substituter} into 
equation~\ref{lagrangianUW} to obtain a Lagrangian we can solve. We obtain 
the following equation. Note the use of Stokes' theorem to simplify the first
three terms.
\begin{equation}
\begin{split}
3\updelta L 
&= \int_\Omega 
2 \nabla^k U^n_k T^i_{ns} g^{sj} \delta_{ij}  
+2 \nabla^s U^n_k T^i_{ns} g^{km} \delta_{mi}  
+2 \nabla^s U^n_k T^i_{ns} g^{km} \delta_{im}  
\\ & \quad \quad \quad
-6U^n_k g^{km} \delta_{mn} 
-12U^n_k g^{km} \delta_{nm} 
-2U^n_k g^{km} T^j_{ms} g^{st} T^i_{nt} \delta_{ij} 
\\[5pt] & \quad \quad \quad \quad \quad
 + 3W^m_p g^{pn} \delta_{mn}
 \,\,dx^\circ
\end{split}
\end{equation}

We next rename contracted indices and rewrite it in terms of \( \delta^n_m \)
\begin{equation}
\begin{split}
\updelta L 
&= \int_\Omega 
\nfrac{2}3 \nabla^k U^i_k T^m_{in} \delta^n_m  
+\nfrac{2}{3} \nabla^s U^i_k T^p_{is} g^{km} g_{pn} \delta^n_m  
+\nfrac{2}{3} \nabla^s U^i_n T^m_{is} \delta^n_m
\\ & \quad \quad \quad
-2U^p_k g^{km} g_{pn} \delta^n_m
-4U^m_n \delta^n_m
+\nfrac{2}{3}U^j_k g^{ki} T^t_{in} T^m_{jt} \delta^n_m
\\[5pt] & \quad \quad \quad \quad \quad
 + W^m_n \delta^n_m
 \,\,dx^\circ
\end{split}
\end{equation}
 This puts the equation in the form
\begin{equation}
\label{expressionequation}
\updelta L  = 
\int_\Omega [\text{expression}]^m_n\delta^n_m\,dx^\circ
\end{equation} 

which is zero for all variations if and only if 
\( [\text{expression}]^m_n = 0 \). We can now solve the Lagrangian problem
to obtain
\begin{multline}
\nabla^k U^i_k T^m_{in} 
+ \nabla^s U^i_k T^p_{is} g^{km} g_{pn} 
+ \nabla^s U^i_n T^m_{is} \\
-3U^p_k g^{km} g_{pn} -6U^m_n 
+U^j_k g^{ki} T^t_{in} T^m_{jt} 
 + \nfrac{3}{2}W^m_n = 0
\end{multline}

Ussher's identity~\ref{Uidentity1} gives \( \nabla^k U^i_k =0 \) 
allowing us to drop the first term. We collect up the remaining terms 
and write it in the form of a differential equation in Ussher's tensor.
\begin{multline}
\bigl( - T^t_{in} g^{km} + 1^k_n T^m_{is} g^{ts} \bigr) 
	\nabla_t U^i_k \\
- \bigl( 3 g^{km} g_{in} +6,1^m_i1^k_n - g^{kj} T^t_{jn} T^m_{it}\bigr)
	U^i_k 
 + \nfrac{3}{2}W^m_n = 0
\label{exactequation}
\end{multline}
 
This is the long sought exact solution to the Lagrangian problem. We can 
simplify the form of this equation by defining the coefficient tensors
\begin{flalign}
\label{coefficientA}
A^{tkm}_{in} &= T^t_{in} g^{km} - 1^k_n T^m_{is} g^{ts} \\ 
\label{coefficientB}
B^{km}_{in} &= 3 g^{km} g_{in} +6.1^m_i1^k_n - g^{kj} T^t_{jn} T^m_{it}
\end{flalign}

In the presence of sources it will take the form
\begin{equation}
A^{tkm}_{in} \nabla_t U^i_k + B^{km}_{in} U^i_k =  \nfrac{3}{2}(S^m_n + W^m_n)
\label{exactequationsources}
\end{equation}

where the source term \( S^m_n \) will be the coefficient of 
\( \delta^n_m \) under a type 3 variation of the Lagrangian for matter.

\medskip

Consider the small field approximation in the absence of sources obtained 
by dropping the second order term \( W^m_n \) and the source term $S^m_n$
from equation~\ref{exactequationsources}.
\begin{equation}
A^{tkm}_{in} \nabla_t U^i_k + B^{km}_{in} U^i_k = 0
\label{smallfieldequation}
\end{equation}

At the end of section~\ref{section:Lagrangegravity} we obtained
Ussher's equation~\ref{UsshersEquation} as a small field approximate
equation for gravity in the absence of sources. Now we have a different 
small field approximation.  How can we reconcile the two?

Ussher's equation (in the absence of sources) takes the form  \( U^i_k = 0 \), 
so clearly any solution to Ussher's equation will also be a solution to 
equation~\ref{smallfieldequation}. But the converse will not be true. 
Equation~\ref{smallfieldequation} will have additional solutions.

In the discussion on page~\pageref{paragraph:smallfield} we noted that 
additional small field solutions might be possible if the variation 
\( b^n_m \) is partially constrained by the requirement that the variation 
be physical. That does indeed seem to be the origin of the additional 
solutions in equation~\ref{smallfieldequation}.

\medskip

In the presence of sources the small field equation takes the form of 
differential equation in Ussher's tensor. 
\begin{equation}
A^{tkm}_{in} \nabla_t U^i_k + B^{km}_{in} U^i_k =  \nfrac{3}{2}S^m_n
\label{smallfieldsources}
\end{equation}

This equation is not quite as simple as it looks since there are connections
hiding in the covariant derivative. Neverthless it invites us to view the 
situation in terms of a two stage process, where 
equation~\ref{smallfieldsources} determines the value of Ussher's tensor, 
and Ussher's equation then determines the curvature.

\bigskip
\pagebreak[2]

\section{The Dark side of the Force}

The title of this section is perhaps a temptation that should have been 
resisted. However one of the things we are particularly interested in 
looking at is whether there is room in these equations for dark phenomena --- 
behavior which could give the appearance of the presence of dark matter 
and/or energy. The answer seems to be yes. These equations do indeed seem 
to predict dark phenomena as we shall now explain.

\medskip

We will use an analogy with the suspension system of a car to help describe the 
situation. 

Ussher's equation in the presence of sources is \( U^k_m = S^k_m \) where 
\( S^k_m \) is a source term determined by the distribution of matter.  It 
is an equation of type \( u = s \), where the value of \( s \) directly 
determines the value of  \( u \). A step change in \( s \) will thus result 
in a step change in \( u \). We may think of it as a car with no suspension. 
Every bump in the road has immediate effect.

The small field equation~\ref{smallfieldsources} introduces a derivative term
to this relationship. It gives an equation of type \( \ddot u = -k(u - s) \). 
The value of \( s \) still determines the value of \( u \), but in a less 
direct fashion. Our allegorical car is now equipped with a suspension system
smoothing out the bumps in the road.

This suspension analogy is a useful one. In a car with suspension the car
no longer directly `sees' the road. Instead its behavior is determined by the 
state of the suspension. It is the suspension that `sees' the road.

\medskip

Hence gravity, in the weak field case, can be described by two equations. 
Firstly we have the \textbf{dark equation}, which is essentially just the 
small field equation~\ref{smallfieldsources} 
\begin{equation}
A^{tkm}_{in} \nabla_t X^i_k + B^{km}_{in} X^i_k =  
\nfrac{3}{2}S^m_n
\label{darkequation}
\end{equation}

This in principle determines $X^i_k$, which is then used as the effective 
source in Ussher's equation $U^i_k = X^i_k$.
\begin{equation}
\label{UsshersEquationX}
\nabla^j R^k_{ji} - \nfrac{1}{2} T^j_{is} g^{st} R^k_{jt} = X^k_i
\end{equation}

Ussher's equation describes what we might think of as the ordinary behavior 
of gravity. We expect to obtain Newton's law from it for example as a 
non-relativistic approximation. But since Ussher's equation 'sees' the 
effective source \( X^i_k \) and not the usual source \( K^i_k \)
it will seem to us at times as if gravity is responding to a different 
distribution of energy and matter than we can actually observe.

Sometimes the gravitational force predicted by these equations may respond as 
though matter were present in places where there isn't any, because while
$K^i_k = 0 $  there, $X^i_k$ is not. We might interpret this as dark matter.

It may also be possible for the effective source to take values which cannot
be achieved by the apparent source. For example \( X^i_k \) might take values 
which could only be achieved by \( K^i_k \) if negative mass were 
present.  We might interpret this as dark energy.

\bigskip

We need a better name for $X^i_k$. Because it controls the dark side of the 
force we propose to call it the \textbf{sith}. While this is a little cheesy 
at first sight, it is a good short word and you rapidly get used to it. 
The sith is effectively the suspension system of the universe. It mediates 
between the source terms and Ussher's equation which describes the more 
usual behaviors of gravity. 

\medskip

If we want to talk only about Dark matter, we could define it to be 
the difference between the sith $X^i_k$ and the usual source term $K^i_k$ 
for Ussher's equation determined directly from 
the distribution of matter.

\begin{equation}
D^i_k = X^i_k - K^i_k 
\end{equation}

Of course in order to do this we would need to know how to compute $K^i_k$
which is something we have not yet determined. We noted in the discussion 
on page~\pageref{paragraph:UssherSourceProblem} that we can't expect to 
find this via a Lagrangian argument. However we might be able to find it 
by simplifying the exact equation.  To do this we need to look more 
closely at the nature of the relationship between Ussher's 
equation and the exact one. 

\medskip

Ussher's equation is algebraic in $U^i_k$. It is local in the sense that 
Ussher's tensor at each point is a function of the source at that point. 
The dark equation relates $U^i_k$ to the source via a differential equation.
It is not local since Ussher's tensor depends also on what is happening 
at other points.  The derivative terms in the equation induce this non-local 
behavior.

We expect Ussher's equation to approximate the exact one. This suggests we 
should look at the best (algebraic) local approximation to the exact equation.
The easiest way to obtain an algebraic equation from the exact equation is
to simply drop all the terms involving derivatives, since it is these that 
induce the non-local behavior.

Is this reasonable? Returning to our simple suspension analogy, we are 
effectively trying to remove the suspension from the car. We want to start
with the equation for the behavior with suspension, and deduce the equation
without suspension. Will simply dropping derivative terms do the trick? 
If we drop the derivative term from the equation \( \ddot u = -k(u - s) \) 
we get \( 0 = -k(u-s) \) which simplifies to \( u = s \). This is indeed
the original equation for a car without suspension. So yes the procedure 
does seem to be a reasonable one.

This suggests the apparent source term for Ussher's equation should be 
the solution to the linear equation
\begin{equation}
B^{km}_{in} K^i_k = \nfrac{3}{2}S^m_n
\label{apparentequation}
\end{equation}

where \( S^m_n \) is the source term obtained from the Lagrangian for matter
which we hope to be able to compute in chapter~\ref{Chapter MoM}. Note that 
in the source free case where \( S^m_n = 0 \) that \( K^i_k = 0 \) 
is a solution. 

\bigskip

We have now `found' dark matter and energy in the sense of having 
discovered the requisite phenomena within the equations for gravity.
So does dark matter exist and if so what is it?" 

Whether dark matter can be said to \emph{exist} or not is really a question 
of philosophy; what does it mean for anything to exist? Such questions generate
more headaches than answers. It is much more productive to ask what we would 
expect to find if we were to second quantise these equations.  What quanta 
would we get and could we detect them as particles?

As dark matter and energy are aspects of the gravitational field, 
the associated particles would have to be gravitons of some sort, since
gravitons are by definition the quanta of the gravitational field.  
Similarly the magnetic field is an aspect of the electromagnetic field 
and the particles of magnetism are simply photons. 

\bigskip

Having considered the small field case we now briefly discuss the large
field situation. When the fields are large we cannot ignore the second 
order term \( W^m_n \). In fact for very large fields this term will 
dominate. 

We can write down a large field approximation by dropping linear terms.
Ussher's tensor thus becomes \( U^i_k = \nabla^pR^i_{pk} \). We will also 
drop the electromagnetic field terms from \( W^m_n \) under the assumption 
that gravitational fields completely dominate in this situation. This gives 
the second order differential equation
\begin{multline}
\bigl( T^n_{is} g^{km}  g^{ts} + T^m_{is} g^{kn} g^{ts} 
\bigr) \nabla_t\nabla^pR^i_{pk} \\
+ \bigl( 3 g^{km} g_{in} +6.1^m_i1^k_n - g^{kj} T^t_{jn} T^m_{it}\bigr)
\nabla^pR^i_{pk}  \\
+ \nfrac{3}{2}R^s_{ai} R^t_{bj} g_{st}g^{ab}
\bigl(4 g^{jm} g^{ip} - g^{ij} g^{mp} \bigr)g_{pn} = 0 
\label{largefieldapproximation}
\end{multline}

This equation will control behavior around black holes and any place where the 
curvature is very large. It should determine whether singularities are 
possible in such situations and will describe how they form if they arise. We 
have become used to the notion that singularities exist in black holes. 
However in most physical theories singularities are a symptom that our 
model has broken down. We should be skeptical about the existence 
of singularities in any physical theory. If singularities do not form, 
something else must happen when a very large concentration of matter forms.
Equation~\ref{largefieldapproximation} is where we should look for answers.

\bigskip

We started this section by noting that the exact equations for gravity are 
unlikely to give exact solutions. We can however simplify them by looking at 
\textbf{component equations}. 

Our equations constrain tensors with two vector components which
are 100 dimensional. Hence these equations are really systems of 
100 equations in 100 variables. That is a lot of equations and a lot of 
variables. However all the operations in our equations are natural in 
the context of a framework and hence in particular they respect 
decomposition into irreducibles.  By looking at irreducible components 
of our 100 dimensional tensors we can obtain much simpler component 
equations. 

Tensors with two vector components decompose into irreducibles of dimension
1, 5, 10, 14 and 35 (two different types). For example the scalar component 
of the dark equation is
\begin{equation}
\nabla^t X^i_kT^k_{it} -3 X^k_k = -\nfrac{3}{2}S^m_m
\label{scalardarkequation}
\end{equation}

The dark equation separates into component equations very naturally
since the constant tensors in the equation are linear combinations of 
component projection maps. Hence they will disappear or simplify when 
looking at a single component.

\bigskip

We end with a word of caution. The equations in these last two chapters 
are not geometric identities arising directly from the structure of the 
framework as most of the equations in the earlier chapters were. 
They depend on an additional assumption of some sort, whether it be a 
choice of Lagrangian or a decision to attach a particular physical 
interpretation to a certain divergence free quantity. These equations are
only as good as the assumptions used to generate them.

	\chapter{Matter}
\label{Chapter MoM}

We expect matter to be described by tensors giving the wave function, 
which we will regard as a matter field.
\medskip

The particles which mediate forces have, as their wave functions, tensors 
associated with the geometry; specifically the connection, which acts as 
the potential, and the curvature which gives the associated field. Dynamical 
equations for these tensors arise from the structure of the geometry, and 
source equations can be obtained via a Lagrangian argument by varying the 
geometry. Second quantisation would be necessary for a complete theory of 
these particles, but that is beyond the scope of this work which focuses on 
the field theory before second quantisation.

\medskip

Other elementary particles do not arise from the geometry. Elementary 
fermions will have spinor wave functions. We can therefore think of a spinor
on our framework as describing a fermion field. We seek dynamical and
source equations for the fermion field. Once again second quantisation would 
be needed to obtain a complete multiparticle theory, but this is beyond the 
scope of this work. Spinors will thus be our main focus in this Chapter.

\medskip
The fundamental physical equation governing the dynamics of fermions is the 
Dirac equation. We therefore begin by finding what the Dirac equation looks 
like in our ten dimensional context. Initially we will simply state an 
appropriate equation and explore some of its properties. Later we will look 
at deriving it from a natural Lagrangian.

\bigskip
\pagebreak[2]

\section{The Extended Dirac Equation}
\label{section:Dirac}

Let  \( \psi = \psi^\alpha \) be a 4-D vector complex spinor 
defined on our 10-D spinor manifold which we have identified as 
the manifold of frames. Consider the equation
\begin{equation}
\label{QDirac}
\Curl\psi = \lambda\psi
\end{equation}

We can also write in the form
\begin{equation}
\label{QDirac2}
T^\mu_{k\nu}\nabla^k\psi^\nu = \lambda\psi^\mu
\end{equation}

We will call this the \textbf{extended Dirac equation}.
The constant \( \lambda \) is left without interpretation at this 
point, although we expect it to relate to rest mass and 
possibly charge. Note that this equation uses natural operations 
in our context and is quite obviously covariant.
We now consider how this equation  relates to the standard Dirac equation. 

\bigskip
\pagebreak[2]

We firstly claim that the extended Dirac equation becomes the standard
Dirac equation in the Poincar\'e limit as \( r \rightarrow \infty \). 
Ignoring curvature and using a standard basis with natural units,
the Curl operator on spinors is
\begin{multline}
\Curl(\ ) = -T\del_t + X\del_x + Y\del_y + Z\del_z \\
	+ A\del_a + B\del_b + C\del_c - I\del_i - J \del_j - K \del_k
\end{multline}

where \( T,X,Y,Z,A,B,C,I,J,K \) are the \( 4 \times 4 \) matrices 
\( T^\beta_{i\alpha} \).  In ordinary units this becomes
\begin{multline}
\Curl(\ ) = -rT\del_t + rcX\del_x + r`cY\del_y + rcZ\del_z \\
	+ cA\del_a + cB\del_b + cC\del_c - I\del_i - J \del_j - K \del_k
\end{multline}

Assuming that \( r \) is very large, dropping insignificant terms gives
\begin{equation} 
\frac{1}{rc}\Curl(\ ) = -\frac{1}{c}T\del_t + X\del_x + Y\del_y + Z\del_z
\end{equation}

and equation \ref{QDirac} becomes
\begin{equation} 
\label{Dirac}
\left(-\frac{1}{c}T\del_t + X\del_x + Y\del_y + Z\del_z\right)\psi^\alpha = rc\lambda\psi^\alpha
\end{equation}

which for an appropriate choice of \( \lambda \) is in the right form to be 
the standard Dirac equation. Let us do a more detailed comparison.

\medskip

The usual Dirac equation (in ordinary units) can be written as
\begin{equation*}
\bigl(\frac{1}{c}\gamma^0\del_0 + \gamma^1\del_1 
	+ \gamma^2\del_2 + \gamma^3\del_3\bigr)\psi 
		= \frac{mc}{i\hbar}\psi
\end{equation*}
where 
\begin{itemize}
\item \( \gamma^\mu\gamma^\nu + \gamma^\nu\gamma^\mu = 2\eta^{\mu\nu} \) 
\item \( \eta^{\mu\nu} = 0 \) for all \( \mu \ne \nu \).
\item \( \eta^{00} = 1 \) and \( \eta^{11} = \eta^{22} = \eta^{33} = -1 \).
\item \( \gamma^0 \) is Hermitian and Unitary.
\item \( \gamma^i \) is anti-Hermitian and Unitary for \( i = 1,2,3 \). 
\end{itemize}

\medskip

If we make the identifications
\begin{flalign} &&
\gamma^0 &= -2iT &
\gamma^1 &=  2iX &
\gamma^2 &=  2iY &
\gamma^3 &=  2iZ 
&& \label{gammaidentification}\end{flalign}

then the matrices so defined satisfy all the properties above and multiplying 
equation \ref{Dirac} on both sides by \( 2i \) gives precisely the Dirac
equation provided that
\begin{equation}
\label{requation}
-2\lambda r= \frac{m}{\hbar}
\end{equation}

So the standard Dirac equation can be regarded as the Poincar\'e limit of 
equation~\ref{QDirac}. However while mention of the Lorentz dimensions 
disappears from the extended Dirac equation in the Poincar\'e limit, 
those additional dimensions have not gone away. The equations just ignore 
how the wave functions behave along those directions as in the Poincar\'e 
limit that behavior becomes insignificant.

\bigskip

A better way to view the relationship between the extended and standard 
Dirac equations is to consider spinors  \( \psi^\alpha \) which are functions 
of space and time only, and have no dependence on the Lorentz coordinates. 
Spacetime and Lorentz coordinates are mixed by large translations so this 
distinction between types of coordinates is not possible globally. However
it is quite meaningful on laboratory scales. Under this restriction the 
partial derivatives with respect to the Lorentz coordinates in the extended
Dirac equation will be zero giving the standard Dirac equation.

Consequently we can identify solutions to the standard Dirac equation with
solutions of the extended Dirac equation which are locally constant on the 
Lorentz coordinates. We expect that the extended Dirac equation will also 
have solutions that vary along the Lorentz coordinates. The identification 
of these additional solutions is something that we will leave for later.

\bigskip
%

Equation~\ref{QDirac} has many advantages over the standard Dirac equation. 
We pause to list a few of them here.

For one thing equation \ref{QDirac} is obviously invariant since we built 
it from the invariant Curl operator. Furthermore it is already expressed 
in coordinate free fashion on a space with curvature. Contrast this with 
usual approaches to the Dirac equation where a demonstration of relativistic 
invariance and the correct transformation properties of Dirac matrices 
often requires a laborious search for S-matrix transformations and is 
a fairly non-trivial exercise. 

The most important difference however is that every quantity in equation 
\ref{QDirac} including the Dirac matrices themselves comes equipped with 
a direct physical interpretation.
Contrast this with the usual approach to the Dirac equation where the
Dirac matrices are initially chosen purely for their algebraic properties 
and physical interpretations are obtained later and only with considerable 
effort. Indeed the Dirac matrices themselves still lack a generally accepted 
physical interpretation.

Our gamma matrices \( T,X,Y,Z \) are matrices representing the intrinsic 
action of translation by one natural unit along the \( t,x,y,z \) directions. 
We don't have to interpret them as such. That is what they {\em are}. We thus 
know what their eigenvalues should represent. They should give us 
\textit{intrinsic} energy and momentum in natural units.  We expect these 
to relate to ordinary energy and momentum in the same way that spin relates 
to angular momentum. We may think of them as properties internal to the 
particle.  Note that intrinsic energy is unrelated to rest mass which is a 
function of ordinary energy and momentum. 

Since \( T,X,Y,Z \) anti-commute, intrinsic energy and momentum are not 
simultaneously observable. The eigenvalues of \( T \) are 
\( \pm \frac{1}{2}i \) while those of \( X \), \( Y \) and \( Z \) are 
\( \pm \frac{1}{2} \). The lack of a factor of \( i \) in the eigenvalues
of \( X \) \( Y \) and \( Z \) is of note. This occurs because  \( X \) 
for example is Hermitian while \( T \) is anti-Hermitian. In other words
the difference arises because the associated representation of \( \SO(2,3) \) 
is not unitary. Disregarding any factors of \( i \) that arise, the 
intrinsic energy or momentum as measured along any chosen axis is 
always \( \pm \frac{1}{2} \) in natural units. In ordinary units the intrinsic 
energy is \(\pm \frac{\hbar}{2r} \) and the intrinsic momentum measured along 
an axis is \(\pm \frac{\hbar}{2rc} \).  Note that only one of these can be 
observed at a time. 

\medskip

Intrinsic velocity along a given direction, were we interested in defining 
such a thing, would be most naturally defined as the quotient of intrinsic 
energy with intrinsic momentum and we could therefore obtain it from 
appropriate quotients of the Dirac matrices.  The operators so obtained
will not commute so intrinsic velocity could only be measured along one 
direction at a time.  The intrinsic velocity measured along any direction 
would be either \( c \) or \( -c \).  

The operators \( (\gamma^0)^{-1}\gamma^i \) have in fact long been recognised 
as velocity operators. Their eigenvalues of \( \pm c \) have thus been
regarded as problematic since the electron, having finite mass, should not 
be moving at the speed of light. The usual explanation for this paradox 
envisages the instantaneous electron velocity as oscillating very rapidly 
between \( +c \) and \( -c \) in such a way that the observed overall 
velocity is finite, an effect commonly known as the Zitterbewegung or 
``shaking motion'' of the electron.  Getting this explanation to work has 
however proved difficult.

The realisation that the velocity measured by these operators should be 
interpreted as {\em intrinsic} solves the problem completely. An intrinsic 
velocity of \( c \) can coexist with an ordinary finite velocity quite
easily since we no longer have to reconcile the two. The intrinsic
velocity is free to do whatever it likes independent of the ordinary
velocity just as spin is independent of angular momentum. The intrinsic 
velocity need not even oscillate. The problem of the Zitterbewegung is 
thus resolved.

\medskip

As can be seen in table~\ref{table_so23} commutators of the Dirac matrices 
\( \{ X,Y,Z \} \) of translations are precisely the operators 
\( \{ I,J,K \} \) of rotation. We do not need to {\em interpret} them this
way. That is what they {\em are}. Hence it is immediate and trivial that 
their eigenvalues should describe spin. Justification that these operators
should be interpreted as spin operators and that the equation therefore
describes particles with spin \( \frac{1}{2} \) is much more difficult in
the standard approach to the Dirac equation. 

\medskip

As another example consider Dirac's \( \gamma_5 \) matrix. We can compute 
this now directly from the matrices in table~\ref{sp4R_basis}. Making 
appropriate adjustments for the differing conventions with regard to 
the factor \( \frac{1}{2} \) and the complex unit we obtain the matrix
\begin{equation*}
8TXYZ = \frac{1}{2}\left(\begin{array}{rrrr}
 0 & 0 & 0 & 1\\
 0 & 0 &-1 & 0\\
 0 & 1 & 0 & 0\\
-1 & 0 & 0 & 0
\end{array}\right)
\end{equation*}

This is precisely the matrix \( P_\lambda \) in table~\ref{sp4R_action}.
As we noted on page~\pageref{page:Plambda_inverts}, this matrix can be 
interpreted as a matrix for an inversion of the \( t \), \( x \), \( y \) 
and \( z \) coordinates, which is consistent with the usual understanding 
of the \( \gamma_5 \) operator.

\medskip

Generally using equation~\ref{QDirac} is a much more transparent process
than using the standard Dirac approach. The quantities are well defined with
clear physical meanings, their coordinate dependence is clear, their
transformation properties are known, and we have a rigorous algebraic
framework for working with them.

\bigskip

We finish this section by considering the probability current. We define
\begin{equation}
J^\bullet_k = s^\bullet_{\mu\alpha}\overline{\psi}^\alpha T^\mu_{k\beta}\psi^\beta
\label{bulletJ}
\end{equation}

This is real since 
\begin{equation}
\begin{split}
\overline{J}^\bullet_k
 &= s^\bullet_{\mu\alpha}\psi^\alpha T^\mu_{k\beta}\overline{\psi}^\beta \\
 &= s^\bullet_{\mu\beta}\psi^\alpha T^\mu_{k\alpha}\overline{\psi}^\beta 
	= J^\bullet_k
\end{split}
\end{equation}

Furthermore if \( \psi^\alpha \) is a solution to 
equation~\ref{QDirac2} then
\begin{equation}
\begin{split}
\nabla^k J^\bullet_k &= s^\bullet_{\mu\alpha}\nabla^k\overline{\psi}^\alpha T^\mu_{k\beta}\psi^\beta
			+ s^\bullet_{\mu\alpha}\overline{\psi}^\alpha T^\mu_{k\beta}\nabla^k\psi^\beta \\
                     &= s^\bullet_{\mu\beta} \Bigl( T^\mu_{k\alpha} \nabla^k\overline{\psi}^\alpha \Bigr) \psi^\beta
			+ s^\bullet_{\mu\alpha}\overline{\psi}^\alpha \Bigl( T^\mu_{k\beta}\nabla^k\psi^\beta \Bigr) \\
                     &= \lambda s^\bullet_{\mu\beta} \overline{\psi}^\mu \psi^\beta
			+ \lambda s^\bullet_{\mu\alpha}\overline{\psi}^\alpha \psi^\mu \\
                     &=  0
\end{split}
\end{equation}

Hence \( J^\bullet_k \) is conserved and it would therefore seem reasonable
to identify it as the probability current. It is however a bullet vector and 
not a true vector.  We therefore call \( J^\bullet_k \) the 
\textbf{bullet probability vector}. 

The fact that \( J^\bullet_k \) is a bullet vector and not a vector makes 
it difficult to see how we might use it as the source term for our field 
equations.  In particular this cannot be the \( J_i \) of 
equation~\ref{ScalarFieldEquation}. In our theory bullet vectors and true
vectors are quite distinct and behave differently under parallel transport
and we cannot simply substitute one for the other.

The presence of a bullet index indicates an object with a functional 
dependence on the choice of symplectic form. We can obtain an object 
without bullet indices by specifying a symplectic form. 

A symplectic form is specified by choosing a non-zero Crump scalar 
\( h_\bullet \) which we will call the \textbf{Crump factor} allowing us 
to define \( s_{\alpha\beta} = h_\bullet s^\bullet_{\alpha\beta} \). 
The corresponding probability current vector with respect to the
symplectic form specified by the Crump factor will then be 
\( J_k = h_\bullet J^\bullet_k \).

Of course things are not quite that simple. A probability current vector 
\( J_k \) defined in this way from a solution \( \psi^\alpha \) to 
equation~\ref{QDirac2} will usually not be conserved since
\begin{equation}
\nabla^kJ_k = \nabla^k(h_\bullet J^\bullet_k) = H^kJ_k 
\end{equation}

where \( H^k \) is given by \( \nabla^k(h_\bullet) = H^k h_\bullet \).
This is not surprising since \( J_k \) is a function of the Crump factor
\( h_\bullet \), and \( h_\bullet \) does not appear in equation~\ref{QDirac2}. If we want \( J_k \) to be conserved we would need to modify 
equation~\ref{QDirac2} to explicitly include terms involving \( h_\bullet \).

\bigskip
\pagebreak[2]

\section{The Dirac Lagrangian}

We next seek to obtain the Dirac equation via a Lagrangian argument. This
will provide a further check on the form of the equation. A Lagrangian density
for fermions should also give fermion source terms for our force equations 
when we subject it to variations in the geometry.

\medskip

Most texts on relativistic field theory present a derivation of the 
Dirac equation from a Lagrangian density and, as we shall see, the standard 
Dirac Lagrangian density translates quite naturally into our framework. 
Indeed we will find that some things work considerably better; for one thing
our Dirac Lagrangian is an obviously well defined scalar whereas showing that 
the this is the case for the standard Dirac Lagrangian density requires 
considerable effort.

Another issue with the standard Dirac Lagrangian density is that it is complex.
The translation of this Lagrangian into our context will also be complex.
This is awkward since a complex Lagrangian density makes physical 
interpretation of the Lagrangian density unclear and threatens us with the 
prospect of imaginary source terms for our force equations.

The problem is not specific to our version and afflicts standard Dirac 
Lagrangian theory as well. The obvious cure would be to simply take the 
real or imaginary part. Unfortunately one component generates the Dirac 
equation in the Lagrangian problem while the other is responsible for
generating the electromagnetic source terms under the appropriate variation 
of the geometry. Hence both are needed for a full treatment.

Because this issue is critical for a proper understanding of source terms
we will look at the two components separately to assist us in fully 
understanding these issues.

\bigskip

We begin by seeking a Lagrangian density function \( \LL_D \) for the fermion 
field given by a complex spinor \( \psi^\alpha \). We want our Lagrangian 
density to be a real scalar. 

\bigskip

To obtain a real scalar from a complex spinor we will need to use a 
bilinear form and a conjugation map. The natural invariant bilinear 
form for spinors is the symplectic form \( s^\bullet_{\alpha\beta} \). 
A framework also has a natural conjugation map as discussed in 
section~\ref{complex matters}. This enables us to separate a complex spinor
into real and imaginary parts in a way which is respected by the local
and global actions. 

If we choose a spinor basis consisting of spinors 
which have no imaginary part then natural conjugation corresponds to 
conjugation of coordinates with respect to this basis; in particular in 
such a basis  the components of \( T^\alpha_{i\beta} \) and 
\( \Gamma^\alpha_{k\beta} \) will all be real. 

The natural conjugation map is defined on Crump scalars by requiring that 
the symplectic form be invariant. If we use a real Crump scalar as 
our bullet basis then the natural conjugation map on Crump scalars will
correspond to ordinary conjugation; and in particular the components of 
\( s^\bullet_{\alpha\beta} \) will be real. Using these tools we
can construct
\begin{equation}
s^\bullet_{\alpha\beta}\overline{\psi^\beta}\psi^\alpha
\end{equation}

Conjugating this we obtain
\begin{flalign*}
s^\bullet_{\alpha\beta}\psi^\beta\overline{\psi}^\alpha
= - s^\bullet_{\beta\alpha} \overline{\psi}^\alpha \psi^\beta
= - s^\bullet_{\alpha\beta} \overline{\psi}^\beta\psi^\alpha
\end{flalign*}

hence this is imaginary. Multiplying by \( - i \) to extract the imaginary 
part gives a real Crump scalar.

\bigskip

For a dynamical term, the form of the standard Dirac Lagrangian suggests that 
we should look at 
\begin{equation}
s^\bullet_{\alpha\beta} T^\alpha_{k\lambda} 
\overline{\psi}^\beta \nabla^k \psi^\lambda
\label{proposeddynamic}
\end{equation}

As we will see later this is indeed equivalent to the dynamical term in the
standard Dirac Lagrangian density.  Unfortunately it is complex, just 
like the dynamical term in the standard Dirac Lagrangian density.

Physics texts that comment on this issue may mention that the 
problem can be addressed by taking the real\footnote{or imaginary, 
depending on how things have been defined} part. They will also typically 
explain that this is unnecessary since when the Lagrangian problem is 
solved the integral of the unwanted component reduces via Stokes theorem to 
an integral on the boundary -- a surface term -- which will vanish. 

This reasoning seems odd. Why persist then in using a complex Lagrangian 
density if an adequate real alternative is available? In fact as we will 
see physicists have a very good reason for wanting to hang on to the 
apparently superfluous component since dropping it causes a problem later 
when seeking source terms.

\medskip

We will will use a real Lagrangian density which we will obtain by taking an 
appropriate component -- real or imaginary -- of~\ref{proposeddynamic}. 
Just as it does in the standard approach we will find that this causes
an issue later on when we seek source terms. We could try to sidestep that 
issue at this point by arguing (insincerely) that complex Lagrangian densities 
are not so bad after all, but it is more illuminating to proceed in a 
completely straightforward manner, allow ourselves to run head on into the 
problem, and deal with the issue then. 

To see which component of~\ref{proposeddynamic} we want 
we first look at the conjugate.
\begin{flalign}
s^\bullet_{\alpha\beta} T^\alpha_{k\lambda} 
\psi^\beta \nabla^k \overline{\psi}^\lambda
&= s^\bullet_{\alpha\lambda} T^\alpha_{k\beta}
\nabla^k \overline{\psi}^\lambda \psi^\beta  \\
&= s^\bullet_{\alpha\beta} T^\alpha_{k\lambda}
\nabla^k \overline{\psi}^\beta \psi^\lambda  
\end{flalign}

Hence the real part is
\begin{flalign}
\text{Re}\bigl(s^\bullet_{\alpha\beta} T^\alpha_{k\lambda} 
\overline{\psi}^\beta \nabla^k \psi^\lambda\bigr)
&= \nfrac{1}{2}s^\bullet_{\alpha\beta} T^\alpha_{k\lambda}
\bigl( \overline{\psi}^\beta \nabla^k \psi^\lambda
+ \nabla^k \overline{\psi}^\beta \psi^\lambda   \bigr) \\
&= \nfrac{1}{2}\nabla^k \bigl( 
s^\bullet_{\alpha\beta} T^\alpha_{k\lambda} \overline{\psi}^\beta \psi^\lambda \bigr)
\end{flalign}

This is a divergence and when integrating, Stokes' theorem will allow us to 
convert it to an integral on the boundary\footnote{As it is a Crump 
scalar however, there is a residual term which does not vanish.}. 
This therefore isn't the component we are interested in; we want the other one.
\begin{flalign}
\text{Im}\bigl(s^\bullet_{\alpha\beta} T^\alpha_{k\lambda} 
\overline{\psi}^\beta \nabla^k \psi^\lambda\bigr)
&= - \nfrac{1}{2} i s^\bullet_{\alpha\beta} T^\alpha_{k\lambda}
\bigl( \overline{\psi}^\beta \nabla^k \psi^\lambda
- \psi^\beta \nabla^k\overline{\psi}^\lambda   \bigr) \\
&= - \nfrac{1}{2} i s^\bullet_{\alpha\beta} T^\alpha_{k\lambda}
\bigl( \overline{\psi}^\beta \nabla^k \psi^\lambda
- \nabla^k \overline{\psi}^\beta \psi^\lambda   \bigr)
\end{flalign}

We thus consider the expression
\begin{equation}
\LL_D^\bullet = \text{Im}\bigl(
	s^\bullet_{\lambda\beta} T^\lambda_{k\alpha} 
\overline{\psi^\beta}\nabla^k\psi^\alpha
+ \lambda s^\bullet_{\alpha\beta}
\overline{\psi^\beta}\psi^\alpha
			  \bigr) 
\end{equation}

where \( \lambda \) is a real constant relating the two terms.
This is certainly real, however it is a Crump scalar and a 
Lagrangian density must be a true scalar.

\medskip

To obtain a true scalar from a Crump scalar we must contract with a 
distinguished non-zero Crump scalar \( h_\bullet \) which we call the 
Crump factor.  We can think of \( h_\bullet \) as fixing the choice of 
invariant symplectic form \( h_\bullet s^\bullet_{\alpha\beta} \) at 
every point on the manifold. We will assume for now that the Crump factor
\( h_\bullet \) is real\footnote{the consequences of allowing it to be 
complex should be explored}.
Hence we will use for our Lagrangian density the scalar function
\begin{equation}
\LL_D = \text{Im}\bigl(
	h_\bullet s^\bullet_{\lambda\beta} T^\lambda_{k\alpha} 
		\overline{\psi^\beta}\nabla^k\psi^\alpha
      + h_\bullet \lambda s^\bullet_{\alpha\beta}
		\overline{\psi^\beta}\psi^\alpha
		 \bigr) 
\end{equation}

If we define
\begin{equation} \psi_\nu = h_\bullet s^\bullet_{\nu\beta}\psi^\beta
\end{equation}

this will take the form
\begin{equation}
\LL_D = \text{Im} \bigl(
	\overline{\psi}_\mu T^\mu_{k\nu}\nabla^k\psi^\nu
  + \lambda \overline{\psi}_\nu\psi^\nu 
                  \bigr)
\label{DLagrangian}
\end{equation}

We now can compare this to the usual Dirac Lagrangian. We take the real part
since we wish to compare it to \( \LL_D \) which is real.
\begin{equation}
\label{ULagrangian}
\LL_U =  \text{Re}\bigl( 
	i \psi_\nu \gamma^k\del_k \psi^\nu - m \psi_\nu \psi^\nu 
                \bigr) 
\end{equation}

Care is needed when working with this equation since the notation 
differs from the notation in the rest of the book. With allowances for the 
notation the two expressions appear quite similar. We note the following 
differences.
\begin{enumerate}
\item The standard Lagrangian density is defined on spacetime 
and not the ten dimensional manifold of frames. However the additional 
dimensions in equation~\ref{DLagrangian} disappear if we restrict our 
attention to wave functions with no Lorentz coordinate dependence.
\item Equation~\ref{ULagrangian} uses partial derivatives while 
equation~\ref{DLagrangian} uses covariant ones. This simply means 
that equation~\ref{DLagrangian} is equipped with interaction terms.  
\end{enumerate}

Otherwise the two expressions would appear to differ only in the placement 
of imaginary units and the choice of constant \( \lambda \). To see if 
indeed this is the case we must unwrap the definition of \( \psi_\nu \) 
in equation~\ref{ULagrangian}. This is typically defined by
\begin{equation}
\label{theirdual} 
\psi_\nu = \left(\psi^\nu\right)^\dagger \gamma^0 
\end{equation}

where the dagger operation is the conjugate transpose
and where \( \gamma^0 \) is the Dirac matrix associated with time. 

This definition is problematic for a number of reasons. The conjugate 
transpose of a complex vector depends on the choice of basis and so is not 
well defined. We really should be using the adjoint with respect to an 
invariant inner product of some kind, but there isn't one. Also the 
\( \gamma^0 \) matrix is explicitly linked to the time coordinate. 
Hence \( \LL_U \) would seem very unlikely to be scalar.

Demonstrating that \( \LL_U \) is indeed a well defined scalar is not easy. 
Lengthy arguments in support of this can be found in some of the better physics
texts (for example \cite{Robinson}). Note that by contrast \( \LL_D \) is 
manifestly a well defined scalar. Not only is \( \LL_U \) difficult to define, 
it is also awkward to work with. Fortunately we need only relate it to 
\( \LL_D \).

\medskip
In equation~\ref{gammaidentification} we noted that \( \gamma^0 = - 2iT \) 
where \( T \) is the local action on spinors for translation through time. 
In section~\ref{sectionsp4r} an explicit real matrix is given for \( T \) 
and also for \( \Omega \), the matrix of the symplectic form. Comparing 
these matrices we note that \( 2T = \Omega \). This equality is coincidental 
and basis dependent since \( \Omega \) is the matrix of a bilinear form while 
\( T \) is the matrix of a linear transformation. Nevertheless in this basis 
we can write \( \gamma^0 = - i\Omega \) and have it be true in the sense that 
the matrices are the same. Updating our notation for the symplectic form, 
\( \Omega \) becomes \( h_\bullet s^\bullet_{\alpha\beta} \). Putting all 
this together gives \( \gamma^0 = - ih_\bullet s^\bullet_{\alpha\beta} \). 
Hence
\begin{equation}
\underbrace{
      \psi_\nu = \left(\psi^\nu\right)^\dagger\gamma^0
           }_{\text{standard notation}} 
= 
\underbrace{
	  - i h_\bullet s^\bullet_{\nu\alpha}\overline{\psi}^\alpha  
	= - i \overline{\psi}_\nu
	   }_{\text{our notation}}
\end{equation}

We will also need to substitute for \( \gamma^k \). 
Rewriting~\ref{gammaidentification} in terms of the local action we 
obtain (for the four translation dimensions) 
\begin{equation}
\label{gamma_identification_updated}
\gamma^k = 2i g^{km}T_m\oo
\end{equation}

We can now rewrite \( \LL_U \) in terms of our notation and directly compare 
it to \( \LL_D \). We have
\begin{equation}
\label{NULagrangian}
\begin{split}
\LL_U &=  \text{Re}\bigl( 
	 2i \overline{\psi}_\nu g^{km}T^\nu_{m\lambda} \del_k \psi^\lambda 
      + i m \overline{\psi}_\nu \psi^\nu 
                  \bigr) \\
     &=  \text{Im}\bigl(
       -2 \overline{\psi}_\nu g^{km}T^\nu_{m\lambda} \del_k \psi^\lambda 
         - m \overline{\psi}_\nu \psi^\nu \bigr) \\
     &= -2 \, \text{Im} \bigl(
	\overline{\psi}_\alpha T^\alpha_{k\beta}\del^k\psi^\beta
       + \nfrac{m}{2} \overline{\psi}_\alpha\psi^\alpha \bigr) \\
\end{split}
\end{equation}

Provided therefore that we choose \( \lambda = \frac{m}{2} \), ignore 
interactions by identifying \( \del^k \) with \( \nabla^k \), and restrict 
our attention to spinors with no Lorentz coordinate dependence, we have shown 
\begin{equation} \LL_U = -2\LL_D  \end{equation}

While our proof of this depended on a coincidence in a specific basis,
if both sides are scalar they should be equal in any basis. 

There are some subtleties here however. Our Lagrangian has an explicit 
dependence on the Crump factor \( h_\bullet \) which defines the choice of 
symplectic form. Changing \( h_\bullet \) gives a different Lagrangian and 
since there is no preferred symplectic form there is really no way to choose 
among them. This raises the awkward question of which one of these is the 
one that is supposed to be equivalent to the standard Dirac Lagrangian.

The answer would seem to be that they all are. It looks like the rather
awkward definition of the standard Dirac Lagrangian has a hidden conformal 
degree of freedom arising from the choice of spinor basis and Dirac matrices.
In mathematical technical terms we would say that the standard Dirac 
Lagrangian is not well defined since the definition has a hidden dependence 
on a conformal factor. This could be corrected by making the dependence 
explicit. 

This is not a problem created by our approach. It is a problem with the 
way that the definition of the standard Dirac Lagrangian is stated which 
our notation has simply made more apparent.

We will not consider this issue further here. We simply note at this point
that our Lagrangian is consistent with the standard one and we can therefore 
expect to obtain comparable solutions for the Lagrangian problem, which we 
now seek.

\bigskip

Let \( \Omega \) be a compact set and define 
\( L_D = \int_\Omega \LL_D\,\,dx^\circ \). Consider a variation 
\( \psi^\alpha \mapsto \psi^\alpha + \updelta \psi^\alpha \)
defined on the interior of \( \Omega \) and zero on the boundary, and let 
\( \updelta  L_D = \int_\Omega \updelta \LL_D\,\,dx^\circ \) be the resulting 
change. Then
\begin{equation}
\updelta \LL_D = \text{Im} \bigl(
  \updelta\overline{\psi}_\mu T^\mu_{k\nu} \nabla^k\psi^\nu
+ \overline{\psi}_\mu T^\mu_{k\nu} \nabla^k(\updelta \psi^\nu)
+ \lambda \updelta\overline{\psi}_\nu \psi^\nu
+ \lambda \overline{\psi}_\nu \updelta\psi^\nu
\bigr)
\end{equation}

Using Stokes' theorem we obtain
\begin{equation}
\updelta L_D = \int_\Omega \text{Im} \Bigl(
\bigl( 
	T^\mu_{k\nu} \nabla^k\psi^\nu 
	+ \lambda \psi^\mu 
\bigr) \updelta\overline{\psi}_\mu 
- \bigl( 
        T^\mu_{k\nu} \nabla^k(\overline{\psi}_\mu) 
	- \lambda \overline{\psi}_\nu 
\bigr) \updelta \psi^\nu
\Bigr) \,\,dx^\circ
\label{equation100}
\end{equation}

Now \( \updelta\overline{\psi}_\mu \) is a function\footnote{In no
way are these independent}
of \( \updelta \psi^\nu \). Expressing it as such we have
\begin{equation}
\psi^\mu \updelta\overline{\psi}_\mu 
= \psi^\mu h_\bullet s^\bullet_{\mu\nu}\updelta\overline{\psi}^\nu 
= - \psi_\nu \updelta\overline{\psi}^\nu
\label{term101}
\end{equation}

A similar treatment of the covariant derivative term is complicated by the 
fact that, since \( \nabla^kh_\bullet \ne 0 \)\footnote{else $h_\bullet$ 
would be a true scalar and not a Crump scalar}, spinor index lowering does 
not commute with the covariant derivative. This will generate an extra term. 
We define a vector \( H^k \) by writing \( \nabla^kh_\bullet \) as a multiple 
of \( h_\bullet \)
\begin{equation}
\nabla^k h_\bullet = H^k h_\bullet 
\end{equation}

Note that if we choose the Crump factor \( h_\bullet \) to have a constant 
numerical value in our bullet basis, then \( H^k \) is simply the Crump 
connection. We can then express the extra term cleanly in terms of \( H^k \). 
\begin{equation}
   T^\mu_{k\nu} \nabla^k\psi^\nu \updelta\overline{\psi}_\mu 
= T^\mu_{k\lambda} \nabla^k\psi_\mu \updelta\overline{\psi}^\lambda 
- H^k T^\mu_{k\lambda} \psi_\mu \updelta\overline{\psi}^\lambda 
\label{term102}
\end{equation}
Substituting equations~\ref{term101} and~\ref{term102} 
into equation~\ref{equation100} gives 
\begin{flalign}\nonumber
\updelta L_D = \int_\Omega \text{Im} \Bigl( 
\,\,\, 
\bigl( 
	T^\nu_{k\mu} \nabla^k\psi_\nu 
	- H^k T^\nu_{k\mu} \psi_\nu 
	- \lambda \psi_\mu 
\bigr)&\, \updelta\overline{\psi}^\mu  \\[-4pt]
- \, \bigl( 
	T^\nu_{k\mu} \nabla^k\overline{\psi}_\nu 
	- \lambda \overline{\psi}_\mu 
\bigr)&\, \updelta \psi^\mu
\,\,\, \Bigr) \,\,dx^\circ
\end{flalign}

The imaginary part complicates this expression making it difficult to see
what we must equate to zero.  We use 
	\( \text{Im}(z) = \frac{1}{2i}(z - \overline{z}) \) 
to rewrite it in terms of the conjugate.
\begin{flalign}
\nonumber
\updelta L_D = \nfrac{1}{2i} \int_\Omega  \quad
& \bigl( 
	  2 T^\nu_{k\mu} \nabla^k\psi_\nu 
	-   H^k T^\nu_{k\mu} \psi_\nu 
	- 2 \lambda \psi_\mu 
\bigr)  \updelta\overline{\psi}^\mu  
\\
- & \bigl( 
	  2 T^\nu_{k\mu} \nabla^k\overline{\psi}_\nu 
	-   H^k T^\nu_{k\mu} \overline{\psi}_\nu 
	- 2 \lambda \overline{\psi}_\mu 
\bigr) \updelta \psi^\mu 
\quad
dx^\circ
\label{expressionsinbrackets}
\end{flalign}

and this must be zero for all variations \( \updelta \psi^\alpha \). 
By considering the cases where the variations are real and imaginary 
respectively we see that this can only occur if both of the expressions 
in brackets are zero. As these are conjugate we obtain from this the
single equation
\begin{equation}
T^\nu_{k\mu} \nabla^k\psi_\nu 
	-   \nfrac{1}{2}H^k T^\nu_{k\mu} \psi_\nu 
		- \lambda \psi_\mu = 0
\end{equation}

The equation is expressed in terms of \( \psi_\nu \). Writing it in terms of 
\( \psi^\nu \) gives
\begin{equation}
\label{getDirac}
T^\nu_{k\mu} \nabla^k\psi^\mu 
	+ \nfrac{1}{2}H^k T^\nu_{k\mu} \psi^\mu 
		+ \lambda \psi^\mu  = 0
\end{equation}

Which is the extended Dirac equation, albeit with an extra term. To
understand the nature of this extra term it is helpful to consider 
the more general equation
\begin{equation}
T^\nu_{k\mu} \nabla^k\psi^\mu = 
	\alpha H^k T^\nu_{k\mu} \psi^\mu - \lambda \psi^\nu 
\label{protoDirac}
\end{equation}

which becomes equation~\ref{getDirac} when \( \alpha = -\frac{1}{2} \). By
varying \( \alpha \) we can see what effect this term has. We are 
particularly interested seeing if there is any physical reason why we would
want to have \( \alpha = -\frac{1}{2} \).

The key turns out to be the probability current. We need this to be 
divergence free to ensure that matter is conserved. So how should the
probability current be defined? The usual definition is something like
\begin{equation}
j^k = \psi_\nu \gamma^k \psi^\nu
\label{usualJk}
\end{equation}

which when translated into our notation suggests
\begin{equation}
J_k = h_\bullet s^\bullet_{\lambda\alpha}
	T^\lambda_{k\beta} 
		\overline{\psi}^\alpha\psi^\beta
\label{Jkdefined}
\end{equation}

This is the lowered index version. The upper index version is defined by
\(J^k = g^{ki}J_i \).  We will also find the following related quantities 
useful
\begin{flalign}
J &= h_\bullet s^\bullet_{\alpha\beta}
	\psi^\alpha \overline{\psi}^\beta
\label{Jdefined} \\
J_A &= h_\bullet s^\bullet_{\lambda\alpha}
	T^\lambda_{A\beta} 
		\overline{\psi}^\alpha\psi^\beta
\label{JAdefined}
\end{flalign}

The fact that the Crump factor \( h_\bullet \) appears in the definition of 
\( J_k \) is interesting. It seems that the probability current can only 
be defined relative to a choice of symplectic form, which \( h_\bullet \) 
encodes. The implications of this are unclear and will need to be considered
later. For now let's just see if it is conserved. Assume that \( \psi^\alpha \)
is a solution to equation~\ref{protoDirac} and consider the divergence.
\begin{flalign}
\nabla^kJ_k &=
	\nabla^kh_\bullet 
	s^\bullet_{\lambda\alpha} T^\lambda_{k\beta} 
		\overline{\psi}^\alpha \psi^\beta
+	h_\bullet 
	s^\bullet_{\lambda\alpha} T^\lambda_{k\beta} 
		\nabla^k\overline{\psi}^\alpha \psi^\beta
+	h_\bullet 
	s^\bullet_{\lambda\alpha} T^\lambda_{k\beta} 
		\overline{\psi}^\alpha \nabla^k\psi^\beta
\nonumber \\ = & \,\,
	H^kJ_k
+	h_\bullet s^\bullet_{\lambda\beta} 
	\left( \alpha H^k T^\lambda_{k\alpha} \overline{\psi}^\alpha
		-\lambda \overline{\psi}^\lambda 
	\right) \psi^\beta
+	h_\bullet s^\bullet_{\lambda\alpha} \overline{\psi}^\alpha 
	\left( \alpha H^k T^\lambda_{k\beta} \psi^\beta
		-\lambda \psi^\lambda 
	\right)
\nonumber \\ = & \,\, H^kJ_k 
	+ (  \alpha H^k J_k - \lambda J)
	+ (  \alpha H^k J_k + \lambda J)
\nonumber \\ = & \,\, 	(1+2\alpha) H^kJ_k 
\end{flalign}

So the probability current \( J_k \) is conserved if and only if 
\( \alpha = -\frac{1}{2} \). The extra term in equation~\ref{getDirac} 
therefore ensures that the probability current is conserved, 
which is a very good physical reason for including it.

\bigskip

There are clearly mysteries in this approach. What physical interpretation
should we place on the Crump factor \( h_\bullet \) for example? We included 
this term purely in order to account for the explicit choice of symplectic 
form needed to obtain a true scalar Lagrangian. However our solutions depend 
on it as does the probability current. This suggests it is something more than 
simply an arbitrary choice of coordinates. Things with physical consequences 
are by definition physical things.

\medskip

If \( h_\bullet \) is a physical field of some sort we should look at the 
consequences of varying it on the interior of a compact region \( \Omega \). 
Any such variation \( \updelta h_\bullet \) can be expressed in terms of 
\( h_\bullet \) in the form \( \updelta h_\bullet = \delta . h_\bullet  \) 
where \( \delta \) is a small real scalar field. This will then give
\( \updelta\LL_D = \delta . \LL_D \) and we will have \( \updelta L_D = 0 \) 
for all such \( \delta \) if and only if \( \LL_D = 0 \). But if \( \psi^\mu \)
is a solution to our Dirac equation then
\begin{equation}
\begin{split}
\LL_D &= \text{Im} \Bigl( \overline{\psi}_\mu 
		\bigl( T^\mu_{k\nu}\nabla^k\psi^\nu + \lambda \psi^\mu \bigr)
                  \Bigr) \\
     &= \text{Im} \Bigl( \overline{\psi}_\mu 
		\bigl( 
- \nfrac{1}{2}H^k T^\mu_{k\nu} \psi^\nu \bigr)
                  \Bigr) 
	= 0
\end{split}
\label{LDis}
\end{equation}

so solutions to the Dirac equation minimise the Lagrangian under variation of 
of the Crump factor \( h_\bullet \) as well as under variation of 
\( \psi^\mu \). This is somewhat reassuring, but it doesn't give us a
dynamical equation for \( h_\bullet \). However perhaps we should not
expect this since it is doubtful that \( \LL_D \) is a full and complete 
Lagrangian for \( h_\bullet \) anyway. We note in particular the absence 
of a dynamical term involving \( \nabla_k(h_\bullet) \) which we would 
expect to see in such a Lagrangian. 

\medskip
If The Crump factor \( h_\bullet \) is physical, what physics is being 
described? More precisely, after second quantisation what particle would 
we expect to get from \( h_\bullet \)?  Since \( h_\bullet \) is a (Crump)
scalar it is tempting to think that we might get a Higgs\footnote{hence
the rather hopeful choice of the letter `h'}, but as there is 
no symmetry breaking or Higgs mechanism in sight at the moment this would 
be extremely premature.  In fact it would be a wild guess.

\bigskip

Setting aside these interesting questions, we note that we have now fully 
solved the Lagrangian problem for the extended Dirac Lagrangian. Solutions
are functions which obey equation~\ref{getDirac}, which is the extended
Dirac equation from the previous section with an extra term. This extra
term is necessary to ensure that the probability current is divergence free.
The solution did however require us to introduce a Crump scalar field 
\( h_\bullet \) of uncertain physical significance.

\medskip
  
So far everything has been going reasonably well. In the next section 
the wheels will fall off.

\bigskip
\pagebreak[2]

\section{The Electromagnetic Source Term.}

In this section we look at how the Dirac Lagrangian behaves under the
variations to the geometry discussed starting on page~\pageref{page3cases}.
We expect this to yield source contributions from the fermion field towards 
the field equations for electromagnetism and gravity.

\pagebreak[2]
\bigskip

We begin by considering a Case 1 variation of the type given in 
equation~\ref{VariationCase1}.  Variations of this type gave us the 
Ampere-Gauss equation so we expect to be able to obtain a source term
for that equation from the fermion field. The variation takes the form
\begin{equation*}
\updelta T^\alpha_{k\beta} = 0 
\quad\quad\,
\updelta T^i_{kj} = 0 
\quad\quad\,
\updelta\Gamma^\alpha_{k\beta} = a_k 1^\alpha_\beta 
\quad\quad\,
\updelta\Gamma^i_{kj} = 0 
\quad\quad\,
\end{equation*}

This variation in the geometry does not cause any variation in the 
symplectic form. We are also not varying the matter fields. Hence we 
can add the equations
\begin{equation*}
\updelta \psi^\alpha = 0 
\quad\quad
\updelta h_\bullet = 0 
\quad\quad
\updelta s^\bullet_{\alpha\beta} = 0 
\end{equation*}

We require that \( \nabla_k s^\bullet_{\alpha\beta} = 0 \) is
still true after variation. This determines the variation in the 
Crump connection and hence the variation in \( H^k \).
\begin{equation*}
\updelta\Gamma^\bullet_{k\bullet} = 2a_k
\quad\quad
\updelta H_k =  \Gamma^\bullet_{k\bullet} = 2a_k
\quad\quad
\updelta H^k =  2a^k
\end{equation*}

Under such a variation the only part of the Dirac Lagrangian
to vary is the connection in the covariant derivative term which gives
\begin{equation}
\updelta \nabla^k\psi^\lambda = a^k\psi^\lambda
\end{equation}

Hence the variation in the Dirac Lagrangian is
\begin{equation}
\updelta\LL_D = \text{Im} \bigl(a^k J_k \bigr) = 0
\end{equation}

What?\footnote{The sound of wheels falling off.} 

\bigskip
The problem, which also afflicts the standard Lagrangian, is that while
the imaginary part generates the Dirac equation in the Lagrangian problem, 
it is the real part that gives the expected source terms.  This is the 
reason why physicists are so keen to hang on to both components despite 
the difficulties of being forced to work with a complex Lagrangian density. 
Let's look then at the real part which previously we discarded.
\begin{equation}
\LL_R = \text{Re} \bigl(
	\overline{\psi}_\mu T^\mu_{k\nu}\nabla^k\psi^\nu
  + \lambda \overline{\psi}_\nu\psi^\nu 
                  \bigr)
\label{RLagrangian}
\end{equation}

If \( \lambda \) is real then \( \lambda\overline{\psi}_\nu\psi^\nu \) 
is imaginary. Hence
\begin{equation}
\begin{split}
\LL_R &= \text{Re} 
	\bigl( \overline{\psi}_\mu T^\mu_{k\nu}\nabla^k\psi^\nu \bigr) \\
	&= \nfrac{1}{2}\bigl(\overline{\psi}_\mu T^\mu_{k\nu}\nabla^k\psi^\nu
		+ \psi_\mu T^\mu_{k\nu}\nabla^k\overline{\psi^\nu}\bigr) \\
	&=\nfrac{1}{2}\bigl(\overline{\psi}_\mu T^\mu_{k\nu}\nabla^k\psi^\nu 
		+ \nabla^k\overline{\psi_\mu} T^\mu_{k\nu} \psi^\nu\bigr) \\
	&= \nfrac{1}{2}\bigl(\nabla^k J_k - H^kJ_k\bigr) 
\end{split}
\label{Real_vs_NablaJ}
\end{equation}

Applying a type 1 variation to this we obtain
\begin{equation}
\updelta\LL_R = -a^kJ_k
\end{equation}

which would give us the desired source term of \( J_k \) to within a constant.

\medskip

So should we add the real part back in and follow the crowd by using the 
complex Lagrangian
\begin{equation}
\LL = \overline{\psi}_\mu T^\mu_{k\nu}\nabla^k\psi^\nu
  + \lambda \overline{\psi}_\nu\psi^\nu 
\label{complexlagrangian}
\end{equation}%

That would certainly give us the right source term thus solving our problems 
in this section. But it would break our work in the last section since the 
real part in our construction is not simply a divergence. It is close to a 
divergence as can be seen in equation~\ref{Real_vs_NablaJ}, but the extra 
\( - H^kJ_k \) term is going to change the answer to our Lagrangian problem. 
To see how we must consider what happens to this term under a variation in 
\( \psi^\alpha \).
\begin{equation}
\begin{split}
\updelta J_k &= \updelta\overline{\psi_\mu}T^\mu_{k\nu}\psi^\nu  
             + \overline{\psi_\mu}T^\mu_{k\nu}\updelta\psi^\nu  \\
             &= \psi_\mu T^\mu_{k\nu} \updelta\overline{\psi^\nu}
             + \overline{\psi_\mu}T^\mu_{k\nu}\updelta\psi^\mu 
\end{split}
\end{equation}
and hence
\begin{equation}
\begin{split}
\updelta \LL_R 
	&= \nfrac{1}{2}\bigl(\nabla^k (\updelta J_k) - H^k\updelta J_k\bigr) \\
	&= 
\nfrac{1}{2}
\nabla^k (\updelta J_k) 
- \nfrac{1}{2} H^k \psi_\mu T^\mu_{k\nu} \updelta\overline{\psi^\nu}
- \nfrac{1}{2} H^k \overline{\psi_\mu}T^\mu_{k\nu}\updelta\psi^\mu 
\end{split}
\end{equation}

When we solve the Lagrangian problem for a complex Lagrangian the variations
in both the real and imaginary parts must be zero. The variation in the 
imaginary part gives us the Dirac equation as in the last section. Setting 
the variation in the real part to zero however gives

\begin{equation}
\begin{split}
\updelta L_R &= \int_\Omega \updelta\LL_R \,\, dx^\circ \\
&= \int_\Omega 
- \nfrac{1}{2} H^k \psi_\mu T^\mu_{k\nu} \updelta\overline{\psi^\nu}
- \nfrac{1}{2} H^k \overline{\psi_\mu}T^\mu_{k\nu} \updelta\psi^\mu 
\,\, dx^\circ 
\end{split}
\end{equation}

where the divergence term vanishes using Stoke's theorem. This must be zero 
for all variations. Hence the coefficients of both 
\( \updelta\overline{\psi^\nu} \) and \( \updelta\psi^\mu  \) must be zero.
As these are conjugate we obtain the single equation
\begin{equation} 
H^k \psi_\mu T^\mu_{k\nu} = 0
\end{equation}

which when rewritten in terms of \( \psi^\nu \) gives 
\begin{equation} 
H^k h_\bullet s^\bullet_{\mu\nu} T^\mu_{k\lambda} \psi^\lambda 
= 0
\end{equation}

This is a problematic and severely constraining equation. Since the 
symplectic form \( h_\bullet s^\bullet_{\alpha\beta} \) is non-singular 
this simplifies to the equation
\begin{equation} 
H^k T^\mu_{k\lambda} \psi^\lambda = 0
\label{annoyance}
\end{equation}

which asks that at each point \( \psi^\lambda \) lies in the kernel 
of \( H^kT_k\oo \). Hence in particular at any point where the wave 
function can take non-zero values we must have the vector \( H^k \) 
pointing in a direction where the corresponding local action 
on spinors is singular. While this doesn't rule out the possibility of 
solutions altogether, it comes close. It certainly seems to be a severe
and apparently unphysical constraint on the possible values of both 
\( H^k \) and \( \psi^\alpha \). 

The solution to the Lagrangian problem for the complex
Lagrangian density specified in equation~\ref{complexlagrangian} is
thus any solution to the two equations
\begin{equation}
\begin{split}
& T^\nu_{k\mu} \nabla^k\psi^\mu + \lambda \psi^\mu  = 0 \\
&	H^k T^\nu_{k\mu} \psi^\mu =0
\end{split}
\label{CDiracsolutionsystem}
\end{equation}

Note that for such a solution we have \( \LL = 0 \). While the first of these
equations seems quite reasonable, the second does not. 

The second equation involves the vector \( H_k \) arising from the Crump factor
\( h_\bullet \). Since we doubt that our Lagrangian density is a full 
and complete Lagrangian density for \( h_\bullet \) (it lacks a dynamical 
term) perhaps the cure might be to simply correct this by adding on an
additional term. This would add extra terms to our solution which could help. 
How would this help? Note that equation~\ref{annoyance} would not be nearly 
so problematic were it to include an extra term involving \( \nabla_k H^k \). 

So what might this missing dynamical term look like? Note that since 
\( \nabla^k(h_\bullet) = H^kh_\bullet \) dynamical terms can be written in 
terms of \( H^k \).  The following seem plausible and worthy of investigation.
\begin{flalign}
& H^k\overline{\psi}_\nu T^\nu_{k\mu}\psi^\mu \\
& H^k\overline{\psi}_\nu\nabla_k\psi^\nu 
\end{flalign}

As both are complex we would also need to also worry about 
them generating imaginary source terms for our force equations.
And as we are treating the Crump factor as a physical quantity 
we should also consider the effect of varying \( h_\bullet \).

Would such terms break the link to the standard Dirac Lagrangian?
The standard Dirac Lagrangian does not mention the Crump factor 
\( h_\bullet \) (effectively setting it equal to 1), and ignores 
the conformal degree of freedom that it represents.  Hence terms 
which involve \( H^k \) in our formulation are invisible to the 
standard approach. That means we are free to include such terms
without having to worry about breaking the link to the standard 
Dirac Lagrangian.

\bigskip

Our efforts to derive both a Dirac equation and source terms for our forces 
from a plausible Dirac Lagrangian have thus run into difficulty. We have not 
run out of options, but there are no clear answers. What can we conclude 
from this?

Firstly we should note that the problem we have run into here is not a 
consequence of our method. The same issues exists also with the standard 
Dirac Lagrangian although it is considerably harder to see them in that 
context.  Our more rigorous definitions and clearer notation have simply 
exposed an issue with the Dirac Lagrangian that was previously obscure.

Secondly we should note that being unable to find a Lagrangian that 
behaves as we would like is not a fatal flaw. The great weakness of 
the Lagrangian method has always been the lack of a rigorous procedure 
for determining the Lagrangian. We don't know what the Lagrangian 
should be and have no method other than guessing to find it. We don't 
even know that a correct Lagrangian exists; the physics we are seeking 
to describe need not arise from a Lagrangian at all.

\bigskip
\pagebreak[2]

\section{The Gravitational Source Term.}

In this section we will apply type 2 and type 3 variations to the complex 
Lagrangian density \( \LL \) from equation~\ref{complexlagrangian} to see 
what sort of source terms it generates. This Lagrangian is equivalent to 
the usual one and delivers the correct source term for electromagnetism. 
However solutions to the Lagrangian problem for this Lagrangian density must
satisfy two equations~\ref{CDiracsolutionsystem}. And while the first of 
these is the expected generalisation of the Dirac equation, the second 
appears unphysical. 

Consequently we are not confident that \( \LL \) is the correct Lagrangian 
density. That means we cannot place much weight on the answers obtained in 
this section. They should be viewed as qualitative indicators of the 
possible nature of the source term for gravity only.

\bigskip

We first look at a Case 2 variation to the geometry. The Lagrangians \( L_g \)
and \( L_e \) were both found to be invariant under this this type of variation 
and no force equations were obtained as a result. Since we don't have any 
equations in need of source terms we rather hope we don't obtain any by 
applying this type of variation to the complex Dirac Lagrangian \( L \) 
since we wouldn't know what to do with them.  

\medskip

Case 2 variations are specified by  \( \delta^\alpha_\beta \) with 
\( \delta^a_b = 0 \) and take the form
\begin{equation*}
\updelta T^\alpha_{k\beta} = \delta\oo T^\alpha_{k\beta}
\quad\quad\,
\updelta T^i_{kj} = 0  = \delta\oo T^i_{kj}
\quad\quad\,
\updelta\Gamma^\alpha_{k\beta} = - \nabla_k(\delta^\alpha_\beta)
\quad\quad\,
\updelta\Gamma^i_{kj} = 0 
\end{equation*}

The discussion starting on page~\pageref{Case2g} adds the equations
\begin{equation*}
\updelta g^{ij} = 0 = \delta\oo g^{ij} 
\quad\quad
\updelta s^\bullet_{\alpha\beta} = \delta\oo s^\bullet_{\alpha\beta} 
\end{equation*}

where an arbitrary  \( \delta^\bullet_\bullet \) is used to define
the variation in \( s^\bullet_{\alpha\beta} \). As noted on 
page~\pageref{Case2arbitrarybulletNote}, \( \delta^\bullet_\bullet \) 
can take any value here including zero. 

\medskip
Assume for a moment that in addition to the variation in the geometry we 
apply compatible variations to the wave function \( \psi^\alpha \) and 
the Crump factor \( h_\bullet \) given by
\begin{equation}
\updelta \psi^\alpha = 
	\delta\oo \psi^\alpha = 
		\delta^\alpha_\lambda \psi^\lambda
\quad\quad
\updelta h_\bullet = \delta\oo h_\bullet
\label{Case2compatible}
\end{equation}

Then this suite of variations gives
\begin{equation}
\updelta\bigl(\nabla_k\psi^\alpha\bigr) = 
	\delta\oo\bigl( \nabla_k\psi^\alpha\bigr)
\end{equation}

All components in the complex Dirac Lagrangian \( \LL \) vary according to the
tensor derivation \( \delta\oo \), and as \( \LL \) is scalar it follows
that \( \updelta\LL = \delta\oo \LL = 0 \).

\medskip

So variations of this type are non-trivial only to the extent that the
variations in \( \psi^\alpha \) and \( h_\bullet \) differ from the 
compatible variations specified in equation~\ref{Case2compatible}. And
to the extent that they do differ from those compatible variations they 
will simply give solutions to the Dirac Lagrangian problem. In particular, 
as expected, no source terms arise from applying this type of variation 
to the Dirac Lagrangian.

\bigskip

Finally we consider a Case 3 variation to the geometry. Case 3 variations 
are the most difficult and are responsible for the gravitational equations. 
They are specified by \( b^i_j \) and \( \delta^i_j \) and give
\begin{equation*}
\updelta T^\alpha_{k\beta} = -\delta^m_k T^\alpha_{m\beta} 
\quad\quad
\updelta \Gamma^\alpha_{k\beta} = b^t_kT^\alpha_{t\beta} 
\quad\quad
\updelta g^{ij} = \delta^i_mg^{mj} + \delta^j_mg^{mi}
\quad\quad
\updelta s^\bullet_{\alpha\beta} = 0 
\end{equation*}

Assuming also that 
\( \updelta \psi^\alpha = 0 \) and \( \updelta h_\bullet = 0  \)
we obtain
\begin{equation}
\updelta \LL =
\overline{\psi}_\nu T^\nu_{i\lambda}g^{ij}T^\lambda_{k\mu} \psi^\mu
b^k_j +
\overline{\psi}_\nu T^\nu_{i\mu}g^{ia}\nabla_b \psi^\mu
 \delta^b_a
\end{equation}

We can now use equation~\ref{substituter} to replace \( b^k_j \) with 
a function of \( \delta^b_a \). Equation~\ref{substituter} is written 
in terms of quantities with lowered indices because that form was easiest
to obtain. Putting indices back in standard position we have
\begin{equation}
\begin{split}
  b^k_j &= \nfrac{1}{6} \bigl(
  1^m_j T^a_{nb} g^{kn} - 1^a_j T^m_{nb} g^{kn} + T^a_{nt}g^{tm} g^{kn} g_{bj} 
	                \bigr)  \nabla_m\delta^b_a 
\\&\quad
+ \bigl( \nfrac{1}{2}.1^k_b1^a_j + g^{ak} g_{bj} 
- \nfrac{1}{6}\ T^t_{jb} T^a_{nt} g^{kn} \bigr) \delta^b_a
\end{split}
\label{resubstituter}
\end{equation}

We are interested in \( \updelta L \) which 
is related to \( \updelta\LL \) via
\begin{equation}
\updelta L = \int_\Omega \updelta\LL\,dx^\circ 
	+ \int_\Omega \LL \, \updelta dx^\circ 
\end{equation}

When \( \psi^\alpha \) is a solution to the Dirac Lagrangian problem then
\( \LL = 0 \) as noted on page~\pageref{CDiracsolutionsystem}. Hence the last
integral is zero and can be ignored. Hence, substituting for \( b^k_j \),
we obtain

\begin{flalign*}
\updelta L &= 
\nfrac{1}{6} \int_\Omega 
\overline{\psi}_\nu T^\nu_{i\lambda}g^{ij}T^\lambda_{k\mu} \psi^\mu
\bigl( 
	1^m_j T^a_{nb} g^{kn} - 1^a_j T^m_{nb} g^{kn} 
	+ T^a_{nt}g^{tm} g^{kn} g_{bj} 
\bigr)  
\nabla_m\delta^b_a 
\,dx^\circ 
\\ & \quad +
\int_\Omega 
\overline{\psi}_\nu T^\nu_{i\lambda}g^{ij}T^\lambda_{k\mu} \psi^\mu
\bigl( 
	\nfrac{1}{2}.1^k_b1^a_j + g^{ak} g_{bj} 
	- \nfrac{1}{6}\ T^t_{jb} T^a_{nt} g^{kn} 
\bigr) 
\delta^b_a
\,dx^\circ 
\\ & \quad +
\int_\Omega 
\overline{\psi}_\nu T^\nu_{i\mu}g^{ia}\nabla_b \psi^\mu
 \delta^b_a
\,dx^\circ 
\end{flalign*}

The next step is to apply Stoke's theorem to the first integral.  We obtain

\begin{flalign*}
\updelta L &= 
- \nfrac{1}{6} \int_\Omega 
T^\nu_{i\lambda} T^\lambda_{k\mu}
\bigl( 
    g^{im} T^a_{nb} g^{kn} 
  - g^{ia} T^m_{nb} g^{kn} 
  + 1^i_b T^a_{nt}g^{tm} g^{kn}
\bigr)  
\nabla_m \bigl( \overline{\psi}_\nu \psi^\mu \bigr)
\delta^b_a 
\,dx^\circ 
\\ & \quad +
\int_\Omega 
T^\nu_{i\lambda} T^\lambda_{k\mu}
\bigl( 
    \nfrac{1}{2} g^{ai} 1^k_b 
  + g^{ak} 1^i_b 
  - \nfrac{1}{6} g^{ij} T^t_{jb} T^a_{nt} g^{kn} 
\bigr) 
\overline{\psi}_\nu \psi^\mu
\delta^b_a
\,dx^\circ 
\\ & \quad +
\int_\Omega 
\overline{\psi}_\nu T^\nu_{i\mu}g^{ia}\nabla_b \psi^\mu
 \delta^b_a
\,dx^\circ 
\end{flalign*}

The coefficient of \( \delta^b_a \) will be our source term. 

\begin{flalign*}
S^a_b &=
- \nfrac{1}{6} 
T^\nu_{i\lambda}T^\lambda_{k\mu}
\bigl( 
    g^{im} T^a_{nb} g^{kn} 
  - g^{ia} T^m_{nb} g^{kn} 
  + 1^i_b T^a_{nt}g^{tm} g^{kn}
\bigr)  
\nabla_m \bigl( \overline{\psi}_\nu \psi^\mu \bigr)
\\ & \quad  +
T^\nu_{i\lambda}T^\lambda_{k\mu}
\bigl( 
    \nfrac{1}{2} g^{ai} 1^k_b 
  + g^{ak} 1^i_b 
  - \nfrac{1}{6} g^{ij} T^t_{jb} T^a_{nt} g^{kn} 
\bigr) 
\overline{\psi}_\nu \psi^\mu
\\ & \quad +
\bigl(\overline{\psi}_\nu T^\nu_{i\mu} \nabla_b \psi^\mu\bigr) g^{ia}
\end{flalign*}

We next apply equation~\ref{ProductofTs}, which states
\begin{equation*}
T^\nu_{i\lambda}T^\lambda_{k\mu} = 
\nfrac{1}{2}T^c_{ik}T^\nu_{c\mu} + 
\nfrac{1}{4}g_{ik}1^\nu_\mu + \nfrac{1}{2}g^A_{ik}T^\nu_{A\mu}
\end{equation*}

to obtain the equation

\begin{flalign}
S^a_b &= \nonumber
- \nfrac{1}{6} 
\nabla_m\bigl(\overline{\psi}_\nu \psi^\mu \bigr)
\bigl(\nfrac{1}{2}T^c_{ik}T^\nu_{c\mu} + 
\nfrac{1}{4}g_{ik}1^\nu_\mu + \nfrac{1}{2}g^A_{ik}T^\nu_{A\mu}\bigr)\\
& \quad\quad\quad\quad\quad\quad\quad\quad\quad\quad\nonumber
\bigl( 
g^{im} T^a_{nb} g^{kn} - g^{ia} T^m_{nb} g^{kn} 
+ 1^i_b T^a_{nt}g^{tm} g^{kn} 
\bigr)  
\\ & \quad \nonumber  +
\overline{\psi}_\nu \psi^\mu
\bigl(\nfrac{1}{2}T^c_{ik}T^\nu_{c\mu} + 
\nfrac{1}{4}g_{ik}1^\nu_\mu + \nfrac{1}{2}g^A_{ik}T^\nu_{A\mu}\bigr)
\bigl( 
\nfrac{1}{2} g^{ai} 1^k_b + g^{ak} 1^i_b 
- \nfrac{1}{6} g^{ij} T^t_{jb} T^a_{nt} g^{kn} 
\bigr) 
\\ & \quad \label{19terms} +
\bigl(\overline{\psi}_\nu T^\nu_{i\mu} \nabla_b \psi^\mu\bigr) g^{ia}
\end{flalign}

We now expand this out (there will be nineteen terms) and simplify. 
The Dirac equations~\ref{CDiracsolutionsystem} can be used to simplify 
some of the terms. We hope to express our answer in terms of
the probability currents
\begin{flalign}
J   &= \overline{\psi}_\nu \psi^\nu \\
J^k &= g^{ki}\overline{\psi}_\nu T^\nu_{i\mu} \psi^\mu \\
J_A &= g^{AB}\overline{\psi}_\nu T^\nu_{B\mu} \psi^\mu 
\end{flalign}
and their covariant derivatives. Note that \( J \) and \( J^A \) are 
imaginary while \( J^k \) is real. Since a source term for gravity should be
real we would hope that our imaginary terms cancel.

\bigskip

The first term in equation~\ref{19terms} is
\begin{equation*}
- \nfrac{1}{12} \nabla_m\bigl(\overline{\psi}_\nu \psi^\mu \bigr)
\bigl(T^c_{ik}T^\nu_{c\mu} + \nfrac{1}{2}g_{ik}1^\nu_\mu 
+ g^A_{ik}T^\nu_{A\mu}\bigr)\\
\bigl( g^{im} T^a_{nb} - g^{ia} T^m_{nb} + 1^i_b T^a_{nt}g^{tm}
\bigr)  g^{kn} 
\end{equation*}

Expanding will give nine terms, which we will simplify and write 
in terms of \( J \), \( J^A \) and \( J^k \). We will also use
\( g^a_{Ab} = g_{AK}g^K_{ib}g^{ia} \) to simplify some of the answers. 
The nine terms are
\begin{flalign}
- \nfrac{1}{12} \nabla_m\bigl(\overline{\psi}_\nu \psi^\mu \bigr)
T^c_{ik}T^\nu_{c\mu} g^{im} T^a_{nb} g^{kn} &=
\nfrac{1}{12} T^a_{bn} T^n_{ik} \nabla^i J^k \\
\nfrac{1}{12} \nabla_m\bigl(\overline{\psi}_\nu \psi^\mu \bigr)
T^c_{ik}T^\nu_{c\mu} g^{ia} T^m_{nb} g^{kn} &=
\nfrac{1}{12} T^a_{kn} T^n_{bi} \nabla^i J^k \\
- \nfrac{1}{12} \nabla_m\bigl(\overline{\psi}_\nu \psi^\mu \bigr)
T^c_{ik}T^\nu_{c\mu} 1^i_b T^a_{nt}g^{tm} g^{kn}  &=
\nfrac{1}{12} T^a_{in} T^n_{kb} \nabla^i J^k  \\
- \nfrac{1}{24} \nabla_m\bigl(\overline{\psi}_\nu \psi^\mu \bigr)
g_{ik}1^\nu_\mu g^{im} T^a_{nb} g^{kn} &=
\nfrac{1}{24} T^a_{bi} \nabla^i J \\
\nfrac{1}{24} \nabla_m\bigl(\overline{\psi}_\nu \psi^\mu \bigr)
g_{ik}1^\nu_\mu g^{ia} T^m_{nb} g^{kn} &=
\nfrac{1}{24} T^a_{bi} \nabla^i J \\
- \nfrac{1}{24} \nabla_m\bigl(\overline{\psi}_\nu \psi^\mu \bigr)
g_{ik}1^\nu_\mu 1^i_b T^a_{nt}g^{tm} g^{kn}  &=
\nfrac{1}{24} T^a_{ib} \nabla^i J \\
- \nfrac{1}{12} \nabla_m\bigl(\overline{\psi}_\nu \psi^\mu \bigr)
g^A_{ik}T^\nu_{A\mu} g^{im} T^a_{nb} g^{kn} &=
\nfrac{1}{12} T^a_{bn} g^n_{Ai} \nabla^i J^A \\
\nfrac{1}{12} \nabla_m\bigl(\overline{\psi}_\nu \psi^\mu \bigr)
g^A_{ik}T^\nu_{A\mu} g^{ia} T^m_{nb} g^{kn}  &=
\nfrac{1}{12} T^n_{bi} g^a_{An} \nabla^i J^A \\
- \nfrac{1}{12} \nabla_m\bigl(\overline{\psi}_\nu \psi^\mu \bigr)
g^A_{ik}T^\nu_{A\mu} 1^i_b T^a_{nt}g^{tm} g^{kn} &=
\nfrac{1}{12} T^a_{in} g^n_{Ab} \nabla^i J^A
\end{flalign}

The first three terms cancel via the Jacobi identity. Collecting up
the remaining terms we get
\begin{equation}
\nfrac{1}{24} T^a_{bi} \nabla^i J 
+ \nfrac{1}{12} 
\bigl( T^a_{bn} g^n_{Ai} + T^n_{bi} g^a_{An} + T^a_{in} g^n_{Ab} \bigr)
\nabla^i J^A
\end{equation}

These are all imaginary.

\bigskip

\medskip
The second term in equation~\ref{19terms} is
\begin{equation*}
 \overline{\psi}_\nu \psi^\mu
\bigl(\nfrac{1}{2}T^c_{ik}T^\nu_{c\mu} + 
\nfrac{1}{4}g_{ik}1^\nu_\mu + \nfrac{1}{2}g^A_{ik}T^\nu_{A\mu}\bigr)
\bigl( 
\nfrac{1}{2} g^{ai} 1^k_b + g^{ak} 1^i_b 
- \nfrac{1}{6} g^{ij} T^t_{jb} T^a_{nt} g^{kn} 
\bigr) 
\end{equation*}

Expanding will give nine terms which simplify as follows.
\begin{flalign}
\nfrac{1}{4}T^c_{ik}T^\nu_{c\mu} g^{ai} 1^k_b 
(\overline{\psi}_\nu \psi^\mu) &=
- \nfrac{1}{4} T^a_{kb} J^k
\\
\nfrac{1}{2}T^c_{ik}T^\nu_{c\mu} g^{ak} 1^i_b 
(\overline{\psi}_\nu \psi^\mu) &=
\noneg\nfrac{1}{2} T^a_{kb} J^k 
\\
-\nfrac{1}{12}T^c_{ik}T^\nu_{c\mu} g^{ij} T^t_{jb} T^a_{nt} g^{kn} 
(\overline{\psi}_\nu \psi^\mu) &=
\noneg\nfrac{1}{12}T^j_{ik} T^t_{bj} T^k_{nt} g^{an} J^i \\ 
%
%
\nfrac{1}{8}g_{ik}1^\nu_\mu g^{ai} 1^k_b 
(\overline{\psi}_\nu \psi^\mu) &=
\noneg\nfrac{1}{8} 1^a_b J \\
\nfrac{1}{4}g_{ik}1^\nu_\mu g^{ak} 1^i_b 
(\overline{\psi}_\nu \psi^\mu) &= 
\noneg\nfrac{1}{4} 1^a_b J \\
- \nfrac{1}{24}g_{ik}1^\nu_\mu g^{ij} T^t_{jb} T^a_{nt} g^{kn} 
(\overline{\psi}_\nu \psi^\mu) &=
- \nfrac{1}{4} 1^a_b J \\
\nfrac{1}{4}g^A_{ik}T^\nu_{A\mu} g^{ai} 1^k_b 
(\overline{\psi}_\nu \psi^\mu) &=
\noneg\nfrac{1}{4}g^a_{Ab} J^A \\
\nfrac{1}{2}g^A_{ik}T^\nu_{A\mu} g^{ak} 1^i_b 
(\overline{\psi}_\nu \psi^\mu) &=
\noneg\nfrac{1}{2}g^a_{Ab} J^A \\
- \nfrac{1}{12}g^A_{ik}T^\nu_{A\mu} g^{ij} T^t_{jb} T^a_{nt} g^{kn} 
(\overline{\psi}_\nu \psi^\mu) &=
\noneg\nfrac{1}{12} T^t_{jb} T^k_{nt} g^{an} g^j_{Ak} J^A 
%
\end{flalign}

Collecting up terms we obtain two real terms
\begin{equation}
\nfrac{1}{4} T^a_{kb} J^k
+ \nfrac{1}{12}T^j_{ik} T^t_{bj} T^k_{nt} g^{an} J^i 
\end{equation}

and three imaginary terms
\begin{equation}
\nfrac{1}{8} 1^a_b J 
+ \nfrac{3}{4}g^a_{Ab} J^A 
+ \nfrac{1}{12} T^t_{jb} T^k_{nt} g^{an} g^j_{Ak} J^A 
\end{equation}

\bigskip

Finally the  last term of equation~\ref{19terms} gives
\begin{equation}
\bigl(\overline{\psi}_\nu T^\nu_{i\mu} \nabla_b \psi^\mu\bigr) g^{ia}
\end{equation}

The real part of this term can be written in terms of current vectors in 
the form \( \nfrac{1}{2}\nabla_bJ^a - H_bJ^a \).  However the imaginary
part cannot be so simply expressed.

\bigskip

Assembling all the components, the real part of the source term is 
\begin{equation}
\nfrac{1}{4} T^a_{kb} J^k
+ \nfrac{1}{12}T^j_{ik} T^t_{bj} T^k_{nt} g^{an} J^i 
+ \nfrac{1}{2}\nabla_bJ^a - H_bJ^a 
\end{equation}

and the imaginary part is 
\begin{multline}
\nfrac{1}{24} T^a_{bi} \nabla^i J 
+ \nfrac{1}{12} 
\bigl( T^a_{bn} g^n_{Ai} + T^n_{bi} g^a_{An} + T^a_{in} g^n_{Ab} \bigr)
\nabla^i J^A \\
+ \nfrac{1}{8} 1^a_b J 
+ \nfrac{3}{4}g^a_{Ab} J^A 
+ \nfrac{1}{12} T^t_{jb} T^k_{nt} g^{an} g^j_{Ak} J^A  \\
+ \text{Im}\bigl(\overline{\psi}_\nu T^\nu_{i\mu} \nabla_b \psi^\mu\bigr) g^{ia}
\end{multline}

The gravitational equations we are using expect a real source term so this
imaginary part is problematic. It seems very unlikely to be zero. The terms 
are sufficiently diverse in nature that it would be extraordinary if via some 
prodigious feat of algebra-fu we could get them all to cancel. The existence 
of these imaginary gravitational source terms is one more reason to doubt
that our Lagrangian density is correct.

\bigskip
\pagebreak[2]

\section{Discussion}

This has been an interesting chapter. Some things have worked very well. 
Some things have proved more difficult. It is therefore worthwhile to sum 
up the situation with regard to the Dirac equation in our model.

\medskip

We began the chapter by observing that the Dirac equation could be very simply
and nicely incorporated into our framework. The Dirac matrices themselves 
turned out to be simply intrinsic translation operators and the Dirac operator 
could therefore be viewed as a Curl operator. 

Adopting this viewpoint simplifies and clarifies many things about the 
Dirac matrices that are otherwise obscure. For example the $S$-matrix 
transformations of the Dirac matrices become simply the expected change 
of coordinate behavior for the intrinsic translation operators. That products
of Dirac matrices give rotation operators is now a simple observation that 
can be directly calculated. The Zitterbewegung also has a simple explanation
since the velocity being described is intrinsic.

Our extended Dirac equation involved wave functions defined on all ten 
dimensions of the manifold of frames. For those functions which are constant
across the six Lorentz dimensions the equation reduces to the ordinary four 
dimensional Dirac equation, particularly in the Poincar\'e limit. Presumably
these functions then represent electrons.

Functions which are not constant across the six Lorentz dimensions must
therefore represent other types of fermion. Since the three rotation 
dimensions are compact we expect solutions across these dimensions to be 
discretely quantised. This opens up the possibility of a fermion field
describing all fermions with a single spinor function.

\medskip

The Dirac Lagrangian also was much more easily understood in our context.
The version in our framework was also easier to work with. In particular 
it was obviously a well defined scalar, something which requires considerable 
work in the standard approach. We were able to show that, under the assumption 
that the standard Dirac Lagrangian density is scalar, it was equal to our 
version for functions \( \psi^\alpha \) constant across the Lorentz 
dimensions.

However problems started to appear at this point. Our clearer notation 
revealed two difficulties with the Dirac Lagrangian which exist also within
the standard theory but which are usually obscured by the notation.

\medskip

Firstly, our Lagrangian depended on a Crump scalar \( h_\bullet \) 
encoding the choice of symplectic form 
\( s_{\alpha\beta} = h_\bullet s^\bullet_{\alpha\beta} \). This prompted 
us to ask, for which choice of \( h_\bullet \) was our Lagrangian equal 
to the standard one? Our proof, which assumed only that the standard 
Lagrangian was a well defined scalar, seemed to work for all of them. Looking 
more closely at the effect of changing the Crump factor \( h_\bullet \) on 
the correspondence we found that a conformal degree of freedom equivalent to 
a choice of \( h_\bullet \) also exists within the standard Dirac Lagrangian 
arising from a combination of choice of Dirac matrices and a conformal change 
of spinor basis. Mathematically therefore the standard Dirac Lagrangian 
density is not well defined. To correct this problem the conformal degree 
of freedom in its definition should be made explicit.

In a sense we can view uncovering this issue as a positive for our 
framework. While we might not be happy about discovering a problem, 
we should credit the added clarity of our notation and approach that 
enabled us to discover it.

\medskip

The second obvious problem with the Dirac Lagrangian density is the fact that 
it is complex. A Lagrangian density strictly speaking should be real. Allowing
it to be complex makes it difficult to attach to it a reasonable physical 
interpretation, and threatens to generate complex source terms for our force
equations. This is a thoroughly explored issue with the standard Dirac 
Lagrangian where the use of a complex Lagrangian density is usually justified 
on the basis that it gives the expected answers\footnote{physics generally has 
an `ends justifies the means' attitude toward mathematics}. Complex 
Lagrangian densities are therefore now commonly accepted in modern physics.

The way the theory is normally presented the imaginary component gives the 
Dirac equation when solving the Lagrangian problem but makes a zero 
contribution towards source terms for forces, which is fortunate since we
we wouldn't know what to do with an imaginary source term. The real 
component on the other hand is a divergence and has no effect on the 
Lagrangian problem, but gives the correct real source terms. The complex 
Lagrangian combines both components and thus gives both a Dirac equation 
and also the expected source terms.

Unfortunately when we translate all this to our framework things don't work 
out quite so well. The Crump factor \( h_\bullet \), needed to explicitly 
deal with the conformal degree of freedom gives us a real part that is no 
longer a divergence. Hence including the real part changes the solution to 
the Lagrangian problem by introducing an additional constraint, 
equation~\ref{annoyance}, which appears unphysical. It is possible that 
this extra constraint may be more palatable if a dynamical term for 
\( h_\bullet \) is included in our Lagrangian. However we did not pursue 
this idea further.

This same issue must exist even in standard Dirac Lagrangian theory and should 
become apparent there if the conformal degree of freedom described in our 
notation by \( h_\bullet \) is made explicit and treated rigorously.
Hence it is not simply the case that Dirac Lagrangian theory failed to 
work in our framework. It looks like it has never actually worked properly 
even in the standard case. It only seems to work because the parts that 
are broken are hidden by the use of inadequate notation.

\medskip

Finally we found that the much more complicated variations which generate 
gravity in our framework do not act trivially on the imaginary component
of the Dirac Lagrangian. This means they generate imaginary source terms 
for gravity which would break our gravitational equations. This is one more 
reason to doubt that we have the correct Lagrangian density.

\bigskip

Not being able to find the correct Lagrangian density does not invalidate
or disprove the framework approach. Indeed what our approach has done
is clarified the situation revealing problems that were previously hidden and
pointing the way towards their solution. So where should we look to solve
these issues?

One bright light in the darkness here is that using the mathematical tools 
in our framework makes it very easy to construct scalar functions to use as 
candidate Lagrangians. We already have noted possible dynamical terms for
\( h_\bullet \) and the effect of adding these should be explored.

Furthermore the curvature acts on the spinors which opens up the possibility 
of Lagrangians that combine both curvature and the wave functions in more 
interesting ways. Consider for example the following constructions.

\begin{flalign}
& 
g^{ia}g^{jb} 
\overline{\psi}_\nu 
R^\nu_{ij\lambda}
 R^\lambda_{ab\mu}\psi^\mu 
\\ &
h_\bullet s^\bullet_{\alpha\beta} 
g^{ia}g^{jb}
R^\alpha_{ij\mu}\overline{\psi}^\mu
R^\beta_{ab\nu}\psi^\nu 
\\ &
g^{ia}g^{jb} 
[\nabla_i,\nabla_j]\overline{\psi}_\nu\,
[\nabla_a,\nabla_b]\psi^\nu 
\\ &
h_\bullet s^\bullet_{\alpha\beta} 
g^{ia}g^{jb}
[\nabla_i,\nabla_j]\overline{\psi}^\nu\,
[\nabla_a,\nabla_b]\psi^\nu 
\end{flalign}

All are trivially scalar and look like reasonable candidates for applying
the Lagrangian method. They are complex, but one could correct that by
taking their real or imaginary components. Indeed our problem is 
not that it is difficult to construct interesting looking scalar functions, 
rather it is all too easy.  What we really need is more thought as to the 
physical meaning of the Lagrangian density which might restrict our choices
and help guide us to the appropriate expression.

	\chapter{Summary}

It is time for this book to end, although obviously there is a huge amount 
remaining to be done. In this chapter we will reflect on what we set out to
do; on what we actually achieved and its implications; and on the loose ends
and unanswered questions that should be the subject of future research.

\medskip
\pagebreak[2]

\section{Objectives}

We began by noting, as a great many others have noted, that the anti-deSitter 
group \( \SO(2,3) \) seems to work better in some physical theories than the 
Poincar\'e group. Since the one contracts into the other, so long as the 
contraction parameter \( r \) is large enough we cannot distinguish between the
two groups from the point of view of classical physics and the extrinsic 
action. However the intrinsic action of \( \so(2,3) \) gives realistic quantum 
numbers which the Poincar\'e Lie algebra does not.

If indeed \( \SO(2,3) \) is the correct symmetry group for physics then we have 
a problem because all the mathematical tools that we use to do physics 
were built originally in Euclidean space and naturally express Euclidean 
symmetry. A manifold for example is defined via an atlas of homeomorphisms 
into Euclidean space; our calculus and linear algebra were all initially 
defined on Euclidean space; and our notions of curvature measure 
the departure from the properties of flat Euclidean space.

Of course the tools of  modern mathematics are very flexible and it is 
certainly possible to add \( \so(2,3) \) symmetry in a variety of ways 
to any given mathematical construct.  However it isn't clear how to do 
this consistently for all the different types of mathematical structures 
we need in modern physics. 

What we really need are mathematical tools that have the symmetry group 
\( \so(2,3) \) baked into them from the start, so that every natural 
structure expressed using these tools immediately manifests this symmetry 
group, both extrinsically and intrinsically, in a natural and consistent way.
The construction of such a mathematical toolkit was our primary objective.

\medskip

Wigner~\cite{Wigner} and others talk about what they call 
``the unreasonable effectiveness of mathematics in the natural sciences'',
however I prefer to think that mathematics is just unreasonably effective. 
What mathematics particularly excels at is in revealing the unexpected 
logical consequences of a small set of assumptions or axioms. Our hope was 
that natural axioms incorporating \( \so(2,3) \) symmetry would not only allow 
the physics to be expressed more clearly, but might also allow mathematics to
demonstrate a bit of that unreasonable effectiveness for us by means of some 
unexpected consequences.

\medskip

At the outset we needed to decide what type of mathematical structures we 
were looking to define. We chose to direct our efforts to developing the 
mathematical tools required in order to do relativistic quantum field theory 
in curved space. We therefore sought to define curved manifolds with a natural 
extrinsic  \( \so(2,3) \) symmetry; along with wave functions from those 
manifolds into a complex spinor space with intrinsic \( \so(2,3) \) symmetry.
That seemed to us to be sufficient to enable a description of most
physical phenomena.  We stopped short of second quantisation however. Hence 
forces would be described in terms of fields, potentials and curvature; and 
fermions in terms of a spinor wave function which we can think of as 
specifying the fermion field. 

We decided not to consider second quantisation for several reasons. Firstly
we simply didn't have time. Secondly we lacked the background knowledge needed 
to tackle this difficult topic. And finally second quantisation is notorious 
for its mathematical abuses and as mathematicians we therefore thought it 
prudent to steer clear of it.

\medskip
\pagebreak[2]

\section{The Mathematics}

If you have read through the rest of the book I hope you will agree 
that our efforts to create a consistent axiomatic mathematical structure
with the required symmetry were successful. The axioms of a framework are 
natural, as all good axiom systems should be. Indeed if you start from the 
assumption that \( \so(2,3) \) is the natural symmetry group for physics 
(which I do admit is a big assumption) then it is hard not to believe that 
it can be described using a framework.

\medskip

The approach we ended up taking was to use as our guide the matrix Lie group 
\( \Sp(2,\R) \) itself, which already has most of the features we need. The
matrix Lie  algebra \( \sp(2,\R) \) acts on this manifold as the collection 
of left invariant vector fields. These vector fields define derivations 
and the identification of the Lie algebra with each tangent space defines 
a consistent notion of parallel transport, and hence a connection which 
defines a covariant derivative. The torsion for this covariant derivative 
is (up to an annoying sign which we adjusted the definition of torsion to 
discard) the Lie algebra structure constants. The curvature is zero. 

The adjoint action of the Lie algebra on itself gives a local action on
vector fields which is conserved by the covariant derivative. 
Furthermore the \( 4 \times 4 \) matrices of the Lie algebra defines a
local action of each tangent space on \( \R^4 \). Extension of this local 
spinor action to an action on \( \C^4 \) is trivial. Left multiplication by 
matrices in \( \Sp(2,\R) \) also defines a natural parallel transport and 
hence a covariant derivative on spinor valued functions. The global action
once again respects the local action. One can of course extend this basic
structure using tensor product and dual to other types of tensor, each of
which has a local and global action which commute.

This example therefore has all of the features we are looking for except 
curvature.  Our approach was to generalise from the properties of this 
specific example by simply adding curvature to the extrinsic action in a 
consistent fashion. This does however restrict us to manifolds with the 
same dimension as the Lie algebra. Hence we must talk always about the 
10-D manifold of inertial frames and not 4-D spacetime.

\medskip

Having constructed our axiomatic system we then spent some time discussing 
its features and in particular its mathematical properties. We have only 
barely scratched the surface. It is in fact a bit embarrassing to leave so 
many questions unanswered, but if we had stopped to explore all of these 
mathematical issues properly we would never have had a chance to look at 
the physics. And we needed to look at the physics to justify the utility 
of our axiom system.

The biggest mathematical omission is the lack a complete classification 
theorem for generalised tensors. We have only a partial classification in the
small dimensional case. In particular we would really like to know what the 
locally trivial generalised tensors all look like. We would also like to know
whether a compatible global action exist for every local action. This seems 
likely to be true and indeed we have proved that it is true for all the small 
dimensional representations we have looked at in this book. We have most of 
a proof in the general case; however the proof is incomplete so this is not 
yet a theorem.

There are also some interesting open questions with regard to the extent to 
which the algebraic properties of our structure constrain the geometry and 
vice versa. And there are also some obvious generalisations which should be 
explored, the most obvious one being the generalisation to other Lie algebras.

The generalisation of greatest potential physical relevance would be to 
use complex spinor manifolds instead of real ones. This is equivalent to 
dropping the requirement than an invariant conjugation map on spinors exists, 
which is the most tenuous of our axioms. If we dropped this requirement we 
would need to consider the possibility that invariant conjugation maps did 
not exist, and the set of all local conjugation maps would then constitute 
a new kind of locally trivial tensor, possibly involving a non-trivial 
curvature. This would have consequences particularly for the description of 
fermions. 

A more radical mathematical generalisation would be to drop the requirement 
that spinors exist from our set of axioms altogether. 

In applied mathematical terms however the most glaring omission is probably 
the lack of specific solutions in this book. In our defence these do seem 
rather difficult to construct.

\bigskip
\pagebreak[2]

\section{The Physics of Forces}

The first test of our new framework was in how it coped with the 
physics of forces. And in that regard I would say that it has been
extremely successful. 

Forces should arise from the connection and from curvature. The 
connection is the most fundamental description, but is not a tensor
which can make working with it difficult. The curvature is a less 
immediate description of the geometry but has the advantage of being
a tensor. The action on spinors is fundamental, and hence the spinor
connection and curvature are the most important objects for describing
curvature.

The mathematical process of decomposition into irreducibles allowed
us to separate both the spinor connection and the curvature into components 
in a very natural way. 
\begin{flalign*}
\Gamma^\beta_{k\alpha} &= 
	A_k 1^\beta_\alpha + G^i_kT^\beta_{i\alpha} + N^A_k T^\beta_{A\alpha} \\
R^\beta_{ij\alpha} &=
F_{ij}1^\beta_\alpha + R^k_{ij}T^\beta_{k\alpha}
\end{flalign*}

Hence the unified force described by the spinor connection and curvature 
could be separated into components forces.

\medskip

With regard to the scalar components we identified the four translation 
components of \( A_k \) as the electromagnetic potential, and the translation
components of \( F_{ij} \) as the electromagnetic field tensor. The 
connection determines the curvature, which in terms of scalar components 
gave
\begin{equation*}
F_{ij} = \del_i A_j - \del_j A_i 
\end{equation*}

which is the expected relationship between the potential and the field and
the scalar component of the Jacobi identity for the global action was
\begin{equation*}
\del_i(F_{jk}) \ijkequal 0
\end{equation*}

which we recognise as the Faraday-Gauss equation. This is almost miraculous. 
These equations of electromagnetism are just there naturally in the geometry 
and we don't have to do anything to obtain them. Furthermore when we later 
come to look at matter we find that the covariant derivative in the Dirac 
equation naturally inserts \( A_k \) into the expected place. Hence the 
action of the field on matter as specified by the connection also seems 
correct. We lack only a source equation to have a complete set of equations
for electromagnetism.

The potential \( A_k \) is a connection and not a vector, and has a hidden
dependence on the choice of spinor basis, which seems to be how gauge 
symmetries enter into this picture. However we caution that gauge groups
are not the main event here and it would a mistake to overemphasise them
and attempt to force the mathematics to go along that path.

Of course we were only looking at the four translation components. What
of the other six? Presumably these describe other forces. We might hope 
that they describe the weak and perhaps also the strong force. The other 
possibility is that these additional dimensions simply provide minor 
adjustments to the electromagnetic force to account for things like 
the effect on spin.

We chose not to investigate these interesting possibilities further for 
several reasons. Firstly we were not sure that we could recognise the 
field equations for the weak or strong force if we tripped over them, which 
in a sense we may just have done. Typically these short range forces are not 
described in terms of fields and we don't know what their field 
equations ought to look like. These forces are more typically described 
in terms of their interactions which requires an understanding of matter
and probably also requires second quantisation. And of course we just 
didn't have time to go down this road. 

We have left this therefore as unexplored territory. However we are not
aware of anything here that would have the effect of invalidating our approach.

\medskip

The vector components of the curvature and potential seem to describe
gravity, although once again there are extra components which could 
suggest either additional forces or adjustment factors to take account
of such things as spin.  The vector component of the Jacobi identity 
gives the first and second Bianchi identities
\begin{flalign*}
R^m_{ij}T^l_{mk} &\ijkequal 0  \\
R^l_{im}T^m_{jk} + \nabla_i(R^l_{jk}) &\ijkequal 0 
\end{flalign*}

and while we have written these in terms of the reduced curvature tensor,
if you rewrite them in terms of the Riemann tensor they are indeed precisely
the usual first and second Bianchi identities, albeit with extra dimensions. 

The potential \( G^i_k \) determines the vector connection \( \Gamma^k_{ij} \)
via the equation
\begin{equation*}
\Gamma^k_{ij} = G^t_i T^k_{tj}
+ \del_i(T^\beta_{j\alpha})T^\alpha_{m\beta}g^{mk}
\end{equation*}

where the last term can be viewed as a basis adjustment term compensating
for the different hidden basis dependencies. Hence \( G^i_k \) 
determines the geometry of the manifold and should also therefore
determine the gravitational field \( R^k_{ij} \).

However when we sought such an equation explicitly we obtained
\begin{flalign*}
R^k_{ij} &= 
\bigl[\del_i(G^k_j) - \del_j(G^k_i)\bigr]
+ G^x_iG^y_jT^k_{xy} 
- N^A_iN^B_jT^k_{AB}  \\
&\hskip6cm + (G^m_j Q^k_{im} - G^m_i Q^k_{jm} )
\end{flalign*}

Where \( Q^c_{ab} = \del_a(T^\beta_{b\alpha})T^\alpha_{s\beta}g^{sc} \).

This would suggest that the gravitational field depends on both
the vector potential \( G^i_k \) and the versor potential \( N^A_k \). 

There is an inconsistency here that we have yet to fully resolve. We suspect
that the versor potential \( N^A_k \), which as a connection component 
has a hidden basis dependence on the spinor basis, can be made to vanish 
by a suitable choice of spinor basis.

The lack of an associated versor component of curvature tells us that 
the effect on spinors of parallel translation around a loop has no versor 
component. Hence we suspect that it is possible, perhaps by using path 
integration, to redefine our spinor bases consistently across the manifold 
in such a way that it eliminates the versor component of parallel translation.

If this can indeed be done then the gravitational field \( R^k_{ij} \)
will indeed be determined only by the gravitational potential \( G^i_k \), 
albeit in a rather more complicated fashion than we might have hoped for.

\medskip

With regards to source equations for our forces it is not difficult using
our toolkit to write down versions of the Ampere-Gauss and Einstein equations.
However these equations are very different and in particular they clearly
are not simply components of an overall equation for \( R^\alpha_{ij\beta} \).
This suggests that perhaps one of these equations is not correct. 

We then considered the equation
\begin{equation*}
\nabla^jR^\beta_{ji\alpha} - \nfrac{1}{2}g^{ts}T^r_{it}R^\beta_{rs\alpha}
= \Theta^\beta_{i\alpha}
\end{equation*}

where \( \nabla^i\Theta^\beta_{i\alpha} = 0 \); which has components
\begin{flalign*}
\nabla^jF_{ji} - \nfrac{1}{2}g^{ts}T^r_{it}F_{rs} &= J_i  \\
\nabla^j R^k_{ji} - \nfrac{1}{2} T^j_{is} g^{st} R^k_{jt} &= K^k_i
\end{flalign*}

The scalar component is the Ampere-Gauss equation for electromagnetism
while the vector component, which we call Ussher's equation, is an 
alternative source equation for gravity. 

It is rather bold to suggest such a different looking equation as an 
alternative to Einstein's equation. However the close parallel to
the equations for electromagnetism makes us confident that Ussher's 
equation will approximate Newtonian gravitation in the weak field 
non-relativistic limit.

The only loose end at this point seemed to be determining the nature of 
the source terms. In preparation for this we looked at an alternative 
Lagrangian approach to the source equations which would enable us to 
later calculate source terms from the Lagrangians for matter.

\medskip

The Lagrangian approach however turned out to be very interesting indeed.
Finding a suitable Lagrangian density for the curvature 
\( R^\alpha_{ij\beta} \) was not difficult. The function
\begin{equation*}
	\LL	= ||R^\alpha_{ij\beta}||^2  = \LL_g -4\LL_e
\end{equation*}

for example is an obvious choice and separates naturally into components 
\( \LL_e = ||F_{ij}||^2 \) and \( \LL_g = ||R^k_{ij}||^2 \) for the
 electromagnetic and gravitational fields respectively.

The interesting part was describing all possible variation to the geometry 
which leave the axiomatic properties of our framework intact. This
complicated our analysis. 

We were able to categorise the variations into three cases. Case 1 variations
only involved the scalar component of the field and minimising the Lagrangian
for such variations gave the Ampere-Gauss equation. Case 2 variations were
variations with the same form as a change of basis and left the Lagrangian
invariant. Case 3 variations were the most complicated. 

We found that these variations could be expressed in terms of two small 
but not independent parameters \( b^n_m \) and \( \delta^n_m \), and 
that the variation in the Lagrangian in terms of these was 
\begin{flalign*}
\updelta L 
&= \nonumber \int_\Omega
-4U^k_i g^{mi} g_{nk} \, b^n_m 
\\&\quad\quad\quad
+ \bigl( R^k_{ij}R^c_{ab} g^{ia} g_{kc} \pm 4 F_{ij}F_{ab} g^{ia} \bigr)
	\bigl( 4 g^{jm} 1^b_n - g^{jb} 1^m_n \bigr) 
		\delta^n_m \, dx^\circ 
\end{flalign*}

where $U^k_i $ is Ussher's tensor 
\begin{equation*} 
U^k_i = \nabla^j R^k_{ji} -\nfrac{1}{2} T^j_{is} g^{st} R^k_{jt}
\end{equation*}

In the weak field case the second term of the integral is insignificant
compared to the first. Hence $U^k_i = 0 $ which is
Ussher's equation in the absence of sources, solves the Lagrangian problem
in the weak field case.

This however is deceptive because the dependence between \( b^n_m \) and 
\( \delta^n_m \) prevents the former from varying arbitrarily and hence 
additional weak field solutions may be possible.

\bigskip
Solving the Lagrangian problem fully we obtained the equation
\begin{equation*}
A^{tkm}_{in} \nabla_t U^i_k + B^{km}_{in} U^i_k =  \nfrac{3}{2}(S^m_n + W^m_n)
\end{equation*}

where $U^i_k$ is Ussher's tensor and $W^m_n$ is a second order term. In the 
weak field case we can conceptualise this in terms of a two step process;
firstly solving the dark equation
\begin{equation*}
A^{tkm}_{in} \nabla_t X^i_k + B^{km}_{in} X^i_k =  \nfrac{3}{2}S^m_n
\end{equation*}

to find the sith $X^i_k$; and then solving Ussher's equation
$ U^k_i = X^k_i $, or equivalently
\begin{equation*}
\nabla^j R^k_{ji} -\nfrac{1}{2} T^j_{is} g^{st} R^k_{jt} = X^k_i 
\end{equation*}

to find the field.

\medskip

What these equations are telling us is that the influence of matter on
gravity is less direct than we had supposed. In the weak field case we 
can conceptualise this as a two stage process whereby the distribution of 
matter determines an intermediate quantity we call the sith; and the sith 
then acts as the effective source of gravity.  It is quite possible for the 
sith to differ from the source, and when this occurs we are likely to 
interpret it as dark matter. However there really is no matter present.

We have stumbled upon a possible explanation for one of the greatest 
mysteries of modern physics, the nature of dark matter. It seems that 
gravity can indeed behave in some circumstances as though matter were 
present when it is not, And we even have a dynamical equations to describe 
this behaviour. 

Of course this is only a possible explanation. It must be tested against
reality. That means we are going to need to actually simulate those rather 
horrible looking differential equations and compare the results to 
observation. Only once this is done if the results agree, can we truly
say that we have found and explained dark matter. We might also hope to 
explain dark energy in the same way.

\medskip

In the spirit of optimism we end this section with a conjecture. The
sith satisfies the same identities as Ussher's tensor, hence
\( \nabla^kX^i_k = 0 \) and sith is conserved.  We conjecture that the 
amount of extra sith in galaxies (dark matter) exactly balances the 
deficit of sith (dark energy) in intergalactic space.  Hence dark energy
is the result of sith that `should' be present in intergalactic space 
going off and hanging around a galaxy pretending to be dark matter; 
leaving behind a deficit.

\bigskip
\pagebreak[2]

\section{The Physics of Matter}

We next turned to the equations of matter and in particular the Dirac equation 
which should describe fermions. It was largely the Dirac equation and its 
natural compatibility with \( \so(2,3) \) that prompted this work in the first 
place. We therefore expected to have no difficulty adapting the Dirac equation 
to our framework.

Indeed the Dirac equation does adapt very well. The Dirac
operator itself becomes simply a Curl operator; the Dirac matrices are 
intrinsic translation operators; and everything works beautifully. Indeed 
because our framework affords an interpretation to the Dirac matrices it 
is considerably easier to work with the equation in this context than it is
normally. 

For wave functions which are constant across the six Lorentz dimensions the
parallel to the standard Dirac equation is exact, and we can therefore 
describe electrons in this way. Wave functions that vary across the Lorentz 
dimensions need explanation. We did not investigate these further, but it 
is possible that they describe additional fermions. 

An analysis of these more general solutions should start by using some kind of
separation of variables on the equation to isolate the behaviour across the 
Lorentz dimensions from the behaviour across the translation dimensions. 
Since the rotation coordinates are compact we might hope that the behaviour 
across Lorentz dimensions can be described using special functions with 
discrete parameters, which could be regarded as determining particle type. 
However we did not attempt to do this.

\medskip

We also desired a Lagrangian derivation of the Dirac equation in order
to extract from the Lagrangian density source terms for gravity. This turned
out to be considerably more complicated than we had anticipated. 

The problem was not with our framework which actually made the work 
considerably clearer. It was much easier for example to check that our 
Lagrangian density was scalar. However this added clarity revealed 
difficulties with the standard Lagrangian approach to the Dirac equation 
that are typically obscured in the notation.

In particular as a result of our analysis we now doubt that the standard
Dirac Lagrangian density is a well defined scalar. It seems to have an 
undeclared conformal degree of freedom. In our notation this degree
of freedom is described by the Crump factor \( h_\bullet \), which 
specifies the choice of invariant symplectic form and is needed to 
define the probability vector. This Crump factor seems to be a physical 
field of some sort.

Crump scalars differ from true scalars in their parallel transport 
properties and are linked to the electromagnetic field. Hence we can't
simply drop this factor or set it to a constant and ignore it. The 
non-trivial parallel transport will introduce an extra term involving 
the Crump connection \( H^k \) every time we take a covariant derivative.

Solving the Lagrangian problem properly with the Crump factor  present 
introduces an additional equation involving \( H^k \) which seems 
severely constraining and non-physical. The cure seems to be to introduce 
a dynamical term for \( h_\bullet \) which the standard Dirac Lagrangian 
lacks. This will convert this additional equation into a more physical 
looking wave equation for \( h_\bullet \). The precise nature of this 
extra term and the interpretation of \( h_\bullet \) (is it a Higgs?) 
were not investigated further.

Also causing difficulties with the Dirac Lagrangian density was
the fact that it is complex. This is a well known annoyance with the
standard theory justified on the basis that the complex Lagrangian 
density can be made to give the expected real answers. However the more 
complicated gravitational variations in our framework make the issue 
more critical since it appears that the complex Lagrangian gives
imaginary source terms for gravity under these variations. Since we 
can't have that, this is yet another reason for thinking that we need
a better Lagrangian density.

\bigskip
\pagebreak[2]

\section{Conclusion}

I hope you will agree that the mathematical tools developed in this book
have proved their worth by clarifying much that is obscure and by making 
unexpected predictions. Possibly the most striking prediction is the
new description of gravitation and in particular the explanation of dark 
matter.  What is most striking is the minimal nature of the assumptions 
from which these were generated. Our description of dark matter assumed only
the axioms of a framework which naturally encode the $\so(2,3)$ symmetry, 
and the fairly natural Lagrangian \( ||R^\alpha_{ij\beta}||^2 \). That was 
all. The power of the predictions arising from such minimal assumptions 
demonstrates the value of this approach. 

\medskip

We now hope many others will come and play in this playground since
there is much work remaining to be done. In particular second quantisation 
or similar is obviously needed. However in doing so we ask that an effort
be made to respect the mathematics as much as possible. Mathematics is a 
powerful tool for predicting the unexpected consequences of a set of 
assumptions, but it can only do its job if the mathematics is not abused. 
The abuse of mathematics in modern mathematical physics destroys 
the utility of what should be one of its most powerful tools.



\backmatter
    \bibliographystyle{amsplain}
    \bibliography{../Bibliography/Ourbib}

\end{document}